\documentclass[aps, prl, reprint, superscriptaddress]{revtex4-2}
\def\onlymaintext{0}

\def\isprl{1}
\input{style/preamble.sty}

\newcommand{\supp}{\textup{supp}}
\newcommand{\Ber}{\mathrm{Ber}}   %
\newcommand{\KL}{\mathrm{KL}}     %
\newcommand{\TV}{\mathrm{TV}}     %

\newcommand{\bigO}{\mathcal{O}}

\newcommand{\I}{I}
\newcommand{\ii}{\textup{i}}

\newcommand{\ham}{H}
\newcommand{\hdm}{\mathrm{H}}
\newcommand{\eps}{\epsilon}

\newcommand{\poly}{\textup{poly}}

\newcommand{\deltat}{\delta t}
\newcommand{\tmax}{T}
\newcommand{\pfU}{\mathscr{U}}
\newcommand{\trsteps}{R}

\newcommand{\cmm}{\textup{comm}}

\newcommand{\vbx}{\vb{x}}

\newcommand{\vbq}{\vb{q}}
\newcommand{\vbp}{\vb{p}}
\newcommand{\vbzero}{\vb{0}}

\newcommand{\vertiii}[1]{{\left\vert\kern-0.25ex\left\vert\kern-0.25ex\left\vert #1
		\right\vert\kern-0.25ex\right\vert\kern-0.25ex\right\vert}}
\newcommand{\Vertiii}[1]{{\vert\kern-0.25ex\vert\kern-0.25ex\vert #1
		\vert\kern-0.25ex\vert\kern-0.25ex\vert}}

\newcommand{\ceil}[1]{\lceil{#1}\rceil}

\definecolor{addgreen}{RGB}{0,128,64}

\newcommand{\aqcp}{\textup{\textsc{Approx-Qcirc-Prob}}}

\newcommand{\IQP}{\textup{IQP}}
\newcommand{\dqpt}{\textup{\textsc{Dqpt}}}

\newcommand{\DQPT}{\textup{DQPT}}

\newcommand{\losamp}{\mathcal{G}}
\newcommand{\los}{\mathcal{L}}
\newcommand{\rate}{r}
\newcommand{\drate}{\dot{r}}

\newcommand{\dratep}{\dot{r}_+}
\newcommand{\dratem}{\dot{r}_-}
\newcommand{\krate}{r_k}
\newcommand{\dkrate}{\dot{r}_k}
\newcommand{\ddkrate}{\ddot{r}_k}
\newcommand{\bpmf}{\textup{\textsf{bin}}}
\newcommand{\bpf}{\textup{\textsf{Bin}}}
\newcommand{\projk}{P^{(k)}}
\newcommand{\hdkrate}{\hat{\vphantom{A}\dot{r}}_k}
\newcommand{\hdlos}[1][k]{\hat{\vphantom{A}\dot{\los}}_{#1}}

\newcommand{\koverlap}{\textup{\textsc{$k$-Overlap}}}

\newcommand{\yes}{\textup{\textsc{YES}}}
\newcommand{\no}{\textup{\textsc{NO}}}

\newcommand{\Input}{\textup{input}}
\newcommand{\cc}{\mathrm{\bf c}}
\newcommand{\oo}{\mathrm{\bf o}}

\newcommand{\BQP}{\textsf{\textup{BQP}}}
\newcommand{\BPP}{\textsf{\textup{BPP}}}

\newcommand{\NP}{\textsf{\textup{NP}}}
\newcommand{\gapP}{\textsf{\textup{GapP}}}
\newcommand{\gap}{\mathrm{gap}}
\newcommand{\maj}{\mathrm{maj}}
\newcommand{\ngap}{\mathrm{ngap}}
\newcommand{\sharpP}{\textsf{\textup{\texttt{\#}P}}}
\newcommand{\complete}{\textsf{\textup{complete}}}
\newcommand{\hard}{\textsf{\textup{hard}}}
\newcommand{\hardness}{\textsf{\textup{hardness}}}

\newcommand{\local}{\textup{\textsc{Local}}}

\newcommand{\promise}{\textup{\textsc{promise}}}

\newcommand{\estimate}{\textup{\textsc{Estimate}}}

\newcommand{\search}{\textup{\textsc{Search}}}

\providecommand{\onlymaintext}{0}
\newcommand{\apdcref}[1]{%
  \ifnum\onlymaintext=0 \cref{#1}\else \cite{seesm}\fi
}

\input{style/nsp_style.sty}

\begin{document}

\ifnum\onlymaintext=0
\let\oldacl\addcontentsline
\renewcommand{\addcontentsline}[3]{}
\fi

\newcommand{\mytitle}{Provable Quantum Advantage for Dynamical Phase Transition}

\title{\mytitle}
\author{Jue Xu}
\email[]{xujue@connect.hku.hk}
\affiliation{QICI Quantum Information and Computation Initiative, Department of Computer Science,
The University of Hong Kong, Pokfulam Road, Hong Kong}
\author{Xiao Yuan}
\email[]{xiaoyuan@pku.edu.cn}
\affiliation{Center on Frontiers of Computing Studies, Peking University, Beĳing 100871, China}
\affiliation{School of Computer Science, Peking University, Beĳing 100871, China}
\author{Qi Zhao}
\email[]{zhaoqi@cs.hku.hk}
\affiliation{QICI Quantum Information and Computation Initiative, Department of Computer Science,
The University of Hong Kong, Pokfulam Road, Hong Kong}

\date{\today}
\begin{abstract}
    The universal scaling of critical behavior in phase transitions is a cornerstone of physics.
    Dynamical quantum phase transitions (DQPTs) are their nonequilibrium analogues: 
    abrupt nonanalyticities that emerge as a quantum system evolves in time.
    Yet the hardness and cost of detecting this phenomenon remain largely unexplored.
    We prove that estimating DQPT to a certain precision is
    intractable even for quantum computers, 
    whereas deciding a subsystem variant of DQPT is as hard as simulating generic quantum circuits, implying a provable exponential quantum advantage.
    Furthermore, to search for critical times of local DQPTs, 
    we show a quadratically faster quantum algorithm that estimates observables of Hamiltonian dynamics at multiple time points 
    with Heisenberg-limited precision and sublinear scaling in the number of time points.
    Moreover, through encoding classical evolution into quantum dynamics,
    our framework enables broader quantum speedups for detecting anomalous phenomena in classical systems.
\end{abstract}
\maketitle
\setcounter{secnumdepth}{0}

\mysection[1]{Introduction}
Characterizing quantum many-body dynamics far from equilibrium is a central challenge across physics, chemistry, and materials science.
In equilibrium, phase transitions arise as nonanalyticities of the free energy at critical points, separating distinct phases of matter.
The similar critical behaviors of phase transitions shared by vastly different physical systems are known as universality classes~\cite{sachdevQuantumPhaseTransitions2011}.
To extend this framework to nonequilibrium dynamics, where no free energy is defined and the state evolves unitarily, Heyl, Polkovnikov, and Kehrein~\cite{heylDynamicalQuantumPhase2013,heylDynamicalQuantumPhase2018} introduced the dynamical quantum phase transition (DQPT).
The central observable is the Loschmidt echo, 
the return probability of a quantum state to itself after a sudden quench~\cite{gorinDynamicsLoschmidtEchoes2006}, which plays the role of a dynamical partition function.
Its intensive logarithm, the rate function, develops sharp nonanalyticities at critical times in close analogy to a free-energy density.
These cusps obey scalings whose exponents depend only on an underlying universality class~\cite{heylDynamicalQuantumPhase2015}.
DQPTs now provide an organizing principle for nonequilibrium quantum matter~\cite{heylDynamicalQuantumPhase2018}, spanning broken-symmetry, long-range, topological, and driven systems~\cite{heylDynamicalQuantumPhase2014,vandammeAnatomyDynamicalQuantum2023,zunkovicDynamicalQuantumPhase2018,budichDynamicalTopologicalOrder2016,huangDynamicalQuantumPhase2016,langDynamicalQuantumPhase2018,zacheDynamicalTopologicalTransitions2019,hamazakiExceptionalDynamicalQuantum2021} 
and linking to entanglement and information scrambling~\cite{denicolaEntanglementViewDynamical2021,heylDetectingEquilibriumDynamical2018}.

\begin{figure*}[t!]
    \centering
    \includegraphics[width=0.97\linewidth]{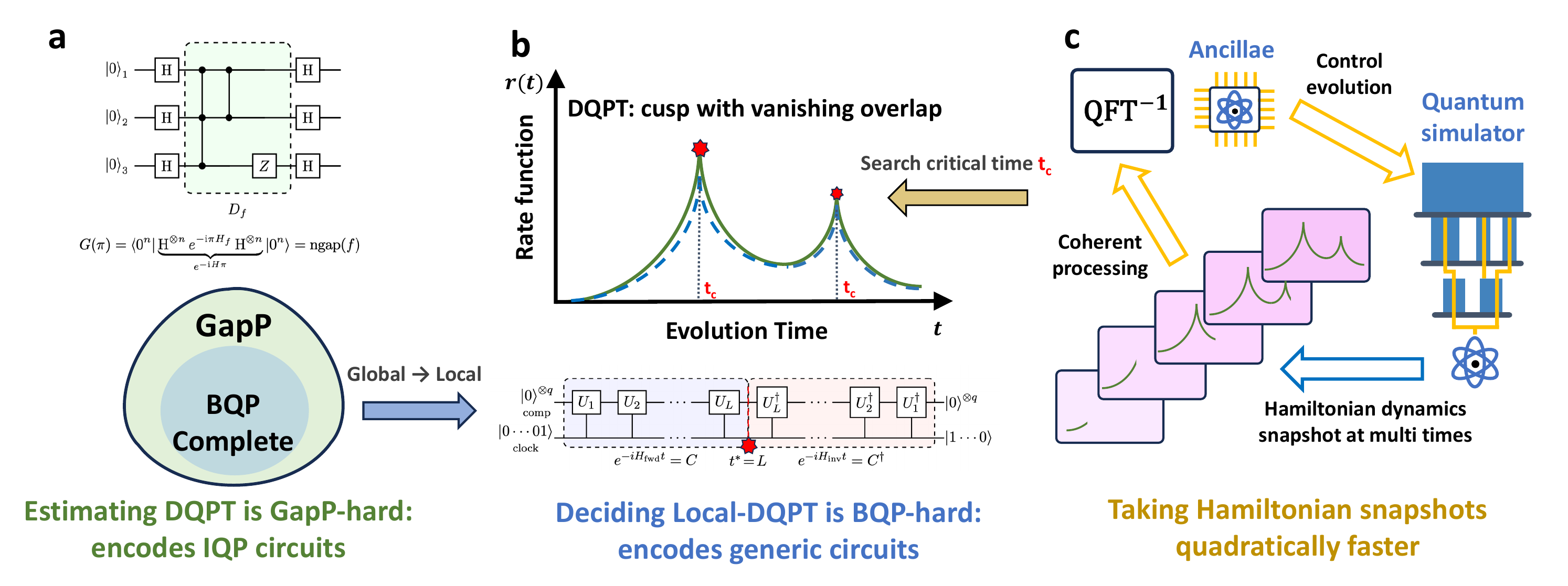}
    \caption{Illustration of practical quantum advantage for DQPT. 
    (a) Estimating rate function is \gapP{}-\hard{} by reduction from the degree-3 polynomial over $\mathbb{F}_2$ to a Hamiltonian dynamics that is equivalent to IQP circuits (above).
    (b) Deciding a local DQPT (solid line) that encodes a general polynomial-size quantum circuit (below) is \BQP{}-\complete{}, 
    implying a provable quantum advantage.
    (c) The coherent quantum algorithm searches local DQPT critical times using the gradient-estimation based Alg.~\ref{alg:multi_time}, which utilizes controlled evolution and Quantum Fourier Transform (QFT). 
    It achieves the Heisenberg-limited precision and sublinear scaling in the number of time points, which is nearly optimal for quantum algorithms and quadratically faster than classical methods.
    }
    \label{fig:dqpt_main}
\end{figure*}

Experimentally, DQPTs have been observed across quantum simulation platforms spanning trapped ions \cite{jurcevicDirectObservationDynamical2017,zhangObservationManyBodyDynamical2017,muellerQuantumComputationDynamical2023,deNonequilibriumCriticalScaling2025}, 
superconducting qubits \cite{xuProbingDynamicalPhase2020}, cold atoms \cite{bernienProbingManybodyDynamics2017}, photonic systems \cite{wangSimulatingDynamicQuantum2019}, 
etc~\cite{flaschnerObservationDynamicalVortices2018,tianObservationDynamicalQuantum2020,nieExperimentalObservationEquilibrium2020}.
Whether these observations translate into a practical quantum advantage of quantum simulation 
\cite{lloydUniversalQuantumSimulators1996,feynmanSimulatingPhysicsComputers1982,daleyPracticalQuantumAdvantage2022,zhangExponentialQuantumAdvantages2025,chenLocalMinimaQuantum2025}, 
has nevertheless remained unclear, 
because the Loschmidt echo is exponentially suppressed in system size and resolving it demands exponentially many samples.
We confirm this obstruction from complexity theory perspective: 
estimating the global rate function to a certain precision is $\gapP$-\hard,
likely out of reach even for quantum algorithms,
under standard complexity assumptions \cite{valiantComplexityComputingPermanent1979,aaronsonComputationalComplexityLinear2011,bremnerAveragecaseComplexityApproximate2016}.

To bypass the barrier from its definition, 
a subsystem variant of DQPT was proposed.
Instead of measuring the entire system, 
one looks only at the DQPT within a subsystem,
which has recently been systematically observed in a Bose--Hubbard gas~\cite{karchProbingQuantumManybody2025}.
To study its computational complexity, 
we formalize the task of determining whether a DQPT occurs in a subsystem at a specific time as a decision problem,
called \nameref{prm:promise_local_dqpt}.
We prove that it is $\BQP$-\complete:
a quantum computer solves it efficiently, 
while it is widely believed to be intractable for classical computers 
\cite{bennettStrengthsWeaknessesQuantum1997}.
This places local DQPT detection squarely in a sweet spot of practical quantum advantage.

Beyond the critical time decision problem, 
searching critical times within a given range is a practically motivated task.
We construct a quantum gradient-estimation-based protocol 
that estimates the subsystem Loschmidt echo at multiple time points with quadratically fewer queries to the quantum state preparation.
This algorithm is nearly optimal for this task and can be extended to any bounded observables. 
We also show that both the critical time and the universal scaling exponent are robust to Trotter error and noise, 
placing DQPT detection within reach of near-term hardware.
Since the underlying gradient encoding applies to any bounded Hermitian observable under Hamiltonian evolution, 
the same quadratic speedup carries over to processing dynamical information in classical systems such as coupled oscillators \cite{babbushExponentialQuantumSpeedup2023}, 
differential equations \cite{bravyiQuantumSimulationNoisy2025}, 
once they are embedded into quantum dynamics \cite{jinQuantumSimulationPartial2024,anLinearCombinationHamiltonian2023}.

\mysection[1]{Computational complexity of DQPT}
We first formalize DQPTs.
Given an $n$-qubit local Hamiltonian $H$, initial state $\ket{\psi_0}$, and evolution time $t$, 
the Loschmidt echo is $\los(t):= \abs{\losamp(t)}^2$ with amplitude $\losamp(t):= \bra{\psi_0} e^{-\ii \ham t} \ket{\psi_0}$, 
the (finite-size) rate function is
\begin{equation}\label{eq:global_echo}
    \rate(t) := -\frac{1}{n} \log \los(t).
\end{equation}
A DQPT occurs when $\rate(t)$ becomes nonanalytic at a critical time $t_c$ in the thermodynamic limit $n\to\infty$~\cite{heylDynamicalQuantumPhase2013}.
The time derivative of the rate function $\drate(t)$ plays the role of an order parameter for DQPT. 
It is analogous to the magnetization at an equilibrium ferromagnetic transition, 
which is the first-order derivative of the free energy with respect to an external field.
To detect $t_c$ with finite sizes, 
we formalize the critical-time condition as an exponentially small echo together with 
a sudden jump of $\drate(t)$ across $t_c$.
\begin{definition}[dynamical critical time]\label{def:dqpt}
    Given an $n$-qubit local Hamiltonian $H$, initial state $\ket{\psi_0}$, and $t \in \poly(n)$, 
    a \emph{critical time} $t_c$ of DQPT satisfies the rate function $\rate(t_c) \ge \xi \in \Theta(1)$ (exponentially small echo) and 
    $\dratem(t_c) - \dratep(t_c) \ge \eta > 0$ 
    (a slope discontinuity, i.e.\ a dynamical-susceptibility spike),
    where $\dratem(t_c):=\drate(t_c-\deltat)$ and $\dratep(t_c):=\drate(t_c+\deltat)$  are the time derivatives at $t_c$ with offset $\deltat$.
\end{definition}

Detecting a DQPT requires evaluating the rate $\rate(t)$ precisely, which is naturally tied to counting complexity~\cite{aaronsonComputationalComplexityLinear2011} via the Loschmidt echo $\los(t)$.
The relevant complexity class is $\gapP$~\cite{valiantComplexityComputingPermanent1979}, 
the set of differences of two $\sharpP$ functions.
Formally, for a Boolean function $f:\{0,1\}^n\to\{0,1\}$ with $N_b := |\{x : f(x) = b\}|$, 
computing the normalized gap $\ngap(f) := (N_0 - N_1)/2^n$ is $\gapP$-\complete, 
which is closely related to the amplitude of quantum circuits.
To reduce this counting problem to DQPT, 
we invoke the instantaneous quantum polynomial-time (IQP) circuit~\cite{bremnerClassicalSimulationCommuting2011}, 
which realizes $f$ as a 3-local Hamiltonian $H$ with the amplitude satisfying
\begin{equation}\label{eq:iqp_amplitude}
    \losamp(t) = \frac{1}{2^n} \sum_{x \in \{0,1\}^n} e^{-\ii t f(x)},
    \quad \losamp(\pi) = \ngap(f).
\end{equation}
The identity at $t = \pi$ 
converts an additive error $\eps$ in the rate function $\rate(\pi)$ into a multiplicative factor $e^{n\eps/2}$ in $|\ngap(f)|$ as in \cref{fig:dqpt_main}(a).
Ref.~\cite{bremnerAveragecaseComplexityApproximate2016} showed that computing the absolute value $|\ngap(f)|$ to constant multiplicative error remains $\gapP$-\hard{} for degree-3 polynomials over $\mathbb{F}_2$.
Therefore, we obtain the $\gapP$-\hardness{} of DQPT.

\begin{theorem}
    [$\gapP$-\hardness]
    \label{thm:sharp_hardness}
    There exists an $n$-qubit $3$-local Hamiltonian $\ham$
    and an evolution time $t$
    such that for the initial state $\ket{\psi_0} = \ket{0^n}$,
    estimating the rate function $\rate(t)$ to additive error $\eps\in\bigO(1/n)$ is $\gapP$-$\hard$ under polynomial-time Turing reductions.
\end{theorem}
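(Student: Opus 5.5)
The proof is a reduction from the problem shown $\gapP$-hard by Bremner, Montanaro and Shepherd \cite{bremnerAveragecaseComplexityApproximate2016}: given a degree-3 polynomial $f:\mathbb{F}_2^n\to\mathbb{F}_2$, compute $|\ngap(f)|$ to within a constant multiplicative factor. From the coefficients of $f$ I will build a 3-local Hamiltonian and an evolution time. Then I show that any estimate of $\rate(t)$ with additive error $\eps\in\bigO(1/n)$ gives such a multiplicative approximation after polynomial-time post-processing, which makes this a (one-query) Turing reduction.

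\textbf{Step 1: encoding $f$ as a Hamiltonian.} I write $f(x)=\sum_{S}a_S\prod_{i\in S}x_i \bmod 2$ with $|S|\le 3$. The mod-2 reduction is harmless in the phase because $e^{-\ii\pi m}$ depends only on $m \bmod 2$. On each monomial I replace $x_i$ by $(I-Z_i)/2$ and define
$$\ham' := \sum_S a_S \prod_{i\in S}\frac{I-Z_i}{2}.$$
This operator is diagonal in the computational basis, with $\ham'\ket{x}=\tilde f(x)\ket{x}$, where $\tilde f(x)\in\mathbb{Z}$ is the integer-valued monomial sum and $\tilde f\equiv f \pmod 2$. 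It is 3-local with polynomially many terms.

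To start from $\ket{0^n}$ as the theorem requires, I conjugate by Hadamards and set $\ham := \hdm^{\otimes n}\ham'\hdm^{\otimes n}$. Each $Z_i$ becomes $X_i$, so $\ham$ is still 3-local. Since $\hdm^{\otimes n}\ket{0^n}=2^{-n/2}\sum_x\ket{x}$, we get
$$\losamp(t)=\bra{0^n}\hdm^{\otimes n}e^{-\ii \ham' t}\hdm^{\otimes n}\ket{0^n}=2^{-n}\sum_x e^{-\ii t\tilde f(x)},$$
which is \cref{eq:iqp_amplitude}. At $t=\pi$ each term equals $(-1)^{f(x)}$, so $\losamp(\pi)=\ngap(f)$ and $\los(\pi)=\ngap(f)^2$.

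\textbf{Step 2: converting additive error into multiplicative error.} Suppose we obtain $\tilde r$ with $|\tilde r-\rate(\pi)|\le\eps$. Since $\rate(\pi)=-\tfrac{2}{n}\log|\ngap(f)|$, I output
$$\tilde g:=e^{-n\tilde r/2}.$$
Then $\tilde g/|\ngap(f)| = e^{-n(\tilde r-\rate(\pi))/2}\in[e^{-n\eps/2},e^{n\eps/2}]$. Choosing $\eps=c/n$ makes the multiplicative factor $e^{c/2}$. Taking $c$ small enough, for example $c=2\ln(1+\delta)$, gives any constant multiplicative precision $\delta$ required by the hardness result. The computation of $\tilde g$ is classical arithmetic on polynomially many bits.

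\textbf{Main obstacle: the case $\ngap(f)=0$.} Here $\rate(\pi)=+\infty$, so "additive error" is undefined. I will fix the convention that in this case the estimator outputs $\tilde r$ at least some threshold such as $n$ (or $\infty$), and that the reduction maps any $\tilde r$ beyond the largest finite value $\rate$ can take to the answer $\ngap(f)=0$. The threshold is safe because $\ngap(f)$ is an integer multiple of $2^{-n}$: whenever it is nonzero, $|\ngap(f)|\ge 2^{-n}$, hence $\rate(\pi)\le 2\log 2$, so finite values stay bounded. Multiplicative approximation is exactly what forces correct detection of zero, and this is the part of \cite{bremnerAveragecaseComplexityApproximate2016} that is $\gapP$-hard, so the reduction goes through. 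The remaining points need only care in stating them: hardness holds in the worst case over $f$, so "there exists $\ham$" means for the hard instance family, and the time $t=\pi$ is polynomial as \cref{def:dqpt} requires.
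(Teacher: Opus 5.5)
Your proposal is correct and follows the paper's proof: the Hadamard-conjugated degree-3 IQP Hamiltonian with $\losamp(\pi)=\ngap(f)$, then the conversion $e^{-n\tilde r/2}$ that turns additive error $\eps=c/n$ into a multiplicative factor $e^{c/2}$, and finally the Bremner--Montanaro--Shepherd hardness of approximating $|\ngap(f)|$. The paper's proof glosses over three points that you make explicit, and these tighten the argument without changing it: the integer-valued eigenvalues $\tilde f\equiv f \pmod 2$, the need to choose $c$ small, and a convention for $\ngap(f)=0$, justified by $|\ngap(f)|\ge 2^{-n}$ bounding every finite rate by $2\log 2$.
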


We defer the proof to Methods.
This hardness reflects a statistical bottleneck: the exponentially small echo required by \nameref{def:dqpt} demands exponentially many samples to resolve, obstructing classical and likely quantum computation.

To bypass this obstacle,
we define the Loschmidt echo of a subsystem with constant size $k\in\bigO(1)$ and the $k$-local rate function 
following~\cite{halimehLocalMeasuresDynamical2021,bandyopadhyayObservingDynamicalQuantum2021,karchProbingQuantumManybody2025}, 
\begin{equation}\label{eq:local_echo}
    \los_k(t) := 
    \bra{\psi(t)} \prod_{j=1}^k P_j\, \ket{\psi(t)},
    \;\krate(t) := -\tfrac{1}{k}\log\los_k(t),
\end{equation}
where $\ket{\psi(t)} := e^{-\ii H t} \ket{0^n}$,
and the projector $P_j := \op{0}_j$ for the initial state $\ket{0^n}$. 
It recovers the (global) rate function \cref{eq:global_echo} when $k = n$. 
Because $\los_k = \Theta(e^{-k})$ near a critical point, estimating it to multiplicative error needs only $\bigO(1)$ samples.
\cref{fig:local_dqpt} illustrates the cusp in $\krate(t)$ and the jump of $\dkrate(t)$ for an analytically solvable Ising instance.
\begin{figure}[t]
    \centering
    \includegraphics[width=.96\linewidth]{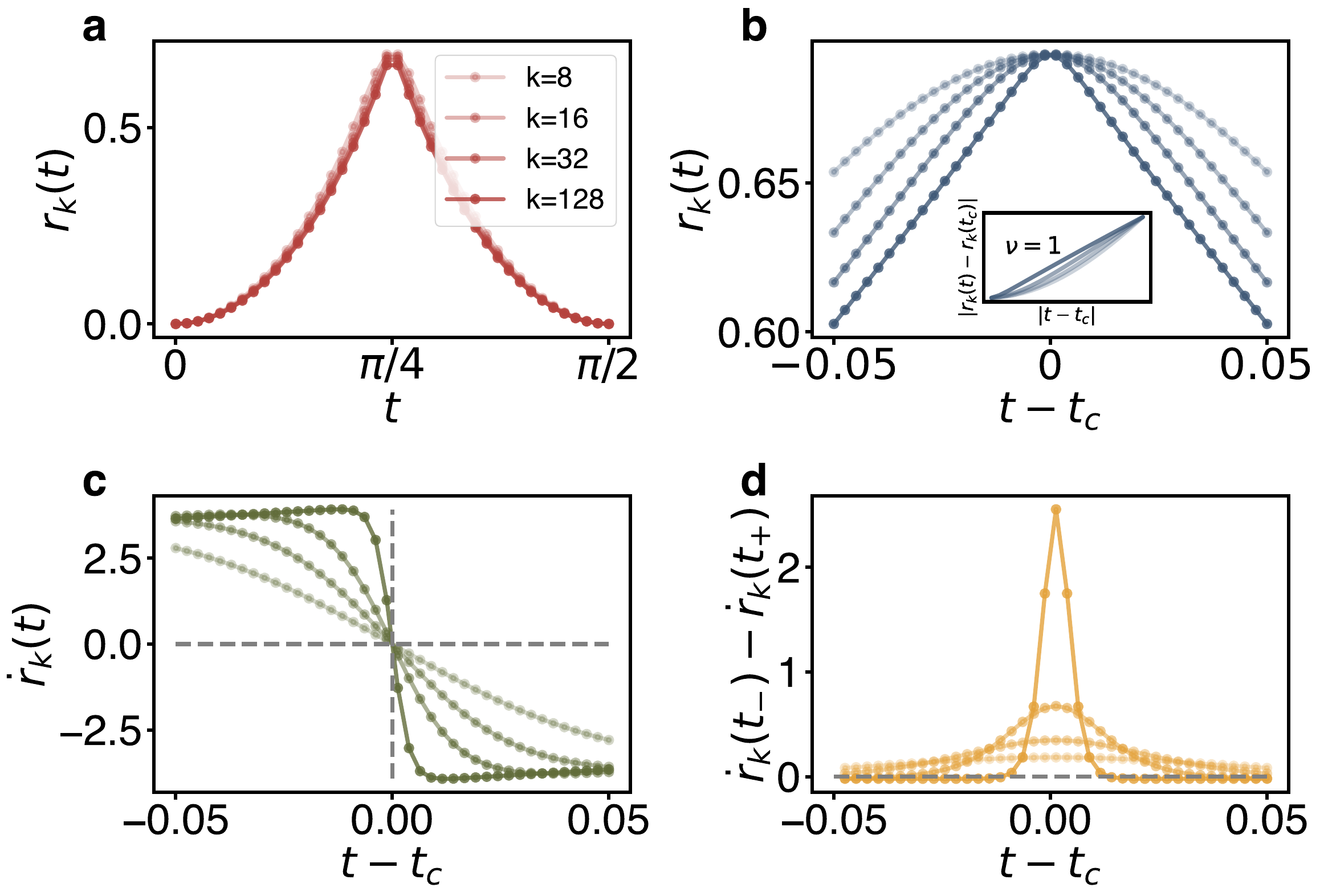}
    \caption{
    The local version of rate function 
    for the Ising Hamiltonian $H = J\sum_{j=1}^{n-1}  X_j X_{j+1}$ of $n=128$ spins and the initial state $\ket{\psi_0} = \ket{0}^{\otimes n}$, where the Loschmidt echo can be solved analytically \cite{halimehLocalMeasuresDynamical2021}.
    (a) The local rate function $\krate(t)$ for different subsystem sizes $k$.
    (b) The zoom-in plot of $\krate(t)$ near the critical time $t_c$.
    The inset shows the universal scaling with the exponent $\nu = 1$.
    (c) The time derivative of the local rate function $\dkrate(t)$.
    (d) 
    The difference of the left and right derivatives of the local rate function,
    i.e., $\dkrate(t_-)-\dkrate(t_+)$, 
    which quantifies the jump of order-parameter $\dkrate(t)$.
    }
    \label{fig:local_dqpt}
\end{figure}

With this local (subsystem) variant,
we formulate a decision problem 
called \nameref{prm:promise_local_dqpt} 
asks whether both the magnitude and susceptibility conditions of \nameref{def:dqpt}
hold at $t$ for a $k$-local subsystem with a promised gap, 
i.e., whether $t$ is a local critical time or not.

\begin{problem}[\local-\dqpt]\label{prm:promise_local_dqpt}
    Given a local Hamiltonian $H$, an initial state $\ket{\psi_0}$, evolution time $t$ with
    offset $\deltat$, subsystem size $k$, and magnitude thresholds $\xi_1 \ge \xi_0 > 0$, 
    and susceptibility thresholds $J_1 > J_0 \ge 0$,
    this problem is a promise problem to determine
    \begin{itemize}
        \item $\yes$: $\krate(t) \ge \xi_1$ and 
        $J_k(t):= \dot{r}_{k,-}(t) - \dot{r}_{k,+}(t)  \ge J_1$
        (i.e., it is a critical time of local DQPT),
        \item $\no$: $\krate(t) \le \xi_0$ or 
        $J_k(t):= \dot{r}_{k,-}(t) - \dot{r}_{k,+}(t)  \le J_0$,
    \end{itemize}
    with promised gaps 
    $J_1/J_0 \ge c$ for some constant $c > 1$
    and $\xi_1 - \xi_0 \in \Omega(1)$ with $\xi_1 \in \Omega(1)$,
    where $\dot{r}_{k,\pm}(t) := \dkrate(t \pm \deltat)$
    are the derivatives at the offset points $t\pm\deltat$. 
\end{problem}
    Here $J_k(t) $ quantifies the jump of the order parameter across the window, 
    i.e., the integrated dynamical susceptibility $ \int_{t-\deltat}^{t+\deltat}\chi_k(s)\,ds$ of density $\chi_k(t) := -\ddkrate(t)$.
We show this problem is $\BQP$-$\complete$, 
implying a provable quantum advantage for deciding local DQPTs.
The $\BQP$-$\hardness$ is by reduction from the standard problem of deciding the output of a generic polynomial-size quantum circuit $U$.
With $k$ parallel repetitions and a majority vote,
we obtain a new circuit $U'$
so that $\bra{0^n}U'^\dagger \prod_{j=1}^k P_j U'\ket{0^n}$ is either $\le e^{-\Omega(k)}$ or $\ge 1 - e^{-\Omega(k)}$. 
Then, we construct a palindromic circuit $V:=U'^\dagger  U'$ as shown in \cref{fig:dqpt_main}(b), 
which yields a special time at the midpoint of $V$,
emulating the critical time of a local DQPT depending on the output of the original circuit $U$.
To convert $V$ to Hamiltonian dynamics, 
we construct the Feynman--Kitaev Hamiltonian $H_{\mathrm{FK}}$ with binomial clock amplitudes. 
At $t = \pi/4$, 
a sharp cusp 
appears with $e^{-\ii H_{\mathrm{FK}}t}$ (satisfying both promise conditions of \nameref{prm:promise_local_dqpt}) 
if and only if the original circuit rejects.
See more details in Supplementary Materials \cite{seesm}.

The membership in $\BQP$ 
(i.e., the efficiency of the quantum algorithm) 
follows because a quantum computer estimates $\los_k(t)$ to constant multiplicative precision with $\bigO(1)$ samples 
(i.e., magnitude check) and 
estimates $\dot{r}_{k,\pm}$ via the commutator identity $\dot{\los}_k(t) = \ii\bra{\psi(t)}[\ham,\prod_{j=1}^k P_j ]\ket{\psi(t)}$ 
(i.e., susceptibility check, forming the jump $J_k = \dot{r}_{k,-} - \dot{r}_{k,+}$).
Thus any classical algorithm for \nameref{prm:promise_local_dqpt} would imply $\BPP = \BQP$, 
which is widely believed unlikely.
So, it is a provable quantum advantage.

\begin{theorem}[Informal]
\label{thm:local_bqp_complete}
    \nameref{prm:promise_local_dqpt} is $\BQP$-$\complete$ for any constant $k \ge 1$.
\end{theorem}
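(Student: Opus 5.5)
The proof has two parts: membership in $\BQP$ and $\BQP$-hardness. For membership, I would give a quantum algorithm that checks both promise conditions of \nameref{prm:promise_local_dqpt} separately and rejects if either check fails.

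\emph{Magnitude check.} Since $k\in\bigO(1)$, the condition $\krate(t)\ge\xi_1$ versus $\krate(t)\le\xi_0$ becomes $\los_k(t)\le e^{-k\xi_1}$ versus $\los_k(t)\ge e^{-k\xi_0}$. These are constant thresholds separated by a constant gap. The algorithm prepares $\ket{\psi(t)}$ by Hamiltonian simulation, which is polynomial for local $H$ and $t\in\poly(n)$, measures the first $k$ qubits, and records the empirical frequency of $0^k$. By Hoeffding's inequality, $\bigO(1)$ samples suffice.

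\emph{Susceptibility check.} We have $\dkrate(s)=-\dot\los_k(s)/(k\los_k(s))$ and $\dot\los_k(s)=\ii\bra{\psi(s)}[\ham,\prod_j P_j]\ket{\psi(s)}$. The commutator expands as a polynomial-size sum of bounded local terms, so each term can be estimated by a Hadamard test or a Pauli measurement. The algorithm estimates $\dot\los_k$ and $\los_k$ at $s=t\pm\deltat$ and forms $J_k$.

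The delicate part is the precision. The ratio $J_1/J_0\ge c$ only gives a multiplicative gap, so I would use the promise to get an additive gap: either assume $J_0$ or $J_1$ is at least $1/\poly(n)$ in the instances under consideration, or fold the promise gap into the problem parameters. An inverse-polynomial additive gap then needs only polynomially many samples. The relative error of $\los_k$ propagates into $\dkrate$ with only a constant factor, because $\los_k\ge e^{-\bigO(k)}$ whenever the magnitude test is inconclusive. Otherwise $\los_k$ could be small, and I would handle that case by bounding $\los_k$ from below at $t\pm\deltat$ using $\abs{\dot\los_k}\le 2\norm{H}$ and choosing $\deltat$ small.

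For hardness, I would follow the sketch in the text. Start with a $\BQP$ circuit $U$ with acceptance probability $\ge 2/3$ or $\le 1/3$. Run $m$ parallel copies and compute the majority into each of $k$ output qubits, with the convention that output $0$ means reject. This gives $U'$ with $\los':=\bra{0^n}U'^\dagger\prod_{j\le k}P_jU'\ket{0^n}$ either $\le e^{-\Omega(m)}$ or $\ge 1-e^{-\Omega(m)}$. Take $V=U'^\dagger U'$ as a palindromic sequence of $2L$ gates. Build the Feynman--Kitaev Hamiltonian $H_{\mathrm{FK}}=\sum_\ell (\ket{\ell+1}\bra{\ell}\otimes V_{\ell+1}+\text{h.c.})$ using a clock whose dynamics is a spin-$L$ rotation, i.e.\ $\sum_\ell \sqrt{(\ell+1)(2L-\ell)}$ hopping weights, so that the clock amplitudes at time $t$ are exactly binomial. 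The clock register is unary, so it contributes local terms.

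The projectors $P_j$ act on the $k$ data output qubits, together with a clock projection if needed. At $t=\pi/4$, which is the midpoint, the binomial distribution concentrates around $\ell\approx L$ with width $\sqrt L$, and the state there is $U'\ket{0}$. Hence $\los_k(\pi/4)\approx\los'$, which is exponentially small exactly when $U$ rejects. For $t$ away from the midpoint, the mass sits at clock positions where the partial circuit includes parts of $U'^\dagger$ or has not yet applied the majority layer. If I put the majority computation as a single late layer, so that before it the outputs are still $\ket{0}$, then $\los_k$ rises back toward $1$ on both sides. This produces a V-shaped profile of $\krate$ whose slope changes sign across $\pi/4$, giving $J_k\ge J_1$ in the reject case. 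In the accept case $\krate\approx 0$ everywhere, so the magnitude condition fails. The YES instances therefore correspond to rejection, which is fine because $\BQP$ is closed under complement.

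The main obstacle I expect is the quantitative analysis of this cusp. It requires computing $\los_k(\pi/4\pm\deltat)$ as a binomial mixture over clock positions, showing the one-sided slopes have opposite signs with magnitude $\Omega(1)$ for a suitable $\deltat=\Theta(1/\sqrt L)$ or constant, and verifying the gaps $\xi_1-\xi_0=\Omega(1)$ and $J_1/J_0\ge c$. I would first pad $U'$ with identity gates at both ends, so that the window of clock positions where the majority bit is already computed is a controlled fraction of the binomial bulk. Then $\los_k(t)$ is approximately one minus the binomial tail mass inside that window, which makes the derivatives explicit.
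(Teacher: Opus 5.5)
\textbf{Hardness.} Your hardness half follows the paper's construction: an amplified majority fanned out to $k$ fresh qubits, the palindrome $U'^\dagger U'$, the binomial Feynman--Kitaev clock, $t^*=\pi/4$, and labels swapped using closure of \BQP{} under complement. (A small slip: with your convention ``output $0$ means reject'', it is acceptance, not rejection, that makes $\mathcal{L}'$ small. This is harmless.) But the step ``the clock concentrates near $L$ with width $\sqrt L$, hence $\mathcal{L}_k(\pi/4)\approx\mathcal{L}'$'' is false for the unpadded palindrome. The answer qubits are touched only by the fan-out, so they carry the majority bit on just $O(k)$ clock steps around the fold. No binomial PMF there exceeds $O(1/\sqrt L)$, so the clock puts weight $O(k/\sqrt L)$ on those steps at \emph{every} time. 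This gives $\mathcal{L}_k(t)\ge 1-O(k/\sqrt L)$ and $r_k=O(1/\sqrt L)$ uniformly, so the reduction carries no signal.

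The padding you mention in passing is therefore the crux, not a convenience. Only the identities appended \emph{after} the fan-out matter: they become $w$ idle gates at the fold that hold the answer on a segment $B$ of width $w$ (clock length $N'=2L+w$). You need $w$ at least of the order of the clock spread, since $w=o(\sqrt{N'})$ still leaves $r_k(\pi/4)=o(1)$. The paper takes $w^2/N'$ of order $\log(1/\epsilon)+\log L$, so Hoeffding gives $\Pr[X_{\pi/4}\notin B]\le\epsilon$ and your claim then holds.

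The cusp analysis you defer is then short:
\begin{itemize}
\item The echo is exactly $\mathcal{L}_k(t)=\mathcal{L}'+(1-\mathcal{L}')\Pr[X_t\notin B]$ with $X_t\sim\mathrm{Bin}(N',\sin^2 t)$.
\item Choosing $\sin 2\delta t=w/N'$ puts the probes $\pi/4\mp\delta t$ on the two walls of $B$. There the identity $\tfrac{d}{dt}\Pr[X_t\ge j]=\sin(2t)\,N'\,\mathrm{bin}(j-1;N'-1,\sin^2 t)$ gives a population current $\Theta(\sqrt{N'})$ into $B$.
\item Mirror symmetry $\mathcal{L}_k(\pi/2-t)=\mathcal{L}_k(t)$ reduces the jump to $J_k(\pi/4)=2\dot r_k(\pi/4-\delta t)$. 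This is $\Theta(\sqrt{N'}/k)$ when $\mathcal{L}'\le\epsilon$ and $O(\epsilon\sqrt{N'}/k)$ when $\mathcal{L}'\ge 1-\epsilon$.
\end{itemize}

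\textbf{Membership.} The genuine gap is at the probes $t\pm\delta t$. Estimating $\dot r_k(t\pm\delta t)=-\dot{\mathcal{L}}_k/(k\mathcal{L}_k)$ needs a lower bound on $\mathcal{L}_k(t\pm\delta t)$, and neither of your arguments supplies one. The magnitude test concerns only $\mathcal{L}_k(t)$. In exactly the case where you go on to the susceptibility check ($r_k(t)\ge\xi_1$), it tells you $\mathcal{L}_k(t)\le e^{-k\xi_1}$ is \emph{small}. So the Lipschitz bound $\mathcal{L}_k(t\pm\delta t)\ge\mathcal{L}_k(t)-2\|H\|\delta t$ is vacuous. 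Moreover, $\delta t$ is part of the input and not yours to shrink, and on the hard instances $\|H\|\ge N'$ while $\delta t\gtrsim 1/\sqrt{N'}$. Without such a bound the probe slopes are ratios of possibly exponentially small quantities, and membership cannot be argued for the problem as literally stated in the main text.

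The paper closes this with a promise rather than an estimate. Its formal \local-\dqpt{} adds the well-conditioned-probe clause $r_k(t\pm\delta t)\le R_0$ for a constant $R_0$. This gives $\mathcal{L}_k(t\pm\delta t)\ge e^{-kR_0}=\Omega(1)$, and the commutator estimator then needs $O(\|Q\|_1^2/(k\mathcal{L}_{\min}\eta)^2)$ samples. You must then check that the reduction respects the promise, which it does with $R_0=1$. At each probe the clock straddles a wall, so $\Pr[X\notin B]=\tfrac12+O(1/\sqrt L)$ and $\mathcal{L}_k(\pi/4\pm\delta t)=\tfrac{1+\mathcal{L}'}{2}+O(1/\sqrt L)$.

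Your concern about turning $J_1/J_0\ge c$ into an additive gap is legitimate; the paper glosses it as ``constant relative precision''. It costs nothing here: the hard instances have $J_1=\Theta(\sqrt{N'}/k)$ while $\|Q\|_1=O(kN')$, so polynomially many samples suffice.
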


\mysection[1]{Coherent critical time search algorithm}
While we have shown the quantum advantage for deciding the existence of a local DQPT at a given time $t$,
one wants to \emph{search} for critical times $t_c$ in a time range $[0, \tmax]$.
Assuming consecutive critical times are separated by at least $\deltat_{\min}\in \Omega(1/\poly(n))$,
a natural strategy samples $\los_k$ on a grid of $M = \tmax/\deltat_{\min}$ time points and bisects candidate intervals where $\dkrate$ changes sign.
A near-term implementation (see \cref{fig:trotter}) estimates each grid point independently by Trotter evolution with direct measurement and achieves shot-noise cost $\bigO(M/\eps^2)$ samples.

Another key quantity of phase transitions is the exponent of the universal scaling.
In 1D systems, DQPTs generically exhibit power-law nonanalyticities 
near a critical time $t_c$ 
\begin{equation}\label{eq:scaling_form}
    \krate(t) \simeq \krate(t_c) - A\,|t - t_c|^\nu,
\end{equation}
where the critical exponent $0 < \nu \le 1$ characterizes the universality class~\cite{heylDynamicalQuantumPhase2013,heylDynamicalQuantumPhase2015},
while higher-dimensional systems such as the 2D Ising model exhibit logarithmic nonanalyticity~\cite{vajnaTopologicalClassificationDynamical2015}.
To estimate the power-law exponent $\hat\nu$ with additive error $\eps$,
we use a log-spaced fit to $\los_k(t_j)$ at $\bigO(1)$ points,
which also requires evaluating expectation values at multiple time points with $\eps$ precision.

\begin{figure}[t]
    \centering
    \includegraphics[width=.95\linewidth]{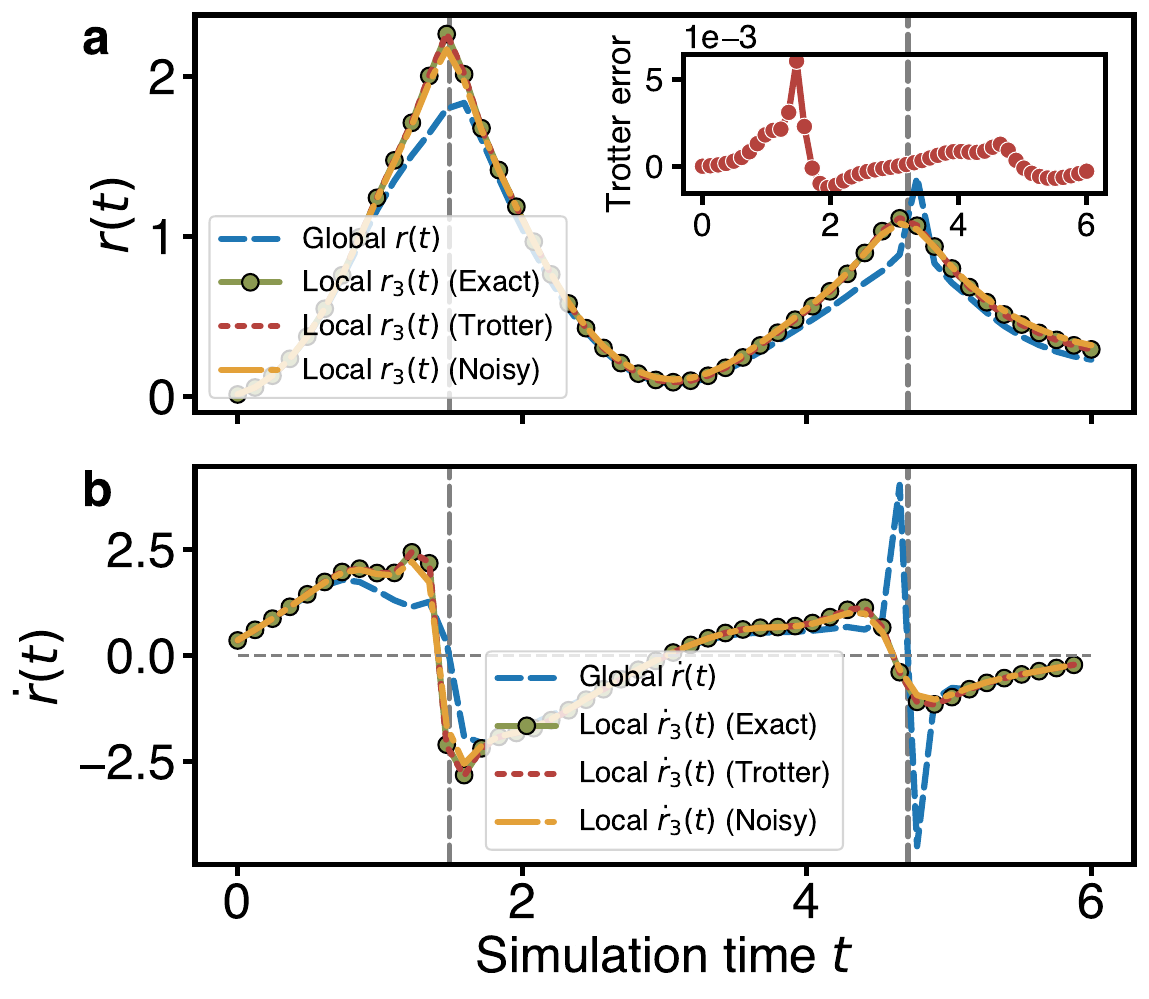}
    \caption{
    Search for local DQPT critical times and the robustness of the near-term protocol.
    (a) The local rate function $\krate(t)$ of the transverse-field Ising Hamiltonian $H=J\sum_{j}Z_jZ_{j+1}+h\sum_{j}X_j$ with $J=1$, $h=2$, $n=10$ spins and subsystem size $k=3$ (green circle line) compared with the evolution (red line) approximated by 50-step second-order Trotter formula and the one with local depolarizing noise at rate $10^{-3}$ (yellow line).
    The inset shows the Trotter error in the rate function.
    (b) The time derivative $\dkrate(t)$ is evaluated by finite differences of the rate function and exhibits a jump discontinuity at $t_c$.
    The simulated local critical times are close to the ideal one.
    }
    \label{fig:trotter}
\end{figure}

\begin{figure*}[t]
    \centering
    \begin{quantikz}[row sep=0.08cm, column sep=0.15cm]
        \lstick[wires=3]{$\ket{0}^{\otimes 3}$\\[-2pt]{\scriptsize $x_M$}}
            & \gate[wires=3]{\hdm^{\otimes 3}} & \gate[wires=3, style={fill=green!15}]{R(\tilde{u}_M^{(q)})} & \ctrl{8} & \qw & \qw & \qw & \ldots
            & \qw & \qw & \qw & \gate[wires=3]{\mathrm{QFT}^{-1}} & \meter{} \\[-0.1cm]
        \setwiretype{n}
            & & & \push{\vdots} & & & & & & & & & \\[-0.05cm]
        \setwiretype{q}
            & & & \qw & \ \ldots\ \qw & \ctrl{6} & \qw & \ldots
            & \qw & \ \ldots\ \qw & \qw & & \meter{} \\[-0.05cm]
        \setwiretype{n}
            \push{\vdots} & \vdots & \vdots & & & & & \vdots & & & & \vdots & \\
        \lstick[wires=3]{$\ket{0}^{\otimes 3}$\\[-2pt]{\scriptsize $x_1$}}
            & \gate[wires=3]{\hdm^{\otimes 3}} & \gate[wires=3, style={fill=green!15}]{R(\tilde{u}_1^{(q)})} & \qw & \qw & \qw & \qw & \ldots
            & \ctrl{4} & \qw & \qw & \gate[wires=3]{\mathrm{QFT}^{-1}} & \meter{} \\[-0.1cm]
        \setwiretype{n}
            & & & \push{\vdots} & & & & & & & & & \\[-0.05cm]
        \setwiretype{q}
            & & & \qw & \ \ldots\ \qw & \qw & \qw & \ldots
            & \qw & \ \ldots\ \qw & \ctrl{2} & & \meter{} \\
        \lstick{$\ket{0}$\\[-2pt]{\scriptsize anc}}
            & \gate{S^\dagger\!\hdm}
            \gategroup[wires=2, steps=12, style={dashed, rounded corners, fill=blue!5, inner xsep=1pt, inner ysep=0.5pt}, background, label style={label position=below, anchor=north, yshift=-0.3cm}]{Probability oracle $F(\vbx)$}
            & \ctrl{1} & \ctrl{1} & \ \ldots\  & \ctrl{1} & \ctrl{1} & \ldots
            & \ctrl{1} & \ \ldots\  & \ctrl{1} & \ctrl{1} & \gate{\hdm} \\
        \lstick{$\ket{0}^{\otimes n}$\\[-2pt]{\scriptsize sys}}
            & \gate{U_\psi}
            & \gate[style={fill=orange!15}]{e^{-\ii\ham t_M}}
            & \gate[style={fill=violet!15}]{e^{-2\ii O_M}}
            & \ \ldots\ \qw
            & \gate[style={fill=violet!15}]{e^{-2\ii\cdot 2^k O_M}}
            & \gate{e^{\ii\ham\Delta t_M}}
            & \ldots
            & \gate[style={fill=violet!15}]{e^{-2\ii O_1}}
            & \ \ldots\ \qw
            & \gate[style={fill=violet!15}]{e^{-2\ii\cdot 2^k O_1}}
            & \gate{e^{\ii\ham\Delta t_1}}
            & \qw
    \end{quantikz}
    \caption{The quantum circuit for one round of adaptive snapshots estimation (Alg.~\ref{alg:multi_time}).
    The algorithm repeats this circuit for rounds $q = 0, 1, \ldots, \ceil{\log_2(1/\eps)}$.
    The dashed box encloses the probability oracle $F(\vbx)$, which is unchanged across rounds:
    an ancilla qubit implements the Hadamard test on $U(\vbx)$, mapping
    $f(\vbx) = \frac{1}{2}(1 - \Im\bra{\psi_0}U(\vbx)\ket{\psi_0})$
    to measurement probability.
    Each index register $x_j$ has $p = 3$ qubits, independent of~$\eps$.
    At each round, classical phase corrections $R(\tilde{u}_j^{(q)})$ (shaded green) shift the index registers
    based on previous estimates; after inverse QFT and measurement, the estimates are updated as
    $\tilde{u}_j^{(q+1)} = \tilde{u}_j^{(q)} + \pi\,2^{-q} g_j^{(q)}$.
    Bit~$k$ of $x_j$ controls a rotation $e^{-2\ii\cdot 2^k O_j}$ (shaded violet), doubly controlled with the ancilla.
    The initial evolution $e^{-\ii\ham t_M}$ (shaded orange) differs from the incremental steps $e^{\ii\ham\Delta t_j}$.
    }
    \label{fig:circuit_gradient}
\end{figure*}

Therefore, it is a desideratum to achieve the Heisenberg limit $\bigO(1/\eps)$ and a scaling sublinear in $M$ when taking $M$ snapshots of Hamiltonian dynamics.
We show that faster searching of local critical times is feasible with a coherent quantum algorithm.
We first consider a general dynamical estimation task, 
i.e., take snapshots of Hamiltonian dynamics by measuring observables on the evolved states at multiple time points.
\begin{problem}[Hamiltonian snapshots estimation]\label{prm:multi_time_obs}
    Given an $n$-qubit Hamiltonian $\ham$, an initial state $\ket{\psi_0}$ prepared by $U_{\psi}$,
    $M$ time points $\{t_j\}_{j=1}^M$ in a range $[0,T]$, 
    bounded Hermitian observables $\{O_j\}_{j=1}^M$ with $\|O_j\| \le 1$,
    and precision $\eps > 0$,
    output estimates $\tilde{o}_j$ satisfying
    $|\tilde{o}_j - \bra{\psi_0} e^{\ii \ham t_j} O_j e^{-\ii \ham t_j} \ket{\psi_0}| \le \eps$ for all $j \in [M]$.
    The goal is to minimize total queries to $U_\psi$ and total Hamiltonian simulation time.
\end{problem}
The main idea of the fast quantum algorithm is to encode expectation values at $M$ distinct times as the gradient of a single scalar function, 
then extract all components simultaneously via quantum gradient estimation~\cite{gilyenOptimizingQuantumOptimization2019,hugginsNearlyOptimalQuantum2022,wadaHeisenbergLimitedAdaptiveGradient2025}.
For bounded Hermitian observables $\{O_j\}$ with $\|O_j\|\le 1$, we interleave short-time evolutions with auxiliary rotations to form the parameterized unitary
\begin{equation}\label{eq:param_unitary}
    U(\vbx) := \Bigl(\prod_{j=1}^{M} e^{\ii \ham \Delta t_j}\, e^{-2\ii x_j O_j}\Bigr) e^{-\ii \ham \tmax},
\end{equation}
with $\Delta t_j := t_j - t_{j-1}$, 
and set $f(\vbx) := \tfrac{1}{2}[1 - \Im\bra{\psi_0}U(\vbx)\ket{\psi_0}]$.
Differentiating \cref{eq:param_unitary} gives the gradient--observable identity\label{lem:oracle_construction}
\begin{equation}\label{eq:gradient_echo}
    \left.\frac{\partial f}{\partial x_j}\right|_{\vbx=\vbzero} = \bra{\psi_0} O_j(t_j) \ket{\psi_0},
    \quad j = 1,\ldots,M,
\end{equation}
with $O_j(t_j) := e^{\ii \ham t_j}\,O_j\,e^{-\ii \ham t_j}$ the $j$th observable evolved for time $t_j$ in the Heisenberg picture.

A key component of this quantum gradient estimation algorithm is the probability oracle
    $U_f(\vbx)$,
    which maps $\ket{\vbx} \ket{\vb{0}}$
    to $\ket{\vbx} \qty(\sqrt{1 - f(\vbx)}\ket{0}\ket{\phi_0(\vbx)} + \sqrt{f(\vbx)}\ket{1}\ket{\phi_1(\vbx)})$.
A Hadamard test on the controlled-$U(\vbx)$ circuit
(the dashed box in \cref{fig:circuit_gradient})
realizes this probability oracle
at a cost of one call to $U_{\psi}$,
total simulation time $2\tmax$,
and $M$ controlled rotations $e^{-2\ii x_j O_j}$ per query.
Embedding $F(\vbx)$ in index registers of $p$
qubits each and applying adaptive quantum gradient estimation~\cite{wadaHeisenbergLimitedAdaptiveGradient2025} extracts all $M$ components simultaneously.
\cref{fig:circuit_gradient} presents the circuit of Alg.~\ref{alg:multi_time}.

\begin{theorem}
    [Adaptive snapshots estimation]\label{thm:multi_time_estimation}
    For $M$ time points $\qty{t_j}_{j=1}^{M}$ in $[0,\tmax]$ and bounded observables $\|O_j\|\le 1$, 
    there is a quantum algorithm outputting estimators $\hat{o}_j$ with mean-squared error $\max_j \mathbb{E}[(\hat{o}_j - \bra{\psi_0}O_j(t_j)\ket{\psi_0})^2] \le \eps^2$ using $\bigO(\eps^{-1}\sqrt{M}\log M)$ oracle queries, total simulation time $\bigO(\eps^{-1}\sqrt{M}\,\tmax \log M)$, and $\bigO(M+n)$ qubits.
\end{theorem}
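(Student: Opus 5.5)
The plan is to reduce \cref{thm:multi_time_estimation} to the adaptive Heisenberg-limited gradient estimation theorem of~\cite{wadaHeisenbergLimitedAdaptiveGradient2025}, applied to the function $f(\vbx)=\tfrac12[1-\Im\bra{\psi_0}U(\vbx)\ket{\psi_0}]$ built from \cref{eq:param_unitary}. That result states the following. Given a probability oracle for a function $f:\mathbb{R}^M\to[0,1]$ whose derivatives satisfy a Gevrey-type bound $|\partial^{\alpha} f(\vbzero)|\le c^{|\alpha|}|\alpha|^{|\alpha|/2}$, one can output $\hat g$ with $\max_j\mathbb{E}[(\hat g_j-\partial_j f(\vbzero))^2]\le\eps^2$. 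The cost is $\bigO(\eps^{-1}\sqrt{M}\log M)$ oracle queries, and the index registers have constant size $p$. The proof therefore has three ingredients, carried out in order: the gradient--observable identity, the oracle construction and its cost, and the smoothness bound.

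\emph{Step 1 (identity).} At $\vbx=\vbzero$ we have $U(\vbzero)=\prod_j e^{\ii H\Delta t_j}e^{-\ii HT}$, which equals the identity once $\Delta t_j$ telescopes against $t_M=T$ with $t_0=0$, up to the ordering convention of the product. I would fix the ordering so that the rotation $e^{-2\ii x_jO_j}$ sits immediately after evolving forward by $t_j$ from the left. Then $U(\vbx)=\prod_j e^{-2\ii x_j O_j(t_j)}$ in Heisenberg form, with each $O_j(t_j)$ conjugated by its own $e^{\pm\ii Ht_j}$. Differentiating gives $\partial_j U|_{\vbzero}=-2\ii O_j(t_j)$. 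Hence $\partial_j f=-\tfrac12\Im(-2\ii\langle O_j(t_j)\rangle)=\langle O_j(t_j)\rangle$, which is \cref{eq:gradient_echo}.

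\emph{Step 2 (oracle and cost).} I would verify that the Hadamard test with an $S^\dagger$ phase on the ancilla produces outcome $1$ with probability exactly $f(\vbx)$. The construction needs the following resources:
\begin{itemize}
\item the binary index registers encoding $\vbx$ on a grid, controlling $e^{-2\ii 2^k O_j}$;
\item one call to $U_\psi$;
\item simulation time $T$ for the initial evolution plus $\sum_j\Delta t_j=T$ for the increments, i.e.\ $2T$ in total.
\end{itemize}
Multiplying by the query count gives total time $\bigO(\eps^{-1}\sqrt{M}T\log M)$. The qubit count is $Mp+n+1=\bigO(M+n)$, plus whatever ancillas the controlled $e^{-2\ii x O_j}$ require; I would take these to be $\bigO(1)$, reused across $j$.

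\emph{Step 3 (smoothness, the main obstacle).} The adaptive algorithm requires bounds on all higher-order mixed derivatives, not just the gradient. From the product form, $\partial^\alpha U(\vbzero)$ is an ordered product of powers $(-2\ii O_j(t_j))^{\alpha_j}$. This gives $|\partial^\alpha f|\le\tfrac12\prod_j 2^{\alpha_j}=2^{|\alpha|-1}$, since $\|O_j\|\le1$. I expect this to comfortably satisfy the Gevrey condition with constant $c=2$. The remaining care is to rescale $\vbx$ by a constant so that the oracle's domain and derivative normalization match the hypotheses of~\cite{wadaHeisenbergLimitedAdaptiveGradient2025}. That rescaling multiplies the error by a constant and is absorbed by rescaling $\eps$. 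With all hypotheses verified, the cited theorem yields the stated mean-squared-error guarantee and query complexity directly.
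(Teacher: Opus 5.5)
Your proposal is correct and follows essentially the same route as the paper: the telescoping gradient--observable identity, the $S^\dagger$-Hadamard-test probability oracle costing one $U_\psi$ call and simulation time $2\tmax$ per query, the bound $|\partial_\alpha f(\vbzero)|\le 2^{|\alpha|-1}$ giving the growth condition with $c=2$, and the adaptive algorithm of Wada et al.\ with $p=3$ index qubits per coordinate (the paper also sketches the geometric failure schedule $\delta^{(q)}=c_0/8^{q_{\max}-q}$ behind the $\bigO(\eps^{-1}\sqrt{M}\log M)$ count, which you take as part of the cited theorem). Your rewriting of $U(\vbx)$ as the ordered product $\prod_j e^{-2\ii x_j O_j(t_j)}$ is a slightly cleaner way to get the same derivatives; the increments sum to $t_M\le\tmax$ once the time points are sorted, so the per-query time is at most $2\tmax$.
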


Specifically, the Heisenberg-limited scaling $\bigO(1/\eps)$ arises from coherent amplitude estimation on the probability oracle,
quadratically better than the $\bigO(1/\eps^2)$ standard quantum limit. 
The $\sqrt{M}$ scaling exploits quantum parallelism, with index registers in superposition probing all $M$ directions at amplitude $\sim 1/\sqrt{M}$ per query.

Consequently, we have an algorithm for searching critical times of local DQPTs.
Setting $O_j = P$ (the $k$-local projector) gives $\tilde\los_j \approx \los_k(t_j)$, 
with each oracle decomposing into $\bigO(kM)$ extra single-qubit-controlled gates.
Plugging this into the two-phase search---a coarse screening on the $M = \tmax/\deltat_{\min}$ grid followed by $\bigO(\log(\deltat_{\min}/\eps))$ bisections---completes the critical-time search with the following cost.
\begin{corollary}[Complexity of \search-\dqpt]\label{thm:optimal_search_dqpt}\label{prm:search_dqpt}
    Given an $n$-qubit local Hamiltonian $\ham$, 
    initial state $\ket{\psi_0}$, time range $\tmax \in \poly(n)$, and precision $\eps > 0$,
    assume consecutive critical times are separated by at least $\deltat_{\min} \in \Omega(1/\poly(n))$,
    there is a quantum algorithm 
    outputting estimates of critical times $\tilde{t}_c$ satisfying $|\tilde{t}_c - t_c| \le \eps$ for every critical time $t_c \in [0,\tmax]$ of \nameref{prm:promise_local_dqpt}, 
    using total Hamiltonian simulation time $\widetilde{\bigO}(\tmax^{3/2}/\deltat_{\min}^{1/2}) + \bigO(\tmax\log(\deltat_{\min}/\eps))$ and $\bigO(\tmax/\deltat_{\min} + n)$ qubits.
\end{corollary}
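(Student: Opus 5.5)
We obtain the corollary by composing \cref{thm:multi_time_estimation} with a two-phase search: coarse screening on a grid, then local bisection. First, set $M=\tmax/\deltat_{\min}$ and fix the grid $t_j=j\,\deltat_{\min}$ on $[0,\tmax]$. Take $O_j=\prod_{i=1}^k P_i$ for all $j$; these are bounded Hermitian observables with $\|O_j\|\le 1$. Then \cref{eq:gradient_echo} gives $\partial_{x_j}f|_{\vbx=\vbzero}=\los_k(t_j)$. Each controlled rotation $e^{-2\ii\cdot 2^b O_j}$ is a phase on the all-zero subspace of $k$ qubits. It therefore compiles into $\bigO(k)$ single-qubit-controlled gates, which adds no simulation time.

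Second, in the screening phase we apply \cref{thm:multi_time_estimation} with a constant precision $\eps_0$. The precision only needs to suffice to (i) check the magnitude condition, which is possible because $\los_k=\Theta(e^{-k})$ near a critical point, and (ii) detect the jump $J_k\ge J_1$ through finite differences of $\hat r_k$ on neighbouring grid points. This is where the promised constant gaps $\xi_1-\xi_0\in\Omega(1)$ and $J_1/J_0\ge c$ of \nameref{prm:promise_local_dqpt} are used. Multiplicative accuracy on $\los_k$ becomes additive accuracy on $\krate$ through $\log$, at a constant cost since $k\in\bigO(1)$. Converting the mean-squared-error guarantee into a high-probability guarantee simultaneously for all $j$ uses a median-of-means over $\bigO(\log M)$ repetitions. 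That logarithmic factor is absorbed into $\widetilde{\bigO}$. The screening then costs $\widetilde{\bigO}(\sqrt{M}\,\tmax)=\widetilde{\bigO}(\tmax^{3/2}/\deltat_{\min}^{1/2})$ simulation time and uses $\bigO(M+n)=\bigO(\tmax/\deltat_{\min}+n)$ qubits. Because consecutive critical times are separated by at least $\deltat_{\min}$, each flagged cell $[t_{j-1},t_{j+1}]$ contains at most one critical time, and every critical time lies in some flagged cell.

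Third, in the bisection phase we refine each flagged interval by repeatedly halving it. At each step we estimate $\los_k$ and $\dkrate$ at $\bigO(1)$ points, for example via the commutator identity for $\dot\los_k$ used in the $\BQP$ membership argument, and keep the half in which the sign change or cusp of $\dkrate$ persists. Reaching width $\eps$ takes $\bigO(\log(\deltat_{\min}/\eps))$ steps. Each step uses constant precision, hence $\bigO(1)$ repetitions and evolution time at most $\tmax$. This gives the additive term $\bigO(\tmax\log(\deltat_{\min}/\eps))$, counted per critical time or amortized under the separation assumption.

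The main obstacle is the third step: justifying that constant-precision estimates at each bisection level still locate $t_c$ correctly as the interval shrinks. As the window narrows, finite-difference estimates of $\dkrate$ near the cusp become noisier relative to the interval width. I would first try to rely on the scaling form \cref{eq:scaling_form} with $\nu\le1$. Under that form, $\dkrate$ has opposite signs or a jump of order $J_1$ on the two sides of $t_c$ at every scale, so a sign test at constant precision suffices at each level. Failure probabilities are handled by a union bound over the $\bigO(\log(\deltat_{\min}/\eps))$ levels, at a cost of only a $\log\log$ factor.
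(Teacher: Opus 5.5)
Your architecture matches the paper's: constant-precision screening with the adaptive snapshot engine on a $\Theta(\tmax/\deltat_{\min})$ grid, $\bigO(\log M)$ median boosting, then bisection on the sign of $\dkrate$. The gap is in the screening step. You detect the susceptibility jump $J_k\ge J_1$ from finite differences of $\hat r_k$ on neighbouring grid points, with the echoes known only to constant precision $\eps_0$.

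A difference quotient over a spacing of order $\deltat_{\min}$ has statistical error $\Theta(\eps_0/\deltat_{\min})$. This holds whether the spacing is the grid spacing or the offset $\deltat=\Theta(\deltat_{\min})$ that defines $J_k$. For $\deltat_{\min}=1/\poly(n)$ that error is polynomially large, so it cannot separate $J_k\ge J_1$ from $J_k\le J_0$. To fix this by finite differencing you would need echo precision of order $\deltat_{\min}J_1$ at every grid point. Since the snapshot cost scales as $1/\eps$, screening would then cost $\widetilde{\bigO}(\tmax^{3/2}/\deltat_{\min}^{3/2})$ instead of the claimed bound.

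The missing idea is to run the snapshot engine a second time with $O_j=Q/\norm{Q}$, where $Q=\ii[\ham,\projk]$. By $\dot\los_k(t)=\bra{\psi(t)}Q\ket{\psi(t)}$ this returns the derivatives directly. Only the terms of $\ham$ touching the $k$-qubit support contribute, so $\norm{Q}\le\norm{Q}_1=\bigO(kJ_{\max})$ is independent of $n$. Constant precision on $\dot\los_k$ therefore costs the same $\widetilde{\bigO}(\sqrt{M}\,\tmax)$ as the echo sweep. The probe echo floor $\krate(t\pm\deltat)\le R_0$ then converts this into constant precision on $\dkrate(t_j\pm\deltat)$. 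Choosing $\deltat$ as an integer multiple of the grid spacing makes every probe a grid point, so one sweep yields every $J_k(t_j)$. You already use the commutator identity in the bisection; it is needed in the screening too.

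Two smaller problems. With spacing $\deltat_{\min}$, the cell $[t_{j-1},t_{j+1}]$ has width $2\deltat_{\min}$ and can contain two critical times. The paper uses spacing $\deltat_{\min}/3$, so each critical time gets its own cell with an empty cell between neighbours. The bisection cost also cannot be amortized using the separation assumption, which only gives $N_c=\bigO(\tmax/\deltat_{\min})$. The honest term is $\bigO(N_c\,\tmax\log(\deltat_{\min}/\eps))$; this is what the paper's appendix version states, recovering the main-text form only for $N_c=\bigO(1)$.

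Your definite-sign assumption on $\dkrate$ near the kink is comparable to what the paper does. It likewise asserts $|\dkrate|=\Omega(1)$ with a definite sign up to the kink. It also assumes a uniform echo floor $\krate\le R_0$ on all of $[0,\tmax]$, which your single-point conversion $\dkrate=-\dot\los_k/(k\los_k)$ needs as well.
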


\mysection[1]{Optimality: quantum lower bound}
While we have established the upper bound of Hamiltonian snapshots estimation in \cref{thm:multi_time_estimation},
we can show both scalings in $M$ and $\eps$ are nearly tight for quantum algorithms.
Since it is known that estimating $M$ observables to additive error $\eps$ in the state-preparation oracle model requires $\Omega(\sqrt{M}/\eps)$ queries~\cite{hugginsNearlyOptimalQuantum2022,vanapeldoornQuantumProbabilityOracles2021},
we prove the quantum lower bound of \nameref{prm:multi_time_obs} by a reduction from multi-observable estimation.
\begin{theorem}[Quantum lower bounds]\label{thm:multi_lower_bound}
    Given observables $\qty{O_j}$ and $M$ time points $\qty{t_j}$ in the range $[0,T]$, 
    the oracle to prepare initial state $U_{\psi}$,
    and the oracle to Hamiltonian dynamics $e^{-\ii H t}$,
    any quantum algorithm estimating $\bra{\psi_0}O_j(t_j)\ket{\psi_0}$ at $M$ time points to additive error $\eps\in (0, 1/(3\sqrt{M}))$ requires $\Omega(\sqrt{M}/\eps)$ queries to $U_{\psi}$.
    And the Hamiltonian simulation time is $\Omega(T)$.
\end{theorem}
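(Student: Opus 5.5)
The plan is to prove the two claims of \cref{thm:multi_lower_bound} separately, both by reduction.

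\textbf{Query lower bound.} We reduce from the known multi-observable estimation lower bound~\cite{hugginsNearlyOptimalQuantum2022,vanapeldoornQuantumProbabilityOracles2021}. That result states: given a state-preparation unitary $U_\psi$ and observables $\{A_j\}_{j=1}^M$ with $\|A_j\|\le 1$, estimating every $\bra{\psi_0}A_j\ket{\psi_0}$ to additive error $\eps<1/(3\sqrt{M})$ needs $\Omega(\sqrt{M}/\eps)$ queries to $U_\psi$. Take a hard instance $(U_\psi,\{A_j\})$ and any $M$ times $\{t_j\}\subset[0,T]$. Choose the trivial Hamiltonian $H=0$, or more generally any $H$ whose evolution oracle the reduction simulates for free. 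Set $O_j:=A_j$, so that $O_j(t_j)=A_j$. Any algorithm for \nameref{prm:multi_time_obs} then solves the multi-observable problem with the same number of $U_\psi$ calls, since its calls to $e^{-\ii Ht}$ are identities and cost no $U_\psi$ queries.

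If one prefers a nontrivial $H$, take an arbitrary $H$ and set $O_j:=e^{-\ii Ht_j}A_je^{\ii Ht_j}$. The norms are preserved, and the reduction still holds as long as the instance description may depend on $H$. The range restriction $\eps\in(0,1/(3\sqrt{M}))$ is inherited verbatim from the cited bound, which is the regime where the hard instance (a phase-oracle/Bernstein--Vazirani-type or direction-of-gradient family) is valid. The step needing care is checking that the cited bound is stated for queries to the state-preparation unitary (and its inverse and controlled version) rather than to a probability oracle. If it is stated for a probability oracle, note that one query to such an oracle is implemented with $\bigO(1)$ queries to $U_\psi$, so the bound transfers up to constants.

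\textbf{Simulation-time lower bound.} Here it suffices to exhibit one instance where total evolution time $\Omega(T)$ is necessary. Let $t_M=T$ and $O_M$ be a single-qubit Pauli $Z$. Take $H=\omega X$ on that qubit with $\ket{\psi_0}=\ket{0}$, so $\bra{\psi_0}O_M(T)\ket{\psi_0}=\cos(2\omega T)$, and let $\omega$ be unknown within a set of values spaced by $\Theta(1/T)$. Distinguishing $\omega$ from $\omega+\pi/(2T)$ changes the target by $\Theta(1)$, more than $2\eps$. Meanwhile, queries to $e^{-\ii Hs}$ with total time $\tau$ produce states whose distance between the two hypotheses is at most $\bigO(\tau|\Delta\omega|)=\bigO(\tau/T)$, by the standard hybrid argument $\|e^{-\ii Hs}-e^{-\ii H's}\|\le s\|H-H'\|$. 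Constant success probability thus forces $\tau=\Omega(T)$.

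I expect the main obstacle to be purely bookkeeping: making the oracle access model match the cited bound (inverse and controlled queries, and whether $H$ is known). The hybrid argument for $\Omega(T)$ is routine.
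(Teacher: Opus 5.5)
Your proof is correct, but for the query bound it takes a much more direct route than the paper. You set $H=0$ (or pre-conjugate $O_j := e^{-\ii H t_j} A_j e^{\ii H t_j}$), so the time points carry no information. The Huggins et al.\ bound for $M$ distinct observables is then inherited verbatim, including the stated range $\eps<1/(3\sqrt{M})$. The paper remarks that this direct inheritance is available when the $O_j$ may differ, but its actual proof establishes a stronger single-observable version. It fixes one $1$-local projector $P=\op{0}_1$ at all $M$ times and lets the dynamics do the multiplexing. The cyclic-shift generator $H_\Pi$ (with $e^{-\ii H_\Pi}=\Pi$) maps $P$ to $\op{0}_{j+1}$ at $t_j=j$, and $Z_{j+1}=2\op{0}_{j+1}-I$ recovers the Huggins instance at a factor-$2$ loss in $\eps$. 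Because $H_\Pi$ is nonlocal, the paper then replaces it by an $\bigO(1)$-local Feynman--Kitaev Hamiltonian for $\Pi^M$ on $\bigO(M^2)$ qubits. It uses time points $\sin^2 t_m=(2m-1)/(2M)$ and inverts a strictly diagonally dominant clock-weight matrix with $\norm{W^{-1}}_\infty=\bigO(1)$.

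Your route is simpler and matches the $\eps$ range exactly. The paper's route is what makes the bound say something about dynamics: one fixed observable, as when tracking a single echo $\los_k$ in time, under a local $H$. Its instance is also reused for the classical $\Omega(M/\eps^2)$ bound.

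For the $\Omega(T)$ claim, the paper only appeals to the no-fast-forwarding theorem, which is a sparse-access query bound. Your two-point hybrid argument, using $\norm{e^{-\ii Hs}-e^{-\ii H's}}\le |s|\,\norm{H-H'}$, is self-contained and fits the evolution-oracle model of the statement directly. Take e.g.\ $\omega\in\{0,\pi/(2T)\}$ so the target is exactly $\pm1$.

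One small correction: your fallback about probability oracles runs the wrong way. Implementing a probability-oracle query with $\bigO(1)$ calls to $U_\psi$ transfers upper bounds, not lower bounds. To push a probability-oracle lower bound onto $U_\psi$, the hard $U_\psi$ must be built from $\bigO(1)$ probability-oracle queries, as Huggins et al.\ do with $U_\psi=U_A(I\otimes U_p)$. Their bound is already stated for $U_\psi$, so the fallback is unnecessary.
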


\mysection[1]{Quantum speedup: classical lower bound}
On the other hand, we also show this quantum algorithm exhibits speedups over classical algorithms under the prepare-and-measure model.
Classically, each $\los_k(t_j) \in [0,1]$ is a Bernoulli mean requiring $\Omega(1/\eps^2)$ classical samples for a time $t_j$~\cite{dagumOptimalAlgorithmMonte2000}.
Since the Heisenberg-picture projectors $e^{\ii\ham t_j}\, P\, e^{-\ii\ham t_j}$ generically fail to commute across distinct times, 
each prepared sample resolves only one time point,
so the $M$ near-balanced snapshots must be estimated separately, 
compounding to $\Omega(M/\eps^2)$ samples in total.
\begin{theorem}[Classical lower bound]\label{thm:classical_lower_bound}
    In the classical prepare-and-measure model,
    where each round prepares one copy of $\ket{\psi_0}$ with a single $U_\psi$ call, 
    evolves it under $e^{-\ii\ham t}$ for a chosen time $t$, and measures a fixed projector $P$ for one outcome bit, 
    there exist an $\bigO(1)$-local Hamiltonian $\ham$ on $n=\bigO(M^2)$ qubits, a $1$-local projector $P$, and $M$ time points $\qty{t_m}$ 
    such that estimating all $M$ snapshots $\bra{\psi_0}\,e^{\ii\ham t_m}\,P\,e^{-\ii\ham t_m}\ket{\psi_0}$ to additive error $\eps$ (with success probability $\ge 2/3$, 
    for $\eps$ below an absolute constant) requires $\Omega(M/\eps^2)$ rounds, i.e.\ calls to $U_\psi$.
\end{theorem}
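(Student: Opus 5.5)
Our plan is to embed $M$ independent coin-estimation problems into a single Hamiltonian evolution on disjoint registers, and then apply a direct-sum style lower bound for the prepare-and-measure model.

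\textbf{Construction.} Take $M$ blocks. Block $m$ carries an unknown bias $p_m\in\{1/2-2\eps,\,1/2+2\eps\}$, encoded in $\ket{\psi_0}$ as a single-qubit state $\sqrt{1-p_m}\ket{0}+\sqrt{p_m}\ket{1}$ on a data qubit $d_m$. Use $O(M)$ ancilla qubits per block, so $n=\bigO(M^2)$ overall. The Hamiltonian $\ham=\sum_m \ham_m$ is a sum of $\bigO(1)$-local, nearest-neighbour SWAP-type (XY) couplings arranged as a perfect-state-transfer chain. The chain is designed so that at time $t_m$ the content of $d_m$ is routed onto a fixed readout qubit $r$, while every other block's data is routed elsewhere.

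With $P=\op{1}_r$ this gives $\bra{\psi_0}e^{\ii\ham t_m}Pe^{-\ii\ham t_m}\ket{\psi_0}=p_m$. The $M^2$ size accommodates the routing: $M$ chains of length $\bigO(M)$ with commensurate transfer times. The point is that for every permitted time $t$, the outcome bit is a Bernoulli variable whose bias depends on at most one $p_m$ — or on none, if $t$ is not among the $t_m$, in which case the data is routed to dummy readouts and the statistics are independent of $\vb{p}$. Establishing this ``one time, one coin'' property for \emph{all} $t$, not only the chosen $t_m$, is the main obstacle. My first attempt would be to choose the chains so that the single-excitation dynamics on $r$ depends on only one data qubit. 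For example, $r$ could be coupled through a time-dependent-free spectral filter, with the $m$th chain resonant with $r$ and the others detuned. If that fails, I would instead weaken the requirement to ``the outcome distribution is a mixture whose dependence on each $p_m$ has KL sensitivity bounded by the fraction of time assigned to $m$''.

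\textbf{Lower bound.} Place the uniform prior on $\vb{p}\in\{1/2\pm2\eps\}^M$. Any algorithm outputting all $p_m$ to error $\eps$ with probability $\ge 2/3$ must identify every sign. Let $N_m$ be the expected number of rounds in which the chosen time probes coin $m$; adaptivity is allowed. By the chain rule for KL divergence, flipping coin $m$ changes the transcript distribution by KL at most $\mathbb{E}[N_m]\cdot\KL(\Ber(1/2+2\eps)\|\Ber(1/2-2\eps))=\bigO(\mathbb{E}[N_m]\eps^2)$. Pinsker's inequality, or Le Cam's bound, then shows that coin $m$ is identified with probability at most $1/2+\bigO(\eps\sqrt{\mathbb{E}[N_m]})$. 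Averaging, with an Assouad-type argument, the expected fraction of correctly identified coins is below $1-c$ unless $\mathbb{E}[N_m]\in\Omega(1/\eps^2)$ for a constant fraction of $m$.

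A union over ``all coins correct'' with probability $2/3$ forces each coin's marginal success to be $\ge 2/3$. Hence $\sum_m\mathbb{E}[N_m]\in\Omega(M/\eps^2)$. Each round uses exactly one $U_\psi$ call, so this is the claimed bound, valid once $\eps$ is below the constant making $1/2\pm2\eps$ valid biases.
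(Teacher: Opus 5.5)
Your information-theoretic half (hypercube prior, KL chain rule under adaptive queries, Pinsker/Assouad) is sound. The gap is that it is written entirely under the ``one time, one coin'' hypothesis, and no time-independent Hamiltonian can realize that hypothesis. Fix $\ham$, $P$ and your encoding. Then the bias $s(t;\vb{p}) = \bra{\psi_0}e^{\ii\ham t}Pe^{-\ii\ham t}\ket{\psi_0}$ is real-analytic (indeed entire) in $t$. Take $\vb{p},\vb{p}'$ differing only in coordinate $m$. The difference $s(t;\vb{p})-s(t;\vb{p}')$ is nonzero at $t_m$, so it has only isolated zeros. It therefore cannot vanish for all $t\notin\{t_m\}$, which kills the ``dummy readout'' clause. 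Intersecting the dense open sets of times where coordinate $1$ matters and where coordinate $2$ matters also gives times at which the outcome depends on both coins. The prepare-and-measure model lets the algorithm choose any $t\ge 0$, so it can query such mixtures, and your counting variable $N_m$ (``rounds probing coin $m$'') is undefined. The routing network itself is also unestablished. A perfect-state-transfer chain transfers end to end at one (periodic) time, and joining $M$ chains at a shared readout qubit changes their spectra. An XY network also generically creates interference terms between different data qubits at intermediate times, so the readout bias is not even a convex mixture of the $p_m$.

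So the whole proof lives in your fallback, which is stated only as a hope. Read as ``per-round KL sensitivity to coin $m$ is $\bigO(\eps^2 a_m)$ with $\sum_m a_m\le 1$,'' it is the right target, but reaching it needs three ingredients the proposal does not supply.

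\emph{First}, a construction in which every round, at every $t$, returns $\Ber\bigl(\vb{a}(t)^\top\vb{p}+1-\norm{\vb{a}(t)}_1\bigr)$ with $\vb{a}(t)\ge 0$. The paper gets this from a Feynman--Kitaev clock for the SWAP circuit $\Pi^M$. Orthogonal clock states remove cross terms, a product initial state makes each clock step depend on one marginal, and $\vb{a}(t)$ is the vector of binomial block weights.

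\emph{Second}, a per-round information bound that holds uniformly over all such mixtures, namely $\sum_i \KL(\mathbb{P}_v\Vert\mathbb{P}_{v^{\oplus i}})\le 16\alpha^2/c_1$ per query. It follows from three facts:
\begin{itemize}
\item $\KL\le\chi^2$;
\item the variance floor $s(1-s)\ge c_1\norm{\vb{a}}_1$, which is essential because the $\vb{p}$-independent offset can push the bias toward $1$;
\item $\norm{\vb{a}}_2^2\le\norm{\vb{a}}_1$.
\end{itemize}
Assouad with Pinsker and Cauchy--Schwarz then gives $T=\Omega(M/\alpha^2)$.

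\emph{Third}, a well-conditioned map from snapshots back to marginals, since the snapshots at $t_m$ are themselves mixtures rather than single coins. The paper chooses $\sin^2 t_m=(2m-1)/(2M)$, which makes $W_{mq}=w_q(t_m)$ strictly diagonally dominant with $\norm{W^{-1}}_\infty=\bigO(1)$. It then runs the hypercube at scale $\alpha=\norm{W^{-1}}_\infty\,\eps$.
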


The proof embeds $M$ independent near-$\tfrac12$ parameters into a single $\bigO(1)$-local Hamiltonian through a Feynman--Kitaev clock, 
so that measuring $P$ after evolving for a time $t_m$ returns one bit whose bias is a fixed weighted mixture of the $M$ unknown parameters; 
recovering all $M$ snapshots therefore forces recovering all $M$ parameters.
A multi-parameter estimation lower bound, 
Assouad's lemma applied to a hypercube of near-$\tfrac12$ instances, 
then yields the $\Omega(M/\eps^2)$ bound. 
See Supplementary Materials \cite{seesm} for more details.
The coherent quantum algorithm, by contrast, extracts $\bigO(\sqrt{M})$ snapshot directions per $U_\psi$ call via gradient interference, which is what evades the one-bit-per-round classical cost.
Thus, \cref{thm:multi_time_estimation} is a quadratic improvement in both $M$ and $\eps$ over classical approaches (\cref{thm:classical_lower_bound}).

\mysection[1]{Beyond quantum dynamics}
Because the gradient-observable identity \cref{eq:gradient_echo} holds for any bounded Hermitian observable, the protocol extends directly to quantum encodings of classical dynamics.
A striking instance is the simulation of $N=2^n$ coupled classical oscillators governed by Newton's equations 
\cite{babbushExponentialQuantumSpeedup2023}, 
whose phase-space state is encoded into $\bigO(n)$ qubits with velocity and position amplitudes scaled by the conserved total energy $E$.
The kinetic-energy fraction of a subset $V\subseteq[N]$ is the expectation $K_V(t)/E = \bra{\psi(t)}P_V\ket{\psi(t)}$ of a velocity-sector projector $P_V$, 
and, in the thermodynamic limit, a nonanalyticity of the rate function $r_V(t):=-\log K_V(t)/E$
signals a sharp redistribution of energy among oscillators.
Single-point estimation of $K_V(t)/E$ is already $\BQP$-complete~\cite{babbushExponentialQuantumSpeedup2023}; 
substituting $O_j=P_V$ into \cref{thm:multi_time_estimation} tracks kinetic energy at $M$ times in $\widetilde{\bigO}(\sqrt{M}/\eps)$ queries,
a quadratic speedup over independent estimation.

More broadly, Hamiltonian encoding methods extend the framework to nonunitary and nonlinear classical evolution.
For instance, linear flows $\partial_t\vec u = -A\vec u$, covering heat, Fokker--Planck, and Black--Scholes dynamics, lift to unitary evolution on an extended Hilbert space via Schr\"odingerization~\cite{jinQuantumSimulationPartial2024} and the linear combination of Hamiltonian simulation (LCHS)~\cite{anLinearCombinationHamiltonian2023}, 
where the survival fraction 
is proportional to a qubit-local projector overlap. 
Even nonlinear dissipative ODEs, including the Navier--Stokes equations, embed into Fock-space Hamiltonians via the backward Kolmogorov equation, 
where the simulation problem is itself $\BQP$-\complete{}~\cite{bravyiQuantumSimulationNoisy2025}. 
Our quantum protocol 
opens a more efficient route to detecting the onset of intriguing phenomena such as a $\DQPT$-like singularity in classical systems.

\mysection[1]{Discussion}
This work places DQPT detection alongside ground-state energy estimation as a physically motivated source of provable quantum advantage.
The lower bounds imply that our protocol is not merely optimal in quantum cases but also faster than classical approaches. 
This work opens a new avenue for quantum algorithms in classical dynamical information processing.
Meanwhile, there remain several directions for future work.
Does $\BQP$-\textsf{completeness} persist under average-case hardness for geometrically two-local, translation-invariant Hamiltonians 
where experimental DQPTs actually live?
It is also of practical interest to reduce the cost of the search algorithm with prior knowledge.

\begin{acknowledgments}
J.X. thanks Wenjun Yu, Daochen Wang, Zhengfeng Ji, Boyang Chen, for helpful discussions. 
We also acknowledge Claude Code for suggestion.
J.X. and Q.Z. acknowledge funding from Innovation Program for Quantum Science and Technology via Project 2024ZD0301900, National Natural Science Foundation of China (NSFC) via Project No. 12347104 and No. 12305030, Guangdong Basic and Applied Basic Research Foundation via Project 2023A1515012185, Hong Kong Research Grant Council (RGC) via No. 27300823, N\_HKU718/23, and R6010-23, Guangdong Provincial Quantum Science Strategic Initiative No. GDZX2303007, HKU Seed Fund for Basic Research for New Staff via Project 2201100596. 
\end{acknowledgments}

\bibliographystyle{style/truncate_ref}
\bibliography{bib/ref_aps, bib/seesm}
\clearpage

\section{Methods}

\subsection{GapP hardness of global DQPT}

\begin{proof}[Proof sketch of \cref{thm:sharp_hardness}]
    Estimating $\rate(\pi)$ to additive error $\bigO(1/n)$ is at least as hard as multiplicatively approximating $|\ngap(f)|$ for degree-3 polynomials $f$ over $\mathbb{F}_2$, which is $\gapP$-\hard{}~\cite{bremnerAveragecaseComplexityApproximate2016}.
    The reduction has two ingredients:
    (i) the IQP Hamiltonian construction,
    which maps $f$ in $\poly(n)$ time to a 3-local $\ham$ whose Loschmidt amplitude at $t=\pi$ equals $\ngap(f)$, and
    (ii) an additive-to-multiplicative error amplification through the logarithm in $\rate$.

    Replacing each Boolean variable $x_i$ in $f$ by the projector $P_i = (I-Z_i)/2$ produces the diagonal problem Hamiltonian $H_f = \sum_{ijk}\alpha_{ijk}P_iP_jP_k + \sum_{ij}\beta_{ij}P_iP_j + \sum_i\gamma_i P_i$, whose eigenvalues on the computational basis are precisely the polynomial values: $H_f\ket{x} = f(x)\ket{x}$.
    Since $f$ has degree 3, each summand involves at most three projectors and $H_f$ is 3-local.
    Conjugating by Hadamards, $\ham := \hdm^{\otimes n} H_f \hdm^{\otimes n}$, rotates $Z \mapsto X$ and reproduces the IQP structure---Hadamards sandwich a diagonal phase---when applied to $\ket{0^n}$.
    At $t=\pi$, the phase factor $e^{-\ii\pi f(x)} = (-1)^{f(x)}$ collapses to $\pm 1$, reducing the amplitude to the signed counting fraction:
    \begin{align}\label{eq:method_iqp_at_pi}
        \losamp(\pi) = \frac{1}{2^n}\sum_{x\in\{0,1\}^n}(-1)^{f(x)} = \frac{N_0 - N_1}{2^n} = \ngap(f).
    \end{align}

    Inverting $\rate(\pi) = -(2/n)\log|\ngap(f)|$ yields $|\ngap(f)| = e^{-n\rate(\pi)/2}$, so an additive-$\eps$ estimate $\tilde{r}$ of $\rate(\pi)$ obeys
        $e^{-n\eps/2} \le \frac{e^{-n\tilde{r}/2}}{|\ngap(f)|} \le e^{n\eps/2}$,
    a multiplicative blow-up by $e^{n\eps/2}$ on the gap estimate.
    Choosing $\eps = \bigO(1/n)$ tightens this to $\bigO(1)$ multiplicative precision, sufficient for $\gapP$-\hardness{}.
    See \apdcref{apd:sec:sharp} for the full proof.
\end{proof}

\subsection{BQP completeness of local DQPT}
\begin{proof}[Proof sketch of \cref{thm:local_bqp_complete}]
    \textsf{BQP-hardness}:
    Given a $\BQP$ circuit $C$, we run $m = \Theta(k)$ independent copies,
    majority-vote their outputs, and fan out the result via \textsc{cnot}
    to $k$ fresh answer qubits at positions $1,\dots,k$;
    call the resulting compute-and-fan-out circuit $W$ (with $\ell$ gates total).
    By the Chernoff bound, the $k$-local overlap
    $\bra{0^n}W^\dagger P^{(k)}\, W\ket{0^n}$
    (with $P^{(k)} = \prod_{j=1}^{k}\op{0}_j$) 
    is either
    $\le\epsilon$ or $\ge 1-\epsilon$ with
    $\epsilon = e^{-\Omega(k)}$,
    so this $k$-Overlap problem is $\BQP$-\complete{}.
    We then 
    insert $w$ idle (identity) gates between the forward and inverse halves of $W$, holding the fanned-out answer in the middle of the palindrome:
    \begin{equation}
        V' := \bigl(W_1,\dots,W_\ell,\ \underbrace{I,\dots,I}_{w \text{ idle}},\ W_\ell^\dagger,\dots,W_1^\dagger\bigr),
    \end{equation}
    as in \cref{fig:dqpt_main}(b) and embed $V'$
    into a time-independent local Hamiltonian $H$ via the Feynman--Kitaev
    construction \apdcref{thm:circuit_hamiltonian},
    whose clock register produces binomial amplitudes
    $|c_j(t)|^2 = \binom{N'}{j}\sin^{2j}(t)\cos^{2(N'-j)}(t)$ and $N':= 2\ell+w$.
    At $t^* = \pi/4$, 
    the idle gates hold the fanned-out answer across the segment $B = \{\ell, \dots, \ell+w\}$,
    so the $k$-local echo reads the overlap value $a$ throughout $B$; 
    the window is set by the exact dial $\sin(2\deltat) = w/N'$, with $t^* = \pi/4$ (clock step $N'/2$) centred in $B$.
    If the $k$-Overlap instance is \no{} (small overlap),
    the $k$-local echo $\los_k(t^*)$ collapses to $e^{-\Omega(k)}$,
    creating a cusp in the rate function with
    $J_k(t^*) =\dkrate(t^*-\deltat) - \dkrate(t^*+\deltat) \ge \Omega(\sqrt{N'}/k)$.
    The magnitude condition is also satisfied: 
    $\krate(\pi/4) = \Omega(1) \ge \xi_1$.
    If the instance is \yes{} (the output passes the overlap test),
    $\los_k(t) \ge 1 - e^{-\Omega(k)}$ uniformly in $t$,
    so 
    the jump is suppressed to $\bigO(\epsilon\sqrt{N'}/k)$ with $\epsilon = e^{-\Omega(k)}$
    and $\krate(t) \le \bigO(e^{-\Omega(k)}/k) \le \xi_0$.
    Both promise gaps are satisfied:
    $J_1/J_0 = \Theta(1/\epsilon) \ge c > 1$ and
    $\xi_1 - \xi_0 \ge \Omega(1)$
    (since $k = \bigO(1)$ and $\epsilon < 1/2$),
    completing the reduction.
    See \apdcref{apd:thm:bqp_hard} for the complete proof.

\textsf{BQP membership}:
    The algorithm simulates $e^{-\ii \ham t}\ket{\psi_0}$ and performs two checks.
    \emph{Magnitude check:} estimate $\los_k(t)$ to constant multiplicative error
    by measuring $P^{(k)}$; 
    for constant $k$ and $\xi_1 \in \Omega(1)$,
    $\los_k = e^{-k\xi_1} = e^{-\Theta(k)} = \Theta(1)$,
    so $\bigO(1)$ repetitions suffice.
    \emph{Susceptibility check:} estimate $\dot{r}_{k,\pm}(t)$ via the
    commutator method:
    by the Heisenberg equation $\dv*{\expval{O}}{t} = \ii\expval{[\ham, O]}$, one expresses
    $\dot{\los}_k(t) = \bra{\psi(t)} Q \ket{\psi(t)}$ with $Q = \ii[\ham, P^{(k)}]$
    and requires $\bigO(\norm{Q}_1^2/(k\los_{\min}\eps)^2)$ samples.
    Under the \no{} case's OR condition, it suffices to detect
    either 
    $J_k(t) \le J_0$ or $\krate(t) \le \xi_0$;
    the promise gaps 
    $J_1/J_0 \ge c > 1$ and $\xi_1 - \xi_0 \ge \Omega(1)$
    ensure constant relative precision resolves the decision.
\end{proof}

\subsection{Quantum algorithm cost for searching critical times}

\begin{proof}[Proof sketch of \cref{thm:multi_time_estimation}]
    \emph{Gradient encoding.}
    The algorithm encodes the $M$ Heisenberg-picture expectation values $\bra{\psi_0}O_j(t_j)\ket{\psi_0}$ as the gradient at the origin of a single scalar function $f(\vbx) = \tfrac{1}{2}[1-\Im\bra{\psi_0}U(\vbx)\ket{\psi_0}]$, then extracts all components in parallel via adaptive quantum gradient estimation~\cite{wadaHeisenbergLimitedAdaptiveGradient2025}.
    At $\vbx = \vbzero$, every rotation $e^{-2\ii\cdot 0\cdot O_j}$ reduces to the identity, so $U(\vbzero) = I$ by telescoping of \cref{eq:param_unitary} and $f(\vbzero) = 1/2$ lies in the analytic regime that licenses gradient estimation.
    Differentiating $U(\vbx)$ at the origin gives
    \begin{align}\label{eq:method_param_derivative}
        \left.\frac{\partial U}{\partial x_\ell}\right|_{\vbzero} = -2\ii\, e^{\ii\ham t_\ell}\, O_\ell\, e^{-\ii\ham t_\ell},
    \end{align}
    which, substituted into $f$, yields the gradient--observable identity \cref{eq:gradient_echo}.
    The bound $\|O_\ell\|\le 1$ propagates to all multi-derivatives, $|\partial_\alpha f(\vbzero)| \le 2^{|\alpha|}\,|\alpha|^{|\alpha|/2}$, satisfying the derivative growth condition required by quantum gradient estimation~\cite{gilyenOptimizingQuantumOptimization2019}.

    \emph{Adaptive gradient estimation.}
    The adaptive quantum gradient estimation algorithm Alg.~\ref{alg:multi_time} uses $p = 3$ index qubits per time point and iterates $q = 0, 1, \ldots, \ceil{\log_2(1/\eps)}$ rounds.
    At round $q$, the gradient is estimated to precision $2^{-q}$ using $\bigO(2^q\sqrt{M}\log M)$ queries to the probability oracle $F$; 
    summing the geometric series gives $\bigO(\eps^{-1}\sqrt{M}\log M)$ total queries.
    The shift by temporal estimates $\tilde{u}_j^{(q)}$ at each round is a classical phase correction on the index registers, 
    leaving the oracle circuit unchanged, so each query consumes simulation time $2\tmax$ and one call to $U_\psi$.
    The MSE guarantee follows from geometrically scheduled failure probabilities $\delta^{(q)} = c_0/8^{q_{\max}-q}$, bounding each round's variance contribution.
    See \apdcref{apd:cor:adaptive_snapshots} for the full proof.

    \emph{Cost of searching critical times.}
    Setting $O_j = P^{(k)}$ specializes the snapshot estimator to $\tilde\los_j \approx \los_k(t_j)$, each oracle adding $\bigO(kM)$ controlled rotations.
    The two-phase search of \cref{thm:optimal_search_dqpt} then proceeds in
    (i) a coarse \emph{screening} over the $M_0 = \tmax/\deltat_{\min}$ uniform grid, resolving $\los_k$ to constant precision in $\widetilde\bigO(\sqrt{M_0})$ oracle queries and total simulation time $\widetilde\bigO(\tmax^{3/2}/\deltat_{\min}^{1/2})$;
    (ii) a subsequent \emph{bisection} refining each surviving cusp in $\bigO(\log(\deltat_{\min}/\eps))$ rounds at simulation time $\bigO(\tmax)$ per round.
    The screening phase dominates, giving the cost of \cref{thm:optimal_search_dqpt}.
\end{proof}

\begin{algorithm}[t!]
\caption{Adaptive snapshots estimation}\label{alg:multi_time}
\DontPrintSemicolon
\SetKwInOut{Input}{Input}
\SetKwInOut{Output}{Output}
\Input{Hamiltonian $\ham$, state preparation $U_\psi$ for $\ket{\psi_0}$, time points $\{t_j\}_{j=1}^M$,\newline
bounded observables $\{O_j\}_{j=1}^M$ with $\|O_j\| \le 1$, precision $\eps$}
\Output{Estimates $\hat{o}_j$ with $\max_j \mathbb{E}[(\hat{o}_j - \bra{\psi_0} O_j(t_j) \ket{\psi_0})^2] \le \eps^2$}
\BlankLine
Construct the parameterized unitary $U(\vbx)$ (\cref{eq:param_unitary}) and the probability oracle $F(\vbx)$ via Hadamard test\;
Set $q_{\max} \gets \ceil{\log_2(1/\eps)}$ and $\tilde{u}_j^{(0)} \gets 0$ for all $j \in [M]$\tcp*{$\tilde{u}_j$ estimates $\bra{\psi_0}O_j(t_j)\ket{\psi_0}$}
\For{$q = 0, 1, \ldots, q_{\max}$}{
    Prepare $M$ index registers ($p = 3$ qubits each) in uniform superposition\;
    Apply the classical phase corrections $R(\tilde{u}_j^{(q)})$ to the index registers\;
    Apply controlled-$F$ $\bigO(2^q\sqrt{M})$ times to write phases $\propto \nabla f$ onto them\;
    Inverse-QFT and measure each register\; 
    Repeat $\bigO(\log M)$ times and take the median to obtain gradient estimates $g_j^{(q)}$\tcp*{$\bigO(2^q\sqrt{M}\log M)$ queries to $F$}
    Update $\tilde{u}_j^{(q+1)} \gets \tilde{u}_j^{(q)} + \pi\, 2^{-q}\, g_j^{(q)}$ for all $j$\tcp*{refine estimates}
}
\Return $\hat{o}_j := \tilde{u}_j^{(q_{\max}+1)} \approx \bra{\psi_0}O_j(t_j)\ket{\psi_0}$ via \cref{eq:gradient_echo}\;
\end{algorithm}

\subsection{Lower bound for Hamiltonian snapshots estimation}

\begin{proof}[Proof sketch of \cref{thm:multi_lower_bound}]
    The bound chains three reductions: multidimensional amplitude estimation~\cite{vanapeldoornQuantumProbabilityOracles2021} $\to$ multi-observable estimation~\cite{hugginsNearlyOptimalQuantum2022} $\to$ multi-time snapshot estimation.

    \emph{Prior lower bounds.}
    Apeldoorn~\cite{vanapeldoornQuantumProbabilityOracles2021} showed that for a hard sign matrix $A \in \{-1,+1\}^{M\times M}$, 
    computing $\tilde{\vbq}$ with $\|A\vbp - \tilde{\vbq}\|_\infty \le \eps$ from a probability oracle for $\vbp \in \Delta^M$ requires $\Omega(\sqrt{M}/\eps)$ coherent queries.
    Huggins et al.~\cite{hugginsNearlyOptimalQuantum2022} promoted this to multi-observable estimation: there exist Pauli observables $\{Z_j\}_{j=1}^{M}$ and a state-preparation oracle $U_\psi$ such that estimating every $\bra{\psi}Z_j\ket{\psi}$ to additive error $\eps$ requires $\Omega(\sqrt{M}/\eps)$ queries to $U_\psi$.

    \emph{Reduction to multi-time snapshots.}
    We encode the $M$ independent observables $\{Z_j\}$ as a single $1$-local observable $P = \op{0}_1 \otimes I$ probed at $M$ different times.
    Take $n = M+1$ qubits and let $\ham = H_\Pi$ generate the cyclic left-shift $\Pi\ket{x_1,\ldots,x_n} = \ket{x_2,\ldots,x_n,x_1}$, so $e^{-\ii H_\Pi} = \Pi$.
    At integer times $t_j = j$, the Heisenberg-picture conjugation rotates $P$ across the register,
    \begin{align}\label{eq:method_cyclic_shift}
        e^{\ii H_\Pi t_j}\, P\, e^{-\ii H_\Pi t_j} = \Pi^{-j}\, P\, \Pi^j = \op{0}_{j+1} \otimes I,
    \end{align}
    so $o(t_j) = \bra{\psi_0}\op{0}_{j+1}\ket{\psi_0}$ exposes the $(j{+}1)$-th $\op{0}$-projector, and via $Z_{j+1} = 2\op{0}_{j+1} - I$ each $\bra{\psi_0}Z_{j+1}\ket{\psi_0}$.
    Estimating all $M$ snapshots to precision $\eps/2$ therefore solves multi-observable estimation to precision $\eps$, forcing $\Omega(\sqrt{M}/\eps)$ queries.

    \emph{Local Hamiltonian construction.}
    $H_\Pi$ is itself non-local---its spectral projectors $\mathcal{P}_k$ act on global momentum eigenspaces---so the bound above is existential over general Hamiltonians.
    A constructive $\bigO(1)$-local hard instance follows by embedding $V = \Pi^M$ (a depth-$M^2$ circuit of nearest-neighbour SWAPs) into a Feynman--Kitaev clock Hamiltonian $H_{\mathrm{FK}}$ on $\bigO(M^2)$ qubits.
    Choosing time points $\sin^2(t_m) = (2m{-}1)/(2M)$ centres the binomial clock weight on a distinct $M$-gate block per snapshot;
    the resulting block-weight matrix is strictly diagonally dominant, so a constant-conditioned inversion recovers each Pauli expectation from the snapshots, transferring the $\Omega(\sqrt{M}/\eps)$ bound to the local setting.
    See \apdcref{apd:sec:quantum_lower_bounds} for the complete proof.
\end{proof}

\ifnum\onlymaintext=0
\clearpage
\onecolumngrid

\appendix
\setcounter{secnumdepth}{3}  
\begin{center}
{\bf \large Supplemental Material: \it \mytitle} 
\end{center}
\renewcommand{\addcontentsline}{\oldacl}
\renewcommand{\tocname}{Appendix Contents}
\tableofcontents

\numberwithin{theorem}{section}
\numberwithin{lemma}{section}
\numberwithin{remark}{section}
\numberwithin{corollary}{section}
\numberwithin{proposition}{section}
\numberwithin{definition}{section}
\numberwithin{problem}{section}

\section{Hardness of deciding DQPTs}\label{apd:sec:hardness}
In this appendix section, we provide the full details of the hardness results of DQPT related problems:
the \gapP-hardness of estimating the rate function for global DQPT (\cref{apd:thm:rate_hardness}) and the \BQP-completeness of deciding subsystem DQPT (\cref{thm:bqp_complete_local_dqpt}).

\subsection{\gapP-\hardness{} of Estimate-DQPT}\label{apd:sec:sharp}
We first establish the computational hardness of estimating (global) DQPT by reducing from the problem of computing the gap of degree-3 polynomials over $\mathbb{F}_2$, which is known to be \nameref{def:gap_p}-hard.
For this purpose, we establish the IQP Hamiltonian construction connecting the Loschmidt amplitude to the normalized gap (\cref{apd:lem:iqp_hamiltonian}), 
and combine them with an error analysis (\cref{lem:error_amplification}) to prove the main theorem (\cref{apd:thm:rate_hardness}).
We first recall the relevant complexity-theoretic definitions about the classes $\sharpP$ and $\gapP$.

\begin{definition}[\sharpP]\label{def:sharp_p}
    A function $f:\qty{0,1}^*\to\mathbb{N}$ is in $\sharpP$ if there exists a polynomial $q$ and a polynomial-time classical Turing machine $M$ such that for every $x\in\qty{0,1}^*$,
    \begin{equation}
        f(x) = \# \qty{y\in \qty{0,1}^{q(\abs{x})}: M(x,y)=1}.
    \end{equation}
    In other words, $\sharpP$ is the class of functions that count the number of accepting witnesses for an $\NP$ verification procedure.
\end{definition}

While $\sharpP$ counts nonnegative integers, many natural computational problems involve computing \emph{differences} of counts, which may be negative.
This motivates the definition of $\gapP$, a more general class that allows signed integer outputs.

\begin{definition}[\gapP]\label{def:gap_p}
    The function class $\gapP$ consists of all functions
    $f:\qty{0,1}^*\to \mathbb{Z}$ for which there exist $g,h\in \sharpP$ such that $f=g-h$.
    A problem is $\gapP$-$\hard$ if every function in $\gapP$ can be computed via a polynomial-time Turing reduction to it.
    A problem is $\gapP$-$\complete$ if it is both $\gapP$-$\hard$ and in $\gapP$.
\end{definition}

\begin{definition}[Gap of Boolean functions]\label{def:ngap}
    For a Boolean function $f:\qty{0,1}^n\to\qty{0,1}$, the \emph{gap} is defined as
    \begin{equation}
        \gap(f) := \abs{\qty{x:f(x)=0}} - \abs{\qty{x:f(x)=1}} = N_0 - N_1,
    \end{equation}
    where $N_0 = \abs{\qty{x:f(x)=0}}$ and $N_1 = \abs{\qty{x:f(x)=1}}$.
    The \emph{normalized gap} is $\ngap(f):=\frac{1}{2^n}\gap(f)$.
\end{definition}
The relevance of these definitions to DQPT is that the Loschmidt amplitude at specific time $t=\pi$ equals the normalized gap $\ngap(f)$ (\cref{apd:lem:iqp_hamiltonian}).
Since estimating the rate function to additive precision $\bigO(1/n)$ yields $|\ngap(f)|$ to multiplicative error, the hardness of the latter implies the hardness of the former.
We now recall the complexity-theoretic status of $\ngap$.

\begin{lemma}[\gapP-hardness of approximating $|\ngap|$ \cite{bremnerAveragecaseComplexityApproximate2016}]\label{lem:ngap_gapp}
    Let $f:\{0,1\}^n \to \{0,1\}$ be a degree-3 polynomial over $\mathbb{F}_2$.
    Computing $|\ngap(f)|$ to within multiplicative error $\eps < 1/2$ is \nameref{def:gap_p}-\hard.
\end{lemma}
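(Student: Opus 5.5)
This lemma is quoted from Bremner, Montanaro and Shepherd, so the plan is to reconstruct their argument rather than build a new one. The chain has three steps. First, computing $\gap(f)$ exactly for degree-3 polynomials over $\mathbb{F}_2$ is $\gapP$-\complete. Second, multiplicative approximation of $|\ngap(f)|$ lets us recover $\gap(f)$ exactly, using a family of polynomial-time-computable modifications of $f$. Third, $\gapP$-\hardness{} then transfers under polynomial-time Turing reductions.

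\emph{Step 1 (exact hardness).} Start from a $\gapP$-\complete{} function, the acceptance gap of a polynomial-time machine. Write it as $\gap$ of a circuit predicate $C(x,y)$. Arithmetize $C$ over $\mathbb{F}_2$ with one fresh variable per gate. For a gate $z = g(a,b)$, add the term $w\cdot(z+g(a,b))$ with an extra variable $w$. Summing $(-1)^{\cdots}$ over $w$ kills every assignment where the gate constraint fails, and leaves a factor of $2$ on assignments where it holds. AND gates contribute the cubic term $w\,a\,b$, so the resulting polynomial has degree $3$. Its gap equals $2^{\#\text{aux}}$ times the original gap, a known power of two. Hence exact $\gap$ of degree-3 polynomials is $\gapP$-\hard. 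Equivalently, this is the statement that IQP amplitudes, via \cref{eq:iqp_amplitude} at $t=\pi$, are $\gapP$-\hard{} to compute exactly.

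\emph{Step 2 (multiplicative to exact).} Suppose an oracle returns $|\ngap(g)|$ within a factor $1\pm\eps$ with $\eps<1/2$. We want $\gap(f)$ exactly.
\begin{itemize}
\item To get the sign, and to handle the absolute value, shift the gap by a known amount. Take $g = f\cdot y_1\cdots$ combined with a controllable additive count, for instance $g(x,y)$ chosen so that $\gap(g) = \gap(f) + K$ for a chosen integer $K$, realized as a disjoint union gadget using an extra selector bit of degree $\le 3$.
\item Then binary search over $K$: a query with $\gap(f)+K$ near zero versus large distinguishes sign and magnitude.
\item Since $\gap(f)$ is an integer in $[-2^n,2^n]$, $O(n)$ such adaptive queries pin it down exactly.
\end{itemize}
The constant multiplicative window $\eps<1/2$ suffices because at each comparison we only need to decide whether $|\gap(f)+K|$ is at least some value or at most half of it.

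\emph{Main obstacle.} The expected difficulty is in Step 2. The shifting gadget has to remain a degree-3 polynomial over $\mathbb{F}_2$, and the oracle only sees the absolute value. I would first try the selector construction $g(b,x,z) = b\,f(x) + (1+b)\,h_K(z)$, where $h_K$ is a fixed low-degree polynomial with known gap $K$. This gives degree at most $4$, so one more auxiliary-variable step would be needed to reduce it back to degree $3$. I would then carry out the binary search on magnitudes, using the absolute value's symmetry to fix the sign by comparing the shifts $+K$ and $-K$.
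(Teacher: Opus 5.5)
Your route is the paper's, which simply invokes Bremner--Montanaro--Shepherd. Your selector gadget $g=b\,f+(1+b)\,h$ gives $\ngap(g)=\tfrac12\bigl(\ngap(f)+\ngap(h)\bigr)$, which is exactly the paper's shifted family $f_c$ with $\ngap(f_c)=\tfrac12(\ngap(f)-c)$, and you follow it with an adaptive $\bigO(n)$-round bisection as the paper does. Your Step~1 arithmetization correctly supplies the exact $\gapP$-hardness that the paper imports from the citation. (The scaling factor is $2^{\#\text{gates}}$, one factor of $2$ per constraint variable $w$, because the gate variables are forced.) The degree-4 terms can be removed as you suggest, via $b\,x_ix_jx_k\mapsto b\,u\,x_k+w(u+x_ix_j)$, at a known factor of $2$ each. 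Alternatively, you can build the selector and a comparator $[z<N]$ (which realizes an arbitrary shift) into the circuit before arithmetizing.

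The step that does not go through as written is the bisection. You leave it unspecified, and the one justification you give is false in the stated range. From $v\in[(1-\eps)x,(1+\eps)x]$ you can separate $x\ge T$ from $x\le T/2$ only if $(1-\eps)T>(1+\eps)T/2$, i.e.\ $\eps<1/3$. Likewise, comparing the answers at shifts $\pm K$ reveals the sign of $G:=\gap(f)$ only when one of $|G\pm K|$ exceeds the other by more than the factor $r:=(1+\eps)/(1-\eps)$. That requires $K$ to be tuned to $|G|$, which your sketch does not arrange.

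A clean fix avoids the sign problem entirely. Maintain an integer interval $[L,U]\ni G$ of length $D$, and always query the shift centred at the left endpoint, $c=L$. Because $G-L\ge 0$, a returned value $0$ certifies $G=L$. Otherwise $G-L\in[a,ra]$ with $a=v/(1+\eps)$, so the updated interval $[L+\lceil a\rceil,\min(U,L+ra)]$ has length at most $\min\{(r-1)a,\,D-a\}\le(1-1/r)D$. Starting from $D\le 2^{n+1}$, $\bigO(n)$ queries bring $D$ below $r$, and $\bigO(1)$ zero-tests finish. This needs only that $r$ is a constant, so it covers every constant $\eps<1$, in particular $\eps<1/2$.
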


This is Proposition~8 in Bremner et al.~\cite{bremnerAveragecaseComplexityApproximate2016}.
Since $\ngap(f)$ is real-valued (possibly negative), an oracle for $|\ngap(f)|$ does not directly reveal the sign.
Nevertheless, Bremner et al.~\cite{bremnerAveragecaseComplexityApproximate2016} show that the sign can be recovered via an adaptive bisection: by constructing shifted polynomials $f_c$ with $\ngap(f_c) = \tfrac{1}{2}(\ngap(f) - c)$ and querying $|\ngap(f_c)|$, one iteratively narrows the estimate until $\ngap(f)$ is determined exactly in $\bigO(n)$ rounds.
This is a polynomial-time Turing reduction, establishing $\gapP$-hardness of multiplicatively approximating $|\ngap(f)|$.

\subsubsection{IQP cirucit, 3-local Hamiltonian and degree-3 polynomial gap}
We now give the IQP \cite{bremnerClassicalSimulationCommuting2011,montanaroQuantumCircuitsLowdegree2017} Hamiltonian construction stated in the main text.

\begin{lemma}[IQP Hamiltonian construction]\label{apd:lem:iqp_hamiltonian}
    Let $f:\{0,1\}^n \to \{0,1\}$ be a degree-3 polynomial over $\mathbb{F}_2$.
    There exists a 3-local Hamiltonian $H$ on $n$ qubits, constructible in
    $\poly(n)$ time from the coefficients of $f$, such that
    \begin{equation}
        \losamp(t) := \langle 0^n | e^{-\ii H t} | 0^n \rangle
        = \frac{1}{2^n} \sum_{x \in \{0,1\}^n} e^{-\ii t f(x)}.
    \end{equation}
    In particular, $\losamp(\pi) = \ngap(f)$.
\end{lemma}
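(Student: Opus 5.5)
The construction has three steps: write $f$ as a diagonal Hamiltonian via projectors, conjugate by Hadamards, and evaluate the amplitude on $\ket{0^n}$ directly.

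\textbf{Step 1: arithmetize $f$.} Write the degree-3 polynomial over $\mathbb{F}_2$ as $f(x)=\bigoplus_{S}a_S\prod_{i\in S}x_i$ with $|S|\le 3$. The obstacle is that $f$ is a sum modulo 2, while a Hamiltonian adds integers. The identity $e^{-\ii t f(x)}$ needs the eigenvalue to equal $f(x)\in\{0,1\}$ exactly, not an integer congruent to it. Converting a mod-2 sum of monomials into an integer polynomial by inclusion--exclusion, $a\oplus b=a+b-2ab$, raises the degree in general, so I will not do that.

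Instead I handle it as follows. For $t=\pi$ only parity matters, since $e^{-\ii\pi m}=(-1)^m$ for every integer $m$. So I take $H_f=\sum_S a_S\prod_{i\in S}P_i$, where $P_i=(I-Z_i)/2=\op{1}_i$. Its eigenvalue on $\ket{x}$ is the integer $\tilde f(x)=\sum_S a_S\prod_{i\in S}x_i$, which satisfies $\tilde f(x)\equiv f(x)\pmod 2$. Each term is a product of at most three single-qubit projectors, so $H_f$ is 3-local. It has at most $\bigO(n^3)$ terms, so it is constructible in $\poly(n)$ time.

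For the general-$t$ identity in the statement, I interpret $f(x)$ on the right-hand side as this integer lift $\tilde f$. This is exactly the convention under which the IQP amplitude is usually written. I will note this in the proof, and the case $t=\pi$, which is the one needed for \cref{apd:thm:rate_hardness}, is unaffected.

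\textbf{Step 2: Hadamard conjugation.} Set $H:=\hdm^{\otimes n}H_f\hdm^{\otimes n}$. This maps each $Z_i$ to $X_i$, so $H$ stays 3-local with the same number of terms. Since $\hdm^{\otimes n}$ is its own inverse, $e^{-\ii Ht}=\hdm^{\otimes n}e^{-\ii H_f t}\hdm^{\otimes n}$.

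\textbf{Step 3: evaluate the amplitude.} Using $\hdm^{\otimes n}\ket{0^n}=2^{-n/2}\sum_x\ket{x}$ and the fact that $H_f$ is diagonal,
\[
\langle 0^n|e^{-\ii Ht}|0^n\rangle=\frac{1}{2^n}\sum_{x,y}\langle y|e^{-\ii H_f t}|x\rangle=\frac{1}{2^n}\sum_x e^{-\ii t\tilde f(x)}.
\]
At $t=\pi$ each term is $(-1)^{\tilde f(x)}=(-1)^{f(x)}$. Summing gives $(N_0-N_1)/2^n=\ngap(f)$ by \cref{def:ngap}.

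The only real subtlety is the integer-versus-$\mathbb{F}_2$ issue in Step 1. Everything else is a routine computation.
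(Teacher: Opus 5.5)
Your proof is correct and follows the paper's route: substitute $P_i=(I-Z_i)/2$ for each $x_i$, conjugate by $\hdm^{\otimes n}$, and evaluate on $\ket{0^n}$, with $t=\pi$ reducing everything to parity. Your one departure is to your credit: the paper asserts $H_f\ket{x}=f(x)\ket{x}$, which holds only modulo $2$, so your reading of the general-$t$ identity with the integer lift $\tilde f$ is the right one. The literal $\{0,1\}$-valued version could not hold for any efficiently constructible local $H$ unless $\ngap(f)$ were exactly computable in polynomial time, since it would force $\bra{0^n}H\ket{0^n}=\ii\,\dot{\losamp}(0)=N_1/2^n$.
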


\begin{proof}
First, we create a problem Hamiltonian \(H_f\) that is diagonal in the computational (Z) basis and whose eigenvalues are the values of the polynomial \(f(x)\).
We achieve this by replacing each binary variable \(x_i\) in the polynomial with the projection operator \(P_i = (I - Z_i)/2\), where \(Z_i\) is the Pauli-Z operator on qubit \(i\).
This projector has eigenvalues of 0 for state \(\ket{0}\) and 1 for state \(\ket{1}\), just like the classical bit \(x_i\).
This gives the Hamiltonian:
\begin{equation}\label{eq:Hf}
    H_f = \sum_{i,j,k} \alpha_{ijk} P_i P_j P_k + \sum_{i,j} \beta_{ij} P_i P_j + \sum_{i} \gamma_i P_i
\end{equation}
This Hamiltonian is \(k\)-local for a \(k\)-degree polynomial because each term acts on at most \(k\) qubits. For our degree-3 polynomial, it is 3-local. When acting on a computational basis state \(\ket{x} = \ket{x_1, \dots, x_n}\), its eigenvalue is \(f(x)\):
\begin{equation}
H_f\ket{x} = f(x)\ket{x}
\end{equation}

The IQP circuit applies Hadamard gates to the input and output to create and interfere superpositions.
We achieve the equivalent effect by conjugating our problem Hamiltonian \(H_f\) with Hadamard gates, which rotates the basis from Z to X:
\begin{equation}\label{eq:H_final}
    H_{\IQP} = \hdm^{\otimes n} H_f \hdm^{\otimes n}
\end{equation}
Since \(\hdm Z_i \hdm = X_i\), this is equivalent to replacing every \(Z_i\) operator in \(H_f\) with a Pauli-X operator.
The resulting Hamiltonian \(H_{\IQP}\) is also 3-local, as it is a sum of terms involving products of at most three Pauli-X operators.

Now, the amplitude $\expval{e^{-\ii H_{\IQP}t}}{0}$ can express the gap of any degree-3 polynomial.
Setting $t=\pi$ and using $e^{-\ii\pi f(x)} = (-1)^{f(x)}$:
\begin{equation}
    \expval{e^{-\ii H_{\IQP}\pi}}{0}
    = \frac{1}{2^n} \sum_x e^{-\ii \pi f(x)}
    = \frac{1}{2^n} \sum_x (-1)^{f(x)}
    = \frac{1}{2^n} (N_0-N_1)
    = \frac{1}{2^n} \gap(f)
    = \ngap(f).
\end{equation}
\end{proof}

\begin{remark}[Ising variant: a 2-local Hamiltonian with multiple hard times]\label{apd:rmk:ising_variant}
    The Boolean degree-3 polynomial input to \cref{apd:lem:iqp_hamiltonian} admits an Ising-type substitute that lowers locality and produces multiple hard times~\cite{bremnerAveragecaseComplexityApproximate2016}.
    With integer edge and vertex weights $w_{ij}, v_i$ on the complete graph, define the 2-local transverse Ising Hamiltonian
    $\ham_I = \sum_{i<j} w_{ij}\, X_i X_j + \sum_i v_i\, X_i$.
    A Hadamard sandwich gives the imaginary-time partition function identity
    \begin{equation}\label{apd:eq:ising_amplitude}
        \langle 0^n | e^{-\ii\theta\,\ham_I} | 0^n \rangle = \frac{Z(e^{-\ii\theta})}{2^n}, \qquad Z(\omega) := \sum_{z\in\{\pm 1\}^n} \omega^{H_I(z)},
    \end{equation}
    where $H_I(z) = \sum_{i<j} w_{ij}\, z_i z_j + \sum_i v_i\, z_i$ is the classical Ising energy on spins $z \in \{\pm 1\}^n$.
    This expresses the Loschmidt amplitude as the classical partition function at imaginary temperature $\omega = e^{-\ii\theta}$---the same correspondence that underpins the standard DQPT framework~\cite{heylDynamicalQuantumPhase2013,heylDynamicalQuantumPhase2018}.
    Multiplicative-constant approximation of $|Z(e^{-\ii\theta})|^2$ is $\sharpP$-hard in the worst case at $\theta = \pi/8$ and other rational multiples of $\pi$~\cite{goldbergComplexityApproximatingComplexvalued2017,fujiiCommutingQuantumCircuits2017}, and conjectured average-case hard for random weights~\cite{bremnerAveragecaseComplexityApproximate2016}, so the $\gapP$-hardness of \cref{apd:thm:rate_hardness} carries over to this 2-local Ising instance at each such $\theta$.
\end{remark}

The following lemma formalizes how additive error in $\rate(\pi)$ translates to multiplicative error in $|\ngap(f)|$.

\begin{lemma}\label{lem:error_amplification}
    Let $H$ be the IQP Hamiltonian for a degree-3 polynomial $f$
    (\cref{apd:lem:iqp_hamiltonian}), so that $\losamp(\pi) = \ngap(f)$ and
    $\rate(\pi) = -(2/n)\log|\ngap(f)|$.
    If $\tilde{r}$ approximates $\rate(\pi)$ to additive error $\eps$,
    then $\exp(-n\tilde{r}/2)$ approximates $|\ngap(f)|$ to multiplicative
    error $e^{n\eps/2}$.
    In particular, $\eps = \bigO(1/n)$ yields $\bigO(1)$ multiplicative error.
\end{lemma}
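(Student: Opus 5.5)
The plan is to invert the rate function and let the logarithm convert additive error into multiplicative error. By \cref{apd:lem:iqp_hamiltonian}, $\losamp(\pi)=\ngap(f)$. Hence $\los(\pi)=|\ngap(f)|^2$, and by \cref{eq:global_echo}
\begin{equation*}
    \rate(\pi) = -\frac{1}{n}\log|\ngap(f)|^2 = -\frac{2}{n}\log|\ngap(f)|,
\end{equation*}
which is the expression in the statement. The first step is to solve this for $|\ngap(f)| = \exp(-n\rate(\pi)/2)$. This is valid whenever $\ngap(f)\neq 0$. The degenerate case $\ngap(f)=0$ gives $\rate(\pi)=+\infty$, and I will treat it separately below.

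Next, suppose $|\tilde{r}-\rate(\pi)|\le\eps$ and set $\tilde g := \exp(-n\tilde r/2)$. Then
\begin{equation*}
    \frac{\tilde g}{|\ngap(f)|} = \exp\!\bigl(-\tfrac{n}{2}(\tilde r-\rate(\pi))\bigr) \in \bigl[e^{-n\eps/2},\, e^{n\eps/2}\bigr],
\end{equation*}
because $x\mapsto e^{-nx/2}$ is monotone. This is exactly the claimed multiplicative error factor $e^{n\eps/2}$.

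For the final clause, take $\eps\le c/n$ for a constant $c$. The factor then becomes $e^{c/2}=\bigO(1)$. To get the regime of \cref{lem:ngap_gapp}, which requires multiplicative error below $1/2$, I will choose $c$ so that $e^{c/2}-1<1/2$ and $1-e^{-c/2}<1/2$; $c=2\log(3/2)$ works. Then $|\tilde g-|\ngap(f)||\le (e^{c/2}-1)|\ngap(f)|$, so $\tilde g$ is a multiplicative approximation in the sense used there.

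The only real obstacle is the zero-gap case, where the rate function is infinite and additive error is not meaningful. I would handle it by convention: either declare that an estimator must output $+\infty$, or a value exceeding $2\log(2)$, when $\los(\pi)=0$, since any nonzero $|\ngap(f)|$ is at least $2^{-n}$ and so $\rate(\pi)\le 2\log 2$ whenever the gap is nonzero. With either convention $\tilde g$ is $0$, or can be rounded to $0$, consistently with multiplicative approximation of $0$. Everything else is a direct computation.
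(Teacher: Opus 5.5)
Your proof follows the paper's argument: invert to get $|\ngap(f)| = e^{-n\rate(\pi)/2}$, then use monotonicity of the exponential to turn $|\tilde r - \rate(\pi)|\le\eps$ into the ratio window $[e^{-n\eps/2}, e^{n\eps/2}]$, so that $\eps = c/n$ gives the constant factor $e^{c/2}$. Your additional treatment of the zero-gap case, which the paper omits, is a useful supplement, but it contains two small boundary slips. First, $c = 2\log(3/2)$ gives $e^{c/2}-1 = 1/2$ exactly, so you need $c$ strictly smaller. Second, a rounding threshold of $2\log 2$ can misfire: since $\tilde r$ may be as large as $2\log 2 + \eps$ on a nonzero instance, either raise the threshold to $2\log 2+\eps$, or use that $N_0 - N_1 = 2^n - 2N_1$ is even, so a nonzero gap has $|\ngap(f)| \ge 2^{1-n}$.
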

\begin{proof}
    Since $\rate(\pi) = -(2/n)\log|\ngap(f)|$, we have $|\ngap(f)| = e^{-n\rate(\pi)/2}$.
    If $|\tilde{r} - \rate(\pi)| \le \eps$, then
    \begin{equation}
        e^{-n\eps/2} \le \frac{e^{-n\tilde{r}/2}}{|\ngap(f)|} \le e^{n\eps/2}.
    \end{equation}
    For $\eps = c/n$ with $c = \bigO(1)$, 
    the multiplicative factor is $e^{c/2} = \bigO(1)$.
\end{proof}

\subsubsection{Estimation of the global rate function is \gapP-\hard}
We now combine these lemmas to establish the GapP-hardness of the rate function for DQPT.
\begin{problem}[\estimate-\dqpt]\label{apd:prm:global_dqpt}
    Given an $n$-qubit local Hamiltonian $\ham$, an initial state $\ket{\psi_0}$,
    evolution time $t\in \poly(n)$,
    the problem is to estimate the rate function $\rate(t) = -(1/n)\log\los(t)$
    to error $\eps$.
\end{problem}
\begin{theorem}[\nameref{apd:prm:global_dqpt} is \mbox{\nameref{def:gap_p}}-hard]\label{apd:thm:rate_hardness}
    Let $f: \{0,1\}^n \rightarrow \{0,1\}$ be a degree-3 polynomial over $\mathbb{F}_2$, and let $H$ be the corresponding 3-local Hamiltonian (\cref{apd:lem:iqp_hamiltonian}).
    Computing the rate function at $t=\pi$,
    \begin{equation}
        \rate(\pi) = -\frac{2}{n} \log\abs{\ngap(f)},
    \end{equation}
    to additive error $\eps = \bigO(1/n)$ is \nameref{def:gap_p}-hard under polynomial-time Turing reductions.
\end{theorem}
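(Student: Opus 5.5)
The argument is a polynomial-time Turing reduction from the problem of \cref{lem:ngap_gapp} to \nameref{apd:prm:global_dqpt}. We are given a degree-3 polynomial $f$ over $\mathbb{F}_2$ and an oracle $\mathcal{O}$ that returns $\tilde r$ with $|\tilde r - \rate(\pi)| \le \eps$ on the Hamiltonian of \cref{apd:lem:iqp_hamiltonian}. From these we will produce a constant-multiplicative-error estimate of $|\ngap(f)|$.

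\textbf{Construction.} First, build $\ham = \hdm^{\otimes n} H_f \hdm^{\otimes n}$ from the coefficients of $f$ in $\poly(n)$ time, via \cref{apd:lem:iqp_hamiltonian}. This Hamiltonian is 3-local, and for the initial state $\ket{0^n}$ it satisfies $\losamp(\pi) = \ngap(f)$. Hence $\los(\pi) = \ngap(f)^2$ and $\rate(\pi) = -(2/n)\log|\ngap(f)|$. The choices $t=\pi$ and $\ket{\psi_0}=\ket{0^n}$ meet the requirements that $t\in\poly(n)$ and that the Hamiltonian be local.

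\textbf{Error conversion.} Query the oracle once and output $e^{-n\tilde r/2}$. By \cref{lem:error_amplification}, this value is within a factor $e^{n\eps/2}$ of $|\ngap(f)|$. Take $\eps = c/n$ with $c$ a small absolute constant, for instance $c < 2\log(3/2)$. Then $e^{c/2} < 3/2$, so the estimate lies in $[(1-\delta)|\ngap(f)|, (1+\delta)|\ngap(f)|]$ with $\delta < 1/2$. Writing the two one-sided bounds $e^{-c/2} \ge 1-\delta$ and $e^{c/2} \le 1+\delta$ explicitly fixes a valid constant $c$. This matches the multiplicative-error regime of \cref{lem:ngap_gapp}. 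The reduction then follows the Turing reduction of Bremner et al.: each oracle call it makes on a shifted polynomial $f_c$ is answered by building the corresponding IQP Hamiltonian and querying $\mathcal{O}$. Because the construction is uniform and only $\bigO(n)$ such calls are needed, every function in $\gapP$ reduces to \nameref{apd:prm:global_dqpt} in polynomial time.

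\textbf{Main obstacle: the case $\ngap(f)=0$.} Here $\los(\pi)=0$ and $\rate(\pi)=+\infty$, so the additive-error guarantee is ill-defined. I would handle this in one of two ways.
\begin{enumerate}
\item Adopt the convention that on such instances the oracle returns $+\infty$ or any value larger than $n$. The decoded estimate $e^{-n\tilde r/2}$ is then at most $e^{-n^2/2} < 2^{-n}$. Since $|\ngap(f)|$ is an integer multiple of $2^{-n}$, any decoded value below $2^{-n}$ can be rounded to $0$, and this is exactly the multiplicative answer required.
\item Alternatively, check that the bisection in Bremner et al.\ only needs the oracle on nonzero-gap instances, or that it tolerates arbitrary answers at zero, since a multiplicative approximation of $0$ must itself be $0$.
\end{enumerate}
I expect this boundary case, together with confirming that the shifted polynomials $f_c$ remain degree-3 so that every queried Hamiltonian is still 3-local, to be the only delicate points. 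Everything else is the direct composition of \cref{apd:lem:iqp_hamiltonian}, \cref{lem:error_amplification}, and \cref{lem:ngap_gapp}.
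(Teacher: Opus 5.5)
Your proposal is correct and is essentially the paper's proof: you build the IQP Hamiltonian of \cref{apd:lem:iqp_hamiltonian} so that $\losamp(\pi)=\ngap(f)$, you convert additive error $c/n$ in $\rate(\pi)$ into the multiplicative factor $e^{c/2}$ via \cref{lem:error_amplification}, and you compose this with the Turing reduction behind \cref{lem:ngap_gapp}. Your explicit choice $c<2\log(3/2)$ (matching that lemma's error-below-$1/2$ regime) and your treatment of the $\ngap(f)=0$ case, where $\rate(\pi)=+\infty$, are refinements the paper leaves implicit; in the rounding step, the fact that makes the $2^{-n}$ threshold safe is that $N_0-N_1=2^n-2N_1$ is even, so every nonzero gap satisfies $|\ngap(f)|\ge 2^{1-n}$ (divisibility by $2^{-n}$ alone would not suffice).
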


\begin{proof}
We reduce from the $\gapP$-hard problem of multiplicatively approximating
$|\ngap(f)|$ (\cref{lem:ngap_gapp}).
Given a degree-3 polynomial $f$, construct the 3-local Hamiltonian $H$ in $\poly(n)$
time (\cref{apd:lem:iqp_hamiltonian}), so that $\losamp(\pi) = \ngap(f)$.
By \cref{lem:error_amplification}, an additive-$\eps$ approximation to
$\rate(\pi) = -(2/n)\log|\ngap(f)|$ yields a multiplicative-$e^{n\eps/2}$
approximation to $|\ngap(f)|$.
For $\eps = \bigO(1/n)$, this gives $\bigO(1)$ multiplicative error.
Since computing $|\ngap(f)|$ to within constant multiplicative error is $\gapP$-hard (\cref{lem:ngap_gapp}),
computing $\rate(\pi)$ to additive error $\bigO(1/n)$ is $\gapP$-hard.
\end{proof}

\begin{remark}[Containment in $\mathrm{FP}^{\gapP}$]\label{apd:rmk:containment}
    The class $\mathrm{FP}$ consists of all functions $f:\{0,1\}^*\to\mathbb{Z}$ computable by a deterministic polynomial-time Turing machine, and $\mathrm{FP}^{\gapP}$ denotes the class of functions computable in polynomial time with oracle access to a $\gapP$ function.
    The estimation problem itself sits in $\mathrm{FP}^{\gapP}$: $\gap(f) = N_0 - N_1$ with $N_0, N_1 \in \sharpP$, so $\gap(f)$ is the canonical $\gapP$ function by definition~\cite{fennerGapdefinableCountingClasses1994}, and a single $\gapP$-oracle call computes it exactly, from which $\rate(\pi) = -(2/n)\log|(N_0 - N_1)/2^n|$ follows in polynomial time.
    Combined with \cref{apd:thm:rate_hardness}, this places $\eps = \bigO(1/n)$ estimation of $\rate(\pi)$ in $\mathrm{FP}^{\gapP}$ and $\gapP$-hard---the natural completeness statement under the loose convention of \cite{bremnerAveragecaseComplexityApproximate2016}.
\end{remark}

\begin{remark}[Constant-precision hardness for DQPT decision]\label{apd:rmk:constant_precision_open}
    The $\gapP$-hardness of \cref{apd:thm:rate_hardness} applies at additive precision $\eps = \bigO(1/n)$, the regime in which \cref{lem:error_amplification} converts to constant multiplicative error in $|\ngap(f)|$.
    Yet \emph{deciding} a global DQPT only requires \emph{constant} additive precision $\eps = \Theta(1)$, since the magnitude condition $\rate(t_c) \ge \xi$ and the cusp condition $-\dratem(t_c)\cdot\dratep(t_c) \ge \eta$ are gated by constants $\xi, \eta = \Theta(1)$.
    At constant $\eps$, \cref{lem:error_amplification} inflates the multiplicative window on $|\ngap(f)|$ to $e^{\Theta(n)}$, which spans essentially the full admissible range $|\ngap(f)| \in [0,1]$ and conveys no $\gapP$-information; any reduction routed through $|\ngap(f)|$ is therefore information-theoretically incapable of yielding hardness at this physically motivated precision.
    Whether constant-precision estimation of $\rate(\pi)$ is classically hard---via a fundamentally different mechanism (e.g., a $\BQP$-style argument or one exploiting the phase rather than the magnitude of $\losamp(\pi)$)---remains open with the current proof technique.
\end{remark}

The $\gapP$-hardness of \nameref{apd:prm:global_dqpt} is a barrier even for quantum computers.
A quantum computer can efficiently prepare the evolved state via Hamiltonian simulation, 
but deciding whether it is a critical time of DQPT 
(i.e. estimating the rate function to the certain precision) 
is likely intractable to quantum computers,
which may require exponential samples to resolve.
To show a tractable problem where quantum advantage is provable, 
the next section formulates the \emph{subsystem} variant of DQPT---a promise decision problem with constant gaps that is $\BQP$-\complete.

\subsection{\BQP-\hardness{} of Local-DQPT}\label{apd:sec:bqp}
In this section we prove that \nameref{apd:prm:promise_local_dqpt} is \nameref{def:bqp_complete} (\cref{thm:bqp_complete_local_dqpt}), 
which implies a practical and provable quantum advantage.
A problem is \BQP-\complete{} if it is in $\BQP$ and $\BQP$-\hard.
We establish membership in $\BQP$ in \cref{apd:thm:in_bqp} and \BQP-\textsf{hardness} in \cref{apd:thm:bqp_hard}.
We first collect the necessary complexity-theoretic definitions and the \nameref{def:bqp_complete} problem from which we reduce.
Informally, the class of computational problems that are solvable in polynomial time by quantum computers with bounded error, 
is denoted as $\BQP$.
\begin{definition}[\BQP]\label{def:bqp}
    Formally, a language $L\in \BQP$ if and only if there exists a polynomial-size quantum circuit $Q$ such that
	\begin{align}
		x\in L_{\yes} & \Longleftrightarrow \Probability[Q(x)=1]\ge 2/3 \\
		x\in L_{\no} & \Longleftrightarrow \Probability[Q(x)=1]\le 1/3
	\end{align}
    where $Q(x)$ denotes the outcome of measuring the first qubit after applying $Q$ to $\ket{x}\ket{0^m}$.
\end{definition}
The constants $1/3$ and $2/3$ are just conventional 
because the success probability can be amplified to be close to 1 by repeating the algorithm and taking the majority vote.
\begin{definition}[\BQP-\hard]\label{def:bqp_hard}
	A \BQP-\hard{} problem $\mathcal{P}$ is one with the property that any problem in $\BQP$ can be efficiently translated into an instance of $\mathcal{P}$, 
    so that the answer to the instance of $\mathcal{P}$ gives the answer to the original problem in $\BQP$.
\end{definition}
\begin{definition}[\BQP-\complete]\label{def:bqp_complete}
	A problem is $\BQP$-$\complete$ if it is both \nameref{def:bqp_hard} and contained in \nameref{def:bqp}.
\end{definition}
The following problem definition are the standard \BQP-$\complete$ problem of approximating the acceptance probability of a quantum circuit, 
which we will reduce from.
\begin{problem}[\aqcp]\label{prm:approx_qcircuit_prob}
    Given a polynomial-size quantum circuit $C$ acting on $n$ qubits and each gate acts on one or two qubits, 
    distinguish between the following two cases:   
    \begin{itemize}
        \item \yes: measuring the first qubit of the state $C\ket{0}^{\otimes n}$ yields $\ket{1}$ with probability $\ge 2/3$,
        i.e., $\expval{C^\dagger\op{1}_0 C}{\vb{0}}\ge 2/3$
        \item \no: 
        $\expval{C^\dagger\op{1}_0 C}{\vb{0}}\le 1/3$
    \end{itemize}
\end{problem}
\nameref{prm:approx_qcircuit_prob} is \nameref{def:bqp_complete} by definition.

\subsubsection{The standard BQP-complete problem: Overlap of quantum circuits}
The next problem is a variant of \nameref{prm:approx_qcircuit_prob} where the the output 1-qubit is replaced by a $k$-qubit subsystem, and the acceptance probability is amplified to $1-e^{-\Omega(k)}$.
We formally define this problem.
\begin{problem}[\promise-\koverlap]\label{prm:k_overlap}
    Given a quantum circuit $U$ on $n$ qubits with $L = \poly(n)$ gates,
    a subsystem size $k\in[n]$, and the $k$-local projector
    $\projk = \prod_{j=1}^k \op{0}_j$, 
    let $a=\bra{0^n} U^\dagger \projk\, U \ket{0^n}$ be the overlap of the output state with the $k$-qubit all-zero initial state,
    decide:
    \begin{itemize}
        \item \yes: $a
              \ge 1 - e^{-\Omega(k)}$
        \item \no:  $a
              \le e^{-\Omega(k)}$
    \end{itemize}
    promised that one of these holds.
\end{problem}
\nameref{prm:k_overlap} is also \nameref{def:bqp_complete} by circuit repetition and majority vote (gap amplification).
Later, we will use this problem to prove the \BQP-hardness of determining local DQPTs.
\begin{lemma}\label{thm:k_overlap}
    \nameref{prm:k_overlap} is \nameref{def:bqp_complete}
    for any constant $k \ge 1$.
\end{lemma}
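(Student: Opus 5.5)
We prove membership and hardness separately, reducing in both directions to \nameref{prm:approx_qcircuit_prob}, which is $\BQP$-complete by definition.

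\emph{Membership in $\BQP$.} Given an instance $(U,k)$, prepare $U\ket{0^n}$ and measure the first $k$ qubits in the computational basis. The outcome $0^k$ occurs with probability exactly $a=\bra{0^n}U^\dagger\projk U\ket{0^n}$. Accept (answer \yes) if the outcome is $0^k$. For constant $k\ge1$ the promise reads $a\ge 1-e^{-\Omega(k)}$ versus $a\le e^{-\Omega(k)}$, and we need the implied constants to make these two thresholds separated, say $1-e^{-ck}\ge 2/3$ and $e^{-ck}\le 1/3$. This is exactly the regime in which the promise is meaningful; for a smaller separation that is still a constant, $\bigO(1)$ repetitions with a threshold at the midpoint suffice by a Chernoff bound. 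The circuit has polynomial size, so the problem lies in $\BQP$.

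\emph{$\BQP$-hardness.} Start from an instance $C$ of \nameref{prm:approx_qcircuit_prob} with acceptance probability $p=\bra{0}C^\dagger\op{1}_0C\ket{0}$. Build $W$ as follows.
\begin{enumerate}
\item Run $m=\Theta(k)$ independent copies of $C$ on disjoint registers.
\item Compute the majority of the $m$ output bits reversibly into an ancilla using a polynomial-size classical reversible circuit of Toffoli and CNOT gates. Include a final $X$ on that ancilla so that majority ``reject'' maps to $\ket{0}$.
\item Fan the ancilla out with CNOTs to $k$ fresh qubits, then relabel these $k$ qubits as positions $1,\dots,k$.
\end{enumerate}
With this orientation, a \no{} instance of $C$ ($p\le1/3$) gives majority reject with probability $\ge1-e^{-\Omega(m)}$ by Hoeffding, and in that event all $k$ answer qubits read $0$, so $a\ge 1-e^{-\Omega(k)}$. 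A \yes{} instance gives $a\le e^{-\Omega(k)}$. If we prefer \yes$\mapsto$\yes, we simply omit the final $X$. Either orientation is a valid Karp reduction, since $\BQP$ is closed under complement. Note that $\projk$ projects all $k$ fanned-out copies jointly, and the copies are perfectly correlated, so $a$ equals the probability that the majority bit is $0$. The total gate count is $\poly(n)$, and because everything is built from one- and two-qubit gates (Toffolis decomposed into them), the result is a valid instance.

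\emph{Main obstacle.} The one subtle point is bookkeeping the $\Omega(k)$ constants so that the thresholds are genuinely separated when $k$ is a small constant such as $k=1$. We handle this by choosing $m=c'k$ with $c'$ large enough that the Hoeffding tail satisfies $e^{-2m(1/6)^2}\le e^{-k}\le 1/e<1/2$. Once the separation is fixed this way, the membership test described above decides the promise.
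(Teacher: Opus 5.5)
Your proof is correct and follows the paper's argument: $\Theta(k)$ parallel copies of $C$, a reversible majority, a CNOT fan-out onto $k$ fresh answer qubits so that $a=\Pr[\maj=0]$ by Chernoff/Hoeffding, the \yes/\no{} label swap justified by $\BQP=\mathsf{co}\BQP$, and membership by directly measuring the answer positions. The paper additionally uncomputes the majority and the copies of $C$, which is immaterial since $\projk$ touches only the fresh qubits, and your explicit choice $m=18k$ fixes constants that the paper leaves implicit in $m=2k+1$. One small slip: the final $X$ has its role reversed. Without it, the reject-majority is already $\ket{0}$, which gives the paper's swapped \no$\mapsto$\yes{} orientation, while including it yields \yes$\mapsto$\yes{}. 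Since you note that either orientation is valid, this does not affect correctness.
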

\begin{proof}
    We reduce from \nameref{prm:approx_qcircuit_prob} 
    \cite{babbushExponentialQuantumSpeedup2023}.
    Given a circuit $C$ on $n_0$ qubits whose first output qubit
    accepts with probability $p\ge 2/3$ (\yes) or $p\le 1/3$ (\no),
    we construct a circuit $U$ on $n = \bigO(n_0)$ qubits as follows.

    \emph{Gap amplification.}
    Set $m = 2k+1$ (always odd, matching the subsystem size).
    Run $m$ independent copies of $C$ in parallel on $m \cdot n_0$
    qubits.  
    Compute the majority of the $m$ first output qubits
    $q_1,\dots,q_m$ into an ancilla bit~$a$ using a reversible
    circuit $V_{\maj}$
    (a reversible adder with threshold comparison,
    using $\bigO(m\log m)$ gates and $\bigO(\log m)$ ancilla qubits).
    By the Chernoff bound, the majority equals the correct answer
    with probability $\ge 1 - e^{-\Omega(m)} = 1 - e^{-\Omega(k)}$.

    \emph{Construction of $U$.}
    Introduce $k$ fresh ancilla qubits at positions $1,\dots,k$,
    all initialized to $\ket{0}$.
    The circuit~$U$ acts as:
    (i)~apply $C^{\otimes m}$ to the $m$ register blocks;
    (ii)~apply $V_{\maj}$ to compute the majority bit into~$a$;
    (iii)~\textsc{cnot} from $a$ to each of positions $1,\dots,k$;
    (iv)~apply $V_{\maj}^\dagger$ to uncompute $a$
    and all majority-vote ancillas;
    (v)~apply $(C^\dagger)^{\otimes m}$ to uncompute the registers.
    The total qubit count is
    $n = k + m\,n_0 + \bigO(m\log m) = \bigO(n_0)$
    (since $k = \bigO(1)$ and $m = \bigO(k) = \bigO(1)$),
    so $n_0 = \Theta(n)$
    and the Chernoff error is $e^{-\Omega(k)}$.

    \emph{Analysis.}
    After step~(iv) the global state is
    \begin{equation}
        \sum_{x\in\{0,1\}^m} \alpha_x\,
        \ket{x}_q\ket{\Phi_x}_{\mathrm{rest}}
        \,\ket{0}_A
        \,\ket{\maj(x)}^{\otimes k}_{1\dots k},
    \end{equation}
    where $q = (q_1,\dots,q_m)$ is the register of the $m$ first output qubits,
    $A$ is the majority-vote ancilla register (the bit $a$ together with the
    $\bigO(\log m)$ workspace qubits of $V_{\maj}$), restored to $\ket{0}$ by
    step~(iv),
    $\alpha_x = \prod_{i=1}^m (\sqrt{p})^{x_i}(\sqrt{1\!-\!p})^{1-x_i}$,
    and $\ket{\Phi_x} = \bigotimes_i \ket{\phi_{x_i}^{(i)}}$ collects
    the non-output qubits of each copy.
    Step~(v) applies $(C^\dagger)^{\otimes m}$ to the registers;
    since $\projk = \prod_{j=1}^k \op{0}_j$ acts only on positions
    $1,\dots,k$ (the fresh ancillas), the register state is irrelevant.  
    Therefore
    \begin{equation}
        \bra{0^n} U^\dagger \projk\, U \ket{0^n}
        = \sum_{x:\,\maj(x)=0} \abs{\alpha_x}^2
        = \Pr\qty[\maj(X)=0],
    \end{equation}
    where $X_i \sim \mathrm{Bernoulli}(p)$ independently.
    For \no{} instances ($p \le 1/3$):
    $\Pr[\maj=0] \ge 1 - e^{-\Omega(k)}$.
    For \yes{} instances ($p \ge 2/3$):
    $\Pr[\maj=0] \le e^{-\Omega(k)}$.
    The $e^{-\Omega(k)}$ gap originates from
    Chernoff concentration over $m = \Theta(k)$ copies;
    all $k$ answer positions carry copies of the single majority bit,
    so the gap is independent of how the $k$-local projector $\projk$ acts.
    The \yes{}/\no{} labels are swapped: a \yes{}
    instance of \mbox{\nameref{prm:approx_qcircuit_prob}}
    has first-qubit output $\ket{1}$ w.h.p., so the majority
    bit is~$1$ and the overlap with $\ket{0}^{\otimes k}$ is small,
    making it a \no{} instance of \mbox{\nameref{prm:k_overlap}}.
    This is valid since $\BQP = \textsf{co}\BQP$.

    \emph{Membership.}
    Run $U\ket{0^n}$ and measure positions $1,\dots,k$;
    the outcome decides the problem in $\poly(n)$ time.
\end{proof}

The proof of \cref{thm:k_overlap} constructs $U$ with a specific layout, whose properties the hardness proof uses; we record them as a structural fact.

\begin{fact}[Structure of the canonical \koverlap{} circuit]\label{apd:lem:koverlap_structure}
The circuit $U$ produced by the proof of \cref{thm:k_overlap} satisfies:
\begin{itemize}
    \item the answer positions $1,\dots,k$ are fresh ancillae initialized to $\ket{0}$, and the only gates of $U$ acting on them are the $k$ CNOTs of the fan-out block,
    (a single contiguous layer, $k = \bigO(1)$ wide);

    \item the fan-out copies a single classical bit, the majority bit of the $m = 2k+1$ circuit repetitions, 
    so that after any $i \ge 1$ fan-out CNOTs the positions hold $\ket{\maj}^{\otimes i}\ket{0}^{\otimes(k-i)}$;

    \item consequently $\bra{0^n}U^\dagger \projk U\ket{0^n} = \Pr[\maj = 0] =: a$, with $a \ge 1-\eps$ for accepting and $a \le \eps$ for rejecting instances, where $\eps = e^{-\Omega(k)}$ is the amplified overlap error.
\end{itemize}
\end{fact}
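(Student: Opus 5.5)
We verify the three items of the Fact by reading them off the construction in the proof of \cref{thm:k_overlap}, following the circuit $U$ step by step: (i) $C^{\otimes m}$, (ii) $V_{\maj}$, (iii) the fan-out CNOTs, (iv) $V_{\maj}^\dagger$, (v) $(C^\dagger)^{\otimes m}$. No new estimate is needed; the plan is to trace the supports of the gates and the state of the answer register.

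\emph{First item.} Positions $1,\dots,k$ are introduced as fresh ancillae in $\ket{0}$, disjoint from the $m$ register blocks on which $C^{\otimes m}$ and $(C^\dagger)^{\otimes m}$ act. They are also disjoint from the ancilla register $A$ on which $V_{\maj}$ and $V_{\maj}^\dagger$ act. Hence the only gates touching them are the $k$ CNOTs of step (iii). These have a common control $a$ and distinct targets, so they form one contiguous block of width $k=\bigO(1)$. If the block must be written strictly as a layer, we may fan out through a log-depth tree, but contiguity is all that is used.

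\emph{Second item.} Just before step (iii), the state is $\sum_x \alpha_x \ket{x}_q\ket{\Phi_x}\ket{\maj(x)}_a\ket{\text{work}_x}\ket{0}^{\otimes k}$. This holds because $V_{\maj}$ writes the classical function $\maj(x)$ of the computational-basis output bits into $a$. A CNOT from a basis-valued control copies the bit, so after $i$ CNOTs each branch carries $\ket{\maj(x)}^{\otimes i}\ket{0}^{\otimes(k-i)}$ on the answer positions. We read the statement ``$\ket{\maj}^{\otimes i}\ket{0}^{\otimes(k-i)}$'' branchwise in this superposition sense; this is the one point needing care in phrasing, since the answer register is entangled with $q$ rather than a product state.

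\emph{Third item.} This is exactly the computation already done in the analysis of \cref{thm:k_overlap}. Steps (iv) and (v) act trivially on positions $1,\dots,k$, and $\projk$ is supported there, so the expectation equals $\sum_{x:\maj(x)=0}|\alpha_x|^2=\Pr[\maj=0]$. This uses orthogonality of the distinct $\ket{x}_q$ branches before the uncomputation, together with unitarity of steps (iv) and (v). The Chernoff bound over $m=2k+1$ Bernoulli$(p)$ trials gives $a\ge 1-\eps$ when $p\le1/3$ and $a\le\eps$ when $p\ge 2/3$, with $\eps=e^{-\Omega(k)}$.

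The accepting and rejecting labels here refer to the $\koverlap$ instance, which swaps the labels of \aqcp{} as noted in the proof of \cref{thm:k_overlap}. The main obstacle is only bookkeeping: confirming that $V_{\maj}$'s workspace never overlaps positions $1,\dots,k$, and that the uncomputation does not disturb the answer copies.
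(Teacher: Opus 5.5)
Your proposal is correct and follows the same route as the paper. The paper gives no separate argument for this Fact; it reads the three items off the construction and the branchwise analysis in the proof of \cref{thm:k_overlap}, exactly as you do: gate supports for the answer positions, copying of the basis-valued majority bit, then orthogonality of the $\ket{x}_q$ branches together with the Chernoff bound over $m=2k+1$ repetitions.

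Your branchwise reading of the second item and your convention that accepting/rejecting refer to the $\koverlap$ instance are the right ones. The aside about a log-depth fan-out tree is unnecessary, and a tree would not give a depth-one layer anyway: the $k$ CNOTs sharing control $a$ are applied sequentially, and only their contiguity is used downstream.
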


\subsubsection{Feynman--Kitaev circuit-to-Hamiltonian construction}
The key technical ingredient bridging quantum circuits and Hamiltonian dynamics is the Feynman--Kitaev construction, which encodes a circuit $U$ into a local Hamiltonian whose time evolution reproduces the computation.

\begin{lemma}[Feynman-Kitaev Circuit-to-Hamiltonian construction]\label{thm:circuit_hamiltonian}
    Given a quantum circuit $U=U_{N}\dots U_1$ composed of $N$ gates, 
    there exists a Hamiltonian $H_{U} \in \mathcal{H}_{\cc} \otimes \mathcal{H}_{\oo}$ acting on a clock register and an output register, where the clock uses $N$ qubits to unary-encode its $N+1$ legal states $\ket{0},\ldots,\ket{N}$ as $\ket{j}=\ket{1^j 0^{N-j}}$, with
    \begin{equation}\label{eq:circuit_hamiltonian}
        H_{U} = \sum_{j=1}^{N} \sqrt{j(N-j+1)} 
        \qty(\op{j}{j-1}_{\cc}\otimes U_j + \op{j-1}{j}_{\cc}\otimes U_j^\dagger),
    \end{equation}
    such that for any initial state $\ket{\psi_0}_{\oo}$, the time evolution yields a superposition over all intermediate circuit steps (a history state):
    \begin{equation}\label{eq:circuit_ham_evolution}
        e^{-\ii H_{U}t}\ket{0}_{\cc}\ket{\psi_0}_{\oo}
        = \sum_{j=0}^{N} c_j(t)\,\ket{j}_{\cc}\otimes U_j\cdots U_1\ket{\psi_0}_{\oo},
    \end{equation}
    where the clock amplitudes are
    \begin{equation}\label{eq:clock_amplitudes}
        c_j(t) = (-\ii)^j \binom{N}{j}^{1/2}\sin^{j}(t)\,\cos^{N-j}(t),
        \qquad
        \abs{c_j(t)}^2 = \binom{N}{j}\sin^{2j}(t)\,\cos^{2(N-j)}(t).
    \end{equation}
    In particular, at $t=\pi/2$ the clock deterministically reaches step $N$ and the full circuit is applied: $e^{-\ii H_{U}\pi/2}\ket{0}_{\cc}\ket{\psi_0}_{\oo}=(-\ii)^N\ket{N}_{\cc}\otimes U \ket{\psi_0}_{\oo}$, where the global phase $(-\ii)^N$ is irrelevant for the projector and echo arguments below.
\end{lemma}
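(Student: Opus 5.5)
The plan is to remove the circuit unitaries from $H_U$ by a gauge (change-of-basis) transformation. What remains is a pure clock Hamiltonian, which we will identify with a spin-$N/2$ angular-momentum operator; the binomial amplitudes are then just Wigner rotation matrix elements.

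\emph{Step 1 (invariant subspace).} First note that each term $\op{j}{j-1}_{\cc}$, realized on the unary clock as an operator on clock bits $j-1,j,j+1$, maps legal states to legal states. Hence $H_U$ leaves invariant
\begin{equation}
\mathcal{S}:=\operatorname{span}\{\ket{j}_{\cc}\}_{j=0}^{N}\otimes\mathcal{H}_{\oo}.
\end{equation}
Since the initial state $\ket{0}_{\cc}\ket{\psi_0}_{\oo}$ lies in $\mathcal{S}$, we may work entirely in this subspace.

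\emph{Step 2 (gauge away the gates).} On $\mathcal{S}$, define the unitary
\begin{equation}
W:=\sum_{j=0}^N \op{j}_{\cc}\otimes U_j\cdots U_1 .
\end{equation}
A one-line computation gives
\begin{equation}
W^\dagger\bigl(\op{j}{j-1}\otimes U_j\bigr)W=\op{j}{j-1}\otimes (U_{j-1}\cdots U_1)^\dagger U_j^\dagger U_j\,(U_{j-1}\cdots U_1)=\op{j}{j-1}\otimes I,
\end{equation}
and the same holds for the Hermitian-conjugate terms. Therefore $W^\dagger H_U W = K\otimes I$ on $\mathcal{S}$, with
\begin{equation}
K=\sum_{j=1}^N\sqrt{j(N-j+1)}\bigl(\op{j}{j-1}+\op{j-1}{j}\bigr).
\end{equation}
Since $W\ket{0}_{\cc}\ket{\psi_0}_{\oo}=\ket{0}_{\cc}\ket{\psi_0}_{\oo}$, we obtain
\begin{equation}
e^{-\ii H_U t}\ket{0}\ket{\psi_0}=W\bigl(e^{-\ii K t}\ket{0}\bigr)\ket{\psi_0}.
\end{equation}
Writing $e^{-\ii K t}\ket{0}=\sum_j c_j(t)\ket{j}$ and applying $W$ yields exactly the history-state form \cref{eq:circuit_ham_evolution}.

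\emph{Step 3 (spin identification).} Identify clock state $\ket{j}$ with the spin-$s$ state $\ket{m=j-s}$, where $s=N/2$. The raising matrix element is
\begin{equation}
\bra{m}J_+\ket{m-1}=\sqrt{(s+m)(s-m+1)}=\sqrt{j(N-j+1)},
\end{equation}
so $K=J_++J_-=2J_x$. The evolution $e^{-\ii 2J_x t}$ is then a rotation by angle $2t$ about the $x$-axis applied to the lowest-weight state $\ket{m=-s}$. The standard Wigner formula $d^{s}_{m,-s}(2t)=\binom{2s}{s+m}^{1/2}\cos^{s-m}(t)\sin^{s+m}(t)$, together with the phase $(-\ii)^{m+s}$ coming from the $x$-rotation $e^{-\ii\beta J_x}=e^{\ii\pi J_z/2}e^{-\ii\beta J_y}e^{-\ii\pi J_z/2}$, gives \cref{eq:clock_amplitudes}. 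Setting $t=\pi/2$ leaves only $j=N$, with phase $(-\ii)^N$.

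As an independent check, I would verify the formula by direct differentiation. The claimed amplitudes must satisfy
\begin{equation}
\ii\dot c_j=\sqrt{j(N-j+1)}\,c_{j-1}+\sqrt{(j+1)(N-j)}\,c_{j+1},
\end{equation}
and substituting the ansatz reduces this to the identities $\binom{N}{j}^{1/2}j=\sqrt{j(N-j+1)}\binom{N}{j-1}^{1/2}$ and $\binom{N}{j}^{1/2}(N-j)=\sqrt{(j+1)(N-j)}\binom{N}{j+1}^{1/2}$. Uniqueness of ODE solutions with $c_j(0)=\delta_{j0}$ then finishes the proof.

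The main obstacle is bookkeeping rather than any conceptual difficulty: the phases $(-\ii)^j$ and sign conventions must match, and one must make sure the local realization of $\op{j}{j-1}$ truly preserves the legal subspace. For this reason I will present the ODE verification as the primary argument, with the spin picture serving as intuition.
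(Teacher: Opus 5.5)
Your proof is correct, and it has the same skeleton as the paper's. Both restrict to the legal clock subspace, strip the gates off to leave a gate-free tridiagonal clock matrix, and recognize that matrix as a collective spin operator.

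The correspondences are direct. Your gauge unitary $W$ is the operator form of the paper's history-subspace step: the paper checks $H_U\ket{\alpha_k}$ on $\ket{\alpha_k}=\ket{k}\otimes U_k\cdots U_1\ket{\psi_0}$ for one fixed $\ket{\psi_0}$, which is your identity $W^\dagger H_U W=K\otimes I$ applied to a single input, whereas yours holds uniformly in $\ket{\psi_0}$. Your $K=2J_x$ is the paper's $\bar X=\sum_j X^{(j)}$ restricted to the Hamming-weight states $\ket{w_k}$, which are exactly the spin-$N/2$ Dicke states $\ket{m=k-N/2}$.

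The real difference is how the amplitudes are obtained. The paper derives them: since $\bar X$ is a sum of commuting single-qubit terms, $e^{-\ii\bar X t}\ket{0}^{\otimes N}=\bigotimes_j(\cos t\ket{0}-\ii\sin t\ket{1})$, and grouping by Hamming weight gives $\binom{N}{k}^{1/2}(-\ii\sin t)^k\cos^{N-k}t$ with the phases automatic. The price is a counting argument for the matrix elements of $\bar X$. You instead verify a known ansatz against the Schr\"odinger ODE, using the two binomial-ratio identities plus uniqueness. That is shorter and needs no auxiliary system. The paper's product-state derivation, by contrast, makes the binomial law $\mathrm{Bin}(N,\sin^2 t)$ of the clock index self-evident.

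There is one bookkeeping slip in your spin heuristic. In the standard convention $d^{s}_{m,-s}(\beta)=(-1)^{s+m}\binom{2s}{s+m}^{1/2}\cos^{s-m}(\beta/2)\sin^{s+m}(\beta/2)$, and the conjugating $z$-rotations contribute $\ii^{s+m}$, not $(-\ii)^{s+m}$. The two errors cancel, so your final phase is right. Since the ODE check is your primary argument, the proof is unaffected.
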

\begin{proof}
    The proof strategy is to introduce an analytically solvable auxiliary system
    whose dynamics, when restricted to an invariant subspace, are spectrally
    identical to the clock sector of~$H_U$.
    Consider $N$ independent qubits with collective Pauli operator
    $\bar{X} = \sum_{j=1}^{N} X^{(j)}$.
    Since each qubit evolves independently via $e^{-\ii Xt} = \cos t\, I - \ii\sin t\, X$, the evolution from $\ket{0}^{\otimes N}$ factorizes as
    \begin{equation}\label{eq:product_evolution}
        e^{-\ii\bar{X}t}\ket{0}^{\otimes N}
        = \bigotimes_{j=1}^{N}\qty(\cos t\,\ket{0}_j - \ii\sin t\,\ket{1}_j).
    \end{equation}

    \emph{Hamming-weight decomposition.}
    Define the (normalized) Hamming-weight states
    $\ket{w_k} \coloneqq \binom{N}{k}^{-1/2}\sum_{\abs{x}=k}\ket{x}$
    for $k=0,\ldots,N$.
    Expanding \cref{eq:product_evolution} and collecting computational-basis states by Hamming weight, each weight-$k$ string picks up the coefficient $(-\ii\sin t)^k(\cos t)^{N-k}$.
    Since there are $\binom{N}{k}$ such strings,
    \begin{equation}\label{eq:hamming_projection}
        e^{-\ii\bar{X}t}\ket{0}^{\otimes N}
        = \sum_{k=0}^{N}(-\ii\sin t)^k (\cos t)^{N-k}\,\binom{N}{k}^{1/2}\,\ket{w_k}.
    \end{equation}

    \emph{Tridiagonal structure of $\bar{X}$.}
    The operator $X^{(j)}$ flips bit~$j$.
    For a string $\ket{x}$ with $\abs{x}=k$, there are $k$ positions where $x_j=1$ (flipping to weight $k-1$) and $N-k$ positions where $x_j=0$ (flipping to weight $k+1$).
    By a counting argument: each weight-$(k{-}1)$ string $y$ is produced from exactly $N-k+1$ distinct pairs $(x,j)$ with $\abs{x}=k$ and $x_j=1$, since $y$ has $N-k+1$ zero-positions that can be set to 1 to yield a weight-$k$ string.
    Similarly, each weight-$(k{+}1)$ string is produced from $k+1$ pairs.
    It follows that
    \begin{equation}\label{eq:Xbar_tridiag}
        \bar{X}\ket{w_k}
        = \sqrt{k(N{-}k{+}1)}\,\ket{w_{k-1}}
        + \sqrt{(k{+}1)(N{-}k)}\,\ket{w_{k+1}},
    \end{equation}
    where the prefactors arise from the ratios of binomial coefficients:
    $\binom{N}{k}^{-1/2}(N{-}k{+}1)\binom{N}{k{-}1}^{1/2} = \sqrt{k(N{-}k{+}1)}$, and similarly for the raising term~\cite{christandlPerfectStateTransfer2004,berryExponentialImprovementPrecision2014}.

    \emph{Spectral equivalence with $H_U$.}
    The payoff of the auxiliary construction is that $\bar{X}$
    decomposes into independent qubits (giving the closed-form
    evolution in \cref{eq:hamming_projection}), yet its restriction
    to the Hamming-weight subspace reproduces the clock dynamics
    of~$H_U$.
    The relevant invariant subspace of $H_U$ is the \emph{history subspace}
    $\mathcal{S}_{\psi_0} \coloneqq \operatorname{span}\{\ket{\alpha_k}\}_{k=0}^{N}$
    spanned by the history states
    $\ket{\alpha_k} \coloneqq \ket{k}_\cc \otimes U_k\cdots U_1\ket{\psi_0}_\oo$,
    on which the off-diagonal $U_k,U_k^\dagger$ blocks advance and rewind the
    output register in lockstep with the clock:
    \begin{equation}\label{eq:HU_history_tridiag}
        H_U\ket{\alpha_k}
        = \sqrt{(k{+}1)(N{-}k)}\,\ket{\alpha_{k+1}}
        + \sqrt{k(N{-}k{+}1)}\,\ket{\alpha_{k-1}}.
    \end{equation}
    Comparing \cref{eq:HU_history_tridiag} with \cref{eq:Xbar_tridiag}, the
    $(N{+}1)\times(N{+}1)$ tridiagonal matrix of $H_U\big|_{\mathcal{S}_{\psi_0}}$
    in the basis $\{\ket{\alpha_k}\}$ is identical to that of
    $\bar{X}\big|_{\{\ket{w_k}\}}$ under the identification
    $\ket{w_k}\leftrightarrow\ket{\alpha_k}$.
    Therefore the clock amplitudes under $H_U$ are the same as those in \cref{eq:hamming_projection}: defining $c_k(t) \coloneqq (-\ii)^k\binom{N}{k}^{1/2}\sin^k(t)\,\cos^{N-k}(t)$, we obtain
    $\abs{c_k(t)}^2 = \binom{N}{k}\sin^{2k}(t)\,\cos^{2(N-k)}(t)$,
    establishing \cref{eq:clock_amplitudes}.

    \emph{Gate application.}
    By the structure of $H_U$ (\cref{eq:circuit_hamiltonian}), each clock transition $\ket{k{-}1}_\cc\to\ket{k}_\cc$ is accompanied by $U_k$ acting on the output register.
    Hence the full evolution is $e^{-\ii H_Ut}\ket{0}_\cc\ket{\psi_0}_\oo = \sum_{k=0}^{N}c_k(t)\,\ket{k}_\cc\otimes U_k\cdots U_1\ket{\psi_0}_\oo$, establishing \cref{eq:circuit_ham_evolution}.
    At $t=\pi/2$, $\sin(\pi/2)=1$ and $\cos(\pi/2)=0$, so $\abs{c_k}^2=\delta_{k,N}$ and the full circuit $U=U_N\cdots U_1$ is applied with certainty up to global phases.
\end{proof}

\subsubsection{The construction of palindromic circuits}\label{apd:sec:simple_construction}
We now prove the $\BQP$-\hardness{} of \nameref{apd:prm:promise_local_dqpt} with the following formal definition.
It asks whether the local rate function exceeds a threshold $\xi_1$ at time $t$ (exponentially suppressed echo) while the dynamical order parameter $\dkrate$ drops by at least $J_1$ across the window $[t-\deltat,\, t+\deltat]$: an integrated susceptibility spike.

\begin{problem}[\local-\dqpt]\label{apd:prm:promise_local_dqpt}
    Given a local Hamiltonian $H$ on $n$ qubits, an initial state $\ket{\psi_0}$, an evolution time $t$,
    a window $\deltat \in \Omega(1/\poly(n))$ with $t \pm \deltat \in [0,\, \poly(n)]$, a subsystem size $k$,
    magnitude thresholds $\xi_1 \ge \xi_0 \ge 0$, and susceptibility thresholds $J_1 > J_0 \ge 0$,
    define the \emph{integrated susceptibility}
    \begin{equation}\label{apd:eq:jump_def}
        J_k(t) \;:=\; \dkrate(t - \deltat) - \dkrate(t + \deltat)
        \;=\; -\int_{t-\deltat}^{t+\deltat} \ddkrate(s)\, ds .
    \end{equation}
    The \local-\dqpt{} problem is the promise problem to determine
    \begin{itemize}
        \item \yes: $\krate(t) \ge \xi_1$ and $J_k(t) \ge J_1$
        (i.e., $t$ is a critical time: deep echo suppression and a susceptibility spike),
        \item \no: $\krate(t) \le \xi_0$ or $J_k(t) \le J_0$,
    \end{itemize}
    with the promised gaps $J_1/J_0 \ge c$ for some constant $c > 1$
    and $\xi_1 - \xi_0 \ge \Omega(1)$ with $\xi_1 \in \Omega(1)$,
    and promised that 
    $\krate(t \pm \deltat) \le R_0$ 
    for a constant $R_0$ 
    (well-conditioned probes).
\end{problem}

The quantity $J_k$ is the drop of the time-derivate of the rate function $\dkrate$ (order parameter) across the window, 
equivalently the window integral of the dynamical susceptibility $-\ddkrate$:
a cusp of $\krate$ inside the window contributes the full weight of its $\delta$-function singularity, 
so at a generic kink $J_k$ equals the slope discontinuity up to $\bigO(\deltat)$ regular corrections, independently of the window.
The admissibility promise $\deltat\in \Omega(1/\poly(n))$ plays the same role as the resolution assumption $\deltat_{\min} \in \Omega(1/\poly(n))$ of \textsc{Search}-\dqpt{}. 
The constant $R_0$ is consumed by the membership algorithm rather than by the hardness proof: 
it guarantees probe echoes $\los_k(t \pm \deltat) \ge e^{-kR_0} = \Omega(1)$, 
so the commutator-based estimation of $\dkrate(t \pm \deltat)$, and hence of $J_k$, has constant sample cost; 
the construction of \cref{apd:thm:bqp_hard} meets it with $R_0 = 1$.

To show \nameref{apd:prm:promise_local_dqpt} is \nameref{def:bqp_hard}, 
we reduce from \nameref{prm:k_overlap}.
We build a palindromic circuit whose Feynman--Kitaev clock (\cref{thm:circuit_hamiltonian}) sweeps a fresh $k$-qubit answer register past the fanned-out majority bit of the \koverlap{} instance: 
at the clock centre $t^* = \pi/4$ the echo equals the suppressed overlap, 
and a tunable block of \emph{idle} gates holds the answer live over a window in which the order parameter jumps, 
so $t^*$ is a critical time of \local-\dqpt{} exactly when the instance rejects.
The construction \emph{computes} the majority, 
\emph{holds} the fanned-out answer through the idle block, and \emph{uncomputes}; 
and centres the critical time $t^* = \pi/4$ automatically.

We build the reduction in three pieces: the answer-writing circuit $W$, the idle hold that fixes the window, and the Feynman--Kitaev clock that turns the palindrome into a local Hamiltonian; we take each in turn.
Let
\begin{equation}\label{apd:eq:W_simple}
    W \;:=\; F\, V_{\maj}\, C^{\otimes m}
\end{equation}
be the \emph{compute-and-fan-out} circuit of \cref{thm:k_overlap}: 
it runs $m = 2k+1$ copies of the base circuit, 
computes the majority bit into an ancilla with the reversible circuit $V_{\maj}$,
and fans that bit out by \textsc{cnot} to the $k$ fresh answer positions,
in $\ell := |W| = \poly(n)$ gates and with \emph{no} uncompute.
By the layout-independent content of \cref{apd:lem:koverlap_structure} (the fan-out copies the single majority bit, and no other gate touches the answer positions)---here that fan-out is the \emph{final} block of $W$, not the middle block of the doubled circuit---the answer positions are fresh ancillae touched only by the fan-out, and
\begin{equation}\label{apd:eq:W_overlap}
    a:=\;\bra{0^n} W^\dagger \projk\, W \ket{0^n} \;=\; \Pr[\maj = 0] ,
\end{equation}
the \koverlap{} value ($a \le \eps$ for accepting, $a \ge 1-\eps$ for rejecting instances).

Given a target window $\deltat$, choose an idle count $w$ (fixed by \cref{apd:lem:idle_dial} below) and form the palindrome by inserting the $w$ \emph{idle gates at the fold},
between the fan-out of $W$ and the reset fan-out of $W^\dagger$:
\begin{equation}\label{apd:eq:V_simple}
    V' \;=\; \bigl(\, W,\; \underbrace{I, \ldots, I}_{w \text{ idle gates}},\; W^\dagger \,\bigr),
    \qquad N' := 2\ell + w \text{ gates},
\end{equation}
so that $V'_{N'} \cdots V'_1 = W^\dagger W = I$.
See the circuit in \cref{apd:fig:simple_schematic}.
The idle gates are computationally inert, but they \emph{hold} the fanned-out majority bit on the answer positions for the $w$ idle steps, so $\langle\projk\rangle_j = a$ across the whole answer segment $B = \{\ell, \ldots, \ell+w\}$ rather than only near the fan-out; widening $B$ this way is exactly the dial that sets the window $\deltat$ from $w$.
Without it ($w = 0$) the answer segment collapses to the single fan-out step $B = \{\ell\}$, of width $\bigO(k)$, and the binomial clock $X_t \sim \mathrm{Bin}(N', \sin^2 t)$, whose spread $\sigma = \sqrt{N'\sin^2 t\cos^2 t} = \Theta(\sqrt{N'})$ is irreducible, can place only
$\Pr[X_t \in B] \le \bigO(k)\,\max_j \bpmf(j; N', \sin^2 t) \le \bigO(k/\sigma) = \bigO(k/\sqrt{N'})$
of its weight on $B$ at \emph{any} time (\cref{apd:fct:binomial}(ii)).
The echo formula \cref{apd:eq:exact_echo_s} then gives $\los_k(t) = 1 - (1-a)\Pr[X_t \in B] \ge 1 - \bigO(1/\sqrt{N'})$, so the rate function $\krate(t) = -\tfrac1k\log\los_k(t) = \bigO(1/\sqrt{N'})$ stays negligible \emph{uniformly} in $t$:
it never develops an $\Omega(1)$ peak or a cusp.
Even at $t^*$, where the clock mean meets the lone answer step $\ell = N'/2$, that step (a single $\bigO(k)$-wide site) captures only an $\bigO(1/\sqrt{N'})$ fraction of the $\sqrt{N'}$-wide clock, so $\krate(t^*) = \bigO(1/\sqrt{N'})$ and the magnitude promise $\krate(t^*) \ge \xi_1 = \Omega(1)$ fails: the reduction carries no signal.
The idle gates cure this: once $w \gtrsim \sigma \sim \sqrt{N'}$ the clock fits \emph{inside} $B$ at $t^*$, the echo is fully suppressed, $\los_k(t^*) \approx a$, and $\krate(t^*) \approx -\tfrac1k\log a = \Omega(1)$.
Applying \cref{thm:circuit_hamiltonian} with $N'$ gates and the unary clock encoding gives
\begin{equation}\label{apd:eq:H_simple}
    H' = \sum_{j=1}^{N'} \sqrt{j(N'\!-\!j\!+\!1)}\,
    \bigl(\op{j}{j\!-\!1}_\cc \otimes V'_j + \op{j\!-\!1}{j}_\cc \otimes V'^{\dagger}_j\bigr),
\end{equation}
with history state \cref{eq:circuit_ham_evolution} and clock index $X_t \sim \mathrm{Bin}(N', \sin^2 t)$.
The \local-\dqpt{} instance is $H'$ on $n' = n + N' = \poly(n)$ qubits, the projector $\projk = \prod_{i=1}^k \op{0}_i$, the initial state $\ket{\psi_0} = \ket{0^n}\otimes\ket{0}_\cc$, the evolution time $t^* = \pi/4$, and the window $\deltat$.
The workspace is not returned to $\ket{0}$ at the clock center---during the idle block the majority register holds the (live) majority bit---but this is immaterial: the echo $\los_k(t)$ reads only the fresh answer positions through $\projk$, and the reset fan-out of $W^\dagger$ restores them to $\ket{0^k}$.
\Cref{apd:fig:simple_schematic} shows the circuit and the clock landscape it induces.

With the instance fixed, we read its Loschmidt echo off the clock.
By \cref{thm:circuit_hamiltonian},
the history state for the palindrome $V' = \qty{V'_1,\cdots,V'_{N'}}$ at continuous time $t$ is
\begin{equation}
    \ket{\psi(t)} = \sum_{j=0}^{N'} c_j(t)\,
    \ket{j}_\cc \otimes 
    V'_j \cdots V'_1 \ket{0^n}_\oo,
\end{equation}
where the clock amplitudes satisfy (\cref{eq:clock_amplitudes}, 
with $N = N'$)
$|c_j(t)|^2 = \binom{N'}{j} \sin^{2j}(t)\,\cos^{2(N'-j)}(t)$.
At $t = \pi/2$ the clock reaches $\ket{N'}_\cc$ and $V'_{N'}\cdots V'_1\ket{0^n} = \ket{0^n}$ 
(the palindrome acts as the identity, \cref{apd:eq:V_simple}); 
at $t = 0$ the clock is at $\ket{0}_\cc$ with state $\ket{0^n}$.
The local Loschmidt echo at time $t$ is
\begin{equation}\label{eq:Lk_clock}
    \los_k(t) = \sum_{j=0}^{N'} |c_j(t)|^2\;
    \bra{0^n}_\oo V'^\dagger_1 \cdots V'^\dagger_j\, \projk\,
    V'_j \cdots V'_1 \ket{0^n}_\oo
    = \sum_{j=0}^{N'} |c_j(t)|^2\; \langle \projk \rangle_j,
\end{equation}
where the $k$-local projector expectation at clock step $j$ is
\begin{equation}\label{eq:Pk_j}
    \langle \projk \rangle_j \coloneqq \bra{0^n}_\oo 
    V'^\dagger_1 \cdots V'^\dagger_j\, \projk\, V'_j \cdots V'_1 \ket{0^n}_\oo.
\end{equation}
By the palindromic structure $V'_{N'+1-i} = V'^\dagger_i$ of \cref{apd:eq:V_simple}, the second half of $V'$ inverts the first, so $V'_{N'-j}\cdots V'_1\ket{0^n} = V'_j\cdots V'_1\ket{0^n}$ and hence $\langle\projk\rangle_{N'-j} = \langle\projk\rangle_j$.
\textit{Endpoints.}
$\langle \projk \rangle_0 = \langle \projk \rangle_{\color{blue}N'} = 1$
(state is $\ket{0^n}$), so $\los_k(0) = \los_k(\pi/2) = 1$ and
$\krate(0) = \krate(\pi/2) = 0$.

\begin{figure}[t!]
    \centering
    \resizebox{0.95\linewidth}{!}{%
    \begin{quantikz}[row sep=0.46cm, column sep=0.22cm]
        \lstick{{\scriptsize$\ket{0}$}\\[-3pt]{\tiny pos\,$1$}}
            & \qw \gategroup[6, steps=3, style={dashed, rounded corners, fill=blue!4,
                inner xsep=3pt}, background, label style={label position=above, yshift=0.1cm}]{$W$}
            & \qw & \targ{} & \qw & \qw & \qw & \qw & \qw & \qw
            & \targ{} \gategroup[6, steps=3, style={dashed, rounded corners, fill=red!4,
                inner xsep=3pt}, background, label style={label position=above, yshift=0.1cm}]{$W^\dagger$}
            & \qw & \qw & \rstick{{\scriptsize$\ket{0}$}}\qw \\[-0.28cm]
        \wave &&&&&&&&&&&&& \\[-0.28cm]
        \lstick{{\scriptsize$\ket{0}$}\\[-3pt]{\tiny pos\,$k$}}
            & \qw & \qw & \targ{} & \qw & \qw & \qw & \qw & \qw & \qw & \targ{} & \qw & \qw & \rstick{{\scriptsize$\ket{0}$}}\qw \\
        \lstick{{\scriptsize$\ket{0}^{\otimes m n_0}$}\\[-3pt]{\tiny reg.}}
            & \gate{C^{\otimes m}} & \gate[2]{V_{\maj}} & \qw & \qw & \qw & \qw & \qw & \qw & \qw
            & \qw & \gate[2]{V_{\maj}^\dagger} & \gate{(C^\dagger)^{\otimes m}} & \qw \\
        \lstick{{\scriptsize$\ket{0}^{\otimes \bigO(\log m)}$}\\[-3pt]{\tiny maj.\,anc.}}
            & \qw & & \ctrl{-4} & \qw & \qw & \qw & \qw & \qw & \qw & \ctrl{-4} & & \qw & \qw \\
        \lstick{{\scriptsize$\ket{0\cdots 0}$}\\[-3pt]{\tiny clock ($N'$)}}
            & \qwbundle{N'} & \qw
            & \qw \slice[style={thick, dashed, red}]{}
            & \qw & \qw
            & \qw \slice[style={thick, dotted}]{}
            & \qw & \qw & \qw
            & \qw \slice[style={thick, dashed, green!55!black}]{}
            & \qw & \qw & \qw
    \end{quantikz}%
    }
    \\[2pt]{\small\textsf{(a)} The palindromic circuit with idle hold in the middle $V' = (W,\,\mathrm{idle}_w,\,W^\dagger)$} \\
    [10pt]
    \includegraphics[width=0.95\linewidth]{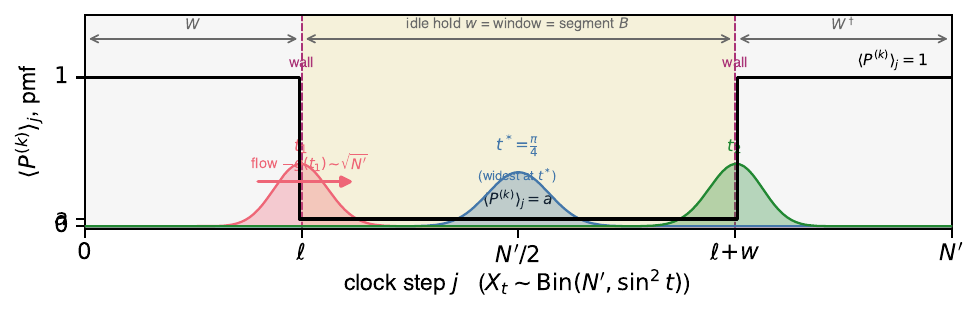}
    \\[1pt]{\small\textsf{(b)} The clock amplitude landscape and two-value of $\langle\projk\rangle_j$}
    \caption{The compute--hold--uncompute palindromic circuit construction (\cref{apd:sec:simple_construction}).
    \textbf{(a)}~The circuit $V' = (W, \mathrm{idle}_w, W^\dagger)$: $W$ (blue) runs the $m$ copies $C^{\otimes m}$, the majority $V_{\maj}$, and fans the majority bit onto the $k$ fresh answer positions; 
    $w$ idle gates \emph{hold} it; $W^\dagger$ (red) resets the positions by a second fan-out and uncomputes.
    The two fan-outs sit at clock steps $\ell$ and $\ell+w$ (the probes $t_1, t_2$, dashed), the idle center at $N'/2$ (the critical time $t^*$, dotted).
    \textbf{(b)}~\emph{The clock landscape} on the clock index $j\in[N']$. 
    The fresh answer positions give the exact two-value profile $\langle\projk\rangle_j = 1$ off $B$, $= a$ on the answer segment $B = \{\ell,\ldots,\ell+w\}$ (black step; $B$ shaded yellow, bounded by the two \textcolor{purple}{purple} dashed \emph{walls} at $\ell$ and $\ell+w$) (\cref{apd:lem:idle_dial}).
    The binomial clock $|c_j(t)|^2 \sim \mathrm{Bin}(N', \sin^2 t)$ is shown at $t^*$ (\textcolor{blue}{blue} curve: mean $N'/2$, deep inside $B$, widest since $p=\tfrac12$: \emph{Center}, $g(t^*)\le\eps$) and at the two probes $t^*\mp\deltat$ (\textcolor{red}{red} at $t_1$, \textcolor{green!55!black}{green} at $t_2$: means $\ell$, $\ell+w$, on the walls);
    there the clock straddles a wall, so the current $-\dot g(t_1)\sim\sqrt{N'}$ (\textcolor{red}{red} arrow) pours into $B$ (\emph{Probe flow}).
    }
    \label{apd:fig:simple_schematic}
\end{figure}

\subsubsection{Window dial and echo formula}\label{apd:sec:simple_echo}

We now turn to the window \emph{dial}:
how the idle count $w$ sets the window $\deltat$ and where the answer segment $B$ sits.
Geometrically, the dial pins the probe mean $N'\sin^2(t^*-\deltat)$ onto the wall $\ell$ of $B$, so the probe clock \emph{straddles} the wall (half its mass inside $B$, half outside).
The boundary PMF then sits at the binomial mode, where the population crossing into $B$ is largest, giving the order-maximal $\Theta(\sqrt{N'})$ current that later drives the susceptibility jump.
\begin{lemma}[Window dial via idle hold]\label{apd:lem:idle_dial}
    For every idle count $0 \le w \le N'$, the answer segment of $V'$ is $B = \{\ell, \ldots, \ell+w\}$ (up to the $\bigO(k)$ fan-out width on each side),
    centered at the clock midpoint $N'/2$, and the probe times $t^* \mp \deltat$ land within $\bigO(k)$ of the two walls of $B$ when
    \begin{equation}\label{apd:eq:idle_dial}
        \sin(2\deltat) = \frac{w}{N'},
        \qquad\text{equivalently}\qquad
        N' \sin^2\!\bigl(t^* - \deltat\bigr) = \ell .
    \end{equation}
    The canonical window $\deltat = \pi/12$ is the half-idle case $w = N'/2$ (so $w = 2\ell$); $w \to 0$ gives $\deltat \to 0$ and $w \to N'$ gives $\deltat \to \pi/4$.
\end{lemma}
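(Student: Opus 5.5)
The plan is to prove the lemma in two independent parts: first locate the answer segment $B$ from the gate layout of $V'$, then solve for $\deltat$ from the binomial clock mean.

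\emph{Locating $B$.} By \cref{apd:lem:koverlap_structure}, the only gates of $W$ touching answer positions $1,\dots,k$ are the $k$ fan-out CNOTs. These form the final contiguous block of $W$, occupying clock steps $\ell-k+1,\dots,\ell$. Mirrored, the reset fan-out of $W^\dagger$ occupies steps $\ell+w+1,\dots,\ell+w+k$. For $j\le\ell-k$ the answer qubits are still $\ket{0^k}$, so $\langle\projk\rangle_j=1$. The palindromic identity $\langle\projk\rangle_{N'-j}=\langle\projk\rangle_j$ gives the same for $j\ge N'-\ell+k=\ell+w+k$.

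For $\ell\le j\le\ell+w$ the state is $W\ket{0^n}$ followed by identities. Hence $\langle\projk\rangle_j=a$ by \cref{apd:eq:W_overlap}. Inside the two fan-out blocks, the partial copy $\ket{\maj}^{\otimes i}\ket{0}^{\otimes(k-i)}$ also gives exactly $a$ once $i\ge1$. So $B=\{\ell,\dots,\ell+w\}$ up to $\bigO(k)$ boundary steps. Centering follows from $N'=2\ell+w$, which makes the midpoint of $B$ equal to $\ell+w/2=N'/2$.

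\emph{Solving for $\deltat$.} By \cref{thm:circuit_hamiltonian}, the clock index is $X_t\sim\mathrm{Bin}(N',\sin^2t)$ with mean $N'\sin^2 t$. I will require the mean at $t^*-\deltat$ to sit on the left wall $\ell$. With $t^*=\pi/4$, the identity $\sin^2(\pi/4-\deltat)=\tfrac12(1-\cos(\pi/2-2\deltat))=\tfrac12(1-\sin 2\deltat)$ turns the condition $N'\sin^2(t^*-\deltat)=\ell$ into
\[
  \tfrac{N'}{2}\bigl(1-\sin 2\deltat\bigr)=\ell=\tfrac{N'-w}{2},
\]
that is, $\sin(2\deltat)=w/N'$. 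This proves the equivalence in \cref{apd:eq:idle_dial}.

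By the symmetric identity $\sin^2(\pi/4+\deltat)=\tfrac12(1+\sin2\deltat)$, the mean at $t^*+\deltat$ is $(N'+w)/2=\ell+w$, the right wall. The map $w\mapsto\deltat=\tfrac12\arcsin(w/N')$ is monotone from $[0,N']$ onto $[0,\pi/4]$. This gives the limiting cases $w\to0\Rightarrow\deltat\to0$ and $w\to N'\Rightarrow\deltat\to\pi/4$. The half-idle case $w=N'/2$ gives $\sin2\deltat=1/2$, so $\deltat=\pi/12$, and $N'=2\ell+w$ then forces $w=2\ell$.

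\emph{Main obstacle: integrality and the $\bigO(k)$ slack.} The computation above is exact trigonometry; the delicate part is reconciling it with integer constraints. $w$ and $N'$ are integers, and $N'$ is additionally tied to $\ell$ via $N'=2\ell+w$. A prescribed $\deltat$ therefore cannot be hit exactly. I would choose $w=\lfloor N'\sin 2\deltat\rceil$ self-consistently, using $w=2\ell\sin2\deltat/(1-\sin2\deltat)$ and rounding. Rounding shifts the probe mean by $\bigO(1)$, which is absorbed into the stated $\bigO(k)$ tolerance together with the fan-out width. If exact equality in \cref{apd:eq:idle_dial} is wanted, I would instead regard $\deltat$ as defined by the dial from the chosen $w$. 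That definition respects the promise $\deltat\in\Omega(1/\poly(n))$ as long as $w\ge1$, since then $\deltat\ge\tfrac12\arcsin(1/N')=\Omega(1/N')$.
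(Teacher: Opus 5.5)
Your proposal is correct and follows the paper's proof essentially step for step. You locate $B$ from the positions of the two fan-out blocks (up to the $\bigO(k)$ smear), get the centring from $N' = 2\ell + w$, and convert the left-wall condition $N'\sin^2(t^*-\deltat)=\ell$ into $\sin(2\deltat)=w/N'$ via $\sin^2(\pi/4-\deltat)=\tfrac12(1-\sin 2\deltat)$; your explicit right-wall check, your use of the partial fan-out state to pin down the exact two-valued profile, and your rounding discussion (which the paper defers to the choice $w=\lfloor 2\ell s/(1-s)\rfloor$ in \cref{apd:thm:bqp_hard}) are sound additions rather than a different route.
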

\begin{proof}
    The fresh positions are written by the fan-out of $W$ at clock step $\ell$ and reset by the fan-out of $W^\dagger$ at step $\ell + w$, and are untouched in between (idle) and outside (the compute/uncompute blocks $A, A^\dagger$ act only on registers and majority ancillae), so $B = \{\ell, \ldots, \ell+w\}$ with midpoint $\ell + w/2 = N'/2$.
    Each fan-out is $k$ \textsc{cnot}s, one per answer position, not a single gate, so each wall is sharp only up to an $\bigO(k)$ smear in clock step; with $k$ constant this is $\bigO(1)$, absorbed throughout.
    Under \cref{thm:circuit_hamiltonian} the unary clock sits at step $X_t \sim \bpf(N', \sin^2 t)$ (\cref{apd:eq:H_simple}), with mean $N'\sin^2 t$, so the probe $t_1 := t^* - \deltat$ lands on the left wall $\ell$ exactly when $N'\sin^2 t_1 = \ell$.
    Since $\sin^2 t_1 = \sin^2(\pi/4 - \deltat) = \bigl(1 - \sin 2\deltat\bigr)/2$, this reads $\bigl(1 - \sin 2\deltat\bigr)/2 = \ell/N'$, i.e.\ 
    \begin{equation}
        \sin 2\deltat = 1 - 2\ell/N' = w/N'.
    \end{equation}
The dial realizes every $w \in [0, N']$, hence every $\deltat \in [0, \pi/4]$; only a sub-range is admissible for the reduction, set later by the proof's estimates (\cref{apd:rmk:floor_tradeoff}).
\end{proof}

The next lemma reduces every quantity in the \nameref{apd:prm:promise_local_dqpt} to the leakage probability outside the answer segment $B = \{\ell, \ldots, \ell+w\}$.

\begin{lemma}[Echo formula]\label{apd:lem:exact_echo_s}
    With $X_t \sim \bpf(N', \sin^2 t)$ and the answer segment $B = \{\ell, \ldots, \ell+w\}$, 
    the step expectations of $V'$ are exactly two-valued, 
    $\langle\projk\rangle_j = 1$ for $j \notin B$ and $\langle\projk\rangle_j = a$ for $j \in B$, 
    so the Loschmidt echo is a mixture of $a$ and the leakage $g(t)$,
    \begin{equation}\label{apd:eq:exact_echo_s}
        \los_k(t) = a + (1-a)\,g(t),
    \end{equation}
    where the leakage $g(t)$ is the probability of the clock being outside $B$,
    \begin{equation}\label{apd:eq:leakage}
        \qquad g(t) := \Pr[X_t \notin B] = \Pr[X_t < \ell] + \Pr[X_t > \ell + w] .
    \end{equation}
    The palindromic circuit $V'$ gives the mirror symmetry $\los_k(\pi/2 - t) = \los_k(t)$, hence $\dkrate(\tfrac\pi4 + s) = -\dkrate(\tfrac\pi4 - s)$ and, with $t_1 := t^* - \deltat$, the integrated susceptibility collapses to a single probe slope,
    \begin{equation}\label{apd:eq:jump_probe_s}
        J_k(t^*) = 2\,\dkrate(t_1),
        \qquad
        \dkrate(t) = \frac{(1-a)\,\bigl(-\dot g(t)\bigr)}{k\, \los_k(t)} .
    \end{equation}
\end{lemma}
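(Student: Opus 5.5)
I will prove the three claims of the Echo formula in order: the two-valued step profile, the mixture formula, and the mirror-symmetry reduction of $J_k(t^*)$.

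\emph{Step profile.} I start from the clock decomposition \cref{eq:Lk_clock}, $\los_k(t)=\sum_j |c_j(t)|^2\langle\projk\rangle_j$, which follows from \cref{thm:circuit_hamiltonian} because $\projk$ acts only on the output register and distinct unary clock states are orthogonal. The work is to evaluate $\langle\projk\rangle_j$ on each segment of $V'=(W,\mathrm{idle}_w,W^\dagger)$, using \cref{apd:lem:koverlap_structure}: the answer positions $1,\dots,k$ are fresh and are touched only by the two fan-out blocks.
\begin{itemize}
    \item For $j<\ell$ (before the fan-out of $W$), the answer qubits are still $\ket{0^k}$ and factor out. Hence $\langle\projk\rangle_j=1$.
    \item For $\ell\le j\le \ell+w$, the state is $W\ket{0^n}$ followed by identities. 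By \cref{apd:eq:W_overlap} this gives $\langle\projk\rangle_j=a$.
    \item For $j>\ell+w$, I use the palindrome identity $V'_{N'-j}\cdots V'_1\ket{0^n}=V'_j\cdots V'_1\ket{0^n}$ (already noted before the lemma). It gives $\langle\projk\rangle_{N'-j}=\langle\projk\rangle_j$, so these steps mirror the $j<\ell$ steps and have value $1$.
\end{itemize}
The only subtlety is the $\bigO(k)$ partial fan-out steps at each wall. There the answer register holds $\ket{\maj}^{\otimes i}\ket{0}^{\otimes(k-i)}$ with $i\ge1$, so $\projk$ already projects onto $\maj=0$ and the value is again $a$. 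I therefore expect exact two-valuedness with $B$ defined as the set of steps from the first fan-out CNOT to the last reset CNOT. This is the "up to $\bigO(k)$" convention of \cref{apd:lem:idle_dial}, which I will state explicitly.

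\emph{Mixture formula.} Since $|c_j(t)|^2$ is the $\bpf(N',\sin^2t)$ probability mass function by \cref{eq:clock_amplitudes}, the decomposition becomes
\[
\los_k(t)=\Pr[X_t\notin B]+a\Pr[X_t\in B]=a+(1-a)g(t).
\]

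\emph{Mirror symmetry and the jump.} Replacing $t\mapsto\pi/2-t$ swaps $\sin$ and $\cos$, so $|c_j(\pi/2-t)|^2=|c_{N'-j}(t)|^2$. Combined with $\langle\projk\rangle_{N'-j}=\langle\projk\rangle_j$, this gives $\los_k(\pi/2-t)=\los_k(t)$, hence $\krate$ is symmetric about $\pi/4$. Differentiating, $\dkrate(\pi/4+s)=-\dkrate(\pi/4-s)$. Taking $s=\deltat$ in the definition \cref{apd:eq:jump_def} then gives
\[
J_k(t^*)=\dkrate(t_1)-\dkrate(t^*+\deltat)=2\dkrate(t_1).
\]
Finally, the chain rule on $\krate=-\tfrac1k\log\los_k$ together with $\dot\los_k=(1-a)\dot g$ yields the stated expression for $\dkrate$.

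\emph{Main obstacle.} The step I expect to need the most care is the two-valuedness near the walls. I must check that no gate of $V_{\maj}$, $C^{\otimes m}$, or their inverses touches the answer positions, which holds by the layout in \cref{apd:lem:koverlap_structure}. I must also handle the reset fan-out in $W^\dagger$, which runs in reverse order, so that each intermediate step still gives the value $a$ rather than a mixed value. Everything else is direct substitution.
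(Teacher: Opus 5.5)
Your proposal is correct and follows the paper's proof. In both, the echo is the clock-weighted average $\sum_j |c_j(t)|^2\langle\projk\rangle_j$ of two-valued step expectations. The palindrome gives $\langle\projk\rangle_{N'-j}=\langle\projk\rangle_j$, and together with $|c_j(\pi/2-t)|^2=|c_{N'-j}(t)|^2$ (the paper's binomial reflection) this yields $\los_k(\pi/2-t)=\los_k(t)$; differentiating $\krate=-\tfrac1k\log\los_k$ then gives the slope formula and $J_k(t^*)=2\,\dkrate(t_1)$.

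Your treatment of the walls sharpens the paper's ``up to $\bigO(k)$'' convention. The value-$a$ set is exactly $\{\ell-k+1,\ldots,\ell+w+k-1\}$. It ends one step \emph{before} the last reset CNOT, since after that CNOT all answer qubits are back to $\ket{0}$. This set is symmetric about $N'/2$ and coincides with $\{\ell,\ldots,\ell+w\}$ only when $k=1$.
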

\begin{proof}
    Two-valuedness and \cref{apd:eq:exact_echo_s} are immediate from the placement in \cref{apd:lem:idle_dial}: by the history state \cref{eq:circuit_ham_evolution} the echo is the clock-weighted average $\los_k(t) = \sum_j |c_j(t)|^2 \langle\projk\rangle_j$ of the step expectations, and off $B$ the answer positions are exactly $\ket{0^k}$ (untouched) while on $B$ they carry the fanned-out majority bit, which equals $0$ with probability $a$.
    The word $V' = (W, I^{w}, W^\dagger)$ satisfies $V'_{N'+1-j} = V'^{\dagger}_j$, 
    so $\langle\projk\rangle_{N'-j} = \langle\projk\rangle_j$ and $B$ is symmetric under $j \mapsto N'-j$; 
    the binomial reflection $X_t \leftrightarrow N' - X_{\pi/2-t}$ then gives $g(\pi/2-t) = g(t)$ and the stated symmetry, and differentiating $\krate = -\frac1k\log\los_k$ yields the slope formula and \cref{apd:eq:jump_probe_s}.
\end{proof}

The remaining analysis uses four standard estimates on the binomial distribution, collected here for reference.

\begin{fact}[Binomial toolbox]\label{apd:fct:binomial}
Write $\bpmf(j; N, p) := \binom{N}{j}p^j(1-p)^{N-j}$ for the probability mass function (PMF) of the binomial distribution $\bpf(N, p)$,
write $\sigma^2 := Np(1-p)$, 
and let $X \sim \bpf(N, p)$.
\begin{enumerate}
    \item[(i)] \emph{Tail derivative:} for $p = \sin^2 t$,
    \begin{equation}\label{eq:cdf_deriv}
        \frac{d}{dt}\Pr[X \ge j_0]
        = \sin(2t)\cdot N\, \bpmf(j_0 - 1;\, N-1,\, p) .
    \end{equation}
    \item[(ii)] \emph{Pointwise size:} if $\sigma^2 \ge 1$ then $\max_j \bpmf(j; N, p) \le C_0/\sigma$, and $\bpmf(j; N, p) \ge c_0/\sigma$ for every $j$ with $|j - Np| \le C$, with $c_0$ depending only on $C$.
    \item[(iii)] \emph{Tail bound (Hoeffding):} $\Pr[|X - Np| \ge s] \le 2e^{-2s^2/N}$.
    \item[(iv)] \emph{Median:} $|\Pr[X < m] - \tfrac12| \le C_1/\sigma$ whenever $|m - Np| = \bigO(1)$ and $\sigma^2 \ge 1$.
\end{enumerate}
\end{fact}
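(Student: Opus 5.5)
The statement to prove is \cref{apd:fct:binomial}: four standard facts about $X\sim\bpf(N,p)$. I will prove each item separately with elementary tools. The only item that needs real work is (iv).

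\emph{(i) Tail derivative.} I will first prove the classical identity
\begin{equation}
    \frac{d}{dp}\Pr[X\ge j_0] = N\,\bpmf(j_0-1;N-1,p).
\end{equation}
The cleanest route is coupling. Write $X=\sum_{i=1}^N \mathbf{1}[U_i\le p]$ with $U_i$ i.i.d.\ uniform on $[0,1]$. Raising $p$ to $p+dp$ flips exactly one indicator with probability $N\,dp+o(dp)$. That flip moves the tail event from false to true exactly when the other $N-1$ coordinates have sum $j_0-1$.

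Alternatively, I can differentiate the sum termwise: $\frac{d}{dp}\bpmf(j;N,p)=N[\bpmf(j-1;N-1,p)-\bpmf(j;N-1,p)]$, and the sum over $j\ge j_0$ telescopes. The chain rule with $dp/dt=2\sin t\cos t=\sin 2t$ then gives \cref{eq:cdf_deriv}.

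\emph{(ii) Pointwise size.} For the upper bound I will use a local-limit/Stirling estimate. The binomial PMF is log-concave with mode within $1$ of $Np$. I will bound the modal mass $\binom{N}{j}p^j(1-p)^{N-j}$ for $j=\lfloor (N+1)p\rfloor$ with Stirling's formula with explicit error bounds, namely $\sqrt{2\pi n}(n/e)^n\le n!\le e\sqrt n(n/e)^n$. This gives $\lesssim 1/\sqrt{2\pi\sigma^2}$ up to a constant, provided $j$ and $N-j$ are at least $1$. The edge cases are handled by $\sigma^2\ge1$, which forces $Np\ge1$ and $N(1-p)\ge1$.

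A cleaner alternative for the upper bound avoids Stirling:
\begin{equation}
    1\ \ge\ \sum_{|j-j_{\mathrm{mode}}|\le\sigma}\bpmf(j)\ \ge\ \sigma\cdot\min_{|j-j_{\mathrm{mode}}|\le \sigma}\bpmf(j),
\end{equation}
and the ratio $\bpmf(j+1)/\bpmf(j)=\frac{(N-j)p}{(j+1)(1-p)}$ stays within constant factors over that window. The same ratio control gives the lower bound: for $|j-Np|\le C$, multiplying at most $C+1$ ratios, each $1-\bigO(1/\sigma)$ away from $1$, keeps $\bpmf(j)$ within a constant (depending on $C$) of the modal value. The modal value is $\ge c/\sigma$ by Chebyshev: mass $\ge 3/4$ lies in $2\sigma$ consecutive integers, so the max is $\ge 3/(8\sigma+4)$.

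\emph{(iii) Hoeffding.} This is Hoeffding's inequality applied directly to $N$ independent $[0,1]$ summands.

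\emph{(iv) Median.} This is the main obstacle: showing $\Pr[X<Np]$ is within $\bigO(1/\sigma)$ of $1/2$. My first attempt will be the Berry--Esseen theorem. The summands have third absolute central moment $\le p(1-p)$, so the Kolmogorov distance between the standardized $X$ and $\mathcal N(0,1)$ is at most $C_{\mathrm{BE}}\,Np(1-p)/\sigma^3=C_{\mathrm{BE}}/\sigma$. Evaluating at $(m-Np)/\sigma=\bigO(1/\sigma)$ and using the Lipschitz bound on $\Phi$ then yields $|\Pr[X<m]-\tfrac12|\le C_1/\sigma$.

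There is one subtlety: strict versus non-strict inequality at a lattice point. The difference is a single PMF value, which is $\bigO(1/\sigma)$ by (ii), so it is absorbed into $C_1$.
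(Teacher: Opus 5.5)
Your proposal is correct. For (i) and (iii) it matches the paper: termwise differentiation, telescoping and the chain rule for (i), and Hoeffding for (iii). The coupling argument is a valid second proof of (i).

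For (ii) and (iv) you take a genuinely different route.

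\emph{The paper's route.} It gets (ii) from the local-limit form of Stirling, $\bpmf(j;N,p)=\frac{1}{\sqrt{2\pi}\,\sigma}e^{-(j-Np)^2/(2\sigma^2)}(1+o(1))$. It then derives (iv) from (ii) plus the classical fact that a binomial median $\nu$ satisfies $|\nu-Np|\le 1$. The median inequalities $\Pr[X<\nu]\le\tfrac12\le\Pr[X\le\nu]$ give $|\Pr[X<\nu]-\tfrac12|\le\bpmf(\nu;N,p)\le C_0/\sigma$. Moving the cut from $\nu$ to $m$ adds $\bigO(1)$ sites, each of mass at most $C_0/\sigma$.

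\emph{Your route.} You prove (ii) non-asymptotically. Explicit Stirling bounds give $\bpmf(j;N,p)\le\frac{e}{2\pi}\sqrt{N/(j(N-j))}$ for $1\le j\le N-1$. At the mode, $\sigma^2\ge1$ forces $j\ge Np/2$ and $N-j\ge N(1-p)/2$, so $\max_j\bpmf\le e/(\pi\sigma)$. You then prove (iv) directly by Berry--Esseen, whose Lyapunov ratio is at most $Np(1-p)/\sigma^3=1/\sigma$.

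\emph{Comparison.} The paper's (iv) is shorter and reuses (ii). However, it rests on the median--mean bound, which it states without reference. Its (ii) is also stated as an asymptotic $(1+o(1))$ formula, which is not literally meaningful at fixed $\sigma\ge1$ and is not uniform in the tails. Your version is quantitative for every $\sigma\ge1$ and handles all cuts with $|m-Np|=\bigO(1)$ at once. The cost is invoking Berry--Esseen.

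The strict/non-strict subtlety you flag is in fact automatic. The Kolmogorov bound also controls left limits, since $\Phi$ is continuous.

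One harmless arithmetic slip: Chebyshev puts mass $\ge 3/4$ on $|X-Np|<2\sigma$, which contains at most $4\sigma+1$ integers. So the modal lower bound is $3/(16\sigma+4)$, not $3/(8\sigma+4)$. The $\Omega(1/\sigma)$ conclusion is unchanged.
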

\begin{proof}

	\emph{(i) Tail derivative.}
	Differentiating one PMF term in $p$ and regrouping with the binomial identities $j\binom{N}{j} = N\binom{N-1}{j-1}$ and $(N-j)\binom{N}{j} = N\binom{N-1}{j}$ turns the derivative into a first difference of one-step-shorter PMFs,
	\begin{equation}
		\frac{d}{dp}\bpmf(j; N, p) = N\bigl[\bpmf(j-1; N-1, p) - \bpmf(j; N-1, p)\bigr] .
	\end{equation}
	Summing this over the upper tail $j \ge j_0$ telescopes, 
    leaving only the boundary term, 
    and the chain rule with $dp/dt = \sin(2t)$ for $p = \sin^2 t$ converts the $p$-derivative into the time derivative:
	\begin{equation}
		\begin{aligned}
			\frac{d}{dt}\Pr[X \ge j_0]
			&= \sin(2t)\sum_{j \ge j_0} N\bigl[\bpmf(j-1; N-1, p) - \bpmf(j; N-1, p)\bigr] \\
			&= \sin(2t)\, N\, \bpmf(j_0 - 1; N-1, p) ,
		\end{aligned}
	\end{equation}
    which is \cref{eq:cdf_deriv}: 
    the tail CDF moves in time at a rate set by the boundary PMF $\bpmf(j_0-1; N-1, p)$, 
    the population density crossing the cut $j_0$.

	\emph{(ii) Pointwise size.}
	By the Stirling's formula applied to $\binom{N}{j}$, 
    whenever $\sigma^2 = Np(1-p) \ge 1$ the PMF is controlled, uniformly in $j$, by a Gaussian envelope of width $\sigma$,
	\begin{equation}
		\bpmf(j; N, p) = \frac{1}{\sqrt{2\pi}\,\sigma}\,\exp\!\biggl(-\frac{(j - Np)^2}{2\sigma^2}\biggr)\,\bigl(1 + o(1)\bigr) .
	\end{equation}
    The exponential never exceeds $1$, so the envelope peaks at $1/(\sqrt{2\pi}\,\sigma)$ and $\max_j \bpmf(j; N, p) \le C_0/\sigma$ for an absolute constant $C_0$.
	On the bulk $|j - Np| \le C$ the exponent stays above $-C^2/(2\sigma^2) \ge -C^2/2$ (using $\sigma^2 \ge 1$), so the envelope is bounded below by $\bpmf(j; N, p) \ge c_0/\sigma$ with $c_0 = \Theta(e^{-C^2/2})$ depending only on $C$.
	Thus every bulk site carries an $\Theta(1/\sigma)$ share of the mass, and none carries more.

	\emph{(iii) Tail bound (Hoeffding).}
	Write $X = \sum_{i=1}^{N} X_i$ as a sum of independent $X_i \sim \mathrm{Bernoulli}(p)$,
    each supported on $[0,1]$ with $\mathbb{E}X = Np$, 
    the binomial's defining representation.
	Hoeffding's inequality for bounded independent summands 
    (unit ranges, so $\sum_i (b_i-a_i)^2 = N$) gives the one-sided tail $\Pr[X - Np \ge s] \le e^{-2s^2/N}$;
    the same bound applies to the lower tail, and a union bound over the two doubles the constant:
	\begin{equation}
		\Pr\bigl[\,|X - Np| \ge s\,\bigr]
		\le \Pr[X - Np \ge s] + \Pr[Np - X \ge s]
		\le 2e^{-2s^2/N} .
	\end{equation}
	This pins the clock to within $\bigO(\sqrt{N\log(1/\delta)})$ of its mean except with probability $\delta$, 
    the concentration used to push the far wall's tail below the probe split.

	\emph{(iv) Median.}
	The median $\nu$ of $\bpf(N, p)$ satisfies $|\nu - Np| \le 1$, so at the median $\Pr[X < \nu] \le \tfrac12 \le \Pr[X \le \nu] = \Pr[X < \nu] + \bpmf(\nu; N, p)$, whence $|\Pr[X < \nu] - \tfrac12| \le \bpmf(\nu; N, p) \le C_0/\sigma$ by (ii).
	Any cut $m$ with $|m - Np| = \bigO(1)$ lies $|m - \nu| = \bigO(1)$ sites from the median, so sliding the threshold from $\nu$ to $m$ adds the mass of $\bigO(1)$ integer sites, each at most $C_0/\sigma$:
	\begin{equation}
		\Bigl|\Pr[X < m] - \tfrac12\Bigr|
		\le \underbrace{\bigl|\Pr[X < m] - \Pr[X < \nu]\bigr|}_{\le\, \bigO(1)\cdot C_0/\sigma}
		+ \underbrace{\bigl|\Pr[X < \nu] - \tfrac12\bigr|}_{\le\, C_0/\sigma}
		\le \frac{C_1}{\sigma} .
	\end{equation}
    So a binomial is balanced to within $\bigO(1/\sigma)$ about any near-mean threshold, the estimate behind the half-and-half probe split.
\end{proof}

With the echo formula and the toolbox in place, we can now prove the theorem.

\subsubsection{BQP-hardness proof}\label{apd:sec:simple_proof}

\begin{figure}[t!]
    \centering
    \includegraphics[width=0.72\linewidth]{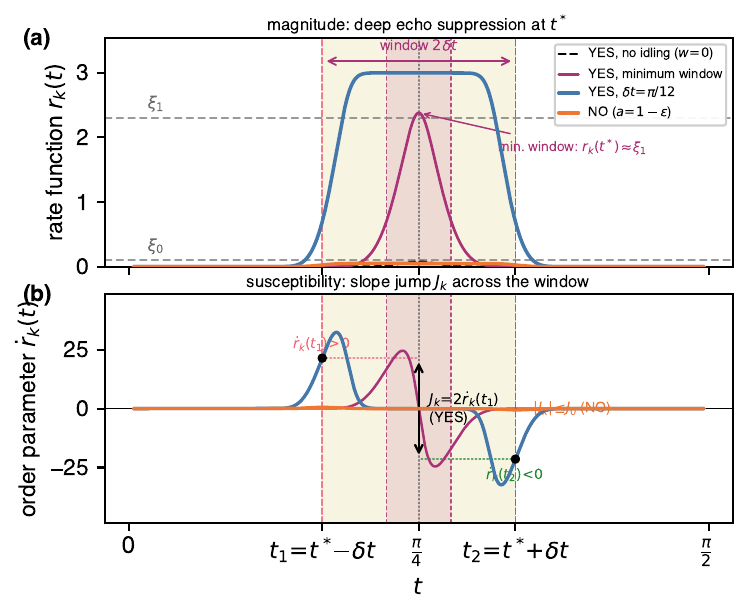}
    \caption{The two promise conditions on the rate function, for the \yes{} case (rejecting \koverlap, $a = \eps$) at the (canonical) window $\deltat = \pi/12$ (blue) and at the minimum window 
    (purple), 
    against the \no{} case (accepting, $a = 1-\eps$, orange).
    As an illustrative example, let $\ell = 50$, $k = 1$, $\eps = 0.05$ ($w = 100$, $N' = 200$ canonical; $w = 21$, $N' = 121$ minimum window).
    Both panels share the time axis. 
    Throughout, the \emph{probes} $t_{1,2} = t^*\mp\deltat$ are the evaluation times and the \emph{walls} $\ell, \ell+w$ are the clock steps bounding $B$ that the probes straddle; each case's window $[t_1, t_2]$ is shaded between its probes --- the canonical window (yellow, red/green probes) and the narrower minimum window nested inside it (purple).
    \textbf{(a)}~\emph{Magnitude.} 
    The rate function $\krate(t)$ develops a deep suppressed-echo plateau at $t^* = \pi/4$ in the \yes{} case, $\krate(t^*) \ge \xi_1$ (\cref{apd:eq:xi_yes_s}), 
    whereas the \no{} curve (orange) stays uniformly below $\xi_0$ (\cref{apd:eq:xi_no_s}), hugging the $t$-axis at $\krate \approx 0$ (since $\los_k \approx 1$), so that it is easily missed.
    Padding deepens and widens the plateau: 
    at the canonical window it sits well above $\xi_1$, 
    while at the minimum window it is a narrow peak that just yields $\xi_1$ --- 
    this is precisely the window where $\krate(t^*) \to \xi_1$, 
    since $\xi_1 = \frac1k\log\frac1{2\eps}$ is window-independent while $\los_k(t^*) \to 2\eps$ there (\cref{apd:rmk:floor_tradeoff}).
    The black dashed curve is the \emph{same} rejecting instance with \emph{no idling} ($w = 0$): the answer segment shrinks to a single clock step, so $\krate$ collapses to $\bigO(1/\sqrt{N'})$ and hugs the axis like the accepting case, confirming that the idle hold is what lifts the peak to $\Omega(1)$ (\cref{apd:sec:simple_echo}).
    \textbf{(b)}~\emph{Susceptibility.}
    The order parameter $\dkrate(t)$ jumps across the window in the \yes{} case, $\dkrate(t_1) > 0 > \dkrate(t_2)$, 
    giving the integrated susceptibility $J_k = \dkrate(t_1) - \dkrate(t_2) = 2\dkrate(t_1) \ge J_1$ (\cref{apd:eq:J_yes_s}), 
    while the \no{} case has $|J_k| \le J_0$ (\cref{apd:eq:J_no_s}); 
    the gap is $J_1/J_0 = \Theta(1/\eps)$ (\cref{apd:eq:gap_s}).
    The minimum-window jump (purple) is narrower and shorter, its probes closing in on $t^*$.
    }
    \label{apd:fig:simple_rate}
\end{figure}

\begin{theorem}
    [\BQP-\hardness{} of \local-\dqpt{}]
    \label{apd:thm:bqp_hard}
    \nameref{apd:prm:promise_local_dqpt} is $\BQP$-\hard{} for any constant $k \ge 1$.
    Specifically, for any \koverlap{} instance $W$ on $n$ qubits with $\ell = \poly(n)$ gates and overlap error $\eps = e^{-\Omega(k)}$, and any window 
    $\deltat \in \bigl[\Omega(\sqrt{(k+\log\ell)/\ell}),\, \pi/4 - \Omega(1)\bigr]$,
    the instance 
    (Hamiltonian $H'$, initial state $\ket{0^{n'}}$, time point $t^* = \pi/4$, offset $\deltat$) 
    with $w = \bigl\lfloor 2\ell\sin(2\deltat)/(1-\sin(2\deltat))\bigr\rfloor$ idle gates 
    ($N' = 2\ell + w$)
    satisfies the \yes{}/\no{} conditions of \mbox{\nameref{apd:prm:promise_local_dqpt}}: 
    \begin{itemize}
        \item 
    rejecting \koverlap{} ($a \le \eps$) gives $\krate(t^*) \ge \xi_1$ and $J_k(t^*) \ge J_1 = \Theta(\sqrt{N'}/k)$, 
        \item 
    accepting \koverlap{} ($a \ge 1-\eps$) gives $\krate(t) \le \xi_0$ uniformly and $|J_k(t^*)| \le J_0 = \bigO(\eps\sqrt{N'}/k)$, 
    \end{itemize}
    with gap $J_1/J_0 = \Theta(1/\eps) = e^{\Omega(k)}$, magnitude gap $\xi_1 - \xi_0 = \Omega(1)$ with $\xi_1 = \Omega(1)$, 
    and probe conditioning $\krate(t^*\pm\deltat) \le 1$.
\end{theorem}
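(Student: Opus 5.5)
Everything reduces to the leakage $g(t)=\Pr[X_t\notin B]$ through the echo formula of \cref{apd:lem:exact_echo_s}: $\los_k(t)=a+(1-a)g(t)$, $J_k(t^*)=2\dkrate(t_1)$ with $t_1=t^*-\deltat$, and $\dkrate(t)=(1-a)(-\dot g(t))/(k\los_k(t))$. With the idle count $w$ chosen as in the statement, \cref{apd:lem:idle_dial} places $B=\{\ell,\dots,\ell+w\}$ symmetrically about $N'/2$, and puts the probe mean $N'\sin^2 t_1$ within $\bigO(k)=\bigO(1)$ of the left wall $\ell$ (the floor in $w$ only adds $\bigO(1)$). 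I will therefore establish three binomial estimates, each uniform in $a$, and then specialise to $a\le\eps$ and $a\ge1-\eps$.

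\emph{(A) Centre.} At $t^*=\pi/4$ we have $X_{t^*}\sim\bpf(N',1/2)$ with mean $N'/2$, and the nearer wall is at distance $w/2$. Hoeffding (\cref{apd:fct:binomial}(iii)) gives $g(t^*)\le 2e^{-w^2/(2N')}$. The window lower bound $\deltat\ge\Omega(\sqrt{(k+\log\ell)/\ell})$ gives $w\approx N'\sin 2\deltat\gtrsim\sqrt{N'(k+\log\ell)}$ with a large enough constant, so $g(t^*)\le\eps$.

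\emph{(B) Probe split.} At $t_1$, $\sigma^2=N'\sin^2 t_1\cos^2 t_1=\Theta(N')$, because $\deltat\le\pi/4-\Omega(1)$ keeps $\sin^2 t_1$ bounded away from $0$. By the median estimate (iv), $\Pr[X_{t_1}<\ell]=\tfrac12\pm\bigO(1/\sqrt{N'})$. The far tail $\Pr[X_{t_1}>\ell+w]$ is again $\le\eps$ by Hoeffding, with the same distance $w$. Hence $g(t_1)\in[\tfrac12-o(1),\tfrac12+\eps+o(1)]$.

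\emph{(C) Probe flow.} Write $g=\Pr[X<\ell]+\Pr[X\ge\ell+w+1]$ and apply the tail derivative (i) to each piece:
\[
-\dot g(t_1)=\sin(2t_1)N'\bigl[\bpmf(\ell-1;N'-1,p)-\bpmf(\ell+w;N'-1,p)\bigr].
\]
The first PMF sits at distance $\bigO(1)$ from the mean, so by (ii) it is $\Theta(1/\sqrt{N'})$. The second sits at distance about $w$ and is exponentially small by Hoeffding, so it is dominated. Since $\sin 2t_1=\cos 2\deltat=\Omega(1)$, this gives $-\dot g(t_1)=\Theta(\sqrt{N'})$.

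\emph{YES case ($a\le\eps$).} By (A), $\los_k(t^*)\le 2\eps$, so $\krate(t^*)\ge\xi_1:=\frac1k\log\frac1{2\eps}=\Omega(1)$. By (B), $\los_k(t_1)\in[\tfrac14,1]$, which gives the probe conditioning $\krate(t^*\pm\deltat)\le\frac{\log 4}{k}\le 1$ after adjusting constants; by mirror symmetry the same holds at $t^*+\deltat$. Combining with (C), $J_k=2\dkrate(t_1)\ge\Omega(\sqrt{N'}/k)=:J_1$.

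\emph{NO case ($a\ge1-\eps$).} Here $\los_k\ge1-\eps$ for all $t$, so $\krate\le\frac1k\log\frac1{1-\eps}=\bigO(\eps/k)=:\xi_0$. The magnitude gap $\xi_1-\xi_0=\Omega(1)$ then follows because $\eps=e^{-\Omega(k)}$ is below a constant. For the jump, $|J_k|\le 2(1-a)|\dot g(t_1)|/(k\los_k)\le\bigO(\eps\sqrt{N'}/k)=:J_0$, using the upper bound on the PMF from (ii). Together these give $J_1/J_0=\Theta(1/\eps)$.

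The main obstacle is making (A)–(C) simultaneously rigorous with explicit constants. Three things must hold at once: the Hoeffding window must be wide enough that both the centre leakage and the far-wall tail fall below $\eps$, which is exactly where the lower bound on $\deltat$ and the $\log\ell$ term enter; the $\bigO(k)$ smear of the walls and the floor in $w$ must shift the probe only by $\bigO(1)$ sites, so that the median and PMF lower bounds in (ii) and (iv) still apply; and the $o(1)$ terms in (B) must be controlled well enough to keep $\los_k(t_1)$ bounded away from $0$. I would handle the smear first, by sandwiching $B$ between the sets $\{\ell+k,\dots,\ell+w-k\}$ and $\{\ell,\dots,\ell+w\}$, and then run the three estimates with the constants made explicit.
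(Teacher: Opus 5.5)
Your proposal is correct and follows essentially the same route as the paper. Both reduce everything to the leakage $g$ through the echo formula, use Hoeffding at the centre and for the far wall, use the median estimate for the half-and-half probe split, and combine the tail-derivative identity with the PMF peak and mode bounds to get $J_1=\Theta(\sqrt{N'}/k)$ against $J_0=\bigO(\eps\sqrt{N'}/k)$.

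Two points are left implicit. First, the $\log\ell$ part of the window bound is needed not to push the far-wall tail below $\eps$ (that needs only $w^2/N'\gtrsim k$). It is needed in (C), to push the far-wall PMF below the $\Theta(1/\sqrt{N'})$ mode value. Second, you should record the polynomiality check $N'\le 2\ell/(1-\sin 2\deltat)=\bigO(\ell)$, with $\bigO(1)$-local terms, which the paper includes to complete the reduction.
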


\begin{proof}
By the exact echo formula \cref{apd:eq:exact_echo_s},
every quantity is a function of the \emph{leakage} $g(t) := \Pr[X_t \notin B]$---the weight the clock $X_t \sim \mathrm{Bin}(N', \sin^2 t)$ places outside the answer segment $B = \{\ell, \ldots, \ell+w\}$,
which is centered at $N'/2$ with its two walls at $\ell$ and $\ell+w$ (distance $w/2 + \bigO(k)$ from the center, $w + \bigO(k)$ apart, the $\bigO(k)$ fan-out width absorbed throughout).
The minimum-window bound $\deltat \ge \Omega(\sqrt{(k+\log\ell)/\ell})$ gives $w^2/N' \ge \Omega(k + \log\ell)$ 
(via the dial $\sin(2\deltat) = w/N'$), 
invoked at each step below.
The two terms have two origins (both derived in \cref{apd:rmk:floor_tradeoff}):
the $\Omega(k)$ enforces the center suppression $g(t^*)\le\eps=e^{-\Omega(k)}$, i.e.\ Hoeffding exponent $w^2/(2N')\ge\Omega(k)$;
the $\Omega(\log\ell)$ pushes the far-wall tail $2e^{-2w^2/N'}$ below the $\eps^2/N'$ that the $\Theta(1/\sqrt\ell)$ probe split needs.

We establish each promise condition together with the one binomial estimate it needs, handling at once the two cases of the underlying \koverlap{} instance, \emph{rejecting} ($a \le \eps$) and \emph{accepting} ($a \ge 1-\eps$).
\cref{apd:fig:simple_rate} contrasts the resulting \yes{} and \no{} rate functions and their derivatives.

\textbf{Magnitude of rate function.}
At the center $t^* = \pi/4$ the clock mean is $N'/2$, the center of $B$, and both walls are at least $w/2$ away; so Hoeffding's inequality (\cref{apd:fct:binomial}(iii)) bounds the leakage,
\begin{equation}\label{apd:eq:center_s}
    g(t^*) \le 2e^{-2(w/2)^2/N'} = 2e^{-w^2/(2N')} \le \eps ,
\end{equation}
the last step because the minimum-window bound makes the exponent $w^2/(2N') \ge \Omega(k) \ge \log(2/\eps)$, with $\eps = e^{-\Omega(k)}$.
Set the magnitude thresholds $\xi_1 := \frac1k\log\frac1{2\eps}$ (the \yes{} floor) and $\xi_0 := \frac{2\eps}k$ (the \no{} ceiling).
\begin{itemize}
    \item 
    In the rejecting case ($a \le \eps$), the echo formula gives 
    \begin{equation}
        \los_k(t^*) = a + (1-a)\,g(t^*) \le a + g(t^*) \le 2\eps
    \end{equation}
    by \cref{apd:eq:center_s}, 
    so the rate function clears the \yes{} floor:
    \begin{equation}\label{apd:eq:xi_yes_s}
        \krate(t^*) = -\tfrac1k\log\los_k(t^*) \ge \tfrac1k\log\tfrac1{2\eps} = \xi_1 = \Omega(1) .
    \end{equation}
    \item 
    In the accepting case ($a \ge 1-\eps$), the echo formula gives 
    \begin{equation}
        \los_k(t) = a + (1-a)\,g(t) \ge a \ge 1-\eps 
    \end{equation}
    for \emph{all} $t$ (as $g \ge 0$), so with $\log\frac1{1-x} \le 2x$ on $[0,\tfrac12]$ the rate function stays under the \no{} ceiling,
    \begin{equation}\label{apd:eq:xi_no_s}
        \krate(t) \le \tfrac{2\eps}k = \xi_0 \qquad \text{uniformly in } t ,
    \end{equation}
    and the magnitude gap is $\xi_1 - \xi_0 = \Omega(1)$.
\end{itemize}

\textbf{Probe conditioning.}
The membership algorithm for \mbox{\nameref{apd:prm:promise_local_dqpt}} estimates the probe slopes $\dkrate(t^*\pm\deltat)$ by the commutator method, 
whose sample cost grows as $1/\los_k(t^*\pm\deltat)^2$; 
the promise $\krate(t^*\pm\deltat) \le 1$, 
keeps these probe echoes $\Omega(1)$ and hence the cost polynomial.

At the probe $t_1 := t^* - \deltat$ the clock mean is $\ell$, within $\bigO(k)$ of the left wall, with spread $\sigma_1 := \sqrt{N' p_1(1-p_1)} = \Theta(\sqrt\ell)$ where $p_1 := \sin^2 t_1 = \ell/N'$ (\cref{apd:eq:idle_dial}).
The median estimate (\cref{apd:fct:binomial}(iv)) gives $\Pr[X_{t_1} < \ell] = \tfrac12 + \bigO(1/\sqrt\ell)$ (the probe clock straddling the left wall in \cref{apd:fig:simple_schematic}(b)),
while the far wall, $w + \bigO(k)$ away, contributes only $\Pr[X_{t_1} > \ell+w] \le 2e^{-2w^2/N'} \le \eps^2/N'$ (the minimum-window bound again).
So the leakage and echo at the probe are
\begin{equation}\label{apd:eq:probe_g_s}
    g(t_1) = \tfrac12 + \bigO(1/\sqrt\ell),
    \qquad
    \los_k(t_1) = \tfrac{1+a}2 + \bigO(1/\sqrt\ell) = \Theta(1) ,
\end{equation}
where $\krate(t_1) \le 1$. 
By the mirror symmetry of \cref{apd:lem:exact_echo_s}, 
the same holds at $t^*+\deltat$, meeting the promise.

\textbf{Jump of time derivative.}
The slope formula \cref{apd:eq:jump_probe_s} turns the population current $-\dot g$ into the order parameter, 
\begin{equation}
    \dkrate = (1-a)(-\dot g)/(k\,\los_k) .
\end{equation}
Writing $g = 1 - \Pr[X \ge \ell] + \Pr[X \ge \ell+w+1]$, we differentiate with the tail-derivative identity (\cref{apd:fct:binomial}(i)),
\begin{equation}\label{apd:eq:tail_deriv_s}
    \frac{d}{dt}\Pr[X_t \ge j] = \sin(2t)\, N'\, \bpmf(j-1;\, N'\!-\!1,\, \sin^2 t) ;
\end{equation}
applied to the two tails $j = \ell$ and $j = \ell+w+1$ it gives
\begin{equation}\label{apd:eq:gdot_s}
    \dot g(t) = \sin(2t)\, N'\bigl[\bpmf(\ell+w;\, N'\!-\!1,\, p) - \bpmf(\ell-1;\, N'\!-\!1,\, p)\bigr], \qquad p:= \sin^2 t .
\end{equation}
The spread is $\sigma := \sqrt{N'p(1-p)}$, and the prefactor cancels it, since $\sin(2t) = 2\sin t\cos t = 2\sqrt{p(1-p)}$,
\begin{equation}\label{apd:eq:cancel_s}
    \frac{\sin(2t)\,N'}{\sigma} = \frac{2\sqrt{p(1-p)}\;N'}{\sqrt{N'p(1-p)}} = 2\sqrt{N'} .
\end{equation}
Hence the current is bounded uniformly, $|\dot g(t)| \le C\sqrt{N'}$ at \emph{every} $t$, with $C := 2\max\{C_0, 1\}$,
where $C_0 \ge c_0 > 0$ are the peak and mode PMF bounds of \cref{apd:fct:binomial}(ii) 
(the probe constant $c_1 := 2c_0$ enters below).
For $\sigma \ge 1$,
\begin{align*}
    |\dot g(t)|
    &= \sin(2t)\, N'\,\bigl|\bpmf(\ell+w) - \bpmf(\ell-1)\bigr|
        \tag{\cref{apd:eq:gdot_s}} \\
    &\le \sin(2t)\, N'\,\frac{C_0}{\sigma}
        \tag{\cref{apd:fct:binomial}(ii)} \\
    &= 2C_0\sqrt{N'} \le C\sqrt{N'} ,
        \tag{\cref{apd:eq:cancel_s}}
\end{align*}
where the middle step is the only subtlety: 
each PMF lies in $[0, C_0/\sigma]$ 
(\cref{apd:fct:binomial}(ii)), 
so their difference lies in $[-C_0/\sigma,\, C_0/\sigma]$, 
hence $\abs{\bpmf(\ell+w) - \bpmf(\ell-1)}\le C_0/\sigma$.
Where $\sigma < 1$ (i.e.\ $t$ near $0$ or $\pi/2$), \cref{apd:fct:binomial}(ii) does not apply; 
instead each $\bpmf \le 1$, and the cancellation gives $\sin(2t)\,N' = 2\sigma\sqrt{N'} < 2\sqrt{N'}$, 
so $|\dot g(t)| \le 2\sqrt{N'} \le C\sqrt{N'}$ directly.

\begin{itemize}
    \item
    In the accepting case of the $k$-Overlap instance,
    $1 - a \le \eps$, the uniform bound $|\dot g| \le C\sqrt{N'}$ above, 
    and $\los_k(t) = a + (1-a)g(t) \ge a \ge \tfrac12$ (\cref{apd:eq:exact_echo_s}) give the uniform per-point bound 
    \begin{equation}
        |\dkrate(t)| = (1-a)|\dot g(t)|/(k\los_k(t)) \le \eps\, C\sqrt{N'}/(k/2) = 2C\eps\sqrt{N'}/k,
    \end{equation}
    so the mirror identity \cref{apd:eq:jump_probe_s} keeps the integrated susceptibility under the \no{} ceiling $J_0 := 4C\eps\sqrt{N'}/k$,
    \begin{align}
        |J_k(t^*)|
        &= 2\,|\dkrate(t_1)|
            \tag{\cref{apd:eq:jump_probe_s}} \\
        &\le 2\cdot\frac{2C\eps\sqrt{N'}}{k}
            = \frac{4C\eps\sqrt{N'}}{k} =: J_0 , \label{apd:eq:J_no_s}
    \end{align}
    indicating that the rate function is essentially flat across the window.

    \item
    In the rejecting case of the $k$-Overlap instance, evaluate \cref{apd:eq:gdot_s} at $t_1$, where the dial centers the probe clock on the lower wall (mean $N'p_1 = \ell$).
    The lower-wall term $\bpmf(\ell-1; N'\!-\!1, p_1)$ then sits at the binomial \emph{mode}, $\ge c_0/\sigma_1$ (a constant fraction of the peak height $\sim 1/\sigma_1$),
    while the upper wall $\ell+w$ is $w$ away, deep in the tail, so $\bpmf(\ell+w; N'\!-\!1, p_1) \le \eps^2/N'$ by the minimum-window bound (the far tail of \cref{apd:eq:probe_g_s}).
    The population current is then positive and $\Theta(\sqrt{N'})$:
    \begin{align*}
        -\dot g(t_1)
        &= \sin(2t_1)\,N'\bigl[\bpmf(\ell-1; N'\!-\!1, p_1) - \bpmf(\ell+w; N'\!-\!1, p_1)\bigr]
            \tag{\cref{apd:eq:gdot_s}} \\
        &\ge \sin(2t_1)\,N'\,\frac{c_0}{\sigma_1}\,(1-o(1))
            \tag{mode bound; far tail $\eps^2/N' = o(c_0/\sigma_1)$} \\
        &= 2c_0\sqrt{N'}\,(1-o(1)) =: c_1\sqrt{N'} .
            \tag{\cref{apd:eq:cancel_s}}
    \end{align*}
    Here $c_1 \le C$, since the mode value $c_0$ is at most the PMF bound $C_0$ of the uniform estimate.
    As $a \le \eps$ and $\los_k(t_1) \le \tfrac34$ (by \cref{apd:eq:probe_g_s} and $\eps \le \tfrac14$),
    \begin{equation}\label{apd:eq:rdot_probe_s}
        \dkrate(t_1) = (1-a)\frac{-\dot{g}(t_1)}{k\,\los_k(t_1)}
         \ge \frac{(1-\eps)\,c_1\sqrt{N'}}{(3/4)\,k} \ge \frac{c_1\sqrt{N'}}k > 0 .
    \end{equation}
    So the mirror identity \cref{apd:eq:jump_probe_s} clears the \yes{} floor $J_1 := 2c_1\sqrt{N'}/k$,
    \begin{equation}\label{apd:eq:J_yes_s}
        J_k(t^*) = 2\,\dkrate(t_1) \ge \frac{2c_1\sqrt{N'}}k =: J_1.
    \end{equation}
    Moreover $\dkrate(t^*-\deltat) > 0 > \dkrate(t^*+\deltat)$, 
    so the rate function genuinely peaks at $t^*$.
\end{itemize}

\textbf{Susceptibility gap.}
The common $\sqrt{N'}/k$ factor cancels in the ratio of \cref{apd:eq:J_yes_s,apd:eq:J_no_s},
\begin{equation}\label{apd:eq:gap_s}
    \frac{J_1}{J_0} = \frac{c_1}{2C\eps} = \Theta(1/\eps) = e^{\Omega(k)} \ge c > 1 ,
\end{equation}
so, with the magnitude gap \cref{apd:eq:xi_yes_s,apd:eq:xi_no_s}, both directions of the reduction hold. 
The \yes/\no{} labels between \mbox{\nameref{prm:k_overlap}} and \mbox{\nameref{apd:prm:promise_local_dqpt}} are swapped, valid since $\BQP = \mathsf{co}\BQP$.

    \textbf{Polynomial construction.}
    Write $s := \sin(2\deltat)$.
    Substituting the idle count $w = \bigl\lfloor 2\ell s/(1-s)\bigr\rfloor$ into $N' = 2\ell + w$ gives the closed form $N' = 2\ell/(1-s) - \theta$ with $\theta := \{2\ell s/(1-s)\} \in [0,1)$, the dial $w/N' = \sin(2\deltat)$ of \cref{apd:eq:idle_dial} inverted (exact up to the $\bigO(1/N')$ floor rounding, a $\bigO(1)$ clock-step shift absorbed in the $\bigO(k)$ fan-out smear).
    The upper window cutoff $\deltat \le \pi/4 - \Omega(1)$ keeps $1-s = \Omega(1)$, so
    \begin{equation}\label{apd:eq:Nprime_poly}
        2\ell \;\le\; N' \;=\; \frac{2\ell}{1-s} - \theta \;\le\; \frac{2\ell}{1-s} \;=\; \bigO(\ell) \;=\; \poly(n) .
    \end{equation}
    As $\deltat \to \pi/4$ the denominator $1-s \to 0$ and $N'$ would diverge superpolynomially in $1/(\pi/4-\deltat)$, so the cutoff is what keeps the construction polynomial (\cref{apd:rmk:floor_tradeoff} relaxes it to $\pi/4 - 1/\poly(n)$, still giving $N' = \poly(n)$).
    Each of the $N'$ terms of $H'$ is $\bigO(1)$-local (an $\bigO(1)$-local clock transition tensored with an at most $2$-local gate), $n' = n + N' = \poly(n)$, and the thresholds are computable in polynomial time from $(n, \ell, k, \deltat)$, so the reduction is polynomial.
\end{proof}

\begin{remark}[The admissible window range]\label{apd:rmk:floor_tradeoff}
Here the answer segment $B$ \emph{is} the window---its width $w$ and the window $\deltat$ are locked by \cref{apd:eq:idle_dial}---so a narrow window forces a narrow $B$, and the proof needs $w^2/N' = \Omega(k + \log\ell)$ (the $\Omega(k)$ from the center bound \cref{apd:eq:center_s}, the $\Omega(\log\ell)$ from suppressing the far-wall tail \cref{apd:eq:probe_g_s} below the $\Theta(1/\sqrt\ell)$ probe split), i.e.\ $w = \Omega(\sqrt{N'(k+\log\ell)})$ and hence a minimum window
\begin{equation}\label{apd:eq:floor_s}
    \deltat \;\gtrsim\; \sqrt{\tfrac{k+\log\ell}{\ell}} = 1/\poly(n) .
\end{equation}
At the upper end the exact dial pins $\sigma_1 = \sqrt{N'p_1(1-p_1)} = \sqrt{\ell(1-\ell/N')} = \Theta(\sqrt\ell)$ for every $\deltat < \pi/4$, so no clock estimate degrades as $\deltat \to \pi/4$ and the only obstruction is polynomiality $N' = \poly(n)$; the construction therefore realizes every window in $\bigl[\,\Omega(\sqrt{(k+\log\ell)/\ell}),\; \pi/4 - 1/\poly(n)\,\bigr]$, which contains the admissible range $\deltat \ge 1/\poly(n)$ of \mbox{\nameref{apd:prm:promise_local_dqpt}} as well as the canonical $\deltat = \pi/12$ (here $w = N'/2$).
\Cref{apd:thm:bqp_hard} states the slightly weaker upper bound $\pi/4 - \Omega(1)$, which already covers every window the problem admits.
\end{remark}

\begin{proposition}\label{apd:thm:in_bqp}
    \mbox{\nameref{apd:prm:promise_local_dqpt}} is contained in $\BQP$
    for any constant $k \ge 1$.
\end{proposition}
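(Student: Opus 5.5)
We give a polynomial-time quantum algorithm that decides the promise problem. It evaluates the two conditions of \nameref{apd:prm:promise_local_dqpt} separately and accepts iff both estimated quantities exceed the midpoints of their promise gaps. All evolutions $e^{-\ii\ham s}\ket{\psi_0}$ at $s\in\{t,t\pm\deltat\}$ are prepared by standard Hamiltonian simulation of the local $\ham$. Since $s\le\poly(n)$, this takes $\poly(n)$ gates, and the simulation error can be made $1/\poly(n)$ in operator norm, which will be absorbed into the additive errors below.

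\textbf{Magnitude check.} Measure $\projk$ on $\ket{\psi(t)}$ repeatedly and take the empirical mean $\tilde\los$. The threshold $\xi_m := (\xi_0+\xi_1)/2$ corresponds to echo values $e^{-k\xi_1} < e^{-k\xi_m} < e^{-k\xi_0}$. Because $k=\bigO(1)$ and $\xi_1-\xi_0=\Omega(1)$, these three values are constants separated by constant gaps, so $\bigO(\log(1/\delta))$ samples determine with probability $1-\delta$ which side of $e^{-k\xi_m}$ the echo lies on. If $\xi_1$ is large, $e^{-k\xi_1}$ is still a positive constant because $\xi_1$ is a fixed parameter of the instance. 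We assume $\xi_1=\bigO(1)$ and remark that otherwise the sample count scales as $e^{\bigO(k\xi_1)}$, which is still acceptable when this is polynomial.

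\textbf{Susceptibility check.} We write $\dkrate(s) = -\dot\los_k(s)/(k\los_k(s))$ and estimate the numerator and the denominator separately. By the Heisenberg equation, $\dot\los_k(s)=\bra{\psi(s)}Q\ket{\psi(s)}$ with $Q=\ii[\ham,\projk]$. The operator $Q$ is a sum of $\poly(n)$ local Hermitian terms, namely only the terms of $\ham$ overlapping the first $k$ qubits, and $\|Q\|\le 2\|\ham\|_{\text{loc}}$. It can therefore be estimated to additive error $\eta$ with $\bigO(\norm{Q}_1^2/\eta^2)$ samples, either by Pauli-term sampling of its decomposition or by a Hadamard test. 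The denominators obey $\los_k(t\pm\deltat)\ge e^{-kR_0}=\Omega(1)$ by the well-conditioning promise, so additive errors propagate to additive error $\bigO(\eta)$ in each $\dot r_{k,\pm}$ and hence in $J_k$.

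\textbf{Main obstacle: resolving $J_k$ against $J_0$ and $J_1$.} The promise only fixes the ratio $J_1/J_0\ge c$, not an absolute gap, so the required additive precision is $\eta=\Theta((c-1)J_0)$ or, equivalently, $\Theta(J_1(1-1/c))$. We need this to be at least $1/\poly(n)$ for polynomial cost. We would handle this by observing that $J_k$ is bounded above by $\bigO(\norm{Q}/k)\cdot e^{kR_0}=\poly(n)$. Interpreting the thresholds as part of the input, specified in binary with $J_1\ge 1/\poly(n)$ (as in every instance produced by \cref{apd:thm:bqp_hard}, where $J_1=\Theta(\sqrt{N'}/k)$), we set $\eta=(1-1/c)J_1/4$ and threshold at $(J_0+J_1)/2$. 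The total sample count is $\poly(n)$, and this is the step where the precise promise must be spelled out.

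\textbf{Conclusion.} Finally, we combine the two checks: accept iff both tests pass, and by a union bound the error is at most $2\delta$. This correctly handles the \no{} case's OR condition, since failing either condition makes one test reject with high probability. The resulting algorithm runs in polynomial time with bounded error, so the problem is in $\BQP$.
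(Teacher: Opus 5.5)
Your proposal is correct and follows the paper's own route. It uses direct projective measurement of $\projk$ at $t$ for the magnitude check. It uses the commutator estimator $\dot{\los}_k(s)=\bra{\psi(s)}\ii[\ham,\projk]\ket{\psi(s)}$ at the probes $s=t\pm\deltat$, with the well-conditioning promise giving $\los_k(t\pm\deltat)\ge e^{-kR_0}$. It accepts only when both tests pass, so each \no{} instance is rejected by whichever test its half of the disjunction triggers.

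You make two caveats explicit: the ratio promise $J_1/J_0\ge c$ gives polynomial cost only if $J_1\ge 1/\poly(n)$, and the magnitude check needs $\xi_1=\bigO(1)$ rather than merely $\Omega(1)$. The paper's proof uses both silently when it claims constant sample cost and $\los_k=e^{-k\xi_1}=\Theta(1)$, so stating them sharpens the argument rather than opening a gap.
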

\begin{proof}
    The decision criterion involves two checks.
    \emph{Magnitude check:} estimate $\los_k(t^*)$ to constant
    multiplicative error by preparing $e^{-\ii Ht^*}\ket{\psi_0}$
    and measuring $\projk$; since $k = \bigO(1)$ and $\xi_1 = \Omega(1)$,
    the echo $\los_k = e^{-k\,\krate} = e^{-\Theta(k)} = \Theta(1)$,
    so $\bigO(1)$ repetitions suffice.
    \emph{Susceptibility check:} estimate the probe slopes $\dkrate(t^* \pm \deltat)$
    via the commutator method (\cref{apd:lem:comm_complexity}) and form the
    integrated susceptibility $J_k(t^*) = \dkrate(t^*-\deltat) - \dkrate(t^*+\deltat)$ (\cref{apd:eq:jump_def}).
    Under the \textsc{or} structure of the \no{} case it suffices to detect
    either $J_k(t^*) \le J_0$ or $\krate(t^*) \le \xi_0$;
    if either holds, the algorithm outputs \no.
    The well-conditioning promise $\krate(t^* \pm \deltat) \le R$ keeps the probe echoes $\los_k(t^* \pm \deltat) = \Omega(1)$, so each $\dkrate$ estimate has constant sample cost, and the promise gaps $J_1/J_0 \ge c > 1$ and $\xi_1 - \xi_0 \ge \Omega(1)$ then resolve the decision at constant relative precision.
    Each evaluation requires Hamiltonian simulation for time
    $\bigO(t^*) = \bigO(1)$, achievable in $\poly(n)$ gates via product formulas.
    See more details in \cref{apd:sec:algorithm}.
\end{proof}

\begin{corollary}\label{thm:bqp_complete_local_dqpt}
    \nameref{apd:prm:promise_local_dqpt} is
    \nameref{def:bqp_complete} for any constant $k \ge 1$.
\end{corollary}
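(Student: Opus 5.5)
This corollary combines the two directions already established, so the plan is short. By \cref{def:bqp_complete}, \local-\dqpt{} is $\BQP$-\complete{} once it is shown to be both $\BQP$-\hard{} and contained in $\BQP$. Hardness is \cref{apd:thm:bqp_hard} and membership is \cref{apd:thm:in_bqp}, both stated for every constant $k \ge 1$. The remaining work is to check that the two results refer to the \emph{same} promise problem, with compatible parameter regimes.

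For hardness, I would chain the reductions. \nameref{prm:approx_qcircuit_prob} is $\BQP$-\complete{} by definition. \cref{thm:k_overlap} reduces it to \nameref{prm:k_overlap}, with the \yes{}/\no{} labels swapped, which is harmless since $\BQP = \mathsf{co}\BQP$. \cref{apd:thm:bqp_hard} then maps any \koverlap{} instance in polynomial time to an instance $(H', \ket{0^{n'}}, t^* = \pi/4, \deltat)$ of \nameref{apd:prm:promise_local_dqpt}, again swapping the labels.

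I would then check that the instance meets every promise in the problem statement:
\begin{itemize}
\item the susceptibility gap $J_1/J_0 = \Theta(1/\eps) = e^{\Omega(k)}$. For small constant $k$ this may not exceed a prescribed $c$, so I would choose the amplification parameter $m = 2k+1$ large enough, or take the constant in $e^{-\Omega(k)}$ large enough, that $c_1/(2C\eps) \ge c$;
\item the magnitude conditions $\xi_1 - \xi_0 = \Omega(1)$ and $\xi_1 = \Omega(1)$;
\item the window $\deltat \ge 1/\poly(n)$, chosen inside the admissible range of \cref{apd:rmk:floor_tradeoff};
\item well-conditioned probes with $R_0 = 1$.
\end{itemize}

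For membership, \cref{apd:thm:in_bqp} gives a polynomial-time quantum algorithm. Preparing $e^{-\ii H t}\ket{\psi_0}$ costs $\poly(n)$ gates via product formulas. Checking the magnitude of $\los_k$ needs $\bigO(1)$ samples because $\los_k = \Theta(1)$ for constant $k$. Estimating $J_k$ uses commutator-based slope estimates at $t \pm \deltat$, whose cost stays polynomial because the probe echoes are bounded below by $e^{-kR_0}$.

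The main obstacle I expect is consistency between the two results. Membership needs the thresholds $J_0, J_1$ to be resolvable at constant relative precision. In the hardness instance, $J_1 = \Theta(\sqrt{N'}/k)$ grows polynomially, and the commutator norm $\|\ii[H, \projk]\|$ scales with the Feynman--Kitaev coefficients $\sqrt{j(N'-j+1)} = \bigO(N')$. I would verify that the variance of the $Q$-estimator is $\bigO(\|Q\|^2) = \poly(N')$, so that relative precision on a $\Theta(\sqrt{N'})$ quantity needs only $\poly(n)$ samples. Both directions then apply to one and the same problem, and the corollary follows.
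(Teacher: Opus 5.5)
Your route is the paper's own: the corollary is just \cref{apd:thm:bqp_hard} combined with \cref{apd:thm:in_bqp}. Your extra checks are sound. Shrinking $\eps$ to a small enough constant by taking more repetitions, independently of $k$, is a legitimate way to guarantee $J_1/J_0=c_1/(2C\eps)>1$. This is needed because $c_1/(2C)\le\tfrac12$, so $m=2k+1$ alone does not ensure it at small $k$. You are also right that the Feynman--Kitaev instance needs $\poly(N')$ rather than constant commutator samples: with $\|Q\|_1=\bigO(kN')$ and $J_1-J_0=\Omega(\sqrt{N'}/k)$, \cref{apd:lem:comm_complexity} gives $\bigO(k^2N'e^{2kR_0})$.

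The gap is in your last step. Membership must hold for \emph{every} instance satisfying the promise of \local-\dqpt{}, not only for the image of the reduction. So costing the estimator on the Feynman--Kitaev instances does not show that the two results concern ``one and the same problem.''

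As stated, the promise fixes only ratios and lower bounds. For example, $J_1=2^{-n}$ with $J_0=J_1/c$ satisfies $J_1/J_0\ge c$, and $\xi_0=n$, $\xi_1=n+1$ satisfies $\xi_1-\xi_0=\Omega(1)$ and $\xi_1=\Omega(1)$. Deciding such an instance has two costs:
\begin{itemize}
\item additive precision $\Theta(J_1-J_0)$ on $J_k$, i.e.\ $\bigO\bigl(\|Q\|_1^2e^{2kR_0}/(k(J_1-J_0))^2\bigr)$ samples;
\item additive precision $\Theta(e^{-k\xi_0})$ on $\los_k(t)$, i.e.\ $e^{\Theta(k\xi_0)}$ samples.
\end{itemize}
Both are exponential in these examples, and $\|Q\|_1$ is itself unbounded unless the couplings are. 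Your claim that the magnitude check needs $\bigO(1)$ samples because $\los_k=\Theta(1)$ tacitly assumes $\xi_0=\bigO(1)$. The paper's \cref{apd:thm:in_bqp} rests on the same unstated assumptions. Its ``constant sample cost'' presumes $J_1=\Omega(\|Q\|_1)$, which fails even on the reduction's own instances, where $J_1/\|Q\|_1=\Theta(1/(k^2\sqrt{N'}))$.

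To close the argument, make these scales part of the promise. For example, require $\poly(n)$-bounded couplings, an efficiently preparable $\ket{\psi_0}$, $J_1\ge 1/\poly(n)$, and $\xi_0=\bigO(\log n)$; then both checks run in polynomial time on all instances. Then confirm, as you essentially did, that the reduction's output lies in this class: couplings at most $N'$, $J_1=2c_1\sqrt{N'}/k=\Omega(1)$, and constant thresholds $\xi_0=2\eps/k$, $\xi_1=\frac1k\log\frac1{2\eps}$.
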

\begin{proof}
    Membership in $\BQP$ follows from \cref{apd:thm:in_bqp} and
    \nameref{def:bqp}-hardness from \cref{apd:thm:bqp_hard}.
\end{proof}

Having established the complexity-theoretic landscape,
we now turn to the algorithmic side.

\section{Algorithm analysis for searching DQPTs}\label{apd:sec:algorithm}

This appendix provides complete proofs for the algorithmic results stated in the main text.
We consider the following search problem for finding the critical times of \nameref{apd:prm:promise_local_dqpt}.

\begin{problem}[\search-\dqpt]\label{apd:prm:search_dqpt}
    Given an $n$-qubit local Hamiltonian $H$, an initial state $\ket{\psi_0}$,
    a time range $T\in \bigO(\poly(n))$, and precision $\eps > 0$,
    the \search-$\dqpt$ problem is to find all critical times
    $t_c \in [0,T]$ of \nameref{apd:prm:promise_local_dqpt} to additive precision~$\eps$
    (i.e., output $\tilde{t}_c$ with $|\tilde{t}_c - t_c| \le \eps$ for each critical time),
    promised that consecutive critical times are separated by at least $\deltat_{\min} \ge 1/\poly(n)$.
\end{problem}
We begin with the near-term baseline protocol in \cref{apd:sec:analog}, 
which achieves shot-noise-limited $\bigO(M/\eps^2)$ scaling via independent projective measurements.
We then collect the Hamiltonian simulation and derivative estimation primitives 
for searching local DQPTs. 
The full analysis of the coherent quantum algorithm, including the oracle construction, and multi-time observable estimation subroutine in \cref{apd:sec:ft_proof}.

\subsection{Near-term baseline protocol and primitives}\label{apd:sec:analog}

The near-term protocol estimates the subsystem Loschmidt echo at $M$ time points independently via projective measurement.
This approach has been demonstrated experimentally by Karch et al.~\cite{karchProbingQuantumManybody2025}, 
who measured $\los_k(t)$ on a cesium quantum-gas microscope with $\sim 10^3$ samples per time point for sub-system sizes up to $k=7$.
We now demonstrate the cost of the near-term protocol.

\begin{lemma}[Sample complexity of single-time estimation]\label{lem:analog_chernoff}
    Given a time point $t$,
    estimating Loschmidt echo $\los_k(t)$ to additive precision $\eps$ with probability $\ge 1-\delta$
    requires $\bigO(\eps^{-2}\log(1/\delta))$ prepare--evolve--measure samples.
\end{lemma}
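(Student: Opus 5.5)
Each prepare--evolve--measure round yields one i.i.d.\ Bernoulli sample, and $\los_k(t)$ is its mean, so the plan is Hoeffding's inequality plus a count of rounds. Concretely, one round prepares $\ket{\psi_0}$ with a single call to $U_\psi$, applies $e^{-\ii\ham t}$, and measures the $k$ qubits of the subsystem in the computational basis. It outputs $b_i=1$ if all of them read $0$ and $b_i=0$ otherwise. By the Born rule, $\Pr[b_i=1]=\bra{\psi(t)}\projk\ket{\psi(t)}=\los_k(t)$ (\cref{eq:local_echo}). Fresh state preparation in each round makes $b_1,\dots,b_S$ independent and identically distributed as $\Ber(\los_k(t))$.

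The estimator is the empirical mean $\hat\los := \frac1S\sum_{i=1}^S b_i$, which is unbiased. The summands lie in $[0,1]$, so the same Hoeffding bound used in \cref{apd:fct:binomial}(iii) gives
\begin{equation}
    \Pr\bigl[\,|\hat\los-\los_k(t)|\ge\eps\,\bigr]\le 2e^{-2S\eps^2}.
\end{equation}
Setting $S=\ceil{\log(2/\delta)/(2\eps^2)}=\bigO(\eps^{-2}\log(1/\delta))$ makes the failure probability at most $\delta$. This proves the stated sample complexity.

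There is no real obstacle; the only points needing care are bookkeeping. First, the count includes only samples; any error from implementing $e^{-\ii\ham t}$ approximately (for example by Trotterization) is a separate additive bias and must be budgeted by splitting $\eps$. Second, the measurement of $\projk$ uses local computational-basis readout and nothing else. If the statement is read as a necessity claim (``requires''), the matching $\Omega(\eps^{-2}\log(1/\delta))$ follows from the standard Bernoulli mean-estimation lower bound~\cite{dagumOptimalAlgorithmMonte2000} applied at $\los_k\approx 1/2$. The plan would cite this rather than reprove it, since the lemma is used only as the upper-bound baseline for the near-term protocol.
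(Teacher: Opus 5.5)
Your proposal is correct and follows the paper's own argument: each prepare--evolve--measure round gives an i.i.d.\ Bernoulli sample with mean $\los_k(t)$, and the two-sided Hoeffding bound $2e^{-2S\eps^2}\le\delta$ yields $S=\bigO(\eps^{-2}\log(1/\delta))$. The paper reads ``requires'' only as this sufficiency (upper-bound) statement, so your aside on a matching lower bound is not needed; as you note, such a lower bound would in any case apply only for worst-case times where $\los_k(t)\approx 1/2$.
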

\begin{proof}
    Each measurement yields a Bernoulli sample with success probability $\los_k(t)$.
    By the Chernoff--Hoeffding bound, the empirical mean $\hat{\los}_k$ over $N$ samples satisfies $\Pr[|\hat{\los}_k - \los_k| > \eps] \le 2e^{-2N\eps^2}$, which drops below $\delta$ once $N = \bigO(\eps^{-2}\log(1/\delta))$.
\end{proof}

\begin{proposition}[Near-term multi-time observable estimation]\label{apd:thm:near_term_baseline}
    Given an $n$-qubit Hamiltonian $\ham$, an initial state $\ket{\psi_0}$,
    $M$ time points $0 < t_1 < \cdots < t_M \le \tmax$,
    and evolution oracles for $e^{-\ii\ham t}$,
    there exists a near-term protocol to estimate $\los_k(t_j)$ at all $M$ points
    to additive precision $\eps$ using
    $\bigO(M/\eps^2)$ samples
    and total Hamiltonian simulation time $\bigO(M\,\tmax/\eps^2)$,
    with no ancilla qubits.
\end{proposition}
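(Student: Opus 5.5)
The protocol is the obvious one: run \cref{lem:analog_chernoff} independently at each of the $M$ time points and combine the $M$ failure events with a union bound.

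\emph{Protocol.} For each $j\in[M]$ and each round:
\begin{itemize}
\item prepare $\ket{\psi_0}$ with one call to $U_\psi$;
\item evolve under $e^{-\ii\ham t_j}$, one call to the evolution oracle with time $t_j\le\tmax$;
\item measure the commuting computational-basis $Z$ observables on qubits $1,\dots,k$, and record the bit $b=1$ iff all outcomes are $0$.
\end{itemize}
Since $\projk=\prod_{j\le k}\op{0}_j$ is diagonal, $\Pr[b=1]=\bra{\psi(t_j)}\projk\ket{\psi(t_j)}=\los_k(t_j)$ exactly. So each round is one Bernoulli sample of mean $\los_k(t_j)$. The estimator is the empirical mean $\hat{\los}_k(t_j)$ over $N$ rounds. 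No ancilla is needed, because the measurement is projective on system qubits only.

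\emph{Error and sample count.} By \cref{lem:analog_chernoff}, $N=\bigO(\eps^{-2}\log(1/\delta'))$ rounds give $|\hat\los_k(t_j)-\los_k(t_j)|\le\eps$ with probability at least $1-\delta'$. I will set $\delta'=\delta/M$, so a union bound over the $M$ points makes all estimates simultaneously correct with probability at least $1-\delta$. The total sample count is then $\bigO(M\eps^{-2}\log(M/\delta))$.

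\emph{Matching the stated $\bigO(M/\eps^2)$.} The union bound contributes a $\log M$ factor, and there are two ways to handle it. One is to read the guarantee per time point with constant success probability, so $\delta'=\Theta(1)$, which gives $\bigO(M/\eps^2)$ exactly. The other is to state a mean-squared-error guarantee as in \cref{thm:multi_time_estimation}: the variance of a Bernoulli mean is at most $1/(4N)$, so $N=\lceil 1/(4\eps^2)\rceil$ gives MSE at most $\eps^2$ at every point without any logarithm. I will adopt the reading consistent with how the proposition is compared to \cref{thm:classical_lower_bound} and \cref{thm:multi_time_estimation}: constant confidence per point, or MSE, with a remark that simultaneous high-probability correctness costs an extra $\log M$.

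\emph{Simulation time.} Each round at $t_j$ costs time $t_j\le\tmax$. Summing over $N$ rounds and $M$ points gives total simulation time $\sum_j N t_j\le \bigO(M\tmax/\eps^2)$.

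The only delicate step is the bookkeeping of the $\log(M/\delta)$ factor just described; everything else follows directly from \cref{lem:analog_chernoff}.
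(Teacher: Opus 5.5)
Your proposal is correct and follows the same route as the paper's proof: apply the single-point Chernoff bound independently at each of the $M$ times with constant per-point confidence (the paper sets $\delta = 1/3$), sum to $\bigO(M/\eps^2)$ samples, bound the simulation time using $t_j \le \tmax$, and note that measuring $P^{(k)}$ directly needs no ancilla. Your remark that simultaneous correctness at all $M$ points with high probability costs an extra $\log M$ factor (or can be avoided with an MSE guarantee) is accurate, and it is a point the paper's proof leaves unstated.
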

\begin{proof}
    By \cref{lem:analog_chernoff}, each time point requires $\bigO(1/\eps^2)$ samples (setting $\delta = 1/3$).
    Summing over $M$ independent time points gives $\bigO(M/\eps^2)$ total samples.
    Each sample at time $t_j$ requires one application of $e^{-\ii\ham t_j}$,
    implemented by an analog simulator or a $p$th-order product formula (\cref{apd:thm:trotter}).
    Since $t_j \le \tmax$, the total Hamiltonian simulation time is at most $\bigO(M\,\tmax/\eps^2)$.
    No ancilla qubits are needed: the echo is estimated by direct projective measurement of the $k$-local projector $P$.
\end{proof}

\subsubsection{Hamiltonian dynamics simulation}\label{apd:sec:ham_sim}
All protocols for \nameref{apd:prm:search_dqpt} require implementing the time evolution $e^{-\ii \ham t}$ as a subroutine.
Recall the Hamiltonian simulation problem:
given a local Hamiltonian $H$ on $n$ qubits, initial state $\ket{\psi_0}$, evolution time $t$, and precision $\eps$,
the task is to prepare a state $\ket{\psi}$ such that $\norm{\ket{\psi} - e^{-\ii Ht}\ket{\psi_0}} \le \eps$.
The near-term algorithms for this problem are based on Trotter--Suzuki product formulas \cite{lloydUniversalQuantumSimulators1996,childsTheoryTrotterError2021,zhaoEntanglementAcceleratesQuantum2025}.
\begin{theorem}[Trotter-Suzuki $p$th-order product formula {\cite{childsTheoryTrotterError2021}}]\label{apd:thm:trotter}
    Let $H=\sum_{l=1}^L H_l$ be a Hermitian operator consisting with $L$ summands, and let $t \ge 0$.
    Let  $\pfU_p(t)$ be a $p$th-order product formula.
    Define $\alpha_{p} :=\sum\limits_{l_1,\dots,l_{p+1}=1}^{L} \norm{[H_{l_1},[H_{l_2},\dots,[H_{l_p},H_{l_{p+1}}]]]}$.
    Then, the Trotter error $\norm{\pfU_p(t)-e^{-\ii Ht}}$ can be asymptotically bounded as
    $\bigO(\alpha_{p} t^{p+1})$.
    In other words, we need Trotter steps
    $\trsteps=\bigO\qty(\alpha_{p}^{1/p} t^{1+1/p}/\epsilon^{1/p})$
    for precision $\norm{\pfU^R_p(t/r)-e^{-\ii Ht}}\le \epsilon$.
\end{theorem}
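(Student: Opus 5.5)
This theorem is a cited result (Childs et al.), so the plan is to reconstruct the standard commutator-scaling argument. It has two steps: first bound the one-step error $\norm{\pfU_p(\tau)-e^{-\ii H\tau}}$ for a short step $\tau$, then telescope over $R$ steps.

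\emph{One-step error.} Write a $p$th-order formula as an ordered product of exponentials $\pfU_p(\tau)=\prod_{v}e^{-\ii a_v\tau H_{\pi_v(l)}}$. Its stages are finitely many with constant coefficients $a_v$, which fixes the constants hidden in $\bigO$. I would compare it to $e^{-\ii H\tau}$ through the variation-of-parameters identity
\begin{equation}
    \pfU_p(\tau)=e^{-\ii H\tau}+\int_0^\tau e^{-\ii H(\tau-s)}\,\mathcal{E}(s)\,\pfU_p(s)\,ds,
\end{equation}
where the error generator is $\mathcal{E}(s):=\bigl(\tfrac{d}{ds}\pfU_p(s)+\ii H\pfU_p(s)\bigr)\pfU_p(s)^\dagger$. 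This generator is a sum of conjugated terms of the form $e^{\ii a H_{l'}s}\cdots H_l\cdots e^{-\ii aH_{l'}s}-H$. Because the formula has order $p$, the Taylor coefficients of $\mathcal{E}(s)$ vanish through order $s^{p-1}$. I would expand each conjugation $e^{X}Ye^{-X}$ as a nested-commutator series and keep the order-$p$ Taylor remainder in integral form. Every surviving term is then a $p$-fold nested commutator $[H_{l_1},[\dots,[H_{l_p},H_{l_{p+1}}]]]$, conjugated by unitaries and multiplied by $s^p$. Unitary invariance of the norm gives $\norm{\mathcal{E}(s)}\le C_p\,\alpha_p\,s^p$. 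Integrating over $s$ then yields $\norm{\pfU_p(\tau)-e^{-\ii H\tau}}=\bigO(\alpha_p\tau^{p+1})$, which is the single-step claim with $t=\tau$.

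\emph{Main obstacle.} The step I expect to be hardest is justifying that the order condition cancels all lower-order terms, while making sure the remainder is expressed only through the nested commutators counted in $\alpha_p$ rather than through powers of $\norm{H}$. I would do this by induction on the number of stages, applying the remainder form of the nested-commutator expansion to each conjugation and collecting terms. This is the technical core of the Childs--Su--Tran--Wiebe--Zhu theory; at this point I would simply invoke it as their Theorem on commutator scaling.

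\emph{Telescoping.} For $R$ steps of size $\tau=t/R$, the error is subadditive for products of unitaries:
\begin{equation}
    \norm{\pfU_p(t/R)^R-e^{-\ii Ht}}\le R\,\norm{\pfU_p(t/R)-e^{-\ii Ht/R}}=\bigO\!\left(\alpha_p t^{p+1}/R^p\right).
\end{equation}
Setting the right-hand side to at most $\epsilon$ gives $R=\bigO(\alpha_p^{1/p}t^{1+1/p}/\epsilon^{1/p})$, which is the stated step count.
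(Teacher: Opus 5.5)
Your proposal is correct and matches the paper's route: the paper gives no proof of its own and only cites Childs et al., and your sketch is the argument of that reference. That argument runs through the variation-of-parameters error representation, order conditions forcing $\mathcal{E}(s)=\bigO(s^p)$, stage-by-stage Taylor remainders that leave only unitarily conjugated $p$-fold nested commutators, and the telescoping bound $R\cdot\bigO(\alpha_p (t/R)^{p+1})$. One cosmetic slip: each term of $\mathcal{E}(s)$ should be $e^{-\ii a H_{l'} s}\cdots H_l\cdots e^{\ii a H_{l'} s}-H_l$, weighted by its stage coefficient. The subtracted pieces then sum to the $-H$ you wrote, and the bound is unaffected.
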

The best known algorithms for this problem are based on quantum signal processing (QSP) and quantum singular value transformation (QSVT) \cite{lowOptimalHamiltonianSimulation2017, gilyenQuantumSingularValue2019}, which achieve near-optimal query complexity scaling as $\bigO(t + \log(1/\eps))$.
\begin{lemma}[No-fast-forwarding theorem, lower bound in time {\cite{berryEfficientQuantumAlgorithms2007,berryHamiltonianSimulationNearly2015}}]\label{apd:thm:no_fast_forwarding}
    For any $d$-sparse Hamiltonian $\ham$ with $\|\ham\|_{\max} \le 1$ and evolution
    time $t > 0$, any quantum algorithm that simulates $e^{-\ii \ham t}$ to constant
    precision requires $\Omega(dt)$ queries to the Hamiltonian oracle.
\end{lemma}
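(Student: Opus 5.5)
This is a cited lower bound (Berry--Ahokas--Cleve--Sanders; Berry--Childs--Cleve--Kothari--Somma), so the plan is to reproduce the standard reduction from a query problem of known hardness, namely the parity of $N$ bits, which needs $\Omega(N)$ queries by the polynomial method (Beals et al.). Given an oracle string $b\in\{0,1\}^N$, I will build a $1$-sparse (more generally $d$-sparse) Hamiltonian whose matrix entries can be computed from $O(1)$ queries to $b$. Simulating that Hamiltonian for time $t=\Theta(N)$ will reveal the parity, so any simulation algorithm using $o(dt)$ Hamiltonian-oracle queries would compute parity with $o(N)$ bit queries, a contradiction.

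\textbf{Construction.} I will take the spin-chain Hamiltonian already used in \cref{thm:circuit_hamiltonian}, namely the path graph on $N+1$ sites with hopping weights $\sqrt{j(N-j+1)}/N$, so that $\|\ham\|_{\max}\le 1$. I then double each vertex to $(j,c)$ with $c\in\{0,1\}$, and let the edge between steps $j-1$ and $j$ connect $(j-1,c)$ to $(j,c\oplus b_j)$. The resulting graph is two disjoint weighted paths, so the Hamiltonian is $2$-sparse; each entry is computable with one query to $b_j$, which gives a Hamiltonian oracle costing $O(1)$ bit queries.

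\textbf{Perfect transfer.} By the tridiagonal spectral equivalence in the proof of \cref{thm:circuit_hamiltonian}, starting from $(0,0)$ and evolving for time $t=N\pi/2$ (the rescaling by $1/N$ stretches the transfer time $\pi/2$ by a factor $N$) moves the state with certainty to $(N,\bigoplus_j b_j)$. Measuring $c$ therefore yields the parity. Consequently a constant-precision simulation for time $t=\Theta(N)$ decides parity with bounded error, and this forces $\Omega(N)=\Omega(t)$ Hamiltonian queries.

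\textbf{Extension to sparsity $d$.} To obtain the factor $d$, I replace each bit by a block of $d$ bits and use the standard OR-of-blocks gadget: the edge between steps $j-1$ and $j$ becomes $d$ parallel complete-bipartite-type connections, each selected by one bit of the block. This gives a $d$-sparse Hamiltonian that computes parity of ORs, and that composed function needs $\Omega(N\sqrt d)$ queries. A simpler route is that the $\Omega(dt)$ bound with $\|\ham\|_{\max}\le1$ follows by rescaling, since a $d$-sparse Hamiltonian with max-norm $1$ can have spectral norm $d$, so time $t$ corresponds to effective time $dt$ of the normalized instance. I expect this $d$-dependence to be the main obstacle. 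I would follow the cited Berry et al.\ argument: build a $d$-sparse instance with $\|\ham\|=\Theta(d)$, namely complete bipartite blocks on $d$ vertices, which act as effective path edges of weight $d$. The transfer time then shrinks by a factor $d$ while $N=\Theta(dt)$ bits are still needed, giving the $\Omega(dt)$ bound.
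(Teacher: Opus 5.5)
The paper gives no proof of this lemma; it is imported from the cited works, so your argument has to stand on its own.

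\textbf{The $t$-dependence is correct.} Your doubled weighted path, with couplings $\sqrt{j(N-j+1)}/N\le 1$ wired by the bits $b_j$, is the Berry--Ahokas--Cleve--Sanders instance. Perfect transfer at $t=N\pi/2$ lands on $(N,\bigoplus_j b_j)$, each oracle call costs $O(1)$ bit queries, and the $N/2$ parity bound gives $\Omega(t)$.

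Two remarks on scope:
\begin{itemize}
\item Your reduction proves an existential statement: \emph{some} $d$-sparse $\ham$ needs $\Omega(dt)$ queries. That is the correct reading; the ``for any'' in the statement is false as written, since a diagonal $\ham$ is simulated for every $t$ with two queries.
\item Your opening mention of a $1$-sparse instance cannot work. $1$-sparse Hamiltonians are direct sums of $2\times2$ blocks and can be simulated with $O(1)$ queries for any $t$. This is also why the bound needs $d\ge 2$.
\end{itemize}

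\textbf{The gap is the factor $d$,} which is the only content beyond the 2007 bound. Your proposal hedges between three routes, and only the last is sound.
\begin{itemize}
\item The OR-of-blocks gadget does not give a clean transfer. If the $d$ parallel connections are switched by the bits of a block, the coupling seen by column-uniform states depends on the block's Hamming weight, or the amplitude splits between the copies $c$ and $c\oplus 1$. So there is no input-independent transfer time. Even granting one, \textsc{Parity}$\circ$\textsc{Or} only yields $\Omega(N\sqrt d)$.
\item The ``rescaling'' route is not an argument. Multiplying an instance by $d$ multiplies $\|\ham\|_{\max}$ by $d$ too. A query lower bound comes from how many independent bits the dynamics must aggregate within time $t$, not from the spectral norm.
\end{itemize}

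\textbf{How to close it.} Make the complete-bipartite construction explicit.
\begin{itemize}
\item Set $m:=\lfloor d/2\rfloor$ and take vertices $(j,c,i)$ with $i\in[m]$.
\item Let \emph{all} $m^2$ edges between columns $j-1$ and $j$ be governed by the single bit $b_j$: join $(j-1,c,i)$ to $(j,c\oplus b_j,i')$ with weight $w_j=\sqrt{j(N-j+1)}/N$.
\item The states $m^{-1/2}\sum_i\ket{j,c,i}$ span an invariant subspace on which $\ham$ acts as your doubled path with couplings $m\,w_j$.
\item Starting from the query-free uniform state on column $0$, the walk reaches column $N$, copy $\bigoplus_j b_j$, at time $N\pi/(2m)$.
\item Each vertex has $2m\le d$ neighbours, and a row query needs only $b_j$ and $b_{j+1}$.
\item Taking $N=\lfloor 2mt/\pi\rfloor$ and scaling all weights by $N\pi/(2mt)\le 1$ hits any prescribed $t$ while keeping $\|\ham\|_{\max}\le 1$.
\end{itemize}
Parity of $N=\Theta(dt)$ bits then forces $\Omega(dt)$ Hamiltonian queries. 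With this specification, and the other two routes dropped, the proof is complete.
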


The no-fast-forwarding theorem implies a lower bound in $t$ for \nameref{apd:prm:search_dqpt}:
any quantum algorithm to search for a critical time in the range $[0, \tmax]$ requires $\Omega(\tmax)$ time.

\subsubsection{Commutator estimator for time-derivative of observable}\label{apd:sec:comm_estimator}
Detecting a DQPT requires estimating the time-derivative $\dot{r}_k(t)$, which in turn reduces to estimating $\dot{\mathcal{L}}_k(t)$.
Rather than finite-differencing two nearby echo values, one can compute the derivative exactly from a single time point via the Heisenberg equation of motion.
For any time-independent observable $O$, the expectation value in the Schrödinger-picture state $\ket{\psi(t)} = e^{-\ii \ham t}\ket{\psi_0}$ satisfies
\begin{equation}\label{eq:heisenberg_eom}
    \dv{t}\expval{O}_t = \ii\bra{\psi(t)}[\ham, O]\ket{\psi(t)}.
\end{equation}
Specializing to $O = \projk$, the commutator method uses the identity
\begin{equation}\label{eq:comm_identity}
    \dot{\los}_k(t) = \bra{\psi(t)} Q \ket{\psi(t)}, \qquad Q := \ii[\ham, \projk],
\end{equation}
which reduces derivative estimation to a single expectation value
on the same state $\ket{\psi(t)}$ used for echo estimation.
The observable $Q$ is Hermitian ($Q^\dagger = -\ii[\projk, \ham] = Q$)
with operator norm $\norm{Q} \le 2\norm{\ham}$.
For geometrically local $\ham = \sum_{l=1}^{L} a_l H_l$ and $k$-local $\projk$,
most commutators $[H_l, \projk]$ vanish---only the $L_\partial$ terms
whose support overlaps that of $\projk$ contribute.
Decomposing $Q$ into Pauli strings $Q = \sum_{j=1}^{L_\partial} b_j Q_j$,
the Pauli $1$-norm satisfies
\begin{equation}\label{eq:comm_1norm}
    \norm{Q}_1 := \sum_{j} \abs{b_j} \le 2\norm{\ham}_{1,\partial P}
    := 2\!\sum_{\substack{l:\,\supp(H_l)\\\cap\,\supp(\projk)\neq\emptyset}} \abs{a_l},
\end{equation}
which depends only on boundary terms.
For a geometrically local Hamiltonian on a $d$-dimensional lattice with $k$-local $\projk$,
$L_\partial = \bigO(k)$ and $\norm{Q}_1 = \bigO(k\,J_{\max})$
where $J_{\max}$ is the maximum coupling strength,
independent of system size $n$ and the total number of Hamiltonian terms $L$.

\emph{Protocol.}
(1)~Prepare $\ket{\psi(t)} = e^{-\ii\ham t}\ket{\psi_0}$ via a $p$th-order product formula
with $\trsteps$ Trotter steps, each of gate depth $\bigO(L)$.
(2)~Decompose $Q = \ii[\ham, \projk]$ into $L_\partial$ Pauli strings $\{Q_j\}$
and estimate each $\expval{Q_j}$ from $\mathcal{N}_j$ samples,
allocating $\mathcal{N}_j \propto \abs{b_j}$ for optimal variance reduction.
All samples use the \emph{same} Trotter circuit with different Pauli measurement bases.
(3)~Compute $\hdlos[k,\cmm] = \sum_j b_j \hat{q}_j$
and form $\hdkrate(t) = -(1/k)\,\hdlos[k,\cmm]/\hat{\los}_k(t)$.
\begin{lemma}
    [Sample complexity of estimating $\dkrate(t)$]
    \label{apd:lem:comm_complexity}
    Let $\ham = \sum_{l=1}^{L} a_l H_l$ act on $n$ qubits,
    and suppose $\los_k(t) \ge \los_{\min} > 0$.
    The commutator estimator achieves additive error $\eps$ in $\dkrate(t)$ using
    $\mathcal{N} = \bigO\!\left(\frac{\norm{Q}_1^2}{(k\,\los_{\min}\,\eps)^2}\right)$
    samples.
    For geometrically local $\ham$ and $k$-local $\projk$,
    $\norm{Q}_1 \le 2\norm{\ham}_{1,\partial P}$ (\cref{eq:comm_1norm}).
\end{lemma}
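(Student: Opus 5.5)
The plan is to propagate errors through the ratio $\dkrate(t) = -\frac{1}{k}\dot{\los}_k(t)/\los_k(t)$. We estimate the numerator $\dot{\los}_k(t)=\bra{\psi(t)}Q\ket{\psi(t)}$ via the Pauli decomposition $Q=\sum_j b_j Q_j$. We estimate the denominator $\los_k(t)$ by direct projective measurement of $\projk$ (\cref{lem:analog_chernoff}).

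\emph{Numerator.} We use importance sampling over Pauli terms. In each round, draw index $j$ with probability $|b_j|/\norm{Q}_1$, prepare $\ket{\psi(t)}$, measure $Q_j$ to obtain $s\in\{\pm1\}$, and output $\norm{Q}_1\,\mathrm{sgn}(b_j)\,s$. This random variable is unbiased for $\bra{\psi(t)}Q\ket{\psi(t)}$ and bounded in $[-\norm{Q}_1,\norm{Q}_1]$. Hoeffding's inequality (as in \cref{apd:fct:binomial}(iii)) then gives additive error $\eta$ with constant failure probability using $\bigO(\norm{Q}_1^2/\eta^2)$ samples. The deterministic allocation $\mathcal{N}_j\propto|b_j|$ from the protocol gives the same variance bound $\sum_j |b_j|^2/\mathcal{N}_j \le \norm{Q}_1^2/\mathcal{N}$ by Cauchy--Schwarz/Chebyshev, so either variant works.

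\emph{Ratio.} Let $\hat D$ and $\hat L$ have errors $\eta_D$ and $\eta_L$. Assume $\eta_L\le \los_{\min}/2$, so that $\hat L\ge \los_{\min}/2$. Then
\[
\Bigl|\frac{\hat D}{\hat L}-\frac{D}{L}\Bigr|\le \frac{\eta_D}{\hat L}+\frac{|D|\,\eta_L}{L\hat L}\le \frac{2\eta_D}{\los_{\min}}+\frac{2\norm{Q}\,\eta_L}{\los_{\min}^2},
\]
using $|D|\le\norm{Q}\le\norm{Q}_1$. Dividing by $k$, it suffices to take $\eta_D = k\los_{\min}\eps/4$, which costs $\bigO(\norm{Q}_1^2/(k\los_{\min}\eps)^2)$ samples and matches the claim.

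For the denominator it suffices to take $\eta_L=\min\{\los_{\min}/2,\ k\los_{\min}^2\eps/(4\norm{Q}_1)\}$, at a cost of $\bigO(\norm{Q}_1^2/(k\los_{\min}^2\eps)^2)$ samples. This is the main obstacle: it carries an extra factor $1/\los_{\min}^2$ compared with the stated bound. To remove it, I would estimate $\los_k$ to \emph{relative} precision rather than additive precision. A Bernoulli mean $\los_k\ge\los_{\min}$ is estimated to relative error $\rho$ with $\bigO(1/(\rho^2\los_{\min}))$ samples (multiplicative Chernoff), and relative error $\rho$ in $\hat L$ induces error at most $\approx\rho|\dkrate|\le\rho\norm{Q}_1/(k\los_{\min})$. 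Choosing $\rho=k\los_{\min}\eps/(4\norm{Q}_1)$ then costs $\bigO(\norm{Q}_1^2/(k^2\los_{\min}^3\eps^2))$, which still exceeds the target by a factor $1/\los_{\min}$.

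I would therefore state the bound in the regime where it is used: $\los_{\min}=\Omega(1)$ (guaranteed by the probe-conditioning promise $\krate\le R_0$), where both terms coincide up to constants. Alternatively, I would read the $\bigO(\cdot)$ as absorbing $\los_{\min}$-dependent factors in the denominator estimate. Failure probabilities are combined by a union bound and boosted by median-of-means. Finally, the $1$-norm bound $\norm{Q}_1\le2\norm{\ham}_{1,\partial P}$ follows from \cref{eq:comm_1norm}: only terms overlapping $\supp(\projk)$ survive, and each $[H_l,\projk]$ has Pauli $1$-norm at most $2|a_l|$ after expanding $\projk$.
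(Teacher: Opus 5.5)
Your numerator analysis (either the importance-sampling or the $\mathcal{N}_j\propto|b_j|$ variant) is the paper's entire proof. The paper derives $\mathrm{Var}=\norm{Q}_1^2/\mathcal{N}$, multiplies by the amplification factor $1/(k\los_{\min})$, and stops, implicitly treating $\los_k(t)$ in the denominator as exact. You are right that the error in $\hat{\los}_k$ must be accounted for. But as written, your proposal then does not prove the stated lemma: you concede an extra factor $1/\los_{\min}$ and propose restricting to $\los_{\min}=\Omega(1)$, or absorbing that factor into the $\bigO(\cdot)$. That retreat is unnecessary. The loss comes from the crude estimate $\abs{\dot{\los}_k}\le\norm{Q}$, which ignores that the time derivative of a projector's expectation is itself suppressed when that expectation is small.

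The missing ingredient is a speed-limit bound. For $Q=\ii[\ham,\projk]$ one has $\projk Q\projk=0$ and $(I-\projk)Q(I-\projk)=0$. Hence $\dot{\los}_k=2\,\mathrm{Re}\bra{\psi(t)}(I-\projk)\,Q\,\projk\ket{\psi(t)}$, and by Cauchy--Schwarz
\[
\abs{\dot{\los}_k(t)}\le 2\norm{Q}\sqrt{\los_k(t)\bigl(1-\los_k(t)\bigr)},
\qquad
\abs{\dkrate(t)}\le\frac{2\norm{Q}}{k\sqrt{\los_{\min}}} .
\]
Feed this into your own relative-error analysis.
\begin{itemize}
\item With relative error $\rho\le\tfrac12$ in $\hat{\los}_k$, the denominator term $\abs{\dot{\los}_k}\,\abs{1/\hat{\los}_k-1/\los_k}/k$ of your ratio bound is at most $2\rho\abs{\dkrate}\le 4\rho\norm{Q}/(k\sqrt{\los_{\min}})$.
\item So $\rho=k\sqrt{\los_{\min}}\,\eps/(8\norm{Q})$ suffices.
\item Your multiplicative-Chernoff cost $\bigO(1/(\rho^2\los_{\min}))$ then becomes $\bigO(\norm{Q}^2/(k\los_{\min}\eps)^2)\le\bigO(\norm{Q}_1^2/(k\los_{\min}\eps)^2)$, the same order as the numerator.
\item If the cap $\rho\le\tfrac12$ binds, then $\abs{\dkrate}<\eps/2$ and outputting $0$ already suffices.
\end{itemize}
Both ingredients are needed: the variance-aware estimate of $\los_k$ you already use, and the $\sqrt{\los_k}$ suppression of $\dot{\los}_k$. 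With either one alone, an extra $1/\los_{\min}$ survives. Together they give the lemma for every $\los_{\min}>0$, and they close a step that the paper's own proof leaves out.
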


\begin{proof}

Each Pauli observable $Q_j$ satisfies $Q_j^2 = I$,
so measuring $Q_j$ on state $\ket{\psi(t)}$ yields outcomes $\pm 1$
with variance $\le 1$.
With $\mathcal{N}_j$ samples allocated to $Q_j$,
$\mathrm{Var}(\hat{q}_j) \le 1/\mathcal{N}_j$.
For the weighted sum $\hdlos[k,\cmm] = \sum_j b_j \hat{q}_j$,
\begin{equation}
    \mathrm{Var}(\hdlos[k,\cmm]) = \sum_j b_j^2 / \mathcal{N}_j.
\end{equation}
Optimal allocation $\mathcal{N}_j = \mathcal{N} \cdot \abs{b_j}/\norm{Q}_1$ (where $\mathcal{N} = \sum_j \mathcal{N}_j$) yields
$\mathrm{Var}(\hdlos[k,\cmm]) = \norm{Q}_1^2/\mathcal{N}$.

Since $\dkrate = -(1/k)\,\dot{\los}_k/\los_k$,
the statistical error is amplified by $1/(k\,\los_{\min})$.
Requiring $\norm{Q}_1/(\sqrt{\mathcal{N}}\,k\,\los_{\min}) \le \eps$
gives $\mathcal{N} = \bigO\!\left(\norm{Q}_1^2/(k\,\los_{\min}\,\eps)^2\right)$.
\end{proof}

\subsubsection{Near-term search protocol}\label{apd:sec:near_term_search}

The echo estimator (\cref{apd:thm:near_term_baseline}), 
the commutator derivative estimator (\cref{apd:lem:comm_complexity}), 
and the product-formula simulation (\cref{apd:thm:trotter}) combine into a two-phase near-term protocol for \nameref{apd:prm:search_dqpt} without ancilla.
A coarse \emph{screening} pass tests the magnitude and susceptibility conditions of \nameref{apd:prm:promise_local_dqpt} on a uniform grid to bracket every critical time, and a subsequent \emph{bisection} pass localizes each bracketed time to the target precision $\eps$.

\begin{proposition}[Near-term Search-DQPT cost]\label{apd:prop:uniform_grid}
    Let $\ham$ be an $n$-qubit local Hamiltonian with bounded couplings, $\ket{\psi_0}$ an initial state, $\tmax\in\poly(n)$ a time range, $k=\bigO(1)$ the subsystem size, and $\eps>0$ a precision.
    Assume the critical times of \nameref{apd:prm:promise_local_dqpt} in $[0,\tmax]$ are separated by at least $\deltat_{\min}\in\Omega(1/\poly(n))$, with window offset $\deltat=\Theta(\deltat_{\min})$ and a uniform echo floor $\los_k(t)\ge\los_{\min}=\Omega(1)$ on $[0,\tmax]$.
    Then the near-term protocol solves \nameref{apd:prm:search_dqpt} using total Hamiltonian simulation time
    \begin{equation}\label{apd:eq:near_term_search_cost}
        \bigO\!\left(\frac{\tmax^2}{\deltat_{\min}}\right) + \bigO\!\left(N_c\,\tmax\log\frac{\deltat_{\min}}{\eps}\right),
    \end{equation}
    where $N_c$ is the number of critical times in $[0,\tmax]$, with $\bigO(n)$ qubits and no ancilla.
\end{proposition}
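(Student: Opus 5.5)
The protocol has two phases, a screening pass and a bisection pass, and I will cost each separately using \cref{apd:thm:near_term_baseline}, \cref{apd:lem:comm_complexity} and \cref{apd:thm:trotter}.

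\emph{Screening.} Place a uniform grid $t_m = m\,\deltat_{\min}/2$ on $[0,\tmax]$, giving $M = \bigO(\tmax/\deltat_{\min})$ points. Because consecutive critical times are at least $\deltat_{\min}$ apart, every critical time lies within $\deltat_{\min}/4$ of some grid point, and no two critical times share a bracket. At each grid point $t_m$, and at its probes $t_m\pm\deltat$ with $\deltat=\Theta(\deltat_{\min})$, I estimate three quantities:
\begin{itemize}
\item the echo $\los_k(t_m)$, by direct measurement of $\projk$;
\item the derivatives $\dkrate(t_m\pm\deltat)$, by the commutator estimator.
\end{itemize}
The promise gaps $\xi_1-\xi_0=\Omega(1)$ and $J_1/J_0\ge c$ are constant, and the echo floor is $\los_{\min}=\Omega(1)$. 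Hence constant additive precision on $\los_k$ and constant relative precision on $J_k$ suffice to classify each grid point. With $k=\bigO(1)$ and $\norm{Q}_1=\bigO(k J_{\max})$, \cref{apd:lem:comm_complexity} then requires $\bigO(1)$ samples per point. I amplify each point to failure probability $\bigO(1/M)$ by median-of-$\bigO(\log M)$ repetitions and take a union bound over the grid; the logarithm I expect to absorb into the stated bound or treat as a $\widetilde{\bigO}$ factor. Each sample evolves for time at most $\tmax+\deltat=\bigO(\tmax)$, so the screening simulation time is $M\cdot\bigO(1)\cdot\bigO(\tmax)=\bigO(\tmax^2/\deltat_{\min})$. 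The grid points flagged \yes{} bracket every critical time in an interval of length $\bigO(\deltat_{\min})$.

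\emph{Bisection.} Within each of the $N_c$ brackets, the critical time is the sign change of the order parameter $\dkrate$, which is positive on the left and negative on the right, as in the jump structure of \local-\dqpt. At each step I estimate the sign of $\dkrate$ at the bracket midpoint with the commutator estimator and halve the bracket. Reaching width $\eps$ takes $\bigO(\log(\deltat_{\min}/\eps))$ halvings. Each halving costs $\bigO(1)$ samples of evolution time $\bigO(\tmax)$, so the total is $\bigO(N_c\tmax\log(\deltat_{\min}/\eps))$. Adding the two phases gives \cref{apd:eq:near_term_search_cost}.

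\emph{Resources.} Every measurement is either direct $\projk$ readout or Pauli measurement on the same Trotterized state, so $\bigO(n)$ qubits suffice and no ancilla is needed. Trotter error is set to a small constant via \cref{apd:thm:trotter}; it affects gate count but not simulation time.

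\emph{Main obstacle.} The hard step is justifying that $\bigO(1)$ samples suffice for each sign test during bisection. Near $t_c$ the magnitude of $\dkrate$ may shrink toward zero, which would require $1/\eps^2$ samples. I would first appeal to the cusp structure, under which $\dkrate$ stays bounded away from zero on either side of $t_c$ up to $\bigO(\deltat)$ regular corrections, so the sign stays resolvable. Failing that, I would instead compare the echo at two symmetric interior points, which keeps the sample cost constant per step.
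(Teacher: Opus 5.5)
Your two-phase plan (screen a uniform $\Theta(\tmax/\deltat_{\min})$ grid with $\bigO(1)$-sample echo and commutator-derivative estimates at $\bigO(\tmax)$ simulation time each, then bisect each flagged cell on the sign of $\dkrate$ in $\bigO(\log(\deltat_{\min}/\eps))$ steps) is exactly the paper's proof. Your primary resolution of the bisection obstacle is also the paper's: the cusp keeps $|\dkrate|$ bounded below, with a definite sign, on each side of $t_c$. The paper asserts a constant-margin sign test here; in the near-term proof it cites only the echo floor, and in the coherent version it cites the slope jump $J_1$. Neither argument derives this from the \local-\dqpt{} promise, which only constrains $\dkrate$ at the offsets $t\pm\deltat$. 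It is therefore an extra structural assumption in both.

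Two small remarks. First, the paper's proof takes $\bigO(1)$ samples per grid point and never does your median-of-$\bigO(\log M)$ boosting and union bound. Your more careful accounting therefore gives $\bigO(\tmax^2\log(\tmax/\deltat_{\min})/\deltat_{\min})$ for screening. That log factor is not absorbed by the stated $\bigO(\cdot)$, so either write the bound as $\widetilde{\bigO}$, as the paper does for the coherent analogue, or acknowledge that the stated form omits the failure-probability bookkeeping. Second, your fallback of comparing the echo at two interior points would not keep the per-step cost constant. Since $|\dot{\los}_k|\le\norm{Q}=\bigO(1)$, the echo difference across a bracket of width $h$ is only $\bigO(h)$. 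Resolving its sign costs $\Omega(1/h^2)$ samples, which sums to $\Theta(1/\eps^2)$ over the bisection. The bisection bound therefore genuinely rests on the bounded-below-slope assumption above.
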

\begin{proof}
    \emph{Screening.}
    Place a uniform grid of $M_0 = \Theta(\tmax/\deltat_{\min})$ points on $[0,\tmax]$.
    At each $t_j$, estimate the magnitude $\krate(t_j)$ from the subsystem echo (\cref{apd:thm:near_term_baseline}) and the susceptibility $J_k(t_j) = \dkrate(t_j-\deltat) - \dkrate(t_j+\deltat)$ (\cref{apd:eq:jump_def}) from the commutator derivative estimator (\cref{apd:lem:comm_complexity}), flagging the cell when $\krate(t_j)\ge\xi_1$ and $J_k(t_j)\ge J_1$.
    The commutator estimator reaches additive precision $\eps'$ in $\bigO(\norm{Q}_1^2/(k\los_{\min}\eps')^2)$ samples with $\norm{Q}_1 = \bigO(kJ_{\max})$ (\cref{eq:comm_1norm});
    since the promise gaps $\xi_1-\xi_0 = \Omega(1)$ and $J_1/J_0\ge c>1$ are constant for constant $k$ and the echo floor $\los_{\min}=\Omega(1)$, a constant target $\eps'=\Theta(1)$ separates \yes{} from \no{}, so $\bigO(1)$ samples per point suffice.
    Each sample evolves to a time $t_j\le\tmax$ at $\bigO(\tmax)$ simulation time, so screening costs $\bigO(M_0\,\tmax) = \bigO(\tmax^2/\deltat_{\min})$.

    \emph{Bisection.}
    Each flagged cell brackets a single critical time, at which $\dkrate$ changes sign.
    A binary search performs $\ceil{\log_2(\deltat_{\min}/\eps)} = \bigO(\log(\deltat_{\min}/\eps))$ sign tests of $\dkrate$ at the midpoint, each a constant-margin decision at $\bigO(1)$ samples and $\bigO(\tmax)$ simulation time, valid because the uniform floor $\los_{\min}=\Omega(1)$ keeps the derivative estimable up to the kink.
    Over the $N_c$ flagged cells this costs $\bigO(N_c\,\tmax\log(\deltat_{\min}/\eps))$.

    \emph{Simulation.}
    Each evolution $e^{-\ii\ham t}$ with $t\le\tmax$ is implemented to constant precision by a $p$th-order product formula, requiring $\bigO(\alpha_p^{1/p}\,\tmax^{1+1/p})$ Trotter steps of gate depth $\bigO(L)$ (\cref{apd:thm:trotter}), and the protocol measures the $k$-local projector and the boundary Pauli strings of $Q=\ii[\ham,\projk]$ directly, with no ancilla.
    Summing the two phases gives \cref{apd:eq:near_term_search_cost}.
\end{proof}

The coherent quantum algorithm of \cref{apd:cor:search_dqpt} screens the same grid by gradient estimation algorithm in $\widetilde{\bigO}(\tmax^{3/2}/\deltat_{\min}^{1/2})$, 
a quadratic $\sqrt{M_0} = \sqrt{\tmax/\deltat_{\min}}$ speedup over the near-term screening of \cref{apd:prop:uniform_grid}.

\subsection{Hamiltonian snapshots estimation}\label{apd:sec:ft_proof}

The simulation and estimation primitives of the preceding subsection yield the near-term baseline, 
which remains shot-noise limited at $\bigO(1/\eps^2)$ per evaluation point and needs $\bigO(M)$ rounds of prepare-evolve-measure for $M$ time points.
We show that a coherent quantum algorithm overcomes both, enabling a faster search for local critical times.
We first treat the general task of taking snapshots of Hamiltonian dynamics: measuring observables on the evolved state at multiple time points.

\begin{problem}[Hamiltonian snapshots estimation]\label{apd:prm:multi_time_obs}
    Given an $n$-qubit Hamiltonian $\ham$, an initial state $\ket{\psi_0}$ prepared by $U_\psi$,
    $M$ time points $\{t_j\}_{j=1}^M$ in a range $[0,T]$, 
    bounded Hermitian observables $\{O_j\}_{j=1}^M$ with $\|O_j\| \le 1$,
    and precision $\eps > 0$,
    output estimates $\tilde{o}_j$ satisfying
    $|\tilde{o}_j - \bra{\psi_0} e^{\ii \ham t_j} O_j e^{-\ii \ham t_j} \ket{\psi_0}| \le \eps$ for all $j \in [M]$.
    The goal is to minimize total queries to $U_\psi$ and total Hamiltonian simulation time.
\end{problem}

\subsubsection{Quantum gradient estimation algorithm}
We now present this coherent algorithm, 
whose $\bigO(\sqrt{M}/\eps)$ query complexity (\cref{apd:cor:adaptive_snapshots}) reflects two distinct quantum speedups over the baseline (\cref{apd:thm:near_term_baseline}).
The first is \emph{sublinear scaling in $M$}: the $M$ target expectation values are encoded as the gradient of a single function and read out together by quantum gradient estimation, so all $M$ are obtained in $\bigO(\sqrt{M})$ queries rather than the $\bigO(M)$ of estimating them one by one.
The second is \emph{Heisenberg-limited precision}: encoding each target as a measurement probability and extracting it by coherent amplitude estimation costs $\bigO(1/\eps)$ oracle calls, beating the $\bigO(1/\eps^2)$ shot-noise limit of prepare-and-measure sampling.
Both speedups are optimal: the lower bound $\Omega(\sqrt{M}/\eps)$ (\cref{apd:thm:multi_lower_bound}) establishes $\sqrt{M}/\eps$ as the fundamental price of estimating $M$ correlated quantities through a single oracle interface.

We obtain the $M$ Heisenberg-picture expectation values $\langle O_j(t_j)\rangle$ in three steps.
First, we build a parameterized unitary $U(\vbx)$ whose gradient at $\vbx = \vbzero$ encodes all $M$ of them, satisfying the derivative growth condition the gradient algorithm requires (\cref{apd:lem:oracle_construction}).
Second, we realize a probability oracle for the associated function $f$ through the Hadamard test (\cref{apd:cor:oracle_circuit,def:probability_oracle}).
Third, we feed this oracle into the gradient estimation algorithm (\cref{thm:gradient_expectation}), yielding the multi-time snapshot estimates (\cref{apd:cor:adaptive_snapshots}).

\begin{lemma}[Interleaved parameterized unitary construction]\label{apd:lem:oracle_construction}
    Given an $n$-qubit Hamiltonian $\ham$, an initial state $\ket{\psi_0}$,
    $M$ time points $0 < t_1 < \cdots < t_M \le \tmax$, and
    Hermitian operators $O_1, \ldots, O_M$ with $\|O_j\| \le 1$,
    let $U(\vbx)$ be the parameterized unitary with parameters $\vbx = (x_1, \ldots, x_M)$
    \begin{align}\label{apd:eq:param_unitary}
        U(\vbx) 
        &:= \qty(\prod_{j=1}^{M} U(t_{j-1}, t_j)\,e^{-2\ii x_j O_j}) U(t_M, t_0) \\
        &= U(t_0, t_1)\,e^{-2\ii x_1 O_1}\,U(t_1, t_2)\,e^{-2\ii x_2 O_2}\,
        \cdots\, U(t_{M-1}, t_M)\,e^{-2\ii x_M O_M}\,U(t_M, t_0),
    \end{align}
    where $U(t_a, t_b) := e^{-\ii \ham (t_a - t_b)}$ is the time-evolution operator,
    and let
    \begin{equation}\label{apd:eq:f_gradient}
        f(\vbx) := -\tfrac{1}{2}\Im\bigl[\bra{\psi_0}U(\vbx)\ket{\psi_0}\bigr] + \tfrac{1}{2}.
    \end{equation}
    Then the gradient of $f$ at $\vbzero$ encodes Heisenberg-picture expectation values:
    $\partial_{x_\ell} f|_{\vbx=\vbzero} = \bra{\psi_0} O_\ell(t_\ell) \ket{\psi_0}$ for all $\ell \in [M]$,
    where $O_\ell(t) := e^{\ii \ham t}\,O_\ell\,e^{-\ii \ham t}$.
    Moreover, $f$ satisfies the derivative growth condition
    $|\partial_\alpha f(\vbzero)| \le c^m m^{m/2}$ for all multi-indices $\alpha$ with $|\alpha| = m$, with constant $c = 2$.
\end{lemma}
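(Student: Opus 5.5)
The plan is a direct computation from \cref{apd:eq:param_unitary}, using the convention $t_0 = 0$. There are three steps: (i) verify that $U(\vbzero)$ telescopes to the identity; (ii) differentiate once to obtain the gradient--observable identity; (iii) bound all higher mixed derivatives at the origin by operator norms to get the growth condition.

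\emph{Step 1 (base point).} At $\vbx=\vbzero$ every rotation $e^{-2\ii x_j O_j}$ equals $I$. The product therefore reduces to
\[
U(t_0,t_1)\,U(t_1,t_2)\cdots U(t_{M-1},t_M)\,U(t_M,t_0).
\]
Since $U(t_a,t_b)U(t_b,t_c)=U(t_a,t_c)$, this telescopes to $U(t_0,t_0)=I$. Hence $f(\vbzero)=\tfrac12$. More usefully, the prefix ending just before the $\ell$-th rotation is $U(t_0,t_\ell)=e^{\ii\ham t_\ell}$. Likewise the suffix after it is $U(t_\ell,t_0)=e^{-\ii\ham t_\ell}$, because the remaining $x_j$ vanish and the factors again telescope.

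\emph{Step 2 (gradient).} Each parameter $x_\ell$ appears in exactly one factor. So $\partial_{x_\ell}$ acts only on $e^{-2\ii x_\ell O_\ell}$, producing $-2\ii O_\ell e^{-2\ii x_\ell O_\ell}$. At the origin this gives
\[
\left.\partial_{x_\ell}U\right|_{\vbzero}=-2\ii\,e^{\ii\ham t_\ell}O_\ell e^{-\ii\ham t_\ell}=-2\ii\,O_\ell(t_\ell).
\]
Since $O_\ell(t_\ell)$ is Hermitian, $\bra{\psi_0}O_\ell(t_\ell)\ket{\psi_0}$ is real. Thus $\Im\bigl[-2\ii\bra{\psi_0}O_\ell(t_\ell)\ket{\psi_0}\bigr]=-2\bra{\psi_0}O_\ell(t_\ell)\ket{\psi_0}$. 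Multiplying by $-\tfrac12$ from \cref{apd:eq:f_gradient} yields $\partial_{x_\ell}f|_{\vbzero}=\bra{\psi_0}O_\ell(t_\ell)\ket{\psi_0}$.

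\emph{Step 3 (growth condition).} $U(\vbx)$ is an entire function of $\vbx$, and the parameters sit in disjoint factors. Hence for a multi-index $\alpha$ with $|\alpha|=m$, the derivative $\partial_\alpha U(\vbzero)$ is the same ordered product with each $e^{-2\ii x_jO_j}$ replaced by $(-2\ii O_j)^{\alpha_j}$. There is no product-rule mixing, because each factor depends on a single variable. The remaining factors are unitaries, and $\|O_j\|\le1$, so $\|\partial_\alpha U(\vbzero)\|\le 2^{m}$. Since $\Im$ and expectation in a unit vector do not increase modulus, we get
\[
|\partial_\alpha f(\vbzero)|\le \tfrac12\,2^m\le 2^m m^{m/2}
\]
for $m\ge1$, which is the claimed bound with $c=2$.

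The only point needing care is Step 3. One must make sure that the ordering of noncommuting factors does not generate extra terms. It does not: each variable appears in one factor only, so the derivative simply replaces that factor in place. The resulting bound is in fact stronger than required, since the $m^{m/2}$ slack is not needed.
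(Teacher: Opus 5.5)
Your proposal is correct and follows essentially the same route as the paper's proof. Both arguments telescope $U(\vbzero)$ to the identity, differentiate the single factor containing $x_\ell$ (with prefix and suffix telescoping to $e^{\ii\ham t_\ell}$ and $e^{-\ii\ham t_\ell}$), and bound $\|\partial_\alpha U(\vbzero)\|\le 2^m$ to obtain $|\partial_\alpha f(\vbzero)|\le 2^{m-1}\le 2^m m^{m/2}$. Your remark that each variable sits in exactly one factor, so a mixed derivative just replaces each rotation in place by $(-2\ii O_j)^{\alpha_j}$, makes precise the paper's informal ``$m$ insertions'' step.
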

\begin{proof}
    At $\vbx = \vbzero$, each rotation $e^{-2\ii \cdot 0 \cdot O_j} = \I$, 
    so the product in \cref{apd:eq:param_unitary} telescopes:
    $$U(\vbzero) = U(t_0, t_1)\,U(t_1, t_2)\cdots U(t_{M-1}, t_M)\,U(t_M, t_0) = U(t_0, t_0) = \I.$$
    Thus $f(\vbzero) = 1/2 \in [0,1]$.

    Differentiating \cref{apd:eq:param_unitary} with respect to $x_\ell$ at $\vbx = \vbzero$:
    \begin{align}\label{eq:dU_dxl}
        \frac{\partial U}{\partial x_\ell}\bigg|_{\vbx=\vbzero}
        &= U(t_0, t_1)\cdots U(t_{\ell-1}, t_\ell)\,(-2\ii O_\ell)\,U(t_\ell, t_{\ell+1})\cdots U(t_M, t_0) \\
        &= U(t_0, t_\ell)\,(-2\ii O_\ell)\,U(t_\ell, t_0) \\
        &= -2\ii\, e^{\ii \ham t_\ell}\,O_\ell\,e^{-\ii \ham t_\ell},
    \end{align}
    where the second line telescopes the products at $\vbx = \vbzero$ and the third uses $U(t_0, t_\ell) = e^{\ii \ham t_\ell}$.
    Hence
    \begin{align}\label{apd:eq:gradient_expectation}
        \partial_{x_\ell} f\big|_{\vbzero}
        &= -\tfrac{1}{2}\Im\bigl[\bra{\psi_0}\bigl(-2\ii\, e^{\ii \ham t_\ell}\,O_\ell\,e^{-\ii \ham t_\ell}\bigr)\ket{\psi_0}\bigr] \\
        &= -\tfrac{1}{2}\Im\bigl(-2\ii\,\bra{\psi_0} O_\ell(t_\ell) \ket{\psi_0}\bigr) \\
        &= \bra{\psi_0} O_\ell(t_\ell) \ket{\psi_0},
    \end{align}
    where the last step uses $\bra{\psi_0} O_\ell(t_\ell) \ket{\psi_0} \in \mathbb{R}$.

    \emph{Derivative growth condition.}
    Consider a multi-index $\alpha$ with $|\alpha| = m$.
    The $m$th-order partial derivative $\partial_\alpha U(\vbx)$ involves $m$ insertions of $(-2\ii O_j)$ into the product \cref{apd:eq:param_unitary}, with each insertion bounded in operator norm by $2\|O_j\| \le 2$ (since $\|O_j\| \le 1$ by hypothesis).
    All time-evolution operators are unitary and contribute norm at most $1$.
    Therefore $\|\partial_\alpha U(\vbzero)\| \le 2^m$ and
    \begin{equation}
        |\partial_\alpha f(\vbzero)| \le \tfrac{1}{2}\|\partial_\alpha U(\vbzero)\| \le 2^{m-1}.
    \end{equation}
    Since $m^{m/2} \ge 1$ for $m \ge 1$, we have $2^{m-1} \le 2^m m^{m/2}$, confirming the derivative growth condition required by the gradient estimation algorithm~\cite{hugginsNearlyOptimalQuantum2022} (\cref{thm:gradient_expectation}) with constant $c = 2$.
The analyticity that the gradient estimation algorithm (\cref{thm:gradient_expectation}) requires is in the auxiliary variable $\vbx$, not the physical time $t$: $f$ is entire in $\vbx$ because each factor $e^{-2\ii x_j O_j}$ is entire for bounded $O_j$.
\end{proof}

To state the corollary precisely, we first recall the notion of a probability oracle
and the Hadamard test that implements it.

\begin{definition}[probability oracle]\label{def:probability_oracle}
    Consider a function $f:\mathbb{R}^M\to [0,1]$.
    A \emph{probability oracle} $U_f$ for $f$ is a unitary operator that acts as 
    \begin{equation}\label{eq:probability_oracle}
        U_f: \ket{\vbx} \ket{\vb{0}} \mapsto
        \ket{\vbx} \qty(
        \sqrt{f(\vbx)} \ket{1} \ket{\phi_1(\vbx)} + 
        \sqrt{1-f(\vbx)} \ket{0} \ket{\phi_0(\vbx)}  
        )
    \end{equation}
    where $\ket{\vbx}$ denotes a discretization of the variable $\vbx$ encoded into a register of qubits,
    $\ket{\vb{0}}$ denotes the all-zeros state of a register of ancilla qubits,
    and $\ket{\phi_0(\vbx)}$ and $\ket{\phi_1(\vbx)}$ are arbitrary quantum states.
\end{definition}

Combining \cref{apd:lem:oracle_construction} with the Hadamard test, we obtain a probability oracle for $f$.

\begin{figure}[!t]
    \centering
    \begin{quantikz}[row sep=0.4cm, column sep=0.4cm]
        \lstick{$\ket{0}$\\[-2pt]{\scriptsize anc}}
            & \gate{S^\dagger\! H} & \ctrl{1} & \ctrl{1} & \ctrl{1} & \ldots
            & \ctrl{1} & \ctrl{1} & \gate{H} & \meter{} \\
        \lstick{$\ket{0}^{\otimes n}$\\[-2pt]{\scriptsize sys}}
            & \gate{U_\psi}
            & \gate[style={fill=orange!15}]{e^{-\ii\ham t_M}}
            & \gate[style={fill=violet!15}]{e^{-2\ii x_M O_M}}
            & \gate{e^{\ii\ham\Delta t_M}}
            & \ldots
            & \gate[style={fill=violet!15}]{e^{-2\ii x_1 O_1}}
            & \gate{e^{\ii\ham\Delta t_1}}
            & \qw & \qw
    \end{quantikz}
    \caption{Probability oracle $F(\vbx)$ for $f(\vbx)$ (\cref{apd:eq:hadamard_circuit}).
    The ancilla is prepared in $S^\dagger \hdm\ket{0} = \frac{1}{\sqrt{2}}(\ket{0} - \ii\ket{1})$,
    then controls each gate in the parameterized unitary $U(\vbx)$ (\cref{apd:eq:param_unitary}).
    The sequence alternates Hamiltonian evolution segments $e^{\pm\ii\ham\Delta t_j}$
    with rotations $e^{-2\ii x_j O_j}$ encoding the parameters $\vbx$.
    The initial evolution $e^{-\ii\ham t_M}$ is shaded orange; the parameter-encoding rotations $e^{-2\ii x_j O_j}$ are shaded violet.
    A final Hadamard yields $\Pr[\text{anc}=1] = f(\vbx)$ (\cref{apd:eq:f_gradient}).
    This is the building block called by the gradient estimation protocol (\cref{apd:fig:circuit_protocol}). 
    }
    \label{apd:fig:circuit_oracle}
\end{figure}

\begin{corollary}[Probability oracle construction via Hadamard test]\label{apd:cor:oracle_circuit}
    Let $U_\psi$ be the state-preparation unitary ($U_\psi\ket{\vb{0}} = \ket{\psi_0}$).
    The Hadamard test circuit
    \begin{equation}\label{apd:eq:hadamard_circuit}
        F(\vbx) := (\hdm \otimes \I)\,(c\text{-}U(\vbx))\,(S^\dagger \hdm \otimes U_\psi),
    \end{equation}
    where $\hdm$ is the Hadamard gate, $S = \operatorname{diag}(1,i)$ the phase gate,
    and $c\text{-}U(\vbx)$ the controlled-$U(\vbx)$ from \cref{apd:eq:param_unitary},
    is a \nameref{def:probability_oracle} for the function $f$ of \cref{apd:eq:f_gradient}.
    Each query requires total Hamiltonian simulation time $2\tmax$ and $M$ controlled rotations $e^{-2\ii x_j O_j}$.
\end{corollary}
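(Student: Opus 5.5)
The plan is a direct Hadamard-test calculation, followed by a cost count. I will first fix $\vbx$ and treat $U(\vbx)$ as a fixed unitary. Afterwards I will lift the argument to a superposition over the discretized register $\ket{\vbx}$: each rotation $e^{-2\ii x_j O_j}$ becomes a gate controlled by the bits of the index register $x_j$ together with the ancilla, as in \cref{fig:circuit_gradient}. Since everything then acts block-diagonally in $\ket{\vbx}$, the map of \cref{def:probability_oracle} follows branch by branch.

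\emph{Step 1 (state after the circuit).} The first layer of \cref{apd:eq:hadamard_circuit} produces $\tfrac{1}{\sqrt2}(\ket{0}-\ii\ket{1})\otimes\ket{\psi_0}$, using $S^\dagger\hdm\ket{0}$ and $U_\psi\ket{\vb{0}}=\ket{\psi_0}$. The controlled-$U(\vbx)$ acts only on the $\ket{1}$ branch, giving $\tfrac{1}{\sqrt2}\bigl(\ket{0}\ket{\psi_0}-\ii\ket{1}U(\vbx)\ket{\psi_0}\bigr)$. The final Hadamard on the ancilla then yields
\begin{equation}
\tfrac12\ket{0}\bigl(\ket{\psi_0}-\ii U(\vbx)\ket{\psi_0}\bigr)+\tfrac12\ket{1}\bigl(\ket{\psi_0}+\ii U(\vbx)\ket{\psi_0}\bigr).
\end{equation}

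\emph{Step 2 (identify $f$).} The weight of the $\ket{1}$ branch is $\tfrac14\norm{\ket{\psi_0}+\ii U(\vbx)\ket{\psi_0}}^2$. By unitarity this equals $\tfrac14\bigl(2+2\,\mathrm{Re}[\ii\bra{\psi_0}U(\vbx)\ket{\psi_0}]\bigr)$. Using $\mathrm{Re}(\ii z)=-\Im z$, it becomes $\tfrac12\bigl(1-\Im\bra{\psi_0}U(\vbx)\ket{\psi_0}\bigr)=f(\vbx)$ from \cref{apd:eq:f_gradient}. This lies in $[0,1]$ because $|\bra{\psi_0}U\ket{\psi_0}|\le1$.

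Writing each branch as $\sqrt{f}\ket{1}\ket{\phi_1(\vbx)}$ and $\sqrt{1-f}\ket{0}\ket{\phi_0(\vbx)}$, with normalized $\ket{\phi_b}$, gives exactly the form \cref{eq:probability_oracle}. If $f(\vbx)\in\{0,1\}$, the corresponding state is arbitrary.

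\emph{Step 3 (cost).} Reading off \cref{apd:eq:param_unitary}, one query contains one initial evolution of duration $t_M$. It also contains the backward segments $e^{\ii\ham\Delta t_j}$, whose durations telescope to $\sum_j\Delta t_j=t_M$. The total simulation time is therefore $2t_M\le2\tmax$. These controlled evolutions cannot be dropped when $\vbx\neq\vbzero$, but controlling a Hamiltonian evolution does not change its simulation time. Finally, there is exactly one (controlled) rotation $e^{-2\ii x_jO_j}$ per $j$, so $M$ in total, and one call to $U_\psi$.

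The only delicate point is the sign and phase convention in Step 2. I need to check that the $S^\dagger$ choice produces $-\Im$ rather than $+\Im$ or a real part. Everything else is bookkeeping.
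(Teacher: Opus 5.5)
Your proposal is correct and follows the paper's proof essentially line by line. You use the same three-stage trace through $S^\dagger\hdm\ket{0}$, the controlled-$U(\vbx)$ and the final Hadamard, the same identification of the $\ket{1}$-branch weight with $f(\vbx)$ via $\mathrm{Re}(\ii z)=-\Im z$, and the same cost count of one forward segment plus the telescoping backward segments. Your explicit extension to a coherent index register $\ket{\vbx}$ and the sharper bound $2t_M\le 2\tmax$ are small refinements of points the paper leaves implicit.
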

\begin{proof}
    We trace through the three stages of $F(\vbx) = (\hdm \otimes \I)\,(c\text{-}U(\vbx))\,(S^\dagger \hdm \otimes U_\psi)$
    acting on $\ket{0}\ket{\vb{0}}$.

    \emph{State preparation.}
    Since $\hdm\ket{0} = \frac{1}{\sqrt{2}}(\ket{0}+\ket{1})$ and
    $S^\dagger = \operatorname{diag}(1,-\ii)$, the ancilla is prepared as
    $S^\dagger\hdm\ket{0} = \frac{1}{\sqrt{2}}(\ket{0} - \ii\ket{1})$.
    The system register becomes $U_\psi\ket{\vb{0}} = \ket{\psi_0}$, giving
    \begin{equation}
        (S^\dagger \hdm \otimes U_\psi)\ket{0}\ket{\vb{0}}
        = \frac{1}{\sqrt{2}}\bigl(\ket{0} - \ii\ket{1}\bigr)\ket{\psi_0}.
    \end{equation}

    \emph{Controlled unitary.}
    The gate $c\text{-}U(\vbx)$ applies $U(\vbx)$ to the system register
    conditioned on the ancilla being $\ket{1}$:
    \begin{equation}
        \frac{1}{\sqrt{2}}\bigl(\ket{0}\ket{\psi_0} - \ii\ket{1}\,U(\vbx)\ket{\psi_0}\bigr).
    \end{equation}

    \emph{Final Hadamard.}
    Applying $\hdm \otimes \I$ and using
    $\hdm\ket{0} = \frac{1}{\sqrt{2}}(\ket{0}+\ket{1})$,
    $\hdm\ket{1} = \frac{1}{\sqrt{2}}(\ket{0}-\ket{1})$:
    \begin{equation}\label{apd:eq:F_expanded}
        F(\vbx)\ket{0}\ket{\vb{0}}
        = \frac{1}{2}\ket{0}\bigl(\ket{\psi_0} - \ii\,U(\vbx)\ket{\psi_0}\bigr)
        + \frac{1}{2}\ket{1}\bigl(\ket{\psi_0} + \ii\,U(\vbx)\ket{\psi_0}\bigr).
    \end{equation}

    \emph{Probability of outcome $\ket{1}$.}
    The squared norm of the $\ket{1}$ component is
    \begin{align}
        \frac{1}{4}\bigl\|\ket{\psi_0} + \ii\,U(\vbx)\ket{\psi_0}\bigr\|^2
        &= \frac{1}{4}\bigl(2 + \ii\bra{\psi_0}U(\vbx)\ket{\psi_0}
            - \ii\,\overline{\bra{\psi_0}U(\vbx)\ket{\psi_0}}\bigr) \notag\\
        &= \frac{1}{2} - \frac{1}{2}\Im\bigl[\bra{\psi_0}U(\vbx)\ket{\psi_0}\bigr]
        = f(\vbx),
    \end{align}
    where the second equality uses $\ii z - \ii\bar{z} = -2\Im(z)$.
    Setting
    $\ket{\phi_1(\vbx)} := \frac{\ket{\psi_0}+\ii\,U(\vbx)\ket{\psi_0}}{2\sqrt{f(\vbx)}}$
    and
    $\ket{\phi_0(\vbx)} := \frac{\ket{\psi_0}-\ii\,U(\vbx)\ket{\psi_0}}{2\sqrt{1-f(\vbx)}}$,
    \cref{apd:eq:F_expanded} becomes
    \begin{equation}
        F(\vbx)\ket{0}\ket{\vb{0}}
        = \sqrt{f(\vbx)}\,\ket{1}\ket{\phi_1(\vbx)}
        + \sqrt{1-f(\vbx)}\,\ket{0}\ket{\phi_0(\vbx)},
    \end{equation}
    which matches the \nameref{def:probability_oracle}.
    Note that this construction holds for arbitrary $\vbx \in \mathbb{R}^M$;
    the gradient estimation protocol (\cref{apd:fig:circuit_protocol,thm:gradient_expectation})
    queries $F$ at points near $\vbx = \vbzero$, where the expectation values
    $\bra{\psi_0} O_j(t_j) \ket{\psi_0}$ are recovered as the gradient components
    $\partial_{x_j} f|_{\vbzero}$ (\cref{apd:lem:oracle_construction}).

    Each application of $F(\vbx)$ requires
    one application of $U_\psi$ (state preparation),
    one application of the controlled-$U(\vbx)$ (which traverses all $M$ time segments forward from $t_0$ to $t_M$ and then backward from $t_M$ to $t_0$, for total Hamiltonian simulation time $2\tmax$),
    and $M$ controlled rotations $e^{-2\ii x_j O_j}$, whose gate cost depends on the structure of $O_j$.
\end{proof}

\begin{remark}
    \label{apd:rmk:inverse_oracle}
    A single forward query of $F$ uses only $U_\psi$.
    Although \cref{thm:gradient_expectation} is stated in terms of probability-oracle queries to $F$ (as in Huggins et al.), Gilyén's underlying gradient construction (after Jordan) operates on a \emph{phase} oracle $O_f:\ket{\vbx}\mapsto e^{\ii f(\vbx)}\ket{\vbx}$;
    the probability-to-phase conversion this entails, folded into that query count, is where $F^\dagger$ enters.
    Following Gilyén et al.~\cite{gilyenOptimizingQuantumOptimization2019}, this conversion is built on the Grover-like iterate $G = (2\Pi_1 - I)\,F^\dagger\,(2\Pi_2 - I)\,F$, where $\Pi_2$ projects onto the Hadamard-test ``$1$'' outcome and $\Pi_1$ onto the zeroed ancilla:
    $G$ rotates by $2\arcsin\sqrt{f(\vbx)}$ in the two-dimensional subspace attached to each $\ket{\vbx}$, and a linear combination of unitaries (LCU) in powers of $G$ imprints the phase $e^{\ii f(\vbx)}$ to precision $\eps'$ using $\bigO(\log(1/\eps'))$ applications of each of $F$ and $F^\dagger$.
    The inverse is thus intrinsic, exactly as in amplitude amplification: the construction alternates $F$ and $F^\dagger$ between the two reflections, with $F^\dagger$ uncomputing the Hadamard-test ancilla so that the encoded value $f(\vbx)$ acts coherently on the index register.
    The total query complexity in \cref{thm:multi_time_estimation} is accordingly $\widetilde{\bigO}(\sqrt{M}/\eps)$ calls to $U_\psi$ and the same number to $U_\psi^\dagger$, matching the convention of Refs.~\cite{hugginsNearlyOptimalQuantum2022,wadaHeisenbergLimitedAdaptiveGradient2025}, and is independent of the internal structure of $U(\vbx)$, whether or not it contains Hamiltonian-evolution segments.
\end{remark}

\begin{figure}[t!]
    \centering
    \begin{quantikz}[row sep=0.3cm, column sep=0.5cm]
        \lstick{$\ket{0}^{\otimes 3}$\\[-2pt]{\scriptsize $x_1$}}
            & \gate{H^{\otimes 3}} & \gate[style={fill=green!15}]{R(\tilde{u}_1^{(q)})} & \gate[wires=4][1.8cm]{F(\vbx)} & \gate{\mathrm{QFT}^{-1}} & \meter{} \\
        \setwiretype{n}
            & \vdots & \vdots & & \vdots & \\
        \lstick{$\ket{0}^{\otimes 3}$\\[-2pt]{\scriptsize $x_M$}}
            & \gate{H^{\otimes 3}} & \gate[style={fill=green!15}]{R(\tilde{u}_M^{(q)})} & & \gate{\mathrm{QFT}^{-1}} & \meter{} \\
        \lstick{$\ket{0}^{\otimes (n+1)}$\\[-2pt]{\scriptsize anc $+$ sys}}
            & \qw & \qw & & \qw & \trash{\text{disc.}}
    \end{quantikz}
    \caption{One round of the adaptive gradient estimation protocol for $\los_k$ snapshots estimation
    (Alg.1 in the main text).
    The circuit is repeated for rounds $q = 0, 1, \ldots, \ceil{\log_2(1/\eps)}$.
    Each index register $x_j$ ($p = 3$ qubits, independent of~$\eps$) is initialized in uniform superposition,
    then shifted by classical phase corrections $R(\tilde{u}_j^{(q)})$ (shaded green)
    based on previous-round estimates.
    The probability oracle $F(\vbx)$ (\cref{apd:fig:circuit_oracle}), 
    unchanged across rounds,
    acts on all registers, encoding $f(\vbx)$ into the ancilla measurement probability.
    Inverse QFT on the index registers followed by measurement extracts
    the gradient $\nabla f|_{\vbzero}$, 
    whose components equal
    $\los_k(t_j)$.
    The estimates are then updated classically as
    $\tilde{u}_j^{(q+1)} = \tilde{u}_j^{(q)} + \pi\,2^{-q} g_j^{(q)}$.
    The ancilla and system qubits are discarded.
    The two quantum speedups originate from distinct circuit elements:
    the superposition over index registers combined with the inverse QFT
    enables $\bigO(\sqrt{M})$ scaling by probing all $M$ gradient directions in parallel,
    while the probability encoding in $F(\vbx)$ enables coherent amplitude estimation
    (a Grover-like procedure using repeated calls to $F$ and $F^\dagger$),
    achieving Heisenberg-limited $\bigO(1/\eps)$ precision.
    See the main text for the fully expanded circuit.
    }
    \label{apd:fig:circuit_protocol}
\end{figure}

The oracle $F(\vbx)$ satisfies all the prerequisites of the following gradient estimation algorithm, 
which we apply in the proof of \cref{apd:cor:adaptive_snapshots}.

\begin{theorem}[Gradient-based estimation algorithm, rephrase \cite{hugginsNearlyOptimalQuantum2022,gilyenOptimizingQuantumOptimization2019}]\label{thm:gradient_expectation}
    Let $\eps$, $c\in\mathbb{R}_+$ be fixed constants, with $\eps\le c$.
    Let $M\in\mathbb{Z}_+$ and $\vbx\in\mathbb{R}^M$.
    Suppose that $f:\mathbb{R}^M\to \mathbb{R}$ is an analytic function such that for every $k\in\mathbb{Z}_+$,
    the following bound holds for all $k$th order partial derivatives of $f$ at $\vbx$
    (denoted by $\partial_{\alpha}f(\vbx)$): $\abs{\partial_{\alpha}f(\vbx)}\le c^k k^{(k/2)}$.
    Then, there is a quantum algorithm that outputs an estimate $\tilde{\vb{g}}\in\mathbb{R}^M$
    such that $\norm{\nabla f(\vbx) - \tilde{\vb{g}}}_{\infty} \le \eps$,
    with probability at least $1-\delta$.
    This algorithm makes $\bigO(c\sqrt{M}\log(M/\delta)/\eps)$ queries to a probability oracle for $f$.
\end{theorem}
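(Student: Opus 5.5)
The plan is to follow the Jordan--Gilyén route: turn the probability oracle into a phase oracle, use that phase oracle to imprint an approximately linear phase $e^{\ii S\,\nabla f(\vbx)\cdot \vb{y}}$ on a grid register, and read the gradient off with an inverse QFT. First, by translation we may take the base point to be the origin, writing $f_{\vbx}(\vb{y}) := f(\vbx+\vb{y})$. Using the Grover-like iterate built from $U_f$ and $U_f^\dagger$ together with the LCU/QSVT phase-kickback construction of Gilyén et al., we obtain a fractional phase oracle $\ket{\vb{y}}\mapsto e^{\ii S' f(\vbx+\vb{y})}\ket{\vb{y}}$ to precision $\eps'$. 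For constant $S'$ this costs $\bigO(\log(1/\eps'))$ queries to $U_f$ and $U_f^\dagger$. Raising the phase to a larger scale $S$ then costs $\bigO(S)$ repetitions.

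Second, we replace $f$ by a central finite-difference approximant of order $2m$,
\[
f_{(2m)}(\vb{y}) := \sum_{\ell=-m}^{m} a^{(2m)}_\ell\, f(\vbx + \ell\,\vb{y}),
\]
with the standard coefficients chosen so that all Taylor terms of degree $2,\dots,2m$ cancel along every direction. Hence $f_{(2m)}(\vb{y}) = \nabla f(\vbx)\cdot\vb{y} + R(\vb{y})$. The remainder $R$ collects only terms of degree at least $2m+1$, and we bound it by combining the derivative bound $|\partial_\alpha f|\le c^k k^{k/2}$ with $\sum_\ell |a^{(2m)}_\ell|\,\ell^{k} \lesssim (\mathrm{const}\cdot m)^{k}$.

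Third, we place $\vb{y}$ on a grid $G=\{-\tfrac r2,\dots,\tfrac r2\}^M$ with $r=\Theta(c/\eps)$ points per coordinate, scaled so that $\norm{\vb{y}}_\infty \lesssim 1/(c\sqrt{M}\,\mathrm{polylog})$. We prepare the uniform superposition over $G$, apply the phase oracle at scale $S = \Theta(\sqrt{M}\,\mathrm{polylog}/\eps)$ to $f_{(2m)}$ (which costs $\bigO(m)$ phase-oracle calls per evaluation), apply $\mathrm{QFT}^{-1}$ coordinatewise, and measure. If the phase were exactly linear, each coordinate would return the nearest grid value of $\partial_j f$, with constant success probability per coordinate in the standard phase-estimation sense. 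The induced error in the final state is bounded by the average over the grid of $S\,|R(\vb{y})|$, plus the phase-oracle error.

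The main obstacle is controlling this average remainder uniformly in $M$. A worst-case bound over the hypercube scales like $\norm{\vb{y}}_1^k \sim M^{k}$ and is useless. Following Gilyén et al., I would bound the mean of $|\sum_{|\alpha|=k}\partial_\alpha f\,\vb{y}^\alpha|$ over the uniform grid by Khintchine/hypercontractivity-type moment estimates. These give $\mathbb{E}|\cdot| \lesssim c^k k^{k/2}\cdot (k\,\norm{\vb{y}}_2^2/M)^{k/2} M^{k/2}$, and this is exactly why the $k^{k/2}$ growth condition is the right hypothesis. Choosing $m=\Theta(\log(\sqrt{M}/\eps))$ then makes $S\cdot\mathbb{E}|R|$ at most a small constant. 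The total query count is $\bigO(m\cdot S\cdot \log(1/\eps'))$, which is $\widetilde{\bigO}(c\sqrt{M}/\eps)$ up to logarithmic factors that I would track into the claimed form. Finally, we obtain per-coordinate success $2/3$, repeat $\bigO(\log(M/\delta))$ times, and take coordinatewise medians. A Chernoff bound plus a union bound over the $M$ coordinates then gives $\norm{\nabla f(\vbx)-\tilde{\vb{g}}}_\infty\le\eps$ with probability $1-\delta$, for the stated $\bigO(c\sqrt{M}\log(M/\delta)/\eps)$ queries.
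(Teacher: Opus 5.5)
Your architecture is the Gilyén et al.\ argument the paper relies on: probability-to-phase conversion via the Grover iterate and LCU, central differences of order $2m$, a uniform grid read out by $\mathrm{QFT}^{-1}$, an average-case remainder bound that uses the $k^{k/2}$ growth condition, and coordinatewise medians over $\bigO(\log(M/\delta))$ repetitions. The paper gives no proof of its own, only the citation plus remarks describing these same ingredients.

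The genuine gap is your last step, where the leftover logarithms are to be ``tracked into the claimed form.'' That is not bookkeeping, because those factors are forced by the construction.
\begin{itemize}
\item The $k$-th remainder term has typical size about $(C\,m\,c\sqrt{M}\,\|\vb{y}\|_\infty)^k$. The factor $m^k$ from $\sum_\ell |a^{(2m)}_\ell|\,\ell^k$ is tight up to $C^k$, so convergence of the tail needs $\|\vb{y}\|_\infty \lesssim 1/(m c\sqrt{M})$.
\item Resolving each component to $\eps$ then needs $S \gtrsim 1/(\eps\|\vb{y}\|_\infty) \gtrsim m c\sqrt{M}/\eps$. Note that your stated $S$ drops the factor $c$.
\item Making $S\,\mathbb{E}|R|$ small needs $m \gtrsim \log(c\sqrt{M}/\eps)$. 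Even optimizing $m$ jointly with the grid scale, this analysis leaves a $\Theta(\log(c\sqrt{M}/\eps))$ factor.
\item Chaining $\Theta(S)$ fractional phase queries requires per-query precision $\eps' \lesssim 1/S$, which adds a further $\log S$.
\end{itemize}
These factors are polylogarithmic in $c\sqrt{M}/\eps$, and they \emph{multiply} the $\log(M/\delta)$ from median boosting. No accounting turns them into $\bigO(c\sqrt{M}\log(M/\delta)/\eps)$.

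What your sketch actually establishes is $\widetilde{\bigO}(c\sqrt{M}\log(M/\delta)/\eps)$. As far as I recall, this is how the result is stated in the cited sources: Gilyén et al.\ describe their algorithm as optimal up to polylogarithmic factors. The tilde-free version in the statement needs an ingredient you have not supplied, and the paper does not supply one either. Either weaken your conclusion to $\widetilde{\bigO}$ or identify a genuinely new idea that removes the $\log(c\sqrt{M}/\eps)$ dependence.

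Two smaller points:
\begin{itemize}
\item The state error is controlled by $\sqrt{\mathbb{E}_{\vb{y}}|e^{\ii S R(\vb{y})}-1|^2} \le \sqrt{2S\,\mathbb{E}|R|}$, not by $S\,\mathbb{E}|R|$ itself. This is harmless once $S\,\mathbb{E}|R|$ is a small constant.
\item You should say why expanding $f$ at $\vbx$ is valid at the evaluation points $\vbx+\ell\vb{y}$. The growth bound gives this: $k^{k/2}/k!$ decays superexponentially, so the Taylor series converges everywhere, and analyticity identifies $f$ with that series.
\end{itemize}
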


    The sublinear $\bigO(\sqrt{M})$ scaling rests on \emph{quantum gradient estimation}: the parameterized unitary \cref{apd:eq:param_unitary} encodes the $M$ target expectation values as the components of the gradient $\nabla f|_{\vbzero}$ of a single function, and the optimal gradient algorithm of Gilyén et al.~\cite{gilyenOptimizingQuantumOptimization2019}---a refinement of Jordan's original single-query gradient method~\cite{jordanFastQuantumAlgorithm2005}---reads out all $M$ components together.
    Preparing $M$ index registers in uniform superposition and applying the probability oracle $F$ controlled on the index (\cref{apd:fig:circuit_protocol}) probes $f$ along all $M$ directions in one coherent query; an inverse QFT then demultiplexes the accumulated phases into the individual partial derivatives.
    Because the uniform superposition shares each query's amplitude across the $M$ directions, every component carries only $\sim 1/\sqrt{M}$ of the signal, so amplitude amplification needs $\bigO(\sqrt{M})$ rounds to resolve it.
    This $\sqrt{M}$ factor is optimal: estimating $M$ expectation values to additive error $\eps$ through a black-box state-preparation oracle requires $\Omega(\sqrt{M}/\eps)$ queries~\cite{hugginsNearlyOptimalQuantum2022} (\cref{apd:thm:multi_lower_bound}).

    The $\bigO(1/\eps)$ precision, in place of the $\bigO(1/\eps^2)$ standard quantum limit, is the Heisenberg scaling, and it comes from coherent access to the signal.
    The Hadamard test encodes $f(\vbx)$ as a measurement probability, an amplitude squared (\cref{apd:fig:circuit_oracle}), which the gradient algorithm queries coherently so that the signal phase accumulates linearly in the number of oracle calls; classical sampling, by contrast, lowers the statistical error only as $1/\sqrt{N_{\mathrm{samples}}}$, giving $\bigO(1/\eps^2)$.
    The higher-order central-difference formulas of Gilyén et al.~\cite{gilyenOptimizingQuantumOptimization2019} are what secure this $\bigO(1/\eps)$ scaling for the full gradient, a quadratic improvement in $\eps$ over Jordan's original method.
    Realizing it requires coherent access to both $F$ and $F^\dagger$, which is why the protocol is fault-tolerant rather than near-term.

\subsubsection{The algorithm for Hamiltonian snapshot estimation}
In our setting, the expectation values $\bra{\psi_0} O_j(t_j) \ket{\psi_0}$ for $j = 1, \ldots, M$
are precisely the $M$ components of $\nabla f|_{\vbzero}$ \cref{eq:dU_dxl}.
Applying \cref{thm:gradient_expectation} to the probability oracle $F(\vbx)$ of \cref{apd:cor:oracle_circuit}
therefore extracts all $M$ as a single gradient estimation.
Thus, we have the following quantum algorithm for \nameref{apd:prm:multi_time_obs}.
For \DQPT, setting $O_j = P$ for all $j$ recovers the Loschmidt echo values $\los_k(t_j)$.

\begin{theorem}[Non-adaptive snapshots estimation]\label{thm:non_adaptive_snapshots}
    Given an $n$-qubit Hamiltonian $\ham$, an initial state $\ket{\psi_0}$ prepared by a unitary $U_\psi$,
    $M$ time points $0< t_1 < \cdots < t_M \le \tmax$, and bounded Hermitian observables $\qty{O_j}_{j=1}^M$ with $\|O_j\| \le 1$,
    there exists a quantum algorithm that outputs estimates $\tilde{o}_j$ satisfying
    $|\tilde{o}_j - \bra{\psi_0} O_j(t_j) \ket{\psi_0}| \le \eps$ for all $j \in [M]$
    with probability $\ge 1 - \delta$, using:
    \begin{itemize}
        \item $\bigO(\eps^{-1}\sqrt{M}\log(M/\delta))$ queries to the probability oracle $F$,
        \item $\bigO(\eps^{-1}\sqrt{M}\log(M/\delta) \cdot \tmax)$ total Hamiltonian simulation time,
        \item $\bigO(M\log(1/\eps) + n)$ qubits.
    \end{itemize}
    Each oracle query further requires $M$ controlled rotations $e^{-2\ii x_j O_j}$.
\end{theorem}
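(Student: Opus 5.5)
The plan is to chain the three ingredients already established in this appendix: \cref{apd:lem:oracle_construction}, \cref{apd:cor:oracle_circuit}, and \cref{thm:gradient_expectation}. The steps are as follows.

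First, by \cref{apd:lem:oracle_construction}, the function $f(\vbx)=\tfrac12-\tfrac12\Im\bra{\psi_0}U(\vbx)\ket{\psi_0}$ has gradient $\nabla f|_{\vbzero}=(\bra{\psi_0}O_j(t_j)\ket{\psi_0})_{j=1}^M$. It is entire in $\vbx$, since each factor $e^{-2\ii x_jO_j}$ is entire for bounded $O_j$, and it satisfies the derivative growth bound $|\partial_\alpha f(\vbzero)|\le 2^m m^{m/2}$ with $c=2$.

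Second, by \cref{apd:cor:oracle_circuit}, the Hadamard-test circuit $F(\vbx)$ is a probability oracle for $f$. Each call uses one $U_\psi$, total simulation time $2\tmax$, and $M$ controlled rotations. Controlling each rotation on the index register (bitwise, $e^{-2\ii\cdot 2^b x O_j}$) makes $F$ act coherently on $\ket{\vbx}$, as the definition requires.

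Third, I would invoke \cref{thm:gradient_expectation} at $\vbx=\vbzero$ with $c=2$. This returns $\tilde{\vb g}$ with $\norm{\nabla f(\vbzero)-\tilde{\vb g}}_\infty\le\eps$ with probability at least $1-\delta$, using $\bigO(\sqrt M\log(M/\delta)/\eps)$ queries. Setting $\tilde o_j:=\tilde g_j$ gives the required estimates. Because every query of $F$ or $F^\dagger$ costs simulation time $2\tmax$, the total simulation time is this query count times $\bigO(\tmax)$. Following \cref{apd:rmk:inverse_oracle}, the $F^\dagger$ calls only contribute constant factors.

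The main obstacle is the qubit count $\bigO(M\log(1/\eps)+n)$, because the proof of \cref{thm:gradient_expectation} is not restated here. I would argue it from the structure of Gilyén et al.'s algorithm. Jordan-type gradient estimation discretizes each coordinate on a grid with $\bigO(\log(1/\eps))$ qubits, after constant rescaling by $c$ and the central-difference order. With $M$ such index registers, plus $n$ system qubits, one Hadamard-test ancilla, and $\bigO(\log\log)$-size LCU ancillas for the probability-to-phase conversion, the total is $\bigO(M\log(1/\eps)+n)$. I would also check that discretization and finite-difference errors are already absorbed into the $\eps$ guarantee of \cref{thm:gradient_expectation}, so no extra accounting is needed.
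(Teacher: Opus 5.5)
Your proposal is correct and follows essentially the same route as the paper. The paper likewise invokes \cref{apd:lem:oracle_construction} and \cref{apd:cor:oracle_circuit} to obtain a probability oracle with $c=2$ whose gradient at $\vbzero$ is the snapshot vector, applies \cref{thm:gradient_expectation} for the $\bigO(\eps^{-1}\sqrt{M}\log(M/\delta))$ query count, multiplies by the $2\tmax$ per-query simulation time, and counts $M$ index registers of $\bigO(\log(1/\eps))$ qubits plus $n$ system qubits and one Hadamard-test ancilla. Its extra discussion of the grid, linearization, and inverse-QFT readout is narrative that you rightly treat as already absorbed into the cited gradient-estimation guarantee.
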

\begin{proof}
By \cref{apd:lem:oracle_construction,apd:cor:oracle_circuit},
the circuit $F(\vbx)$ is a probability oracle for $f(\vbx)$ satisfying the derivative growth condition with constant $c = 2$, and $\partial_{x_\ell} f|_{\vbzero} = \bra{\psi_0} O_\ell(t_\ell) \ket{\psi_0}$ for all $\ell \in [M]$.

\emph{Index registers and phase accumulation.}
The gradient estimation algorithm~\cite{gilyenOptimizingQuantumOptimization2019,hugginsNearlyOptimalQuantum2022} introduces $M$ index registers, each consisting of $p = \bigO(\log(1/\eps))$ qubits, initialized in uniform superposition over the grid $G_p = \{(\mu - 2^{p-1} + \tfrac{1}{2})/2^p : \mu = 0, \ldots, 2^p-1\}$.
The joint state $\ket{\vbx} = \ket{x_1}\cdots\ket{x_M}$ ranges over $G_p^M$.
The probability oracle $F(\vbx)$ is then applied $\bigO(2^p \sqrt{M})$ times in a controlled fashion, accumulating phases proportional to $f(\vbx)$ in the index registers.
Near $\vbx = \vbzero$, $f(\vbx) \approx \sum_j x_j \bra{\psi_0} O_j(t_j) \ket{\psi_0}$ (\cref{apd:eq:gradient_expectation}), 
so the phase in the $j$th register encodes the target expectation value $\bra{\psi_0} O_j(t_j) \ket{\psi_0}$.

\emph{Linearization.}
The function $f$ is not exactly linear: higher-order terms $\partial_\alpha f$ contribute to the phase.
The derivative growth condition $|\partial_\alpha f(\vbzero)| \le c^k k^{k/2}$ (\cref{apd:lem:oracle_construction}) ensures that these contributions can be suppressed by the central-difference structure of the grid $G_p$, combined with uniform singular value amplification~\cite{gilyenOptimizingQuantumOptimization2019}.
This linearization step accounts for the $\bigO(\sqrt{M})$ factor in the query complexity: the uniform superposition over index registers gives each gradient direction amplitude $1/\sqrt{M}$, and amplitude amplification requires $\bigO(\sqrt{M})$ rounds to extract the signal.

\emph{Readout.}
An inverse quantum Fourier transform on each $p$-qubit index register, followed by computational-basis measurement, extracts the gradient components $\partial_{x_j} f|_{\vbzero} = \bra{\psi_0} O_j(t_j) \ket{\psi_0}$ to precision $\eps$ (\cref{apd:fig:circuit_protocol}).
Coordinate-wise median amplification over $\bigO(\log(M/\delta))$ independent copies boosts the joint success probability from a per-coordinate constant to $1 - \delta$ for all $j$ simultaneously.

\emph{Query complexity.}
Applying \cref{thm:gradient_expectation} with $c = 2$ yields
$\bigO(c\sqrt{M}\log(M/\delta)/\eps) = \bigO(\eps^{-1}\sqrt{M}\log(M/\delta))$ queries to $F$.
For constant failure probability ($\delta = 1/3$), this is $\bigO(\eps^{-1}\sqrt{M}\log M)$.

\emph{Per-query cost.}
By \cref{apd:cor:oracle_circuit}, the telescoping structure of the parameterized unitary \cref{apd:eq:param_unitary} ensures that each query to $F$ uses Hamiltonian simulation time $2\tmax$ and one call each to $U_\psi$ and $U_\psi^\dagger$.
The total Hamiltonian simulation time is therefore $\bigO(\eps^{-1}\sqrt{M}\log(M/\delta) \cdot \tmax)$.

\emph{Space.}
The circuit uses $n$ system qubits, $Mp = \bigO(M\log(1/\eps))$ ancilla qubits for the $M$ index registers of $p = \bigO(\log(1/\eps))$ qubits each, and $1$ Hadamard-test ancilla, totalling $\bigO(M\log(1/\eps) + n)$.
\end{proof}

A guarantee stronger than the additive-error bound of \cref{thm:non_adaptive_snapshots} is the mean-squared error $\mathrm{MSE}[\hat{o}_j] = \mathbb{E}[(\hat{o}_j - \bra{\psi_0} O_j(t_j) \ket{\psi_0})^2]$,
which averages over \emph{all} measurement outcomes rather than tolerating an arbitrarily large error on a failure branch of probability $\le 1/3$.
A failure of probability $\delta$ that incurs $\bigO(1)$ error contributes $\Omega(\delta)$ to the MSE, so $\mathrm{MSE} \le \eps^2$ forces the failure probability to satisfy $\delta \lesssim \eps^2$.
Meeting it costs the non-adaptive algorithm on two fronts.
In \emph{queries}, the requirement $\delta \lesssim \eps^2$ enters the median-amplification factor $\log(M/\delta)$, raising the count from $\bigO(\eps^{-1}\sqrt{M}\log M)$ at constant success to $\bigO(\eps^{-1}\sqrt{M}\log(M/\eps))$.
The \emph{space} cost is more consequential: resolving each gradient component to precision $\eps$ in a single readout demands $p = \bigO(\log(1/\eps))$ qubits per index register, so the $M$ registers consume $Mp = \bigO(M\log(1/\eps))$ ancilla qubits, a precision-driven overhead independent of the failure probability $\delta$.

\subsubsection{The adaptive version algorithm}
The adaptive gradient estimation algorithm of Ref.~\cite{wadaHeisenbergLimitedAdaptiveGradient2025} removes both overheads at once.
Determining each gradient component one binary digit at a time over $\ceil{\log_2(1/\eps)}$ iterative rounds, with only $p = 3$ index qubits per coordinate, brings the space down to $\bigO(M)$ ancilla qubits, while a geometric schedule of per-round failure probabilities yields the clean Heisenberg query count $\bigO(\eps^{-1}\sqrt{M}\log M)$ with no $\log(1/\eps)$ factor (\cref{apd:cor:adaptive_snapshots}).
The algorithm maintains temporal estimates $\tilde{u}_j^{(q)} \in [-1,1]$ (initialized to $\tilde{u}_j^{(0)} = 0$) and refines them at each round $q$: the residuals $\bra{\psi_0} O_j(t_j) \ket{\psi_0} - \tilde{u}_j^{(q)}$, which are $\bigO(1/2^q)$ by induction, are amplified by a factor of $2^q$ via $\bigO(2^q \sqrt{M})$ queries to the probability oracle, bringing them back to $\bigO(1)$ where the $3$-qubit index registers can resolve them.
After measurement and an inverse QFT, the estimates are updated as $\tilde{u}_j^{(q+1)} = \tilde{u}_j^{(q)} + \pi\, 2^{-q} g_j^{(q)}$.

\begin{corollary}[Adaptive snapshots estimation]\label{apd:cor:adaptive_snapshots}
    Given an $n$-qubit Hamiltonian $\ham$, an initial state $\ket{\psi_0}$ prepared by a unitary $U_\psi$,
    $M$ time points $0 < t_1 < \cdots < t_M \le \tmax$, and bounded observables $O_1, \ldots, O_M$ with $\|O_j\| \le 1$,
    there exists a quantum algorithm that outputs estimators $\hat{o}_j$ satisfying the Heisenberg-limited root mean squared error bound
    \begin{equation}\label{eq:adaptive_mse}
        \max_{j \in [M]}\; \mathbb{E}\bigl[(\hat{o}_j - \bra{\psi_0} O_j(t_j) \ket{\psi_0})^2\bigr] \le \eps^2,
    \end{equation}
    using:
    \begin{itemize}
        \item $\bigO(\eps^{-1}\sqrt{M}\log M)$ queries to the probability oracle $F$ (\cref{thm:gradient_expectation}),
        \item $\bigO(\eps^{-1}\sqrt{M}\log M \cdot \tmax)$ total Hamiltonian simulation time ,
        \item $\bigO(M + n)$ qubits ($3M$ ancilla qubits for $M$ index registers of $p = 3$ qubits each, independent of~$\eps$).
    \end{itemize}
\end{corollary}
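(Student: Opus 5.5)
The plan is to feed the probability oracle $F(\vbx)$ of \cref{apd:cor:oracle_circuit} into the adaptive gradient-estimation scheme of Ref.~\cite{wadaHeisenbergLimitedAdaptiveGradient2025}. The only problem-specific inputs are then the gradient identity and the derivative-growth bound of \cref{apd:lem:oracle_construction}. Concretely, $\partial_{x_j} f|_{\vbzero} = \bra{\psi_0}O_j(t_j)\ket{\psi_0}$, and $|\partial_\alpha f(\vbzero)|\le 2^{m}m^{m/2}$ with $c=2$.

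\textbf{Round structure.} I will run rounds $q=0,\dots,q_{\max}=\ceil{\log_2(1/\eps)}$ and maintain the invariant $|\tilde u_j^{(q)}-u_j|\le 2^{-q}$ on a good event, where $u_j:=\bra{\psi_0}O_j(t_j)\ket{\psi_0}$.

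At round $q$, I will apply the classical phase correction $R(\tilde u^{(q)})$ to the index registers. This effectively implements the shifted function $f_q(\vbx)=2^{q}\bigl(f(2^{-q}\vbx)-\sum_j \tilde u_j^{(q)}2^{-q}x_j\bigr)$. Its gradient is the residual $2^q(u-\tilde u^{(q)})$, which is $\bigO(1)$ by the invariant. Its higher derivatives obey the same growth bound, because rescaling by $2^{-q}$ only shrinks order-$m\ge2$ derivatives.

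A phase oracle for $f_q$ to constant precision costs $\bigO(2^q)$ queries to $F$ and $F^\dagger$, via the probability-to-phase conversion of \cref{apd:rmk:inverse_oracle}. Then \cref{thm:gradient_expectation}, at constant precision with $p=3$ index qubits, yields $g^{(q)}$ with $\|g^{(q)}-2^q(u-\tilde u^{(q)})/\pi\|_\infty\le 1/4$ using $\bigO(2^q\sqrt{M}\log(M/\delta_q))$ queries. The update $\tilde u^{(q+1)}=\tilde u^{(q)}+\pi2^{-q}g^{(q)}$ then restores the invariant at level $2^{-(q+1)}$, after adjusting constants or the phase normalisation.

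\textbf{Failure probabilities and MSE.} I will choose $\delta_q=c_0 8^{-(q_{\max}-q)}$. If round $q$ is the first to fail, the final error is at most $\bigO(2^{-q})$: the estimate is still clipped to $[-1,1]$, and the later corrections form a geometric series. Hence
\begin{equation*}
\mathbb{E}[(\hat o_j-u_j)^2]\le \eps^2+\sum_q \delta_q\, \bigO(4^{-q}) .
\end{equation*}
Substituting $\delta_q$, the sum becomes $\sum_q c_0 8^{q-q_{\max}}4^{-q}=c_0 8^{-q_{\max}}\sum_q 2^{q}=\bigO(c_0 4^{-q_{\max}})=\bigO(c_0\eps^2)$. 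This bounds the MSE by $\eps^2$ for a suitable constant $c_0$.

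\textbf{Query count.} The total is
\begin{equation*}
\sum_q 2^q\sqrt{M}\bigl(\log M+(q_{\max}-q)\log 8\bigr).
\end{equation*}
The $\log M$ part sums to $\bigO(\eps^{-1}\sqrt{M}\log M)$. The second part is $2^{q_{\max}}\sqrt M\sum_{s}s2^{-s}=\bigO(\eps^{-1}\sqrt M)$, so no $\log(1/\eps)$ factor appears. By \cref{apd:cor:oracle_circuit}, each query costs simulation time $2\tmax$, which gives the stated simulation time. The qubit count is $3M$ index qubits plus $n$ system qubits plus one ancilla.

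\textbf{Main obstacle.} The delicate step is showing that a constant-precision gradient estimate of the rescaled, shifted function $f_q$ really costs only $\bigO(2^q\sqrt M)$ queries, uniformly in $q$. This requires that the derivative-growth constant and the analyticity radius are not degraded by the shift. It also requires that the linear phase correction can be applied purely classically on the index registers without touching $F$. I would verify this directly from the central-difference analysis in Ref.~\cite{gilyenOptimizingQuantumOptimization2019}. Failing that, I would invoke Theorem~1 of Ref.~\cite{wadaHeisenbergLimitedAdaptiveGradient2025} as a black box, since its hypotheses are exactly the growth condition already established.
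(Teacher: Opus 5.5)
Your proposal is essentially the paper's proof. It uses the same $3$-qubit rounds, the same failure schedule $\delta_q = c_0 8^{-(q_{\max}-q)}$, and the same geometric sum, so no $\log(1/\eps)$ factor appears. It also has the same $2\tmax$ per-query simulation cost, the same $3M+n+1$ qubits, and the same ultimate reliance on Theorem~1 of Ref.~\cite{wadaHeisenbergLimitedAdaptiveGradient2025}. Your first-failure bound $\sum_q\delta_q\,\bigO(4^{-q}) = \bigO(c_0\eps^2)$ is a correct explicit version of the probability-tree argument the paper only cites; use the success-branch error $2^{-(q_{\max}+1)}\le\eps/2$ so the constants close.

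Do not, however, rely on your self-contained justification of the per-round cost. For $f_q(\vbx)=2^{q}\bigl(f(2^{-q}\vbx)-2^{-q}\sum_j \tilde u_j^{(q)}x_j\bigr)$, the prefactor and the argument rescaling cancel at first order. So $\nabla f_q(\vbzero)=u-\tilde u^{(q)}=\bigO(2^{-q})$, not $2^q(u-\tilde u^{(q)})$, and a constant-precision estimate of it cannot restore the invariant at level $2^{-(q+1)}$. If you instead drop the argument rescaling to make the gradient $2^q(u-\tilde u^{(q)})$, every higher derivative is also multiplied by $2^q$. The growth condition then no longer holds with a $q$-independent constant.

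The working mechanism is different. Subtracting the linear term leaves the higher derivatives of $f$ unchanged. The a priori bound $\|u-\tilde u^{(q)}\|_\infty\le 2^{-q}$ then lets the gradient estimator run at target precision $\Theta(2^{-q})$, using phase amplification $\Theta(2^q)$ rather than an argument rescaling by $2^{-q}$, and still read out with only $3$ index qubits. That is exactly what your black-box fallback, and the paper, take from Ref.~\cite{wadaHeisenbergLimitedAdaptiveGradient2025}.
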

\begin{proof}

    \emph{Query complexity.}
    In the $q$th round, gradient estimation to precision $2^{-q}$ uses $\bigO(2^q \sqrt{M} \log(M/\delta^{(q)}))$ queries to $F$, where the $\log(M/\delta^{(q)})$ factor arises from coordinate-wise median amplification over the $M$ gradient components~\cite{wadaHeisenbergLimitedAdaptiveGradient2025}.
    The failure probability is scheduled geometrically as $\delta^{(q)} = c_0 / 8^{q_{\max} - q}$ for a constant $c_0 \in (0, 3/8(1+\pi)^2]$, so that $\log(M/\delta^{(q)}) = \bigO(\log M + (q_{\max} - q))$.
    The total query count is
    \begin{equation}
        \sum_{q=0}^{q_{\max}} \bigO\!\left(2^q \sqrt{M}\,(\log M + (q_{\max} - q))\right)
        = \bigO\!\left(\frac{\sqrt{M}}{\eps}\,\log M\right),
    \end{equation}
    where the geometric series $\sum_{q} 2^q$ is dominated by its last term $\bigO(1/\eps)$, and $\sum_q (q_{\max} - q)\,2^q = \bigO(1/\eps)$ converges (its weight concentrates at large $q$, where $q_{\max} - q = \bigO(1)$), so no $\log(1/\eps)$ correction arises, recovering the clean Heisenberg scaling of Ref.~\cite[Theorem~1]{wadaHeisenbergLimitedAdaptiveGradient2025}.

    \emph{Space.}
    Each of the $M$ index registers uses $p = 3$ qubits (independent of $\eps$), giving $3M$ ancilla qubits plus $n$ system qubits and $1$ Hadamard-test ancilla, for a total of $\bigO(M + n)$.

    \emph{MSE guarantee.}
    Because the MSE averages over all measurement outcomes, the geometric failure schedule of the query-complexity analysis is what secures the bound:
    early rounds, which fix the most significant bits and hence dominate the MSE, receive exponentially smaller failure budgets, while later rounds, whose errors contribute only $\bigO(2^{-2q})$, tolerate higher failure rates.
    By the probability-tree analysis of Ref.~\cite[Theorem~1]{wadaHeisenbergLimitedAdaptiveGradient2025}, the scheduled $\delta^{(q)}$ ensure that the root MSE of each estimator $\hat{o}_j$ is at most $\eps$, yielding \cref{eq:adaptive_mse} within the same $\bigO(\eps^{-1}\sqrt{M}\log M)$ query budget.
\end{proof}

\subsubsection{The algorithm for Search-DQPT}

We now assemble the snapshot estimator (\cref{apd:cor:adaptive_snapshots}) and the commutator derivative estimator (\cref{apd:lem:comm_complexity}) into a two-phase algorithm for \nameref{apd:prm:search_dqpt}, establishing the complexity claimed in the main text.
Throughout we take the subsystem size $k=\bigO(1)$, the regime in which \nameref{apd:prm:promise_local_dqpt} is \nameref{def:bqp_complete}.
Two specializations of the snapshot engine drive the search.
Setting every observable to the $k$-local projector, $O_j = \projk$, the estimator returns the subsystem echoes $\los_k(t_j) = \bra{\psi_0}\projk(t_j)\ket{\psi_0}$,
hence the magnitude $\krate(t_j) = -\tfrac1k\log\los_k(t_j)$ controlling the condition $\krate(t)\ge\xi_1$.
Setting instead $O_j = Q/\norm{Q}$ with $Q = \ii[\ham,\projk]$ (\cref{eq:comm_identity}), the same engine returns the echo derivatives $\dot{\los}_k(t_j) = \bra{\psi_0}Q(t_j)\ket{\psi_0}$,
and hence $\dkrate(t_j) = -\dot{\los}_k(t_j)/(k\,\los_k(t_j))$ wherever the echo is bounded below.
The rescaling that meets the engine's normalization $\norm{O_j}\le1$ is by $\norm{Q}\le\norm{Q}_1 = \bigO(kJ_{\max})$ (\cref{eq:comm_1norm}), an $n$-independent constant for constant $k$ and bounded couplings $J_{\max}$---not the extensive $2\norm{\ham}=\poly(n)$---so the derivative sweep costs the same as the echo sweep up to a constant factor.
Both specializations leave the oracle construction of \cref{apd:cor:adaptive_snapshots} intact, adding only $\bigO(kM)$ controlled rotations per query.
The screen tests the integrated susceptibility $J_k(t) = \dkrate(t-\deltat) - \dkrate(t+\deltat)$ of \nameref{apd:prm:promise_local_dqpt} (\cref{apd:eq:jump_def}), assembled from two derivative estimates at the fixed window offset $\deltat$.

\begin{corollary}[Complexity of \nameref{apd:prm:search_dqpt}]\label{apd:cor:search_dqpt}
    Let $\ham$ be an $n$-qubit local Hamiltonian with bounded local couplings, $\ket{\psi_0}$ an initial state, $\tmax\in\poly(n)$ a time range, $k=\bigO(1)$ the subsystem size, and $\eps>0$ a precision.
    Assume the critical times of \nameref{apd:prm:promise_local_dqpt} in $[0,\tmax]$ are separated by at least $\deltat_{\min}\in\Omega(1/\poly(n))$, with window offset $\deltat=\Theta(\deltat_{\min})$,
    and the rate function uniformly bounded, $\krate(t)\le R_0$ on $[0,\tmax]$ for a constant $R_0$ (equivalently a subsystem echo floor $\los_k(t)\ge e^{-kR_0}=\Omega(1)$, extending the well-conditioned-probe clause of \nameref{apd:prm:promise_local_dqpt} from the offsets to the cell interior).
    Then there is a quantum algorithm that outputs estimates $\tilde{t}_c$ with $\abs{\tilde{t}_c - t_c}\le\eps$ for every critical time $t_c\in[0,\tmax]$, using total Hamiltonian simulation time
    \begin{equation}\label{apd:eq:search_cost}
        \widetilde{\bigO}\!\left(\frac{\tmax^{3/2}}{\deltat_{\min}^{1/2}}\right) + \widetilde{\bigO}\!\left(N_c\,\tmax\log\frac{\deltat_{\min}}{\eps}\right)
    \end{equation}
    and $\bigO(\tmax/\deltat_{\min} + n)$ qubits, where $N_c$ denotes the number of critical times in $[0,\tmax]$.
    For $N_c = \bigO(1)$ the screening term dominates and the cost reduces to the form stated in the main text.
\end{corollary}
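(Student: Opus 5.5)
The plan is a two-phase algorithm, screening then bisection, with the snapshot engine of \cref{apd:cor:adaptive_snapshots} as the only coherent subroutine.

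\emph{Screening.} I place a uniform grid of $M_0=\Theta(\tmax/\deltat_{\min})$ points $t_j$ on $[0,\tmax]$. The engine needs estimates at the grid points and at their offsets $t_j\pm\deltat$, so it is queried at $3M_0$ time points (still $\bigO(M_0)$) and run twice.
\begin{itemize}
\item With $O_j=\projk$ it returns the echoes $\los_k$.
\item With $O_j=Q/\norm{Q}$, where $Q=\ii[\ham,\projk]$ as in \cref{eq:comm_identity}, it returns $\dot\los_k$.
\end{itemize}
For constant $k$ and bounded couplings, $\norm{Q}\le\norm{Q}_1=\bigO(kJ_{\max})$ by \cref{eq:comm_1norm}, so this rescaling costs only a constant factor.

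The target precision is constant. The echo floor $\los_k\ge e^{-kR_0}=\Omega(1)$ makes the maps $\los_k\mapsto\krate$ and $(\los_k,\dot\los_k)\mapsto\dkrate=-\dot\los_k/(k\los_k)$ Lipschitz with constant constants. The promise gaps $\xi_1-\xi_0=\Omega(1)$ and $J_1/J_0\ge c>1$ then show that a constant additive error $\eps'$ decides the \yes/\no{} status of every grid cell. The one wrinkle is that $J_0,J_1$ may grow with $n$ (as in the construction of \cref{apd:thm:bqp_hard}), so one must argue relative precision. I would handle this by estimating $\dot\los_k$ to precision $\Theta(J_0 k\los_{\min})$, which is still at most constant. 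I would invoke the engine with a root-MSE bound $\eps'$ and use Markov together with median-of-$\bigO(\log M_0)$ repetitions to get all $3M_0$ estimates correct simultaneously with constant probability.

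By \cref{apd:cor:adaptive_snapshots} this costs $\bigO(\sqrt{M_0}\log^2 M_0)$ queries, each using simulation time $2\tmax$. The total is $\widetilde{\bigO}(\sqrt{\tmax/\deltat_{\min}}\,\tmax)=\widetilde{\bigO}(\tmax^{3/2}/\deltat_{\min}^{1/2})$. The space is $\bigO(M_0+n)$ qubits.

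\emph{Localization.} Because critical times are separated by $\deltat_{\min}$ and the grid spacing is at most $\deltat_{\min}$, each critical time $t_c$ lies in a cell adjacent to a flagged grid point. No cell contains two critical times. Inside the bracket of length $\Theta(\deltat_{\min})$ I run bisection on the sign of $\dkrate$. Near a genuine cusp, $\dkrate>0$ to the left and $\dkrate<0$ to the right; this follows from the peak structure and from the fact that $\krate(t_c)\ge\xi_1$ is a local maximum, as in the construction. Each step is a single-point decision, which I make with the engine at $M=\bigO(1)$ points, i.e.\ the commutator estimator with Heisenberg scaling. It costs $\bigO(\tmax)$ simulation time times a $\log$ factor for the union bound over all steps.

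After $\bigO(\log(\deltat_{\min}/\eps))$ halvings the interval has width $\eps$, so the localization cost is $\widetilde{\bigO}(N_c\tmax\log(\deltat_{\min}/\eps))$. Adding the two phases gives \cref{apd:eq:search_cost}. When $N_c=\bigO(1)$ the first term dominates.

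\emph{Main obstacle.} The hard step is making bisection rigorous. A sign change of $\dkrate$ only certifies a local extremum of $\krate$, so the argument needs the signs to be resolvable at constant margin all the way down to width $\eps$. As the midpoint approaches $t_c$, $|\dkrate|$ may be small, since the kink may only be an integrated jump across the window. My first attempt is to use the promise directly: test the integrated susceptibility $J_k$ at the midpoint instead of a pointwise sign, treating it as a robust \yes/\no{} query with the constant-margin promises. I would keep the sign-of-$\dkrate$ test only as the heuristic that chooses which half to keep, accepting an $\eps$-resolution guarantee relative to the promise-defined critical set.
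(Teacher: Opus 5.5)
Your screening phase is the paper's: two sweeps of the adaptive snapshot engine with $O_j=P^{(k)}$ and $O_j=Q/\|Q\|$ at constant precision, median boosting over $\mathcal{O}(\log M_0)$ runs, and cost $\widetilde{\mathcal{O}}(\sqrt{M_0}\,T)$. Querying the offsets $t_j\pm\delta t$ as extra points instead of aligning them with the grid is immaterial. Your localization phase is also the paper's bisection on the sign of $\dot r_k$. The gap is that you never establish the one property this bisection needs, and the fallback you propose cannot supply it. Testing the integrated susceptibility $J_k(t)=\dot r_k(t-\delta t)-\dot r_k(t+\delta t)$ at a bisection midpoint gives no directional information. $J_k$ is a window quantity of width $2\delta t=\Theta(\delta t_{\min})$, much wider than the scale $\epsilon$ you are trying to resolve. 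So for midpoints within $o(\delta t)$ of $t_c$ it takes essentially the same value on both sides of $t_c$. Read as a promise query, it either falls outside the promise at a non-critical midpoint, or (if non-critical times are \textsc{no} points) answers \textsc{no} on both halves. Demoting the sign test to a ``heuristic'' and settling for a guarantee ``relative to the promise-defined critical set'' therefore leaves $|\tilde t_c-t_c|\le\epsilon$ unproved.

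The paper closes this step with a structural claim about the kink, not a windowed test. Inside the flagged bracket, $\dot r_k$ is positive left of $t_c$ and negative right of it, with $|\dot r_k|=\Omega(1)$ all the way up to $t_c$; the paper attributes this to the slope jump $J_1$. The uniform floor $\mathcal{L}_k\ge e^{-kR_0}$ on all of $[0,T]$ is exactly the hypothesis that extends probe conditioning to the cell interior. It makes $\dot r_k=-\dot{\mathcal{L}}_k/(k\mathcal{L}_k)$ estimable to constant additive precision at every midpoint. Each of the $\mathcal{O}(\log(\delta t_{\min}/\epsilon))$ steps is then a constant-margin sign decision at $\mathcal{O}(T)$ simulation time. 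Your doubt is legitimate: the promise only constrains $\dot r_k$ at $t_c\pm\delta t$, so this margin is effectively an extra regularity assumption on the kink. But then you should adopt and state it, not replace it.

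Two smaller points. First, your ``wrinkle'' runs the wrong way. With bounded couplings and the echo floor, $|\dot r_k|\le\|Q\|/(k\mathcal{L}_{\min})=\mathcal{O}(1)$, so $J_0,J_1$ cannot grow with $n$. The Feynman--Kitaev hardness instance has couplings up to $\Theta(N')$ and lies outside this corollary. What constant-precision screening actually needs is $J_1-J_0=\Omega(1)$; precision $\Theta(J_0k\mathcal{L}_{\min})$ would cost an extra factor $1/J_0$ if $J_0=o(1)$. Second, take the grid spacing strictly below $\delta t_{\min}$ (the paper uses $\delta t_{\min}/3$). With spacing exactly $\delta t_{\min}$, a closed cell can contain two critical times at its endpoints.
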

\begin{proof}
    The algorithm screens a uniform grid for susceptibility spikes at constant precision, then bisects each flagged cell to precision $\eps$.

    \emph{Grid and promise structure.}
    Place a uniform grid $\qty{t_j = j h}_{j=0}^{M_0}$ on $[0,\tmax]$ with spacing $h := \deltat_{\min}/3$, so $M_0 = 3\tmax/\deltat_{\min} = \Theta(\tmax/\deltat_{\min})$.
    By the separation promise, consecutive critical times differ by at least $\deltat_{\min} = 3h$, so each critical time occupies a distinct cell with at least one empty cell between any two.
    Choosing the window offset $\deltat$ to be an integer multiple of $h$ (consistent with $\deltat=\Theta(\deltat_{\min})$) aligns the probe points $t_j\pm\deltat$ with the grid, so one derivative sweep supplies every $\dkrate(t_j)$ and the susceptibility $J_k(t_j) = \dkrate(t_j-\deltat) - \dkrate(t_j+\deltat)$ (\cref{apd:eq:jump_def}) is read off by combining grid values.
    At a critical time $t$ the \yes{} conditions $\krate(t)\ge\xi_1$ and $J_k(t)\ge J_1$ hold (deep echo suppression and a susceptibility spike);
    away from any critical time the \no{} disjunction $\krate(t)\le\xi_0$ or $J_k(t)\le J_0$ holds, with gaps $\xi_1-\xi_0=\Omega(1)$ and $J_1/J_0\ge c>1$.

    \emph{Step 1. Screening.}
    Apply \cref{apd:cor:adaptive_snapshots} once with $O_j=\projk$ to estimate every $\krate(t_j)$, and once with $O_j = Q/\norm{Q}$ to estimate every $\dkrate(t_j)$, each to a fixed constant precision $\eps_0=\Theta(1)$ set below the promise gaps.
    The conversion $\dkrate=-\dot{\los}_k/(k\los_k)$ at the probes is well-conditioned: the well-conditioned-probe promise $\krate(t\pm\deltat)\le R_0$ of \nameref{apd:prm:promise_local_dqpt} gives an echo floor $\los_k(t\pm\deltat)\ge e^{-kR_0}=\Omega(1)$ for constant $k$, so constant additive precision on $\dot{\los}_k$ yields constant additive precision on each $\dkrate(t_j\pm\deltat)$, and hence on $J_k(t_j)$.
    Flag a cell when its magnitude estimate clears $\xi_1$ and its susceptibility estimate clears $J_1$.
    Because the gaps are $\Theta(1)$, constant precision $\eps_0$ with $\bigO(\log M_0)$ median boosting separates \yes{} from \no{} at every grid point with high probability by a union bound over the $M_0=\poly(n)$ cells, so the flagged set is exactly the $N_c$ cells containing the critical times.
    By \cref{apd:cor:adaptive_snapshots} at $M=M_0$ and constant precision, each sweep uses $\widetilde{\bigO}(\sqrt{M_0})$ oracle queries and total simulation time $\widetilde{\bigO}(\sqrt{M_0}\,\tmax) = \widetilde{\bigO}(\tmax^{3/2}/\deltat_{\min}^{1/2})$, with $\bigO(M_0+n)=\bigO(\tmax/\deltat_{\min}+n)$ qubits; the boosting and union bound are absorbed into $\widetilde{\bigO}$.

    \emph{Step 2: bisection.}
    Each flagged cell brackets a single critical time, at which $\krate$ has a kink and $\dkrate$ jumps from positive (rising suppression) to negative.
    Bisecting on the sign of $\dkrate$ at the bracket midpoint localizes this kink to precision $\eps$ in $\ceil{\log_2(h/\eps)} = \bigO(\log(\deltat_{\min}/\eps))$ rounds, keeping the half on which the sign persists.
    Each round is a single-point estimate of $\dkrate$, costing $\bigO(\eps_0^{-1}\tmax)=\bigO(\tmax)$ simulation time and $\bigO(1+n)$ qubits.
    The sign test stays well-conditioned because the slope jump $J_1$ keeps $\abs{\dkrate}=\Omega(1)$ with a definite sign up to the kink, while the uniform echo floor $\los_k\ge e^{-kR_0}=\Omega(1)$ lets a single-point estimate of $\dot{\los}_k = \bra{\psi_0}Q(t)\ket{\psi_0}$ resolve that sign at constant precision, even as the midpoints approach the suppressed-echo centre $t_c$.
    Over all $N_c$ cells, with the per-round success boosted to clear a union bound over the $\bigO(N_c\log(\deltat_{\min}/\eps))$ sign tests, this costs $\widetilde{\bigO}(N_c\,\tmax\log(\deltat_{\min}/\eps))$.

    \emph{Total cost.}
    Summing the two phases gives \cref{apd:eq:search_cost}.
    The screening term dominates whenever $N_c\log(\deltat_{\min}/\eps) = \widetilde{\bigO}(\sqrt{\tmax/\deltat_{\min}})$, in particular for the generic case of $\bigO(1)$ critical times, recovering the main-text bound; the qubit count is fixed by the screening phase.
    Finally, each probability-oracle query embeds a time evolution to time at most $\tmax$, which by the no-fast-forwarding theorem (\cref{apd:thm:no_fast_forwarding}, with locality degree $d=\bigO(1)$) costs $\Omega(\tmax)$ Hamiltonian-simulation time, so the linear $\tmax$ factor in each phase is unavoidable.
\end{proof}

\section{Optimality and quantum speedups}\label{apd:lower_bounds}
Given the quantum algorithm (\cref{apd:cor:adaptive_snapshots}) for \nameref{apd:prm:multi_time_obs} presented in the previous section,
we show its optimality in \cref{apd:sec:quantum_lower_bounds} and its speedup over classical algorithms in \cref{apd:sec:classical_lower_bound} by proving the corresponding lower bounds.

\subsection{Optimality: quantum lower bounds}\label{apd:sec:quantum_lower_bounds}

We first show the optimality of our quantum algorithm (the tightness of upper bound \cref{apd:cor:adaptive_snapshots}).
We first recall the relevant lower bounds for multi-dimensional amplitude estimation (\cref{thm:query_lower_bound_multi_amp_est}) and multi-observable estimation (\cref{thm:query_lower_bound}), 
then derive the constructive $\bigO(1)$-local-Hamiltonian, single-observable lower bound via Feynman--Kitaev reduction (\cref{apd:thm:local_lower_bound}).

\subsubsection{Lower bound for multi-observable estimation}
\begin{lemma}[$\ell_\infty$ lower bound for multidimensional amplitude estimation {\cite{vanapeldoornQuantumProbabilityOracles2021}}]\label{thm:query_lower_bound_multi_amp_est}
    Let $\eps\in(0,1/(3\sqrt{M}))$ and let $M$ be a positive integer power of two.
    There exists a matrix $A\in\qty{-1,1}^{M\times M}$
    such that any algorithm that for every $\vbp\in\Delta^M$ (with success probability at least $2/3$) outputs a $\tilde{\vbq}\in\Delta^M$
    for which $\norm{A\vbp-\tilde{\vbq}}_{\infty}\le \eps$, uses at least $\Omega(\sqrt{M}/\eps)$ queries to a quantum probability oracle for $\vbp$.
\end{lemma}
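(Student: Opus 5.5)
The plan is to reduce from a bit-string learning problem in which every bit is hidden in the amplitude of a probability oracle, and then bound its query cost with a quantum adversary argument.

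\emph{Choice of $A$ and the hard instances.} Since $M$ is a power of two, I take $A$ to be the Sylvester--Hadamard matrix $\hdm_M \in \{-1,1\}^{M\times M}$, which satisfies $A^2 = M\,\I$. Its first row is all ones, so $A\vb{1} = M e_1$. For a sign string $s\in\{\pm1\}^M$ with $s_1 = 1$, I define
\begin{equation}
    \vbp_s := \tfrac{1}{M}\bigl(\vb{1} + \delta\,A s\bigr).
\end{equation}
Two facts need to hold:
\begin{itemize}
    \item \emph{Encoding.} Using $A^2 = M\I$, we get $(A\vbp_s)_i = \delta s_i$ for $i\ne1$, so an $\ell_\infty$ estimate with $\eps<\delta/2$ recovers every bit $s_i$.
    \item \emph{Validity.} $\vbp_s$ must lie in the simplex, which needs $\delta\,\|As\|_\infty\le1$. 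I restrict $s$ to a large ``balanced'' family with $\|As\|_\infty = \bigO(\sqrt{M})$, possibly up to a $\sqrt{\log M}$ factor. For a random $s$ this holds with high probability by Hoeffding applied to each row of $A$. Then I can take $\delta = \Theta(1/\sqrt{M})$, and this is exactly where the hypothesis $\eps<1/(3\sqrt{M})$ is used: we set $\delta = 3\eps$.
\end{itemize}
So any $\ell_\infty$ estimator for $A\vbp$ learns an $(M-1)$-bit string from a probability oracle for $\vbp_s$.

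\emph{Oracle distances.} I fix the canonical probability oracle $U_{\vbp}:\ket{0}\mapsto\sum_i\sqrt{p_i}\ket{i}\ket{\cdot}$, completed to unitaries so that $\|U_{\vbp_s}-U_{\vbp_{s'}}\|$ is controlled by $\|\sqrt{\vbp_s}-\sqrt{\vbp_{s'}}\|_2$. If $s$ and $s'$ differ in bit $i$, then $\vbp_s-\vbp_{s'} = \tfrac{2\delta}{M}Ae_i$, whose entries all have size $2\delta/M$. Since every $p_j = \Theta(1/M)$, this gives a coordinatewise square-root gap of order $\delta/\sqrt{M}$. The resulting overall oracle distance is $\Theta(\delta)$, spread evenly over the $M$ coordinates.

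\emph{Adversary bound.} I would apply the (multiplicative or general) adversary method with the relation ``differ in one bit'' on the balanced family. Each instance has about $M$ neighbours, one for each bit the algorithm must learn. Because the perturbation directions $Ae_i$ are orthogonal, an oracle query distinguishes only an $\bigO(\delta^2/M)$ share per neighbour. Learning all $M$ bits simultaneously then forces $\Omega(\sqrt{M}/\delta)=\Omega(\sqrt{M}/\eps)$ queries. This is the same $\sqrt{M}$-per-direction mechanism that makes the gradient algorithm of \cref{thm:gradient_expectation} optimal.

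\emph{Main obstacle.} I expect two difficulties, both in this last step.
\begin{itemize}
    \item \emph{Arbitrary oracle implementations.} The lower bound must hold for every unitary implementing a probability oracle for $\vbp$, not just the canonical one. The adversary argument must therefore be phrased through the state distances $\|\sqrt{\vbp_s}-\sqrt{\vbp_{s'}}\|$ and a controlled extension to unitaries, and the $\sqrt{M}$ gain from orthogonality of the $Ae_i$ must survive that generality.
    \item \emph{Balanced family.} Restricting to balanced $s$ must not break the symmetric structure the adversary weights rely on.
\end{itemize}
I would try the ``multiplicative adversary for learning a hidden string'' template first, treating the problem as $M$ independent bias-$\delta$ amplitude-estimation instances sharing one oracle.
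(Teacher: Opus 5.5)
\emph{Verdict: genuine gaps.} The paper gives no proof of this lemma; it imports van Apeldoorn's result verbatim, as Huggins et al.\ do. So your proposal is an attempt at a self-contained proof, not a reconstruction of an in-paper argument. The skeleton is sensible: Sylvester--Hadamard $A$, bits planted in Walsh coefficients, and an adversary argument exploiting orthogonality of the columns $Ae_i$. However, both load-bearing steps have gaps.

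\emph{The instance family.} First, with $s_1=1$ you get $\vb{1}^\top\vbp_s=\tfrac1M(M+\delta\,\vb{1}^\top As)=1+\delta$, so $\vbp_s\notin\Delta^M$. You must perturb only along $Ae_i$ for $i\ge2$, i.e.\ set $s_1:=0$.

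Second, Hoeffding plus a union bound gives only $\|As\|_\infty=O(\sqrt{M\log M})$, and this is sharp for typical $s$: the largest Walsh coefficient of a random sign vector is $\Theta(\sqrt{M\log M})$. You also need a constant margin, say $\delta\|As\|_\infty\le\tfrac12$, for your claims $p_j=\Theta(1/M)$ and square-root gap $\delta/\sqrt M$. So $\delta=3\eps$ is admissible only for $\eps\lesssim1/\sqrt{M\log M}$, not on the whole range $\eps<1/(3\sqrt M)$; otherwise you lose a $\sqrt{\log M}$ factor.

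Third, and more fundamentally, the full hypercube cannot be used. Suppose $\vbp_s=\vbp_0+\sum_{i\in S}s_iv_i\in\Delta^M$ for all $s\in\{\pm1\}^S$, with $(Av_i)_i=\delta$. Positivity at the worst sign pattern forces $(\vbp_0)_j\ge\sum_i|(v_i)_j|$. Summing over $j$ gives $1\ge\sum_i\|v_i\|_1\ge\sum_i|(Av_i)_i|=|S|\delta$. Any construction of your (affine) type therefore needs $\delta\le1/(M-1)$ on the full cube. In the regime that matters, $1/M\ll\eps<1/(3\sqrt M)$, you are forced onto a non-product subfamily or an encoding nonlinear in $s$, and the adversary must then cope with non-uniform neighbourhoods. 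That is exactly the step you list as an obstacle and leave open; it is where the argument actually lives.

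\emph{The adversary step.} This is asserted rather than derived. Read in norm, the heuristic ``an $O(\delta^2/M)$ share per neighbour'' contradicts your own distance computation. Writing $U_s:=U_{\vbp_s}$, we have $\|(U_s-U_{s^{\oplus i}})\ket{0}\|=\Theta(\delta)$ for every $i$ simultaneously. A hybrid (norm-based) argument gives $\sum_i\|\phi^{(T)}_s-\phi^{(T)}_{s^{\oplus i}}\|\le\sum_t\bigl(M\sum_i\|(U_s-U_{s^{\oplus i}})\phi^{(t)}_s\|^2\bigr)^{1/2}$. The right side can be as large as $TM\delta$ (for example, for queries on $\ket{0}$), while the left side need only be $\Omega(M)$, so this yields only $T=\Omega(1/\delta)$.

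The $\sqrt M$ must come from a bilinear potential, either weighted Ambainis or spectral. To first order, $I-U_s^\dagger U_{s^{\oplus i}}$ is $\delta$ times a bounded-rank operator. It depends on $i$ only linearly, through $\hat a_i:=Ae_i/\sqrt M$ up to an $i$-independent bounded reweighting. Then $\sum_i|\langle\chi|\hat a_i\rangle|^2\le\|\chi\|^2$, combined with AM--GM, caps the per-query first-order decrease at an $O(\delta/\sqrt M)$ fraction of the total weight.

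The second-order remainder costs an $O(\delta^2)$ fraction per query and gains nothing from orthogonality. Such an argument therefore yields $\Omega(\min\{\sqrt M/\delta,1/\delta^2\})$. This is the intrinsic reason a hypothesis of the form $\eps\lesssim1/\sqrt M$ appears; in your construction, positivity imposes an additional, stronger constraint on top of it.

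Three smaller points:
\begin{enumerate}
\item $\|\sqrt{\vbp_s}-\vb{1}/\sqrt M\|_2\approx\delta\sqrt M/2$ is of order one in the relevant regime. You must therefore linearize $U_s^\dagger U_{s^{\oplus i}}$, not $U_s$ around a fixed oracle.
\item Inverse and controlled queries need the same treatment.
\item Your first ``main obstacle'' is not one. The algorithm must succeed for every probability oracle for $\vbp$, so a lower bound may fix the canonical completion.
\end{enumerate}
Finally, you implicitly and correctly read $\tilde{\vbq}\in\mathbb{R}^M$. With $\tilde{\vbq}\in\Delta^M$ as printed, no estimator can meet the guarantee on your instances, since $A\vbp_s$ has entries $-\delta$.
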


Here $\Delta^M \coloneqq \qty{\vbp \in \mathbb{R}^M : p_j \ge 0,\ \norm{\vbp}_1 = 1}$ denotes the set of all probability distributions on $M$ outcomes.
The bound counts coherent queries to $U_p$ (including $U_p^\dagger$ and controlled-$U_p$), so it applies to any quantum algorithm using this access; the classical analogue, drawing i.i.d.\ samples from $p$, admits only a trivial $\tilde{O}(M/\eps^2)$ upper bound via Chernoff and a union bound~\cite{vanapeldoornQuantumProbabilityOracles2021}.
The $l_\infty$ condition $\norm{A\vbp - \tilde{\vbq}}_\infty \le \eps$ requires \emph{every} component $(A\vbp)_j$ to be estimated within~$\eps$, 
precisely the per-observable guarantee demanded by multi-observable estimation, implying the following lower bound.

\begin{lemma}[Lower bound for multi-observable estimation {\cite{hugginsNearlyOptimalQuantum2022}}]\label{thm:query_lower_bound}
    Let $M$ be a positive integer power of $2$ and let $\eps\in(0,1/(3\sqrt{M}))$.
    Let $\mathcal{A}$ be any algorithm that takes as input an arbitrary set of $M$ observables $\qty{O_j}$.
    Suppose that, for every quantum state $\ket{\psi}$, accessed via state preparation oracle $U_{\psi}$,
    $\mathcal{A}$ outputs estimates of each $\bra{\psi} O_j \ket{\psi}$ to within additive error $\eps$ (with probability at least $2/3$).
    Then, there exists a set of observables $\qty{O_j}$ such that $\mathcal{A}$ applied to $\qty{O_j}$ must use $\Omega(\sqrt{M}/\eps)$ queries to $U_{\psi}$.
\end{lemma}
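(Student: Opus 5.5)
We prove \cref{thm:query_lower_bound} by reduction from \cref{thm:query_lower_bound_multi_amp_est}. Suppose $\mathcal{A}$ estimates $\bra{\psi}O_j\ket{\psi}$ to additive error $\eps$ for every state and every observable set, using $Q$ queries to $U_\psi$. We will turn $\mathcal{A}$ into an algorithm that solves the $\ell_\infty$ multidimensional amplitude estimation task for the hard sign matrix $A\in\{-1,1\}^{M\times M}$. That algorithm will use $\bigO(Q)$ queries to a probability oracle for $\vbp$, so $Q=\Omega(\sqrt{M}/\eps)$.

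\textbf{Step 1: the state.} A probability oracle for $\vbp\in\Delta^M$ acts as
\begin{equation*}
U_p\ket{0}\ket{0}=\sum_{i=1}^{M}\sqrt{p_i}\,\ket{i}\ket{\phi_i}
\end{equation*}
for arbitrary normalized garbage states $\ket{\phi_i}$. In the discrete-distribution form of the lemma of van Apeldoorn, this is the natural oracle. We take $\ket{\psi}:=U_p\ket{0}\ket{0}$, so $U_\psi:=U_p$, and each query of $\mathcal{A}$ to $U_\psi$, $U_\psi^\dagger$, or controlled-$U_\psi$ costs exactly one query of the same kind to $U_p$.

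\textbf{Step 2: the observables.} Define the diagonal observables
\begin{equation*}
O_j:=\sum_{i=1}^M A_{ji}\op{i}\otimes I,
\end{equation*}
which have $\|O_j\|\le 1$ since $A_{ji}=\pm1$. The garbage register drops out because each $\ket{\phi_i}$ is normalized:
\begin{equation*}
\bra{\psi}O_j\ket{\psi}=\sum_i A_{ji}p_i=(A\vbp)_j.
\end{equation*}
The output of $\mathcal{A}$ is therefore a vector $\tilde{\vbq}$ with $\|A\vbp-\tilde{\vbq}\|_\infty\le\eps$ with probability $2/3$.

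\textbf{Step 3: output format.} The lemma as quoted asks that $\tilde{\vbq}\in\Delta^M$, whereas $\mathcal{A}$ returns an arbitrary real vector. I expect this mismatch to be the only real obstacle. The first thing I would try is to check that van Apeldoorn's hardness argument lower-bounds any real-vector estimator. That argument distinguishes hypercube-like instances of $\vbp$ via the $\ell_\infty$ estimate, and I expect it to place no constraint on where the estimate lives; the simplex requirement looks like a statement artifact. If that fails, a fallback is to post-process $\tilde{\vbq}$. Since $A\vbp$ is a fixed linear image of the simplex, the fallback would project $\tilde{\vbq}$ onto the convex set $A\Delta^M$ in $\ell_\infty$, which at most doubles the error, and then rescale $\eps$ by a constant.

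\textbf{Conclusion.} With the format issue settled, $\mathcal{A}$ applied to $\{O_j\}$ and the family of states $\{U_p\ket{0}\}$ solves the hard problem of \cref{thm:query_lower_bound_multi_amp_est}. Hence it must use $\Omega(\sqrt{M}/\eps)$ queries in the regime $\eps<1/(3\sqrt{M})$, with $M$ a power of two as in that lemma, which proves the claim for the observables $\{O_j\}$ constructed in Step 2.
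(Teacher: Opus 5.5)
Your proof is correct and uses the same reduction from \cref{thm:query_lower_bound_multi_amp_est} as the paper. The only difference is where the sign matrix lives. The paper sets $U_\psi = U_A(I\otimes U_p)$, where the known gadget $U_A$, controlled on the index $\ket{j}$, flips fresh qubit $i$ whenever $A_{ij}=-1$; the hard observables are then the single-qubit Paulis $Z_i$. You instead take $U_\psi=U_p$ itself and put the signs into the diagonal observables $O_j=\sum_i A_{ji}\op{i}\otimes I$ on the index register.

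The two encodings are unitarily equivalent: $U_A^\dagger Z_i U_A$ acts as your $O_i$ when the fresh qubits start in $\ket{0^M}$. Both are equally query-preserving. Yours is leaner, with no gadget and no $M$ extra qubits. The paper's choice buys $1$-local observables $Z_j=2\op{0}_j-I$ on $M$ distinct qubits. That is exactly what \cref{apd:lem:multi_quantum_lower_bound} consumes, since there a cyclic shift carries one fixed $1$-local projector onto each hard qubit. Your multi-qubit $O_j$ on the $\log_2 M$-qubit index register would first have to be conjugated by $U_A$ to be used there.

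On the output format, your first resolution is the right one, and it is what the paper's proof tacitly assumes. The constraint $\tilde{\vbq}\in\Delta^M$ cannot be meant literally. If $\vbp$ is a point mass on a column of $A$ containing a $-1$, then $A\vbp$ has an entry equal to $-1$, and no simplex vector is $\eps$-close to it. For the same reason your fallback would not help: projecting onto $A\Delta^M$ does not land in $\Delta^M$, because $A\Delta^M\not\subseteq\Delta^M$.
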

\begin{proof}[Proof of {\cite[Theorem~5]{hugginsNearlyOptimalQuantum2022}}]
    Suppose for contradiction that for every set $\{O_j\}$ and every $U_\psi$,
    algorithm $\mathcal{A}$ uses $o(\sqrt{M}/\eps)$ queries.
    Recall that in \cref{thm:query_lower_bound_multi_amp_est},
    the quantum probability oracle for $\vbp \in \Delta^M$ is a unitary $U_p$
    satisfying $U_p\ket{0} = \sum_{j=1}^{M} \sqrt{p_j}\,\ket{j}\ket{\phi_j}$
    for some states $\ket{\phi_j}$.
    Let $A \in \{-1,+1\}^{M \times M}$ be the hard matrix from
    \cref{thm:query_lower_bound_multi_amp_est};
    it serves as a fixed, known reduction gadget that converts amplitude-estimation hardness into hardness for the observable set $\qty{O_j = Z_j}$ with state-preparation oracle $U_\psi$.
    For any probability oracle $U_p$, define the state
    \begin{equation}\label{eq:huggins_state}
        \ket{\psi(U_p)}
        := U_A\,(I \otimes U_p)\ket{\vb{0}},
    \end{equation}
    where $U_A = \sum_{j=1}^{M}
    \bigl(\bigotimes_{i=1}^{M} X_i^{\,\delta_{A_{ij},\,-1}}\bigr)
    \otimes \ketbra{j}{j} \otimes I$
    flips qubit~$i$ whenever $A_{ij} = -1$.
    A direct computation gives
    \begin{equation}
        \bra{\psi(U_p)} Z_i \ket{\psi(U_p)}
        = \sum_{j} p_j\, A_{ij} = (Ap)_i.
    \end{equation}
    Setting $O_j = Z_j$ and $U_\psi = U_A\,(I \otimes U_p)$,
    algorithm $\mathcal{A}$ estimates every $(A\vbp)_j$ to error $\eps$
    using $o(\sqrt{M}/\eps)$ queries to $U_\psi$,
    hence to $U_p$ (since $U_A$ is known).
    This contradicts \cref{thm:query_lower_bound_multi_amp_est}.
    Since the argument quantifies over any $\mathcal{A}$ with coherent $U_\psi$-access, the bound also applies to the non-coherent (classical prepare-and-measure) algorithms, cf. \cref{def:pm_model}.
\end{proof}

If we allow the observables in \nameref{apd:prm:multi_time_obs} to be different at each time point, the lower bound of \cref{thm:query_lower_bound} applies directly, since estimating $\bra{\psi_0} O_j(t_j) \ket{\psi_0}$ for $M$ different observables $O_j$ is a special case of multi-observable estimation.
We now establish the lower bound for estimating a single \emph{fixed observable} at multiple time points of Hamiltonian dynamics, 
which is the common scenario in quantum experiments. 

\begin{lemma}[Lower bound for single-observable multi-time snapshots estimation]
    \label{apd:lem:multi_quantum_lower_bound}
    \label{apd:thm:multi_lower_bound}
    For $\eps \in (0, \eps_0/\sqrt{M})$ with $\eps_0 > 0$ a sufficiently small absolute constant, there exist an $n$-qubit
    Hamiltonian $\ham$, a single $1$-local projector $P$, and $M$ time points
    $\{t_j\}$ such that any quantum algorithm estimating the snapshot expectations
    $\bra{\psi_0}\, e^{\ii \ham t_j}\, P\, e^{-\ii \ham t_j}\, \ket{\psi_0}$
    at all $M$ time points to additive error $\eps$ requires
    $\Omega(\sqrt{M}/\eps)$ queries to $U_\psi$ in the worst case
    over the (oracle-prepared) initial state $\ket{\psi_0}$.
\end{lemma}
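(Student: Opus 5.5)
The plan is to reduce from the multi-observable lower bound of \cref{thm:query_lower_bound}, instantiated with the Pauli observables $O_j = Z_j$ that its proof constructs. The main ingredient is a Hamiltonian under which a single $1$-local projector $P = \op{0}_1$, conjugated to $M$ distinct times, sweeps across the $M$ qubits.

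\textbf{Cyclic-shift construction.} Take $n = M+1$ qubits and the cyclic left-shift permutation $\Pi$. Choose a Hermitian $H_\Pi$ with $e^{-\ii H_\Pi} = \Pi$, using the principal logarithm on the momentum eigenspaces of $\Pi$; this is possible because $\Pi$ is unitary with eigenvalues that are roots of unity. At the integer times $t_j = j$ we get
\begin{equation*}
e^{\ii H_\Pi t_j} P e^{-\ii H_\Pi t_j} = \Pi^{-j} P \Pi^j = \op{0}_{j+1}.
\end{equation*}
This identity only needs checking on computational basis states. It follows that each snapshot equals $\tfrac12(1 + \bra{\psi_0}Z_{j+1}\ket{\psi_0})$.

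\textbf{Transferring the bound.} Embed the hard instance of \cref{thm:query_lower_bound}, namely $\ket{\psi_0} = U_A(I\otimes U_p)\ket{\vb 0}$, on qubits $2,\dots,M+1$. Place the probability-oracle ancillae on extra qubits that $\Pi$ does not touch, or equivalently extend the shift to act only on the first $M+1$ qubits. Any algorithm that estimates all $M$ snapshots to error $\eps$ then yields every $(A\vbp)_i$ to error $2\eps$, using the same number of $U_\psi$ queries; the Hamiltonian oracle is free, since it is independent of the unknown $\vbp$. Applying \cref{thm:query_lower_bound_multi_amp_est} with error $2\eps < 1/(3\sqrt M)$ gives $\Omega(\sqrt M/\eps)$. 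This is why $\eps_0$ must be a small constant, for example $\eps_0 = 1/6$. The factor of two and the requirement that $M$ be a power of two are handled by padding.

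\textbf{Main obstacle: locality.} The statement only asks for an $n$-qubit Hamiltonian, but $H_\Pi$ is nonlocal. If a local instance is wanted, which is what the paper's Methods sketch promises, I would embed $\Pi^M$ as a circuit of nearest-neighbour SWAPs into a Feynman--Kitaev clock Hamiltonian. I would use \cref{thm:circuit_hamiltonian} for this, so that the clock is binomial. I would then pick times $\sin^2 t_m = (2m-1)/(2M)$, where the clock mass concentrates near the block of gates corresponding to shift $m$.

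The hard step is showing that the resulting mixing matrix, which gives snapshots as weighted combinations of the $\langle Z_i\rangle$, is well conditioned. The clock width $\sqrt{N}$ has to be small compared with the block length. With $N = \Theta(M^2)$ gates, each block has $M$ steps and the width is of order $M$, so only constant concentration is available. I would therefore try padding each block with idle gates, as in \cref{apd:lem:idle_dial}, so that Hoeffding (\cref{apd:fct:binomial}(iii)) makes the leakage into neighbouring blocks small enough to give strict diagonal dominance. Then $\|W^{-1}\|_\infty = \bigO(1)$ converts $\eps$-accurate snapshots into $\bigO(\eps)$-accurate Pauli estimates, and the same lower bound follows.
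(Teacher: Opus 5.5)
Your proposal is correct and follows the paper's proof: the cyclic-shift generator $H_\Pi$ on $M+1$ qubits at integer times $t_j=j$ gives $\Pi^{-j}P\Pi^{j}=\ket{0}\bra{0}_{j+1}$, and $Z=2P-I$ transfers the Apeldoorn/Huggins $\Omega(\sqrt{M}/\eps)$ bound, with the factor $2$ absorbed into $\eps_0$. You also tighten three points the paper leaves implicit (placing the oracle's index/garbage registers outside the shifted qubits, noting that the known Hamiltonian oracle costs no $U_\psi$ queries, and padding $M$ to a power of two). Your locality discussion belongs to the paper's separate Feynman--Kitaev theorem, and there idle padding is unnecessary: with $N=M^2$ each block already captures about $0.68>\tfrac12$ of the clock mass, which gives strict diagonal dominance.
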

\begin{proof}
    We reduce from \cref{thm:query_lower_bound}.
    By that lemma, for $O_j^* = Z_j$ (Pauli-$Z$ on qubit~$j$, $j \in [M]$),
    estimating $\bra{\psi}Z_j\ket{\psi}$ for all $j$ to additive error $\eps$
    requires $\Omega(\sqrt{M}/\eps)$ queries to $U_\psi$ in the worst case.
    Since $Z_j = 2\ketbra{0}{0}_j \otimes I - I$, this is equivalent to estimating
    the projector expectation values $\bra{\psi}\ketbra{0}{0}_j\otimes I\ket{\psi}$
    to precision $\eps/2$.

    \emph{Construction.}
    Take $n = M+1$ qubits and set $P = \ketbra{0}{0}_1 \otimes I$.
    Define the cyclic left-shift operator $\Pi$ on $n$ qubits by
    \begin{equation}\label{eq:cyclic_shift_def}
        \Pi\ket{x_1, x_2, \ldots, x_n} = \ket{x_2, x_3, \ldots, x_n, x_1},
    \end{equation}
    i.e.\ the content of qubit~$j$ moves to position~$j{-}1$ (cyclically).
    Concretely, 
    \begin{equation}
        \Pi:= \mathrm{SWAP}_{n-1,n} \cdots \mathrm{SWAP}_{2,3}\, \mathrm{SWAP}_{1,2}.
    \end{equation}
    Since $\Pi^n = I$, the eigenvalues of $\Pi$ are the $n$-th roots of unity
    $\omega^k = e^{2\pi\ii k/n}$ ($k = 0, \ldots, n{-}1$),
    and the Hermitian generator $\ham = H_\Pi$ satisfying $e^{-\ii H_\Pi} = \Pi$ is
    \begin{equation}\label{eq:cyclic_hamiltonian}
        H_\Pi = -\sum_{k=0}^{n-1} \frac{2\pi k}{n}\, \mathcal{P}_k,
    \end{equation}
    where $\mathcal{P}_k$ projects onto the $\omega^k$-eigenspace of $\Pi$.

    At integer times $t_j = j$ for $j = 1,\ldots,M$,
    the evolution $e^{-\ii H_\Pi j} = \Pi^j$ shifts each qubit label
    by $j$ positions, bringing the content of qubit~$j+1$ to position~$1$.
    Conjugating the projector gives
    \begin{equation}\label{eq:cyclic_shift_projector}
        e^{\ii H_\Pi j}\, P\, e^{-\ii H_\Pi j}
        = \Pi^{-j}\, P\, \Pi^j
        = \ketbra{0}{0}_{j+1} \otimes I,
    \end{equation}
    since $P$ acts on position~$1$, which after the shift $\Pi^j$ holds the original content of qubit~$j+1$.
    Write
    \begin{equation}
        o(t_j) \coloneqq \bra{\psi_0}\, e^{\ii H_\Pi t_j}\, P\, e^{-\ii H_\Pi t_j}\, \ket{\psi_0} = \bra{\psi_0}\, \ketbra{0}{0}_{j+1}\otimes I\, \ket{\psi_0}
    \end{equation}
    for the snapshot expectation of $P$ at time $t_j$.

    \emph{Reduction.}
    Suppose an algorithm estimates $o(t_j)$ for all $j \in [M]$ to additive error $\eps/2$.
    Since $Z_{j+1} = 2\ketbra{0}{0}_{j+1}\otimes I - I$, this yields estimates of $\bra{\psi_0}Z_{j+1}\ket{\psi_0}$ to precision $\eps$.
    Each snapshot evaluation uses one query to $U_\psi$
    (\cref{apd:cor:oracle_circuit}),
    so the snapshot-estimation algorithm directly yields
    a multi-observable estimation algorithm with the same query count.
    By \cref{thm:query_lower_bound}, the number of queries is $\Omega(\sqrt{M}/\eps)$
    (the constant $\eps_0$ in the hypothesis absorbs the factor-$2$ loss from $Z = 2P - I$, keeping the amplified precision within the regime of that lemma).
\end{proof}

\subsubsection{Local Hamiltonian bound via Feynman--Kitaev construction}
The hard instance used above relies on $H_\Pi$, the generator of the cyclic shift on $M+1$ qubits. Although the unitary $\Pi$ itself is locally implementable as a depth-$M$ circuit of nearest-neighbor SWAPs, its continuous-time generator 
$H_{\Pi}$ \cref{eq:cyclic_hamiltonian}
is neither $k$-local nor geometrically local: 
the spectral projectors $\mathcal{P}_k$ act on global momentum eigenspaces. The lower bound of \cref{apd:thm:multi_lower_bound} is therefore an existential statement for the class of general Hamiltonians and does not, on its own, specialize to local ones. We now upgrade it to a constructive $\bigO(1)$-local hard instance via the Feynman--Kitaev construction.

\begin{theorem}
    [Local Hamiltonian version of \cref{apd:lem:multi_quantum_lower_bound}]
    \label{apd:thm:local_lower_bound}
    Let $M$ be a positive integer power of $2$ and let $\eps \in (0, \eps_0/\sqrt{M})$ for a sufficiently small absolute constant $\eps_0 > 0$.
    There exist an $\bigO(1)$-local Hamiltonian $\ham$ on $n = \bigO(M^2)$ qubits, a single observable $O$ (a $1$-local projector with $\norm{O} = 1$), and $M$ time points $\{t_m\}_{m=1}^{M}$ such that any quantum algorithm estimating
    $\bra{\psi_0}\, e^{\ii\ham t_m}\, O\, e^{-\ii\ham t_m}\, \ket{\psi_0}$
    at all $M$ time points to additive error $\eps$ requires $\Omega(\sqrt{M}/\eps)$ queries to $U_\psi$ in the worst case over the (oracle-prepared) initial state $\ket{\psi_0}$.
\end{theorem}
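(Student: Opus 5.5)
We will reduce from \cref{apd:lem:multi_quantum_lower_bound}, or more directly from \cref{thm:query_lower_bound} with $O_j^*=Z_j$. The plan is to replace the nonlocal generator $H_\Pi$ by a Feynman--Kitaev clock Hamiltonian (\cref{thm:circuit_hamiltonian}) that implements the cyclic shift gate by gate.

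\textbf{Construction.} Take the circuit $V=\Pi^M$, written as a product of $N=M\cdot(M)$ nearest-neighbour SWAP gates on the $M+1$ data qubits. It is organized into $M$ consecutive blocks $\Pi$ of $M$ gates each. Let $\ham:=H_V$ be the Feynman--Kitaev Hamiltonian of \cref{eq:circuit_hamiltonian} with a unary clock of $N=\bigO(M^2)$ qubits, so $n=\bigO(M^2)$. Each term is an $\bigO(1)$-local clock transition tensored with a $2$-local SWAP, so $\ham$ is $\bigO(1)$-local. The initial state is $\ket{0}_\cc\otimes\ket{\psi}$, prepared with one call to the hard oracle $U_\psi$ plus known gates, and the observable is $O=\op{0}_1$ on data qubit $1$.

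\textbf{Snapshot formula.} By \cref{eq:circuit_ham_evolution}, the snapshot is $o(t)=\sum_j |c_j(t)|^2\langle O\rangle_j$. Here $\langle O\rangle_j$ is the expectation after $j$ gates. At the end of block $b$ it equals $\bra{\psi}\op{0}_{b+1}\ket{\psi}=:y_{b+1}$, and inside block $b$ it equals $\bra\psi\op{0}_{i}\ket\psi$ for some data qubit $i$ determined by the SWAP chain. A more convenient variant is to ensure that qubit-1 content is constant within a block. For example, realize each $\Pi$ by first swapping qubits $2\leftrightarrow 3,\dots$ and then a final SWAP$_{1,2}$, or conjugate so that position $1$ changes only once per block. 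Then $\langle O\rangle_j=y_{\beta(j)}$, where $\beta(j)$ is the block index. Hence $o(t_m)=\sum_b W_{mb}\,y_b$ with $W_{mb}=\Pr[X_{t_m}\in\text{block }b]$ and $X_t\sim\bpf(N,\sin^2 t)$.

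\textbf{Choice of times and inversion.} Choose $\sin^2 t_m=(2m-1)/(2M)$, so the clock mean $N\sin^2 t_m$ sits at the centre of block $m$. The spread is $\sigma=\bigO(\sqrt{N})=\bigO(M)$, which is comparable to the block width $M$. This is the main obstacle, since the distribution is not confined to one block. I would fix it by padding. Either lengthen each block with $cM^2$ idle gates, making blocks of width $\gg\sigma$ while keeping $n=\bigO(M^2)$ up to constants, or apply \cref{apd:fct:binomial}(iii),(iv) to show $W_{mm}\ge 1/2+\Omega(1)$ and $\sum_{b\ne m}W_{mb}<W_{mm}$ with geometrically decaying off-diagonal mass. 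Either way $W$ becomes strictly row diagonally dominant, so $\|W^{-1}\|_\infty\le 1/(W_{mm}-\sum_{b\ne m}W_{mb})=\bigO(1)$ by the Ahlberg--Nilson--Varah bound. Given $\eps'$-accurate snapshots, the vector $\tilde y=W^{-1}\tilde o$ then estimates every $y_b$ to error $\bigO(\eps')$, and $Z_b=2\op{0}_b-I$ gives all $\langle Z_b\rangle$ to error $\bigO(\eps')$.

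\textbf{Conclusion.} Set $\eps'=\eps$, with $\eps_0$ small enough that the constant loss from $\|W^{-1}\|_\infty$ and the factor $2$ keep the effective precision within $(0,1/(3\sqrt M))$. Each snapshot query uses $U_\psi$ once, and $W$ is known and independent of $\psi$. An $o(\sqrt M/\eps)$-query snapshot algorithm would therefore contradict \cref{thm:query_lower_bound}, giving the $\Omega(\sqrt M/\eps)$ bound. The key verification is the quantitative diagonal dominance, which depends on the padding-to-spread ratio.
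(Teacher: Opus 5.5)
\textbf{Verdict: genuine gap, at the step you flag as the crux.} Your construction is the paper's: the Feynman--Kitaev embedding of $\Pi^M$ with $N=M^2$ SWAPs, blocks of $M$ gates, times $\sin^2 t_m=(2m-1)/(2M)$, strict diagonal dominance plus Varah's bound, and transfer of the Huggins bound. But diagonal dominance of $W$ is never established, and neither option you offer closes it as written.

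\emph{Your unpadded option fails.} Hoeffding (iii) with half-width $s=M/2$ and $N=M^2$ gives $\Pr[|X-\mu|\ge M/2]\le 2e^{-1/2}>1$, which is vacuous. Chebyshev gives $4p(1-p)$, equally vacuous at $p\approx\tfrac12$. The median estimate (iv) only says the mass splits evenly at the mean, not that a window of half-width $M/2$ holds more than half of it.

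\emph{What the paper uses instead} is the Gaussian approximation. Since $\sigma_m=M\sqrt{p_m(1-p_m)}\le M/2$, each block contains $[\mu_m-\sigma_m,\mu_m+\sigma_m]$. Berry--Esseen then gives a diagonal entry $\ge 0.682-\bigO(1/M)>\tfrac12$. Every row of $W$ sums to at most $1$, so this alone gives strict diagonal dominance with an $\Omega(1)$ margin. The spread being comparable to the block width is therefore not an obstacle, and no padding is required.

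\emph{Your padding option is a legitimate alternative, but misparametrized.} Adding $cM^2$ idle gates per block makes $N=\Theta(M^3)$, which breaks $n=\bigO(M^2)$. Pad each block with $cM$ idle gates instead, placed anywhere after that block's $\mathrm{SWAP}_{1,2}$ so position $1$ stays constant. Then:
\begin{itemize}
\item the block width is $w=(1+c)M$ and $N=(1+c)M^2$;
\item the means $N\sin^2 t_m=(m-\tfrac12)w$ are still block centres;
\item Hoeffding bounds the off-block mass by $2e^{-w^2/(2N)}=2e^{-(1+c)/2}$.
\end{itemize}
For example, $c=3$ gives row margin $1-4e^{-2}>0$, hence $\|W^{-1}\|_\infty=\bigO(1)$ with $n=\bigO(M^2)$. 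This route uses only the elementary tail bound and costs a constant factor in clock qubits. The paper's route avoids padding but relies on a CLT-type estimate.

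\emph{Missing step-$0$ term.} Your snapshot formula drops $|c_0(t)|^2\langle O\rangle_0$. With $\ket{\psi}$ arbitrary on all $M+1$ data qubits you have $M+1$ unknowns for $M$ equations. Since $|c_0(t_m)|^2\le e^{-M/2}$ cannot be absorbed into arbitrarily small $\eps$, the term cannot be treated as noise. Fix data qubit $1$ in $\ket{0}$, put the hard state on qubits $2,\dots,M+1$, and subtract the known $|c_0(t_m)|^2$, as the paper does.

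\emph{No reordering is needed.} With $\mathrm{SWAP}_{1,2}$ applied first in each copy of $\Pi$, position $1$ changes only at the first gate of each block. Your proposed reordering does not implement $\Pi$, though it still cycles position $1$ through all data qubits.
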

\begin{proof}
    We reduce from \cref{thm:query_lower_bound} via the
    Feynman--Kitaev local Hamiltonian construction (\cref{thm:circuit_hamiltonian}).

    \emph{Circuit construction.}
    Consider $n_o = M+1$ output qubits.
    Recall that the cyclic left-shift
    $\Pi = \mathrm{SWAP}_{M,M{+}1}\cdots\mathrm{SWAP}_{2,3}\,
           \mathrm{SWAP}_{1,2}$
    consists of $M$ nearest-neighbour gates
    (\cref{eq:cyclic_shift_def}).
    Define the circuit $V = \Pi^M$---the $M$-fold composition of
    $\Pi$---consisting of $N = M^2$ nearest-neighbour SWAP gates, 
    labelled $V_1,\dots,V_N$ in application order so that $V = V_N\cdots V_1$; 
    the partial product $V_j\cdots V_1$ denotes the first $j$ gates, and $(V_j\cdots V_1)^\dagger P\,(V_j\cdots V_1)$ is the projector $P$ in the Heisenberg picture after gate step $j$.

    \emph{Feynman--Kitaev embedding.}
    Apply \cref{thm:circuit_hamiltonian} to $V$, obtaining a
    Hamiltonian $H_{\mathrm{FK}}$ on
    $n_o + N = M^2 + M + 1 = \bigO(M^2)$ qubits
    (output register $\oo$ plus clock register $\cc$ in unary
    encoding).
    Each term in $H_{\mathrm{FK}}$ couples two adjacent clock qubits
    to one two-qubit SWAP on the output register, giving
    $\bigO(1)$-local interactions
    (the same locality as in the BQP-hardness proof,
    \cref{apd:thm:bqp_hard}).

    \emph{Initial state and projector.}
    Set $P = I_\cc \otimes \op{0}_{\oo,1} \otimes I$
    (a $1$-local projector on the first output qubit).
    Choose the initial state
    $\ket{\psi_0} = \ket{0}_\cc \otimes \ket{0}_{\oo,1}
    \otimes \ket{\psi'}_{\oo,2:M{+}1}$,
    where $\ket{\psi'}$ is an arbitrary $M$-qubit state
    on qubits $2,\ldots,M{+}1$ prepared by the oracle $U_\psi$.
    Since qubit~$1$ is always $\ket{0}$, we have $p_1 \coloneqq
    \bra{\psi_0}\op{0}_{\oo,1}\ket{\psi_0} = 1$.

    \emph{Block structure.}
    The $N = M^2$ gates of $V = \Pi^M$ partition into $M$ blocks of
    $M$ gates each: block~$q$ ($q = 0,\ldots,M{-}1$) consists of
    gate steps $qM{+}1$ through $(q{+}1)M$.
    At gate step $qM{+}1$, the first gate
    $\mathrm{SWAP}_{1,2}$ of the $(q{+}1)$-th copy of $\Pi$
    brings the content of output qubit~$q{+}2$
    (after $q$ complete shifts) to position~$1$;
    the remaining $M{-}1$ gates
    $\mathrm{SWAP}_{2,3},\ldots,\mathrm{SWAP}_{M,M{+}1}$
    do not touch position~$1$.
    Therefore, for every gate step $j$ in block~$q$,
    \begin{equation}
        (V_j\cdots V_1)^\dagger\, P\, (V_j\cdots V_1) = I_\cc \otimes
        \op{0}_{\oo,q{+}2} \otimes I.
    \end{equation}

    \emph{Snapshot decomposition.}
    Write $o(t) \coloneqq \bra{\psi_0}\, e^{\ii H_{\mathrm{FK}} t}\, P\, e^{-\ii H_{\mathrm{FK}} t}\, \ket{\psi_0}$ for the snapshot expectation of $P$ at time~$t$.
    By the clock-amplitude formula (\cref{eq:Lk_clock,eq:clock_amplitudes}, specialized to the initial state $\ket{\psi_0}$ above),
    \begin{equation}\label{eq:local_echo_FK}
        o(t) = \abs{c_0(t)}^2\, p_1
        + \sum_{q=0}^{M-1} w_q(t)\, p_{q+2},
    \end{equation}
    where $w_q(t) \coloneqq \sum_{j=qM+1}^{(q+1)M} \abs{c_j(t)}^2$ is the total clock weight on block~$q$, and $p_{q+2} = \bra{\psi'}\op{0}_{\oo,q+2}\ket{\psi'}$ is the projector expectation on output qubit~$q{+}2$.
    Since $p_1 = 1$ and $\abs{c_0(t)}^2$ depends only on $M$ and $t$ (not on $\ket{\psi'}$), it can be subtracted:
    \begin{equation}\label{eq:local_echo_FK_reduced}
        \tilde{o}(t) \coloneqq o(t) - \abs{c_0(t)}^2 = \sum_{q=0}^{M-1} w_q(t)\, p_{q+2}.
    \end{equation}

    \emph{Diagonal dominance.}
    Choose time points
    $\sin^2(t_m) = (2m{-}1)/(2M)$ for $m = 1,\ldots,M$,
    so that the binomial mean
    $\mu_m = N\sin^2(t_m) = (m{-}\tfrac{1}{2})M$
    lies at the centre of block~$m{-}1$.
    The standard deviation satisfies
    $\sigma_m = \sqrt{N\sin^2(t_m)\cos^2(t_m)} \le M/2$,
    so block~$m{-}1$ (width~$M$) spans at least
    $M/\sigma_m \ge 2$ standard deviations.
    Hence the diagonal entry of the $M\times M$ weight matrix $W_{mq} \coloneqq w_q(t_m)$,
    the mass a $\bpf(N, \sin^2 t_m)$ variable places on block~$m{-}1$,
    is controlled by the Gaussian approximation.
    The binding (most concentrated) case is $\sin^2 t_m = \tfrac{1}{2}$ near $m = M/2$,
    where $\sigma_m = M/2$ and the block is exactly $[\mu_m - \sigma_m,\, \mu_m + \sigma_m]$;
    the edge points $m = 1, M$ have $\sigma_m = \Theta(\sqrt{M})$ and span $\Theta(\sqrt{M})$ standard deviations.
    By the Berry--Esseen theorem the binomial cumulative distribution deviates from the Gaussian by $\bigO(1/\sigma_m)$,
    which at this case is $\bigO(1/M)$, so
    \begin{equation}\label{eq:block_weight_lower}
        W_{m,m-1} = w_{m-1}(t_m) \ge \Pr\!\qty(\abs{\mathcal{N}(0,1)} \le 1) - \bigO(1/M) = 0.682\ldots - o(1) > \tfrac{1}{2}
    \end{equation}
    for all $m \in [M]$ once $M$ exceeds an absolute constant,
    the finitely many smaller $M$ verified by direct evaluation of the exact binomial sum $W_{m,m-1} = \sum_{k=(m-1)M}^{mM-1} \binom{N}{k}(\sin^2 t_m)^k(\cos^2 t_m)^{N-k} > \tfrac12$, with no Gaussian approximation.
    In particular $W_{m,m-1}$ is bounded away from $\tfrac{1}{2}$ uniformly in $M$.
    Since $\sum_{q} W_{mq} = 1 - \abs{c_0(t_m)}^2 \le 1$,
    the off-diagonal row sum is
    $\sum_{q \ne m-1} W_{mq} < 1 - W_{m,m-1} < \tfrac{1}{2}$,
    establishing strict diagonal dominance.
    By Varah's bound for strictly row-diagonally-dominant matrices~\cite{varahLowerBoundSmallest1975},
    which gives $\norm{A^{-1}}_\infty \le 1/\min_i\bigl(\abs{a_{ii}} - \sum_{j\ne i}\abs{a_{ij}}\bigr)$ for any such $A$,
    the row margins above yield
    \begin{equation}
        \norm{W^{-1}}_\infty \le
        1/\min_m(2W_{m,m-1} - 1 + \abs{c_0(t_m)}^2) = \bigO(1),
    \end{equation}
    where the denominator is $\Omega(1)$ by the uniform margin just established.

    \emph{Reduction.}
    An algorithm estimating $o(t_m)$ for all $m$ to precision~$\eps$ also estimates $\tilde{o}(t_m)$ to precision~$\eps$ (subtracting the known $\abs{c_0(t_m)}^2$).
    Stacking \cref{eq:local_echo_FK_reduced} over the $M$ time points gives the linear system $\tilde{\vb{o}} = W \vb{p}$ with $\vb{p} = (p_2, \ldots, p_{M+1})$,
    so $W^{-1}$ recovers the projector expectations and the $\ell_\infty$ error amplification is exactly $\norm{W^{-1}}_\infty$;
    applying it yields estimates of $p_2, \ldots, p_{M+1}$ to precision $\norm{W^{-1}}_\infty \eps = \bigO(\eps)$.
    Since $Z_{q+2} = 2\op{0}_{\oo,q+2} - I$, this gives
    estimates of $\bra{\psi'}Z_j\ket{\psi'}$ for $j = 1,\ldots,M$
    to precision $\bigO(\eps)$.
    The reduction is query-preserving:
    $U_\psi$ is built from the oracle $U_{\psi'}$ for $\ket{\psi'}$ with $\bigO(1)$ fixed-state preparations,
    and the subtraction and $W^{-1}$ are classical post-processing,
    so any algorithm estimating the $M$ snapshots yields an $M$-observable estimation algorithm with the same $U_\psi$-query count.
    Choosing $\eps_0$ small enough that the amplified precision $2\norm{W^{-1}}_\infty \eps$ lies in $(0, 1/(3\sqrt{M}))$ (possible since $\norm{W^{-1}}_\infty = \bigO(1)$),
    \cref{thm:query_lower_bound} gives $\Omega(\sqrt{M}/\eps)$.

\end{proof}

\begin{remark}[Gaps to lower bound of \nameref{apd:prm:search_dqpt}]\label{rem:search_lower_bound}
    The lower bounds of this subsection target snapshot estimation, not \nameref{apd:prm:search_dqpt} directly. Translating to a Search-DQPT bound leaves three distinct gaps:
    \begin{itemize}
        \item \emph{Search vs.\ estimation.} The bound constrains Search-DQPT algorithms that explicitly estimate $\los_k$ at a grid of $M_0 = \tmax/\deltat_{\min}$ time points (the screening strategy of the optimal algorithm). 
        Algorithms that bypass full grid screening---adaptive strategies, gradient search on $r_k$, or other tricks---are not directly constrained. 

        \item \emph{Observable.} The hard instance uses a fixed $1$-local projector, not the local Loschmidt echo $\los_k$; the echo's target $\op{\psi_0^{(k)}}$ depends on the unknown $\ket{\psi_0}$ and is not directly measurable under oracle access to the initial state.

        \item \emph{Precision regime.} The rigorous bound $\Omega(\sqrt{M}/\eps)$ requires $\eps < 1/(3\sqrt{M})$ (the high-precision regime of Apeldoorn--Huggins~\cite{vanapeldoornQuantumProbabilityOracles2021,hugginsNearlyOptimalQuantum2022}). 
        The screening regime of Search-DQPT operates at constant $\eps$, outside this regime.
    \end{itemize}
\end{remark}

\subsection{Speedups: classical lower bounds}\label{apd:sec:classical_lower_bound}
The previous subsection established the (nearly) optimality of our quantum algorithm for multi-time snapshot estimation.
A natural question is whether it brings a genuine speedup over \emph{classical} algorithms.
We answer affirmatively in the prepare-and-measure model (\cref{def:pm_model}), the natural non-coherent classical access model:
any generic classical algorithm estimating the $M$ snapshots requires $\Omega(M/\eps^2)$ rounds (\cref{thm:joint_classical_lower_bound}),
quadratically more, in \emph{both} $M$ and $1/\eps$, than the $\widetilde{\bigO}(\sqrt{M}/\eps)$ rounds of the quantum algorithm (\cref{apd:cor:adaptive_snapshots}). 
Both bounds count the same resource, one call to the state-preparation oracle $U_\psi$ per round.
The bound reflects two compounding costs: the $\Omega(1/\eps^2)$ shot-noise cost of resolving a single near-balanced snapshot, and an extra factor $M$ for the $M$ distinct time points.
We capture both with a multi-parameter estimation lemma (\cref{lem:classical_multi_lower_bound}) applied to a continuous near-$\tfrac12$ encoding embedded in an $\bigO(1)$-local Hamiltonian.
Structurally, both the quantum and classical lower bounds couple the $M$ unknown marginals through a single $M\times M$ mixing matrix (the Hadamard matrix $A$ of \cref{thm:query_lower_bound_multi_amp_est} in the quantum black-box model, the clock-weight matrix $W$ below in the classical one), so the quadratic separation is the price of resolving that matrix's action coherently, $\widetilde{\bigO}(\sqrt{M}/\eps)$, versus one classical bit at a time, $\Omega(M/\eps^2)$.

\subsubsection{Classical prepare-and-measure model and Bernoulli lower bound}

\begin{definition}[Classical prepare-and-measure model]\label{def:pm_model}
    A \emph{classical prepare-and-measure} (PM) algorithm, with oracle access to a state-preparation unitary $U_\psi$ ($U_\psi\ket{\vb{0}} = \ket{\psi_0}$) and to evolution under a Hamiltonian $\ham$, proceeds in rounds.
    Each round
    (i) prepares one copy of $\ket{\psi_0}$ with a single call to $U_\psi$,
    (ii) evolves it under $e^{-\ii\ham t}$ for a time $t \ge 0$ of the algorithm's choosing, and
    (iii) measures the fixed projector $P$, recording the one outcome bit.
    The choice of $t$ and the decision to halt may depend adaptively on previous outcomes and on internal randomness.
    The cost is the number of rounds, equal to the number of $U_\psi$ calls.
\end{definition}

A PM round at time $t$ thus returns a single sample from $\Ber\big(\bra{\psi_0} e^{\ii\ham t}\, P\, e^{-\ii\ham t}\ket{\psi_0}\big)$.
The PM class is a strict subclass of the coherent quantum algorithms of the previous subsection: it forbids coherence across rounds and restricts each round to one measurement of the fixed observable $P$, the operationally natural protocol for tracking $\langle P\rangle$ along the dynamics.
This single-projector restriction is what the lower bound below exploits.

\begin{lemma}[Classical Bernoulli sample-complexity lower bound {\cite{dagumOptimalAlgorithmMonte2000}}]\label{lem:bernoulli_lower_bound}
    Let $p \in (0,1)$ and $\eps \in (0, p(1-p))$. Any algorithm that estimates the mean of a $\mathrm{Bernoulli}(p)$ distribution to additive precision $\eps$ with success probability at least $2/3$ from classical i.i.d.\ samples requires $\Omega(p(1-p)/\eps^2)$ samples; for $p$ bounded away from $\qty{0,1}$, this is $\Omega(1/\eps^2)$, matching the Chernoff--Hoeffding upper bound of \cref{lem:analog_chernoff}.
\end{lemma}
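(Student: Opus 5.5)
The statement to prove is Lemma~\ref{lem:bernoulli_lower_bound}: estimating a Bernoulli mean to additive precision $\eps$ from i.i.d.\ samples needs $\Omega(p(1-p)/\eps^2)$ samples. I would prove it by a two-point (Le Cam) argument.

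\textbf{Reduction to testing.} Fix $p\in(0,1)$ and $\eps<p(1-p)$. Set $p_0:=p$ and $p_1:=p+3\eps$, possibly replacing $+$ with $-$ so that $p_1\in(0,1)$. The constraint $\eps<p(1-p)\le 1/4$ makes at least one sign admissible, and keeps $p_1(1-p_1)=\Theta(p(1-p))$ up to an absolute constant; this may require shrinking the factor $3$ in the regime where $p$ is close to $0$ or $1$. Suppose an estimator outputs $\hat p$ with $|\hat p-p_i|\le\eps$ with probability $\ge 2/3$ under each $p_i$. Since the intervals $[p_i-\eps,p_i+\eps]$ are disjoint, the rule ``declare $i$ if $\hat p$ lies nearer $p_i$'' tests $p_0$ versus $p_1$ with error at most $1/3$ under both hypotheses. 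Adaptive stopping does not matter here: N i.i.d.\ draws is the worst case, and a randomized stopping rule is bounded by its maximum sample count.

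\textbf{Information bound.} A test with both errors $\le 1/3$ forces $\TV(\Ber(p_0)^{\otimes N},\Ber(p_1)^{\otimes N})\ge 1/3$. By Pinsker's inequality together with additivity of KL over products,
\begin{equation}
\tfrac19\le \TV^2\le \tfrac12\,N\,\KL\bigl(\Ber(p_0)\,\|\,\Ber(p_1)\bigr).
\end{equation}
The Bernoulli KL is bounded by the $\chi^2$ divergence:
\begin{equation}
\KL(\Ber(p_0)\|\Ber(p_1))\le \frac{(p_0-p_1)^2}{p_1(1-p_1)}=\frac{9\eps^2}{p_1(1-p_1)}.
\end{equation}
Combining the two displays gives $N\ge \frac{2}{81}\,p_1(1-p_1)/\eps^2=\Omega(p(1-p)/\eps^2)$. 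When $p$ is bounded away from $\{0,1\}$ this is $\Omega(1/\eps^2)$, which matches Lemma~\ref{lem:analog_chernoff}.

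\textbf{Main obstacle.} The only delicate point is ensuring $p_1(1-p_1)\ge c\,p(1-p)$ for an absolute constant $c$ in the near-boundary regime. For example, with $p$ small and $\eps$ comparable to $p(1-p)$, one must pick the perturbation direction toward $1/2$. Then $p_1(1-p_1)\ge \min\{p_0(1-p_0),\,p_1(1-p_1)\}$, and a short case check using $3\eps<3p(1-p)$ gives the bound. Alternatively, one could cite the sharper result of Dagum et al., which is the stated source.
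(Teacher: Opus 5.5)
\textbf{Verdict: right approach, but the final step fails as written.} The paper gives no argument here beyond citing Dagum et al. Your chain (reduction to a two-point test, then Pinsker with tensorization, then $\KL\le\chi^2$) is the standard one, and it is the $M=1$ case of what the paper does in its Assouad-based multi-parameter lemma. The problem is the step you yourself flag as the only delicate point.

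\textbf{First issue: the factor $3$.} The constraint $\eps<p(1-p)$ does \emph{not} guarantee that one of $p\pm3\eps$ is admissible. At $p=\tfrac12$, $\eps=\tfrac15$, both $p\pm3\eps$ fall outside $[0,1]$. At $p=\tfrac13$ with $\eps\uparrow\tfrac29$, the only admissible choice $p_1=p+3\eps$ tends to $1$, so $p_1(1-p_1)\to0$. The trouble is at central $p$ with $\eps$ close to $p(1-p)$, not near $p\in\{0,1\}$. For small $p$, $p+3\eps<4p$ is harmless.

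\textbf{Second issue: your KL direction.} Shrinking the factor cannot save the bound with $p_1(1-p_1)$ in the denominator. At $p=\tfrac12$, any $p_1$ compatible with disjointness has $|p_1-\tfrac12|>2\eps$, hence $p_1(1-p_1)<\tfrac14-4\eps^2\to0$ as $\eps\uparrow\tfrac14$. Your bound $N\ge\frac{2}{9c^2}\,p_1(1-p_1)/\eps^2$ therefore degenerates to $0$, while the target $p(1-p)/\eps^2$ stays near $4$. The promised ``short case check'' is also empty. The inequality $p_1(1-p_1)\ge\min\{p_0(1-p_0),p_1(1-p_1)\}$ is a tautology, and the bound $p_1(1-p_1)\ge c\,p(1-p)$ it was meant to deliver is false in this regime.

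\textbf{The fix.} Put the fixed variance in the denominator. Since $\TV$ is symmetric, apply Pinsker to the reversed divergence:
\[
\KL(\Ber(p_1)\|\Ber(p))\le\chi^2(\Ber(p_1)\|\Ber(p))=\frac{(p_1-p)^2}{p(1-p)}.
\]
For a shift $p_1=p\pm c\eps\in(0,1)$ with $c\in(2,3]$, this gives $N\ge\frac{2}{9c^2}\,p(1-p)/\eps^2\ge\frac{2}{81}\,p(1-p)/\eps^2$, with no condition on $p_1(1-p_1)$. Such a shift always exists. By symmetry take $p\le\tfrac12$; then $2\eps<2p(1-p)\le1-p$, so any $c\in\bigl(2,\min\{3,(1-p)/\eps\}\bigr)$ gives $p_1=p+c\eps\in(p,1)$. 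Alternatively, you could split into cases, since the target is only $\Omega(1)$ where your bound degenerates and $N=0$ is already ruled out by $\TV\ge\tfrac13$; reversing the divergence is cleaner. The paper's multi-parameter lemma avoids the same issue differently: it confines all hypotheses to $[\tfrac12-2\alpha,\tfrac12+2\alpha]$, where a variance floor holds for both arguments.

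With this change your proof is complete and self-contained for the maximal sample count at constant confidence. What the cited result of Dagum et al.\ adds is the $\log(1/\delta)$ dependence and a bound on the \emph{expected} sample count of adaptive stopping rules. Your reduction to a fixed maximum number of samples does not recover either.
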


\subsubsection{Reduction from multi-parameter Bernoulli estimation}
This scalar bound is a warm-up: 
it fixes the per-coordinate (time point) cost but is not invoked directly below, 
since a single PM round does not measure one snapshot in isolation.
The multi-time task compounds this scalar cost with that of resolving $M$ distinct time points.
The obstruction is statistical and has two parts.
First, each projector measurement returns one bit of a \emph{mixture} of the $M$ snapshot values and cannot resolve them in parallel; 
this is the source of the factor $M$.
Second, a \emph{continuous} near-$\tfrac12$ encoding keeps every snapshot genuinely fractional, with competing hypotheses separated by only $\Theta(\eps)$, so each coordinate must be resolved to full precision $\eps$; this is the source of the $1/\eps^2$.
Estimating $M$ such near-$\tfrac12$ coordinates through single-bit PM rounds is an \emph{$M$-parameter testing problem}.
\emph{Assouad's lemma reduces it to $M$ simultaneous two-point tests and lower-bounds the total error by their difficulty, measured by the total variation between half-cube mixtures, converting per-coordinate hardness into a clean $M\times(\text{per-coordinate})$ bound}; 
we use the following averaged form.

\begin{lemma}[Assouad's lemma, averaged form~{\cite[Lemma~2.12]{tsybakovIntroductionNonparametricEstimation2008}}]\label{lem:assouad}
    Let $\{\mathbb{P}_v\}_{v \in \{-1,+1\}^M}$ be probability measures on a common space, 
    and for $i \in [M]$ let $\mathbb{P}_{\pm i} \coloneqq 2^{-(M-1)}\!\sum_{v:\, v_i = \pm 1}\mathbb{P}_v$ denote the two half-cube mixtures.
    Every estimator $\hat v$ of $v$ from a single draw under $\mathbb{P}_v$ obeys, 
    \begin{equation}\label{eq:assouad}
        2^{-M}\sum_{v}\mathbb{E}\,d_{\mathrm H}(\hat v, v) \;\ge\; \frac{M}{2} - \frac12\sum_{i=1}^{M}\TV(\mathbb{P}_{+i}, \mathbb{P}_{-i}).
    \end{equation}
    where $d_{\mathrm H}(\hat v, v) = \sum_{i=1}^M \mathbf{1}[\hat v_i \neq v_i]$ is the Hamming distance and $\TV(P, Q) = \sup_{A}\abs{P(A) - Q(A)}$ is the total variation distance (equivalently $\tfrac12\sum_\omega\abs{P(\omega) - Q(\omega)}$ for discrete outcomes).
\end{lemma}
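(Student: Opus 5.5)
The plan is the standard one: decompose the averaged Hamming loss coordinate by coordinate, rewrite each coordinate's averaged error as a two-hypothesis testing error between the half-cube mixtures, and apply the Le Cam two-point bound to each.

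\emph{Step 1 (linearity over coordinates).} Since $d_{\mathrm H}(\hat v,v)=\sum_{i=1}^M \mathbf{1}[\hat v_i\neq v_i]$, linearity of expectation gives
\begin{equation}
2^{-M}\sum_v \mathbb{E}_v\, d_{\mathrm H}(\hat v,v) \;=\; \sum_{i=1}^M 2^{-M}\sum_v \mathbb{P}_v\bigl(\hat v_i\neq v_i\bigr).
\end{equation}
It therefore suffices to show, for each fixed $i$, that $2^{-M}\sum_v \mathbb{P}_v(\hat v_i\neq v_i)\ge \tfrac12\bigl(1-\TV(\mathbb{P}_{+i},\mathbb{P}_{-i})\bigr)$.

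\emph{Step 2 (collapse onto half-cube mixtures).} Fix $i$ and split the sum over $v$ according to the sign of $v_i$. Let $A_i:=\{\hat v_i=+1\}$ be the event, measurable on the common space, on which the estimator guesses $+1$ in coordinate $i$. On the half $v_i=+1$ an error is the event $A_i^c$; on the half $v_i=-1$ it is $A_i$. Since each half-cube has $2^{M-1}$ elements, the definition of $\mathbb{P}_{\pm i}$ gives
\begin{equation}
2^{-M}\sum_v \mathbb{P}_v(\hat v_i\neq v_i) \;=\; \tfrac12\Bigl[\mathbb{P}_{+i}(A_i^c)+\mathbb{P}_{-i}(A_i)\Bigr].
\end{equation}
This step is pure bookkeeping, but it is where the ``averaged'' form lives. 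The prior is uniform over the cube, so the average error is exactly a Bayes testing error between the two mixtures. No pairing of $v$ with its flip $v^{(i)}$ is needed.

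\emph{Step 3 (two-point bound).} We have $\mathbb{P}_{+i}(A_i^c)+\mathbb{P}_{-i}(A_i)=1-\bigl(\mathbb{P}_{+i}(A_i)-\mathbb{P}_{-i}(A_i)\bigr)$. By the definition $\TV(P,Q)=\sup_A|P(A)-Q(A)|$, this is at least $1-\TV(\mathbb{P}_{+i},\mathbb{P}_{-i})$. Combining with Steps 1--2 and summing over $i$ yields \cref{eq:assouad}.

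\emph{Main obstacle.} The only genuine subtlety is estimators that use internal randomness, which matter here because adaptive prepare-and-measure protocols (\cref{def:pm_model}) are randomized. I would handle this by absorbing the random seed into the sample space. Replace each $\mathbb{P}_v$ by $\mathbb{P}_v\otimes\mu$, where $\mu$ is the seed law independent of $v$. The half-cube mixtures become $\mathbb{P}_{\pm i}\otimes\mu$, and tensoring with a common $\mu$ does not increase total variation. Alternatively, one can condition on the seed, apply the deterministic bound, and average over the seed. Either route leaves the right-hand side of \cref{eq:assouad} unchanged, so the bound holds for every estimator.
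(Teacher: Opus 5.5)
Your proof is correct and follows the paper's argument: you decompose the Hamming loss linearly over coordinates, rewrite each coordinate's uniform-prior error as the Bayes testing error between $\mathbb{P}_{+i}$ and $\mathbb{P}_{-i}$, and bound it below by $\tfrac12\bigl(1-\mathrm{TV}(\mathbb{P}_{+i},\mathbb{P}_{-i})\bigr)$. Your explicit event computation and your handling of randomized estimators (absorbing the seed into the sample space, which leaves total variation unchanged) fill in details the paper leaves implicit, but they do not change the route.
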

\begin{proof}
    Each $\mathbb{P}_v$ is a distribution on the same outcome space---in the application below, the space of query--outcome transcripts produced by the algorithm---so every total variation $\TV(\mathbb{P}_{+i}, \mathbb{P}_{-i})$ compares two distributions over identical outcomes and is well defined.
    Recovering $v$ means deciding each of the $M$ signs $v_i = \pm1$,
    and the Hamming distance $d_{\mathrm H}(\hat v, v)$ counts the wrong decisions,
    so the left-hand side of \cref{eq:assouad} is the expected number of mistaken coordinates under a uniform prior over the hypercube.
    The half-cube mixture $\mathbb{P}_{\pm i}$ averages over all configurations of the other $M-1$ coordinates,
    isolating coordinate $i$ and reducing the $M$-way estimation to $M$ two-point tests, one per sign.
    The left-hand side equals $\sum_{i=1}^M 2^{-M}\sum_v \Pr(\hat v_i \neq v_i)$;
    for each $i$, the inner average is 
    the error probability (under the uniform prior) of the test $\hat v_i$ distinguishing $\mathbb{P}_{+i}$ from $\mathbb{P}_{-i}$, 
    hence at least the Bayes error $\tfrac12\bigl(1 - \TV(\mathbb{P}_{+i}, \mathbb{P}_{-i})\bigr)$.
    Summing over $i$ gives \cref{eq:assouad}.
    The bound quantifies the intuition:
    one beats random guessing on coordinate $i$ only insofar as its two mixtures are statistically distinguishable,
    so the average risk falls below $M/2$ by exactly $\tfrac12\sum_i \TV(\mathbb{P}_{+i}, \mathbb{P}_{-i})$;
    forcing the risk to zero requires every $\TV(\mathbb{P}_{+i}, \mathbb{P}_{-i})$ close to $1$, a cost the query budget must pay coordinate by coordinate.
\end{proof}

Applying it to a hypercube of near-$\tfrac12$ instances yields the $M$-coordinate generalization of \cref{lem:bernoulli_lower_bound}.
In the Hamiltonian reduction below, 
the weight vector $\vb{a}$ will be the clock weights $w_q(t)$ (selected by the algorithm's choice of evolution time $t$) and $\vb{p}$ the projector marginals to be estimated, 
while the offset $1 - \norm{\vb{a}}_1$ is the fixed, 
$\vb{p}$-independent leakage $\abs{c_0(t)}^2$. 
We state the lemma abstractly for any such single-bit oracle.

\begin{lemma}[Classical lower bound for multi-parameter Bernoulli estimation]\label{lem:classical_multi_lower_bound}
    Let $M$ be a positive integer and $\alpha \in (0, 1/8)$.
    An unknown vector $\vb{p} \in [\tfrac12 - 2\alpha,\, \tfrac12 + 2\alpha]^M$ is to be estimated from a \emph{single-bit oracle} 
    whose every query chooses a weight vector $\vb{a} \in \mathbb{R}_{\ge 0}^M$ with $\norm{\vb{a}}_1 \le 1$ and returns one bit $b \sim \Ber\big(1 - \vb{a}^\top(\vb{1} - \vb{p})\big)$, 
    equivalently $\Ber\big(\vb{a}^\top \vb{p} + (1 - \norm{\vb{a}}_1)\big)$.
    The choice of $\vb{a}$ may depend adaptively on previous outcomes and on internal randomness.
    Any algorithm that, for every such $\vb{p}$, outputs $\hat{\vb{p}}$ with $\norm{\hat{\vb{p}} - \vb{p}}_\infty \le \alpha$ with probability at least $2/3$ requires $\Omega(M/\alpha^2)$ queries.
\end{lemma}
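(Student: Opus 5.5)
Our plan is to apply Assouad's lemma (\cref{lem:assouad}) to a hypercube of near-$\tfrac12$ instances and bound each half-cube total variation by the information collected about that coordinate. For $v\in\{-1,+1\}^M$ set $\vb{p}_v := \tfrac12\vb{1} + 2\alpha v$, so every instance lies in the admissible box. Two hypercube vertices differing in coordinate $i$ have $|p_i - p'_i| = 4\alpha > 2\alpha$. Hence any $\hat{\vb{p}}$ with $\norm{\hat{\vb{p}}-\vb{p}_v}_\infty\le\alpha$ determines $v$ exactly by rounding, $\hat v_i := \mathrm{sign}(\hat p_i - \tfrac12)$. On the success event ($\ge 2/3$) we get $d_{\mathrm H}(\hat v,v)=0$, and otherwise at most $M$, so the Assouad risk is at most $M/3$. \Cref{eq:assouad} then forces $\sum_i \TV(\mathbb{P}_{+i},\mathbb{P}_{-i}) \ge M/3$, where $\mathbb{P}_v$ is the law of the full adaptive transcript of $T$ queries.

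Next we bound each $\TV(\mathbb{P}_{+i},\mathbb{P}_{-i})$ by the query weight placed on coordinate $i$. Rather than working with the mixtures directly, we use convexity of TV, which gives $\TV(\mathbb{P}_{+i},\mathbb{P}_{-i}) \le 2^{-(M-1)}\sum_{v:v_i=+1}\TV(\mathbb{P}_v,\mathbb{P}_{v^{(i)}})$, with $v^{(i)}$ equal to $v$ flipped at $i$. We then apply Pinsker together with the chain rule for KL over adaptive transcripts:
\[
\KL(\mathbb{P}_v\|\mathbb{P}_{v^{(i)}}) = \sum_{s=1}^T \mathbb{E}_v\bigl[\KL(\Ber(\beta_s)\|\Ber(\beta_s'))\bigr],
\]
where $\beta_s = \vb{a}_s^\top\vb{p}_v + 1-\norm{\vb{a}_s}_1$ and $\beta_s' = \beta_s - 4\alpha v_i a_{s,i}$. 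The main technical step is the denominator of the Bernoulli KL. Since $\beta_s\in[\tfrac12-2\alpha,1]$, the mean can approach $1$ when $\norm{\vb a}_1$ is small, so the naive $\chi^2$ bound $(\Delta\beta)^2/(\beta(1-\beta))$ blows up. We handle this by noting that $1-\beta_s = \vb{a}_s^\top(\vb 1-\vb p) \ge (\tfrac12-2\alpha)a_{s,i} \ge a_{s,i}/4$ for $\alpha<1/8$. Therefore
\[
\KL \le \frac{(4\alpha a_{s,i})^2}{\beta'(1-\beta')} \le C\alpha^2 a_{s,i},
\]
with an absolute constant $C$, after checking that $1-\beta'$ obeys the same bound (it differs from $1-\beta$ by at most $4\alpha a_{s,i} \le a_{s,i}/2$).

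Summing gives $\KL(\mathbb{P}_v\|\mathbb{P}_{v^{(i)}}) \le C\alpha^2\,\mathbb{E}_v[N_i]$, where $N_i := \sum_s a_{s,i}$ is the total weight on coordinate $i$. Since $\norm{\vb a_s}_1\le 1$, we have $\sum_i N_i \le T$ pathwise. By Pinsker and Jensen (concavity of the square root),
\[
\sum_i \TV(\mathbb{P}_{+i},\mathbb{P}_{-i}) \le \sum_i 2^{-(M-1)}\sum_{v_i=+1}\sqrt{\tfrac{C}{2}\alpha^2\,\mathbb{E}_v N_i} \le M\sqrt{\tfrac{C}{2}\alpha^2\,\overline{N}},
\]
with $\overline{N} := \tfrac1M\,2^{-(M-1)}\sum_i\sum_{v_i=+1}\mathbb{E}_v N_i \le 2T/M$, using $2^{-(M-1)}\sum_{v_i=+1} \le 2\cdot 2^{-M}\sum_v$. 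Combining with the Assouad requirement gives $M/3 \le M\alpha\sqrt{CT/M}$, hence $T \ge M/(9C\alpha^2) = \Omega(M/\alpha^2)$.

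If the horizon $T$ is random (adaptive halting), we apply the argument to the worst-case horizon, padding with dummy queries $\vb a=\vbzero$. These contribute zero KL, so the bound holds for the worst-case query count.
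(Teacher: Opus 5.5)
Your proposal is correct and follows the paper's proof essentially step for step. The common skeleton is the hypercube $\tfrac12\vb{1}+2\alpha v$, Assouad's lemma on the transcript laws, convexity of TV, the adaptive KL chain rule with $\KL\le\chi^2$ and a variance floor, and then Pinsker. Your per-coordinate bound $C\alpha^2 a_{s,i}$ (from the floor $1-\beta\ge(\tfrac12-2\alpha)a_{s,i}$), the weight budget $\sum_i N_i\le T$, and Jensen are only a repackaging of the paper's route. The paper instead uses the full floor $s(1-s)\ge c_1\norm{\vb{a}_t}_1$ to get a per-query bound $\sum_i \KL\le 16\alpha^2\norm{\vb{a}_t}_2^2/(c_1\norm{\vb{a}_t}_1)\le 16\alpha^2/c_1$, and then applies Cauchy--Schwarz over coordinates.

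One justification should be replaced. From $1-\beta\ge a_{s,i}/4$ plus a perturbation of size up to $a_{s,i}/2$ you cannot conclude that $1-\beta'$ is positive. None is needed, however: $\vb{p}_{v^{(i)}}$ also lies in the box, and for $v_i=+1$ you even have $1-\beta'\ge 1-\beta$. This is exactly how the paper obtains the floor for both biases.
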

\begin{proof}
    We apply Assouad's lemma to a hypercube of hard instances.
    It suffices to rule out algorithms that use at most $T$ queries: 
    if such an algorithm halts earlier, pad its transcript with deterministic dummy queries $\vb{a}_t=\vb{0}$, which return the uninformative bit $b_t=1$ and do not change its output, 
    so that every transcript has the fixed length $T$ the divergence accounting below requires.
    A dummy round has bias $1 - \vb{0}^\top(\vb{1} - \vb{p}) = 1$ under every hypothesis $\vb{p}$, 
    so it contributes $\KL(\Ber(1)\Vert\Ber(1)) = 0$ to the divergence sum below; padding is thus free. 
    This also disposes of the boundary value $1$, reached only by such dummy rounds: there the two hypotheses coincide, 
    so the contribution vanishes rather than blowing up, while the informative rounds $\vb{a}_t \neq \vb{0}$ stay off the boundary through the variance floor below.
    For $v \in \{-1,+1\}^M$ set $\vb{p}^{(v)} \coloneqq \tfrac12\vb{1} + 2\alpha\, v \in [\tfrac12 - 2\alpha,\, \tfrac12 + 2\alpha]^M$.
    The two hypotheses for each coordinate, $\tfrac12 \pm 2\alpha$, are $4\alpha$ apart, so any $\hat{\vb{p}}$ with $\norm{\hat{\vb{p}} - \vb{p}^{(v)}}_\infty \le \alpha$ has $\operatorname{sgn}(\hat p_i - \tfrac12) = v_i$ for all $i$;
    an $\alpha$-accurate estimator thus recovers $v$ exactly on success, 
    so $d_{\mathrm H}(\hat v, v) = 0$ on success and at most $M$ always. 
    Therefore, success probability $\ge 2/3$ forces the average Hamming risk 
    \begin{equation}
        2^{-M}\sum_v \mathbb{E}\, d_{\mathrm H}(\hat v, v) \le M\cdot\tfrac13 = M/3.
    \end{equation}

    Let $\mathbb{P}_v$ be the distribution of the transcript $(\vb{a}_t, b_t)_{t=1}^T$ under $\vb{p}^{(v)}$, and let $v^{\oplus i}$ flip coordinate $i$.
    Write $s_t^{(v)} \coloneqq 1 - \vb{a}_t^\top(\vb{1} - \vb{p}^{(v)})$ for the bias of query $t$ under hypothesis $v$, 
    and abbreviate the per-query, per-coordinate Kullback--Leibler divergence (relative entropy) by $D_{t,i} \coloneqq \KL\!\big(\Ber(s_t^{(v)}) \,\Vert\, \Ber(s_t^{(v^{\oplus i})})\big)$, where $\KL(P\Vert Q) \coloneqq \sum_\omega P(\omega)\log\tfrac{P(\omega)}{Q(\omega)}$.

    Because each query may depend on past outcomes, the transcript distributions are not products;
    still, the KL chain rule applies round by round.
    The query policy is identical under $\vb{p}^{(v)}$ and $\vb{p}^{(v^{\oplus i})}$ (the algorithm never sees $v$),
    so only the outcome bit of each round contributes, and for every \emph{fixed} coordinate $i$,
    \begin{align}
        \KL\!\left(\mathbb{P}_v \,\Vert\, \mathbb{P}_{v^{\oplus i}}\right)
        = \sum_{t=1}^T \mathbb{E}_{\mathbb{P}_v}\!\left[ D_{t,i} \right].
        \label{apd:eq:kl_chain_fixed_i}
    \end{align}
    The quantity entering Assouad's bound below is the \emph{total} divergence $\sum_{i=1}^M \KL(\mathbb{P}_v \Vert \mathbb{P}_{v^{\oplus i}})$ over all $M$ coordinate flips, which we bound by first controlling the coordinate sum $\sum_{i=1}^M D_{t,i}$ for one fixed query $t$ and then summing over the $T$ queries.
    The two biases differ only through coordinate $i$, by $\abs{s_t^{(v)} - s_t^{(v^{\oplus i})}} = 4\alpha\, a_{t,i}$.
    The denominator of the Pinsker/KL bound below is the Bernoulli variance $s(1-s)$, which the window $\alpha < 1/8$ keeps bounded away from $0$.
    For any $\vb{p} \in [\tfrac12 - 2\alpha,\, \tfrac12 + 2\alpha]^M$ the bias $s = 1 - \vb{a}_t^\top(\vb{1} - \vb{p})$ satisfies $1 - s = \vb{a}_t^\top(\vb{1} - \vb{p}) \ge (\tfrac12 - 2\alpha)\norm{\vb{a}_t}_1$ and $s \ge \tfrac12 - 2\alpha \ge \tfrac14$, so that
    \begin{equation}\label{apd:eq:variance_floor}
        s(1-s) \ge c_1\norm{\vb{a}_t}_1, \qquad c_1 \coloneqq \tfrac14\big(\tfrac12 - 2\alpha\big) = \Theta(1).
    \end{equation}
    This holds for both $s_t^{(v)}$ and $s_t^{(v^{\oplus i})}$, as $\vb{p}^{(v)}, \vb{p}^{(v^{\oplus i})}$ lie in the window.
    Each Bernoulli relative entropy is bounded by the corresponding chi-square divergence, 
    $\KL(\Ber(s)\Vert\Ber(s')) \le \chi^2(\Ber(s)\Vert\Ber(s')) = (s-s')^2/[s'(1-s')]$ \mbox{\cite[Lemma~2.7]{tsybakovIntroductionNonparametricEstimation2008}}, 
    the squared bias gap over the Bernoulli variance. 
    Inserting the bias gap $4\alpha\, a_{t,i}$ and the variance floor \cref{apd:eq:variance_floor} (queries with $\vb{a}_t = \vb{0}$ are deterministic and contribute nothing),
    \begin{align}
        \sum_{i=1}^M D_{t,i}
        &\;\le\; \sum_{i=1}^M \frac{\big(s_t^{(v)} - s_t^{(v^{\oplus i})}\big)^2}{s_t^{(v^{\oplus i})}\big(1 - s_t^{(v^{\oplus i})}\big)}
        \tag{$\KL \le \chi^2$} \\
        &\;\le\; \sum_{i=1}^M \frac{(4\alpha\, a_{t,i})^2}{c_1 \norm{\vb{a}_t}_1} \tag{variance $\ge c_1\norm{\vb{a}_t}_1$} \\
        &\;=\; \frac{16\alpha^2}{c_1}\,\frac{\norm{\vb{a}_t}_2^2}{\norm{\vb{a}_t}_1}
        \;\le\; \frac{16\alpha^2}{c_1}.
    \end{align}
    where the last step uses $\norm{\vb{a}_t}_2^2 \le \norm{\vb{a}_t}_\infty\norm{\vb{a}_t}_1 \le \norm{\vb{a}_t}_1$ and $\norm{\vb{a}_t}_\infty \le 1$.

    Summing the chain rule \cref{apd:eq:kl_chain_fixed_i} over the $M$ coordinates, exchanging the finite $i$- and $t$-sums, and inserting the per-query bound 
    (which holds for every realization of the adaptive query $\vb{a}_t$, so its constant right-hand side $16\alpha^2/c_1$ bounds the expectation $\mathbb{E}_{\mathbb{P}_v}$ by monotonicity),
    \begin{align}
        \sum_{i=1}^M \KL\!\left(\mathbb{P}_v \,\Vert\, \mathbb{P}_{v^{\oplus i}}\right)
        = \sum_{t=1}^T \mathbb{E}_{\mathbb{P}_v}\!\left[ \sum_{i=1}^M D_{t,i} \right]
        \le \sum_{t=1}^T \frac{16\alpha^2}{c_1}
        = \frac{16\alpha^2}{c_1}\, T ,
    \end{align}
    valid for every $v$.

    By \cref{lem:assouad}, applied to the transcript distributions $\{\mathbb{P}_v\}$ with half-cube mixtures $\mathbb{P}_{\pm i} \coloneqq 2^{-(M-1)}\sum_{v:\, v_i = \pm1}\mathbb{P}_v$, each averaging over the $2^{M-1}$ hypotheses that fix the sign $v_i = \pm1$ 
    (in contrast to the single-flip pair $\mathbb{P}_v, \mathbb{P}_{v^{\oplus i}}$ controlled in the divergence bound above),
    the average Hamming risk is at least $\tfrac{M}{2} - \tfrac12\sum_{i=1}^M \TV(\mathbb{P}_{+i}, \mathbb{P}_{-i})$.
    Convexity of total variation links the two in the first step below, 
    where each mixture-pair distance $\TV(\mathbb{P}_{+i}, \mathbb{P}_{-i})$ is bounded by the averaged single-flip distance $\mathbb{E}_v\,\TV(\mathbb{P}_v, \mathbb{P}_{v^{\oplus i}})$.
    Success forces this risk to be at most $M/3$, so the total-variation sum must be large. 
    We now upper-bound it in terms of the query budget $T$, 
    via convexity of total variation, Pinsker's inequality $\TV(P,Q)\le\sqrt{\tfrac12\KL(P\Vert Q)}$, and Cauchy--Schwarz over the $M$ coordinates:
    \begin{align}
        \sum_{i=1}^M \TV(\mathbb{P}_{+i}, \mathbb{P}_{-i})
        &\;\le\; \mathbb{E}_v \sum_{i=1}^M \TV\!\left(\mathbb{P}_v, \mathbb{P}_{v^{\oplus i}}\right)
        \tag{convexity of $\TV$} \\
        &\;\le\; \mathbb{E}_v \sqrt{\tfrac{M}{2}\textstyle\sum_i \KL(\mathbb{P}_v \Vert \mathbb{P}_{v^{\oplus i}})}
        \tag{Pinsker $+$ Cauchy--Schwarz} \\
        &\;\le\; \sqrt{8/c_1}\,\alpha\sqrt{MT} .
        \tag{$\textstyle\sum_i \KL \le 16\alpha^2 T/c_1$}
    \end{align}
    Here Pinsker is applied per coordinate and Cauchy--Schwarz collapses the $M$ resulting square roots into a single factor $\sqrt{M}$. 
    This term-count $\sqrt{M}$ is distinct from the $M$-independent budget bound $\sum_i \KL(\mathbb{P}_v \Vert \mathbb{P}_{v^{\oplus i}}) \le 16\alpha^2 T/c_1$, so the $M$ on the left of the final squeeze (from the Hamming risk) and the $\sqrt{M}$ on the right (from Cauchy--Schwarz) have different origins and do not double-count.
    Combining this total-variation bound with the success cap $M/3$ on the average Hamming risk established above and the Assouad lower bound \cref{lem:assouad} gives
    \begin{align}
        \frac{M}{6}
        \;=\; \frac{M}{2} - \frac{M}{3}
        &\;\le\; \frac12\sum_{i=1}^M \TV(\mathbb{P}_{+i}, \mathbb{P}_{-i})
        \;\le\; \frac12\sqrt{8/c_1}\,\alpha\sqrt{MT}
        \\
        &\;\Longrightarrow\; T \ge \frac{c_1 M}{72\,\alpha^2} = \Omega(M/\alpha^2).
    \end{align}
\end{proof}

To turn this information-theoretic bound into a physical one, we exhibit a local Hamiltonian in which every prepare-and-measure round \emph{is} the single-bit mixture oracle of \cref{lem:classical_multi_lower_bound}: the Feynman--Kitaev clock, evolved for a time $t$, spreads the measured projector across the $M$ output qubits with clock-weight coefficients $w_q(t)$, so each snapshot returns one Bernoulli bit whose bias is a weighted average of the $M$ unknown projector marginals. Estimating all $M$ snapshots to precision $\eps$ therefore forces estimating these $M$ marginals, and the $\Omega(M/\alpha^2)$ cost of \cref{lem:classical_multi_lower_bound} becomes the claimed $\Omega(M/\eps^2)$ rounds.

\begin{theorem}
    [Classical $\Omega(M/\eps^2)$ lower bound for local \nameref{apd:prm:multi_time_obs}]
    \label{thm:joint_classical_lower_bound}
    Let $M$ be a positive integer and $\eps \in (0, \eps_0)$ for a sufficiently small absolute constant $\eps_0 > 0$.
    There exist an $\bigO(1)$-local Hamiltonian $\ham$ on $n = \bigO(M^2)$ qubits, a single $1$-local projector $P$, and $M$ time points $\{t_m\}_{m=1}^{M}$ such that any classical prepare-and-measure algorithm (\cref{def:pm_model}) estimating the snapshot expectations $\bra{\psi_0} e^{\ii\ham t_m}\, P\, e^{-\ii\ham t_m}\ket{\psi_0}$ at all $M$ time points to additive error $\eps$ with success probability at least $2/3$ requires $\Omega(M/\eps^2)$ rounds, in the worst case over the oracle-prepared initial state $\ket{\psi_0}$.
\end{theorem}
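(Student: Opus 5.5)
The plan is to reuse the Feynman--Kitaev hard instance from the proof of \cref{apd:thm:local_lower_bound} and show that every prepare-and-measure round on it is exactly one query to the single-bit mixture oracle of \cref{lem:classical_multi_lower_bound}. The classical lower bound then transfers directly.

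\emph{Instance.} Take the circuit $V=\Pi^M$ of $N=M^2$ nearest-neighbour SWAPs on $M+1$ output qubits and embed it via \cref{thm:circuit_hamiltonian} into the $\bigO(1)$-local $H_{\mathrm{FK}}$ on $\bigO(M^2)$ qubits. Use the $1$-local projector $P=\op{0}_{\oo,1}$ and the initial state $\ket{\psi_0}=\ket{0}_\cc\ket{0}_{\oo,1}\ket{\psi'}$. By the block-structure computation, \cref{eq:local_echo_FK} gives, for \emph{every} $t$,
\[
o(t)=\abs{c_0(t)}^2+\sum_{q=0}^{M-1} w_q(t)\,p_{q+2},
\]
with $w_q(t)\ge 0$ and $\sum_q w_q(t)=1-\abs{c_0(t)}^2$.

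\emph{Hard family.} Restrict $\ket{\psi'}$ to product states $\bigotimes_j(\sqrt{p_j}\ket{0}+\sqrt{1-p_j}\ket{1})$ with $\vb{p}\in[\tfrac12-2\alpha,\tfrac12+2\alpha]^M$. A PM round at an arbitrary, possibly adaptive, time $t$ then returns a bit distributed as $\Ber(\vb{a}^\top\vb{p}+1-\norm{\vb{a}}_1)$ with $\vb{a}=(w_q(t))_q$ and $\norm{\vb{a}}_1\le1$. This is precisely the oracle of \cref{lem:classical_multi_lower_bound}. The map $t\mapsto\vb{a}$ only restricts the adversary's choices, so that lemma's $\Omega(M/\alpha^2)$ bound still applies.

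\emph{Reduction.} Evaluate at the time points $\sin^2 t_m=(2m-1)/(2M)$ from \cref{apd:thm:local_lower_bound}, where the weight matrix $W$ is strictly diagonally dominant and $\norm{W^{-1}}_\infty=\bigO(1)$. An algorithm that outputs all $M$ snapshots to error $\eps$ yields, after subtracting the known $\abs{c_0(t_m)}^2$ and applying $W^{-1}$, estimates of $\vb{p}$ with $\ell_\infty$ error at most $\alpha:=\norm{W^{-1}}_\infty\eps=\Theta(\eps)$. This holds with the same success probability and the same number of rounds, and $\alpha<1/8$ is ensured by taking $\eps_0$ small. \cref{lem:classical_multi_lower_bound} then gives $\Omega(M/\alpha^2)=\Omega(M/\eps^2)$ rounds.

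\emph{Main obstacle.} The step needing the most care is checking that the round model matches the oracle lemma exactly: the algorithm may pick any $t\ge0$, not only the $t_m$, and may adapt. The resolution is that the lemma already permits arbitrary adaptive weight vectors with $\norm{\vb{a}}_1\le1$, and the offset $1-\norm{\vb{a}}_1=\abs{c_0(t)}^2$ is exactly the $\vb{p}$-independent leakage, since $p_1=1$. A secondary point is to confirm that the uniform bound $\norm{W^{-1}}_\infty=\bigO(1)$ from the diagonal-dominance argument holds for all $M$, so that the precision loss is only a constant.
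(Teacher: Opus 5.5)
Your proposal is correct and follows essentially the same route as the paper. You reuse the Feynman--Kitaev instance with $\gamma=\norm{W^{-1}}_\infty=\bigO(1)$ and encode a near-$\tfrac12$ product-state family, so that each prepare-and-measure round is a query to the single-bit mixture oracle of \cref{lem:classical_multi_lower_bound}, with the realizable weights $t\mapsto(w_q(t))_q$ forming a sub-family of the admissible ones; inverting $W$ then gives $\ell_\infty$ accuracy $\alpha=\gamma\eps$, with $\eps_0$ chosen so that $\alpha<1/8$. The only cosmetic difference is that you allow the whole window $[\tfrac12-2\alpha,\tfrac12+2\alpha]^M$ rather than the hypercube vertices $\tfrac12+2\gamma\eps\,v_i$, which is immaterial because the lemma's proof uses only those vertices.
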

\begin{proof}
    We reuse the Feynman--Kitaev instance of \cref{apd:thm:local_lower_bound}:
    the $\bigO(1)$-local Hamiltonian $H_{\mathrm{FK}}$ on $\bigO(M^2)$ qubits
    (the same Feynman--Kitaev clock Hamiltonian, built from the cyclic-shift SWAP circuit $V = \Pi^M$ of $N = M^2$ nearest-neighbour gates), projector $P = \op{0}_{\oo,1}$, 
    time points $\{t_m\}_{m=1}^M$, clock weights $w_q(t)$, 
    and $M\times M$ weight matrix $W_{mq} = w_q(t_m)$ with $\gamma \coloneqq \norm{W^{-1}}_\infty = \bigO(1)$,
    the $\ell_\infty$ error-amplification incurred when inverting $W$ to recover the projector marginals (\cref{eq:local_echo_FK_reduced}). 
    Set $\eps_0 = 1/(8\gamma)$, 
    so $\eps < \eps_0$ ensures $\gamma\eps < 1/8$, the admissible window of \cref{lem:classical_multi_lower_bound}.
    The power-of-two hypothesis of \cref{apd:thm:local_lower_bound} is not needed here: it stems from the Apeldoorn--Huggins reduction in that proof, whereas the Feynman--Kitaev construction and the diagonal dominance of $W$ (\cref{eq:block_weight_lower}) hold for every positive integer $M$.
    The statistical hardness resides entirely in the family of initial states $\{\ket{\psi'_v}\}$ defined next, which carries that lemma's near-$\tfrac12$ hypercube.

    For $v \in \{-1,+1\}^M$,
    let $\ket{\psi'_v} = \bigotimes_{i=1}^{M}\ket{\phi_i}$ be the product state on output qubits $2,\ldots,M+1$ with marginals $p_{i+1} \coloneqq \bra{\phi_i}\op{0}\ket{\phi_i} = \tfrac12 + 2\gamma\eps\, v_i$, 
    and put $\ket{\psi_0} = \ket{0}_\cc \otimes \ket{0}_{\oo,1} \otimes \ket{\psi'_v}$;
    the oracle $U_\psi$ is built from $U_{\psi'_v}$ with $\bigO(1)$ fixed-state preparations.
    By the snapshot decomposition \cref{eq:local_echo_FK} (with $p_1 = 1$), 
    a PM round that evolves to time $t$ and measures the projector $P$ returns one bit
    \begin{align}
        \Ber\!\Big(\abs{c_0(t)}^2 + \textstyle\sum_{q=0}^{M-1} w_q(t)\, p_{q+2}\Big) = \Ber\big(\vb{a}^\top \vb{p} + (1 - \norm{\vb{a}}_1)\big),
        \qquad \vb{a} = \big(w_q(t)\big)_{q=0}^{M-1},
    \end{align}
    where $\vb{p} = (p_2,\ldots,p_{M+1}) \in [\tfrac12 - 2\gamma\eps,\, \tfrac12 + 2\gamma\eps]^M$ and $\norm{\vb{a}}_1 = \sum_q w_q(t) = 1 - \abs{c_0(t)}^2$.
    Each PM round is therefore a single-bit-oracle query of \cref{lem:classical_multi_lower_bound} for $\vb{p}$, the choice of evolution time $t$ selecting the weight vector $\vb{a}$.
    These realizable weights form a sub-family of the lemma's admissible vectors, but its lower bound transfers verbatim: 
    \emph{
    the per-query bound $\sum_i \KL(\mathbb{P}_v \Vert \mathbb{P}_{v^{\oplus i}}) \le 16\alpha^2/c_1$ holds \emph{uniformly} over all admissible $\vb{a}_t$, so restricting to the realizable $\vb{a}$ can only lower the information per round and hence cannot lower the query count.}
    An algorithm estimating all $M$ snapshots to additive error $\eps$ in particular estimates $\tilde{\vb{o}}$ to $\eps$; applying $W^{-1}$ (\cref{eq:local_echo_FK_reduced}) yields $\hat{\vb{p}}$ with $\norm{\hat{\vb{p}} - \vb{p}}_\infty \le \gamma\eps$.
    By \cref{lem:classical_multi_lower_bound} with accuracy $\alpha = \gamma\eps$ (admissible since $\gamma\eps < 1/8$, and matching its window $\tfrac12 \pm 2\alpha$), this requires $\Omega\big(M/(\gamma\eps)^2\big) = \Omega(M/\eps^2)$ rounds.
\end{proof}

The classical lower bound counts the same resource as the quantum upper bound: calls to $U_\psi$, i.e.\ PM rounds.
Because the PM class is a strict subclass of the coherent quantum algorithms, the universal quantum bound $\Omega(\sqrt{M}/\eps)$ of \cref{apd:thm:local_lower_bound} 
(valid in its high-precision regime $\eps = \bigO(1/\sqrt{M})$) applies to it as well. 
In that regime the classical $\Omega(M/\eps^2)$ exceeds it by the factor $\sqrt{M}/\eps \ge 1$ and is the binding classical bound.
On this single $\bigO(1)$-local instance, then, classical prepare-and-measure algorithms require $\Omega(M/\eps^2)$ rounds while the quantum algorithm of \cref{apd:cor:adaptive_snapshots} uses $\widetilde{\bigO}(\sqrt{M}/\eps)$:
a quadratic quantum speedup in \emph{both} the number of time points $M$ and the precision $1/\eps$, on one problem family.
The quantum guarantee of \cref{apd:cor:adaptive_snapshots} is in root-mean-square error. 
By Chebyshev's inequality it delivers additive error at most $3\eps$ at every time point with probability $\ge 2/3$, 
the same metric and confidence as the classical lower bound, 
so the two are compared on equal terms (up to a constant rescaling of $\eps$).

\fi

\end{document}